\pgfplotsset{compat=newest}
\providecommand{\env@tikzpicture@save@env}{}
\providecommand{\env@tikzpicture@process}{}
\newcounter{tikzfigcntr}
\renewcommand{\appendix}{%
  \par
  \setcounter{section}{0}%
  \renewcommand{\thesection}{\Alph{section}}%
}
\newenvironment{talign}
 {\align}
 {\endalign}
\newenvironment{talign*}
 {\csname align*\endcsname}
 {\endalign}
\def\bs{\boldsymbol}
\def\bb{\mathbb}
\def\cl{\mathcal}
\def\sf{\mathsf}
\def\ts{\textstyle}
\DeclareMathOperator{\diag}{{diag}}
\DeclareMathOperator{\Ex}{{\bb {E}}}
\DeclareMathOperator{\argmin}{{arg min}}
\DeclareMathOperator{\tr}{tr}
\def\eg{\emph{e.g.},~}
\def\etal{{\emph{et al.}}}
\def\ie{\emph{i.e.},~}
\definecolor{mathblue}{HTML}{3F3D9A}
\definecolor{mathpurp}{HTML}{9A3D71}
\definecolor{mathsand}{HTML}{9A8C3D}
\definecolor{mathgrn}{HTML}{0F9F4F}
\definecolor{mathgrnl}{HTML}{B7E2CA}
\definecolor{azu}{RGB}{128,179,255}
\definecolor{yel}{RGB}{255,235,154}
\definecolor{azuwhi}{RGB}{205,225,255}
\def\A{{\bs{A}}}
\def\a{\bs{a}}
\def\ail{\a_{i,l}}
\def\B{{\bb B}}
\def\bLambda{\bs\Lambda}
\def\bnu{\bs{\nu}}
\def\bxi{{\bs{\xi}}}
\def\eps{\bs{\varepsilon}}
\def\F{\bs{F}}
\def\H{\mathcal{H}}
\def\I{\bs{I}}
\def\onep{{{\bs 1}^\perp_m}}
\def\P{{\bs P}}
\def\Ponep{\P_{\onep}}
\def\Pro{\mathbb{P}}
\def\R{\mathbb{R}}
\def\u{{\bs{u}}}
\def\v{{\bs{v}}}
\def\vone{\bs{1}}
\def\vzer{\bs{0}}
\def\x{{\bs{x}}}
\def\y{{\bs{y}}}
\def\iid{\textrm{i.i.d.}\xspace}
\def\rv{{r.v.}\xspace}
\def\rvs{{r.v.}'s\xspace}
\def\rhs{{r.h.s.}\xspace}
\def\lhs{{l.h.s.}\xspace}
\newcommand{\scp}[2]{\langle #1,\,#2\rangle}
\newcommand{\new}[1]{#1}
\newtheorem*{theorem*}{Theorem}
\newtheorem{theorem}{Theorem}[section]
\newtheorem*{definition*}{Definition}
\newtheorem{definition}{Definition}[section]
\newtheorem*{hypothesis*}{Hypothesis}
\newtheorem*{proposition*}{Proposition}
\newtheorem{proposition}{Proposition}[section]
\newtheorem*{corollary*}{Corollary}
\newtheorem{corollary}{Corollary}[section]
\newtheorem*{lemma*}{Lemma}
\newtheorem*{remark*}{Remark}
\newtheorem{remark}{Remark}[section]
\newcommand{\tinv}[1]{\ts\tfrac{1}{#1}}
\numberwithin{equation}{section}
\pgfplotsset{
	compat=newest,
	legend style = {font=\small},
}
\title{{\bf Through the Haze: a Non-Convex Approach to \\ Blind \new{Gain} Calibration for Linear Random Sensing Models}}
\author{Valerio Cambareri$^{*}$ and Laurent Jacques\thanks{VC and LJ are with Image and Signal Processing Group
  (ISPGroup), ICTEAM/ELEN, Universit\'e catholique de Louvain (UCL).  E-mail:
\url{laurent.jacques@uclouvain.be},
\url{valerio.cambareri@uclouvain.be}. The authors are funded by the
Belgian F.R.S.-FNRS. Part of this study is funded by the project
\textsc{AlterSense} (MIS-FNRS).}}
\date{\today}
\begin{document}

\maketitle

\begin{abstract}
	{
		Computational sensing strategies often suffer from calibration errors in the physical implementation of their ideal sensing models. Such uncertainties are typically addressed by using multiple, accurately chosen training signals to recover the missing information on the sensing model, an approach that can be resource-consuming and cumbersome. Conversely, blind calibration does not employ any training signal, but corresponds to a bilinear inverse problem whose algorithmic solution is an open issue. 
		We here address blind calibration as a non-convex problem for linear random sensing models, in which we aim to recover an unknown signal from its projections on sub-Gaussian random vectors, each subject to an unknown \new{positive} multiplicative factor (or gain). To solve this optimisation problem we resort to projected gradient descent starting from a suitable, carefully chosen initialisation point. An analysis of this algorithm allows us to show that it converges to the exact solution provided a sample complexity requirement is met, \ie relating convergence to the amount of information collected during the sensing process. Interestingly, we show that this requirement grows linearly (up to $\log$ factors) in the number of unknowns of the problem. This sample complexity is found both in absence of prior information, as well as when subspace priors are available for both the signal and gains, allowing a further reduction of the number of observations required for our recovery guarantees to hold. Moreover, in the presence of noise we show how our descent algorithm yields a solution whose accuracy degrades gracefully with the amount of noise affecting the measurements. 
		Finally, we present some numerical experiments in an imaging context, where our algorithm allows for a simple solution to blind calibration of the gains in a sensor array. 
	} \\
	
	{\noindent {\em Keywords: Blind calibration, non-convex optimisation, sample complexity, bilinear inverse problems.}
		\\
		\em 2000 Math Subject Classification: 94A15, 94A20, 90C26,15A29.
	}
\end{abstract}

	\section{Introduction}
		\label{sec:introduction}

		The problem of recovering an unknown signal measured or transmitted by means of an inaccurate sensing model is of crucial importance for modern sensing strategies relying on the solution of inverse problems. In such problems, exact prior information on the sensing model is paramount to accurately reconstruct the original signal. 
		Compressed Sensing (CS)~\cite{Donoho2006} has emerged as a powerful framework to design new sensing strategies employing sub-Gaussian random matrix ensembles given their remarkable properties (see,  \eg~\cite{BaraniukDavenportDeVoreEtAl2008}). However, model errors inevitably affect its physical implementation and can significantly degrade signal recovery, as first studied by Herman and Strohmer~\cite{HermanStrohmer2010}. In particular, such model errors may arise from physical causes such as unknown convolution kernels~\cite{AhmedRechtRomberg2014,AhmedCosseDemanet2015,BahmaniRomberg2015a} affecting the measurements; unknown attenuations or gains on the latter coefficients, \eg pixel response non-uniformity~\cite{HayatTorresArmstrongEtAl1999} or fixed-pattern noise in imaging systems; complex-valued (\ie gain and phase) errors in sensor arrays~\cite{FriedlanderStrohmer2014,LingStrohmer2015a,BilenPuyGribonvalEtAl2014}.	
		
		Assuming such errors remain stationary throughout the sensing process, the use of linear random operators in CS does suggest that repeating the acquisition, \ie taking several {\em snapshots} under new independent draws of a random sensing operator could suffice to {\em diversify} the measurements and extract the information required to learn both the unknown signal and the model error. In this paper we adopt this general principle to achieve the {\em blind calibration} of sensor gains, that is the joint recovery of an unknown signal and some unknown multiplicative factors (\ie the {\em gains}) not accounted for in the assumed sensing model. Our method is inspired by recent results on fast, provably convergent algorithms for phase retrieval~\cite{CandesLiSoltanolkotabi2015,WhiteSanghaviWard2015,SunQuWright2016} and entails solving a non-convex problem by means of a descent algorithm that is presented below. Most importantly, this paper is concerned with finding the conditions under which the convergence of our algorithm to the exact solution is guaranteed, \ie a bound on the number of measurements $m$ and snapshots $p$ collected during the sensing process, along with some mild requirements on the entity of the gains.  Hence, our main concern will be to establish a {\em sample complexity}, \ie a lower bound on the total amount of observations collected during the sensing process: we will see that this bound must fulfil $mp = {\cl O}\big((m + n) \log^2(m(p+n))\big)$ with $n$ the dimensionality of the signal and $m$ that of the gains, up to a condition on $n = {\cl O}(\log mp)$. This methodology allows for provably exact blind calibration under some mild hypotheses on the gains for the sensing models we describe in the next section. 
		
	\subsection{Sensing Models}

        Our paper focuses on blind calibration for systems based on linear random sensing modalities. As described hereafter, we will assume that each observation is obtained by projection on a random vector $\bs a \in \bb R^n$ that is composed of $n$ independent and identically distributed {(\iid)}~components drawn as $a_j \sim_{\rm \iid} X$, $j \in [n]$, where $X$ is a centred sub-Gaussian random variable (r.v.)~having unit variance and sub-Gaussian norm $\alpha > 0$ (for a thorough introduction to such concepts, we refer the reader to~\cite[Section 5.2.5]{Vershynin2012a}). We recall that Gaussian, Bernoulli or bounded r.v.'s all pertain to the class of sub-Gaussian r.v.'s for some finite~$\alpha$. In this context, the sensing models tackled by this paper are defined as follows\footnote{The notation used in this section anticipates the one that is fully explained in Sec.~\ref{sec:notation}.}.
        
		\begin{definition}[Uncalibrated Multi-Snapshot Sensing Model]
			\label{def:model}	
			We define {\em uncalibrated multi-snapshot sensing model} any instance of
			\begin{equation}
				\label{eq:measurement-model-matrix} 
				\y_{l}\ =\ \diag({\bs g})\A_{l} \x\ =\ \big[ g_1 (\a_{1,l}^\top\x),\ \cdots, g_m (\a_{m,l}^\top \x) \big]^\top,\quad l \in [p],
			\end{equation}  
			where the $m$-dimensional measurement vector $\y_l \in \R^m$ is the $l$-th {\em snapshot} associated to the $l$-th random sensing matrix $\A_l =(\bs a_{1,l},\,\cdots,\bs a_{m,l})^\top\in \R^{m\times n}$ with $\ail \sim_{\iid} \bs a$,  \ie all the matrices $\{\A_l: l \in [p]\}$ are \iid. The vector $\x\in\R^n$ is an unknown, unstructured {\em signal} and ${\bs g} \in \R^m_+$ are unknown, positive and bounded {\em gains}, both quantities remaining fixed throughout the $p$ snapshots entailed by the sensing process. 
			
			The {\em noisy} case of this model is given hereafter by considering additive and bounded disturbances, \ie 
			\begin{equation}
				\label{eq:measurement-model-matrix-noise}
				\y_{l} = \diag({\bs g}) \,\A_{l} \x + \bnu_{l}, \ l \in [p],
			\end{equation}  
			where the noise vectors $\bnu_{l} \in \R^m$, $l \in [p]$ are collected in a matrix $\bs N \in \R^{m\times p}$ with $\sigma \coloneqq \ts\tfrac{1}{\sqrt{m p}}\|\bs N\|_F < \infty$.
		\end{definition}

		These {\em bilinear} sensing models are related to computational sensing applications in which unknown ${\bs g}$ are associated to positive gains in a sensor array, while $p$ random matrix instances can be applied on a source $\x$ by means of a suitable, typically programmable medium. In particular, the setup in \eqref{eq:measurement-model-matrix} matches compressive imaging configurations~\cite{Romberg2009,BjorklundMagli2013,DegrauxCambareriGeelenEtAl2014,BahmaniRomberg2015a,DumasLodhiBajwaEtAl2016}~with an important difference in that the absence of {\em a priori} structure on $(\x,{\bs g})$ in Def.~\ref{def:model} implies an over-Nyquist sampling regime with respect to $n$,~ \ie exceeding the number of unknowns as $m p \geq n + m$. 
		When the effect of ${\bs g}$ is critical, \ie assuming $\diag({\bs g}) \approx \I_m$ would lead to an inaccurate recovery of $\x$, finding solutions to \eqref{eq:measurement-model-matrix} in $(\x,{\bs g})$ justifies a possibly over-Nyquist sampling regime (that is $mp > n$) as long as both quantities can be recovered accurately (\eg as an on-line calibration modality). 
		To show that more efficient sampling regimes in $mp$ are possible we now introduce known subspace priors, paving the way to actual blind calibration for CS.  
		\begin{definition}[Uncalibrated Multi-Snapshot Sensing Model with Subspace Priors.]
			\label{def:modelsub}	
			Given two subspaces $\cl B \subset \bb R^m$ and $\cl Z \subset \bb R^n$, of dimension $h \coloneqq \dim \cl B \leq m$ and $k \coloneqq \dim \cl Z \leq n$, with orthonormal bases $\bs B \in \R^{m \times h}$ and ${\bs Z} \in \R^{n \times k}$, respectively, we define {\em uncalibrated sensing model with subspace priors} any instance of~\eqref{eq:measurement-model-matrix} (or of~\eqref{eq:measurement-model-matrix-noise}, in the presence of noise) where $\x \coloneqq {\bs Z} {\bs z} \in \cl Z$ for ${\bs z} \in \R^k$, and ${\bs g} \coloneqq {\bs B} {\bs b}$ for $\bs b \in \R^h$. 
			%
			%
		\end{definition}
		This known subspace prior is specially relevant for the signal domain, since such models are not necessarily present in the gain domain (\eg the gains can be fully random due to the nature of the device that captures the measurements). 		 
		When compared to {\em sparse} models for the signal and gains (\ie when either $\x$ or ${\bs g}$ lie in a union of low-dimensional canonical subspaces) Def.~\ref{def:modelsub} amounts to knowing the support of their respective sparse representations. Thus, while enforcing actual sparsity priors in the signal domain seems numerically feasible~\cite{CambareriJacques2017} we leave its theoretical analysis for a future communication given the depth of the additional considerations required to prove it. 
\medskip

Note that our analysis of both previous bilinear sensing models will exploit the fact that the deviation between the gains $\bs g$ and $\vone_m$ (up to a scaling factor) is significant, but not too large and, in fact, bounded in $\ell_\infty$-norm. Depending on which model is considered, this assumption is first detailed in Sec.~\ref{sec:nonconv}, then specified in the known subspace case of Sec.~\ref{sec:blind-calibration-with-subspace-priors}. 
		
		It is finally worth noting that the above sensing models are strongly related and could be generalised to {\em blind deconvolution} by modifying \eqref{eq:measurement-model-matrix}, \ie by letting the measurements 
		\begin{equation}
			\label{eq:bctobd}
			\y_{l} = ({\F}_m^{-1} {\bs g}) \circledast \A_l \x = {\F}_m^{-1} \diag({\bs g}) {\F_m} \,\A_{l} \x, \ l \in [p], 
		\end{equation}
		with $\F_m$ being the $m$-dimensional discrete Fourier transform and $\circledast$ the circular convolution operator. However, assessing the performances of our algorithm for this case with generally complex-valued $(\x,{\bs g})$ is beyond the scope of this contribution. 
\medskip

	{Before proceeding, let us state a minor working hypothesis on the interaction of the input vector $\bs x$ with the distribution $X$ controlling the random vectors $\{\ail\}$ in the very special case where $\bb E X^4 = 1$, \eg if $X$ is a Bernoulli \rv. 
  \begin{hypothesis*}[Bernoulli Restriction]
    \label{hyp:hyp-non-1-sparse-if-bern}
   If $\bb E X^4 = 1$, we will additionally assume $n>1$ and that there exists some constant $c >0$ such that
   $$
   \|\hat{\bs x}\|_4^4 \leq 1 - \tfrac{c}{n}, 
   $$
   with $\hat{\bs x} = \tfrac{\bs x}{\|\bs x\|^2}$.  
  \end{hypothesis*}
This hypothesis is purely technical and is likely an artifact of our proofs\footnote{{Moreover, this restriction could be relaxed to assuming $\|\hat{\bs x}\|_4^4 \leq 1 - \tfrac{c}{n^s}$ for any power $s>1$, as this would only change the universal constants appearing in all our results.}}. Its origin is actually found in the use of a matrix version of Bernstein's concentration inequality \cite{AhmedRechtRomberg2014,LiLingStrohmerEtAl2016,Tropp2015} (see Prop.~\ref{prop:mbi} in App.~\ref{sec:appA}) that imposes a non-vanishing matrix variance for the concentrating sum of centred matrices. Our hypothesis is also minor as $(i)$~by Jensen's inequality we have $\bb E X^4 \geq (\bb E X^2)^2 = 1$, and $(ii)$~when $\bb E X^4=1$, this just prevents us to take (in this very specific case) vectors lying ``too close'' to any of the coordinate axes $\bs c$ of $\bb R^n$, which are the only unit vectors whose $\|\bs c\|_4^4 = \|\bs c\|^4 = 1$. A possible strategy to avoid this restriction could consist in forming a mixture $X' \sim (1-\lambda) X + \lambda Y$ of $X$ with another arbitrary centred distribution $Y$ with $\bb EY^4 > 1$, $\bb E Y^2=1$ and $\lambda \in [0,1]$. For instance we could take $Y \sim \cl N(0,1)$, whose $\bb E Y^4 = 3$. Then, by linearity of the expectation operator with respect to the involved probability density measure, $\bb E X'^4 = (1 - \lambda) + \lambda \bb E Y^4 > 1$, so that for small $\lambda$ the distribution $X'$ is arbitrarily close to $X$, yet so that the fourth-moment hypothesis above is satisfied. 
} 		
	\subsection{Relation to Prior Works}			

		{\em General Literature on Blind Calibration}: \ Prior approaches to blind calibration of sensor gains include convex or alternating optimisation algorithms~\cite{BalzanoNowak2008,BilenPuyGribonvalEtAl2014,LiporBalzano2014} as well as message-passing approaches~\cite{SchuelkeCaltagironeZdeborova2015}. In more detail, Balzano and Nowak~\cite{BalzanoNowak2008} use a signal-domain subspace prior to infer sensor gains in absence of random sensing operators, \ie in a conventional sampling scheme. This approach is extended by Lipor and Balzano~\cite{LiporBalzano2014} in the presence of errors in the prior. Bilen \etal~\cite{BilenPuyGribonvalEtAl2014} use a sparse signal model for multiple inputs and solve a convex version of blind calibration for complex-valued gains. This is numerically shown to be successful in achieving blind calibration for CS. The approach of Sch{\"u}lke \etal~\cite{SchuelkeCaltagironeZdeborova2015} is based on a generalised approximate message passing framework, therefore taking into account a probabilistic model for the signal and gains
		. All the former approaches are aided by multiple input signals (\eg $\x_l$,~$l \in [p]$) instead of taking new draws of the sensing operator itself, while there are clear assumptions on the independence of such signals. Moreover, no formal recovery guarantee is given in these works, \ie no requirement is obtained on the number of measurements required to perform provably exact blind calibration. 
		
		We now proceed to a comparison of our setup with three prior contributions that are closer to our aims, \ie they develop algorithms with exact recovery guarantees (and their required sample complexity) for either blind calibration or blind deconvolution. To do so, we recast our setup as follows. Let us define ${\bs w} \coloneqq \tfrac{\vone_p}{\sqrt p} \otimes {\bs g} \in \cl W \subset \bb R^{mp}$, with $\otimes$ the Kronecker product and $\cl W$ a subspace, by repeating $p$ times the same gains $\bs g$. Then we collect the snapshots of \eqref{eq:measurement-model-matrix} in 
		\begin{equation}
		\label{eq:strohmertous}
			\y \coloneqq \begin{bmatrix}
			\y_1 \\ \vdots \\ \y_p 
			\end{bmatrix} = \diag({\bs w}) \bs A \bs x, \ \bs A \coloneqq \begin{bmatrix}
			\A_1 \\ \vdots \\ \A_p 
			\end{bmatrix}.
		\end{equation}
		Let us assume $\bs x = \bs Z \bs z$ for some known subspace defined by a basis $\bs Z \in \bb R^{n \times k}$. 
		Depending on whether we are considering Def.~\ref{def:model} or Def.~\ref{def:modelsub} we will then have either $\cl W \coloneqq \big\{\tfrac{\vone_p}{\sqrt p} \otimes \bs v : \v \in \R^{m}\big\}$ with $m = \dim \cl W$, or $\cl W \coloneqq \big\{(\tfrac{\vone_p}{\sqrt p} \otimes \bs B) \bs v : \v \in \R^{h}\big\}$ with $h = \dim \cl W$ since $\bs g = \bs B \bs b$, $\bs B \in \bb R^{m \times h}$. 
		Moreover, let us briefly introduce the definition of {\em coherence}\footnote{Note that $ \upmu_{\max} \big(\tfrac{\vone_p}{\sqrt p} \otimes \bs B\big) = \sqrt{\tfrac{mp}{h}} \max_{j \in [mp]} \tinv{\sqrt p } \|  ({\vone_p}\otimes \bs B)^\top \bs c_j\| = \upmu_{\max} (\bs B)$.} of $\bs B$ as 
		\[
		\upmu_{\max}(\bs B) \coloneqq \sqrt{\tfrac{m}{h}} \max_{i\in[m]} \|\bs B^\top \bs c_i\| \in \big[1, \sqrt{\tfrac{m}{h}}\big],
		\] 
		\new{with $\bs c_i$ denoting the canonical basis vectors.}  
		This quantity frequently appears in related works \cite{AhmedRechtRomberg2014,LingStrohmer2015a} and shall be explained carefully in Sec.~\ref{sec:blind-calibration-with-subspace-priors}. Another recurring quantity that will not return in our analysis and main results is
		\[
		\upmu_{\rm p} \coloneqq \sqrt{m} \tfrac{\| \bs g \|_\infty}{\|\bs g\|},
		\] 
		which measures the peak-to-energy ratio of a specific instance of $\bs g$. In our developments this will be implicitly bounded as $\upmu_{\rm p}  < 1+\rho$ for a value $\rho < 1$ and will be therefore considered as a constant smaller than $2$.
		
		We omit from the following comparison the recent contribution of Ahmed \etal~\cite{AhmedKrahmerRomberg2016}, which tackles provably exact recovery for blind deconvolution in a different setup, \ie when both the signal and gains are sparse (and, due to this assumption, both are required to have such a sparse representation on random bases verifying several conditions). 
		\medskip
		
		\noindent {\em Ahmed, Recht and Romberg~\cite{AhmedRechtRomberg2014}:} \ The use of a {\em lifting} approach for the solution of bilinear inverse problems was first proposed in this fundamental contribution, which addressed the problem of {blind deconvolution}~(see also~\cite{AhmedCosseDemanet2015,BahmaniRomberg2015a,LingStrohmer2015}). As pointed out in \eqref{eq:bctobd} this sensing model encompasses blind calibration up to taking, in all generality, complex $(\bs x, \bs g)$ due to the application of $\bs F_m$. Loosely speaking, Ahmed \etal~assume a deterministic, known subspace prior on $\bs g = \bs B \bs b$ as we do. However, the random sensing matrix $\bs A$ in \eqref{eq:strohmertous} is assumed \iid Gaussian, whereas we consider \iid sub-Gaussian $\bs A$. Moreover, no prior is considered on the signal\footnote{In Ahmed \etal~our signal $\bs x$ is the message $\bs m$ that undergoes encoding by a random matrix $\bs C$, the latter being equivalent to $\bs A$ in \eqref{eq:strohmertous}.} $\bs x$.
		
		To show provably exact recovery, the authors leverage guarantees based on constructing a dual certificate for their lifted, semidefinite (\ie convex) problem via the so-called ``golfing scheme''~\cite{Gross2011}. The sample complexity required to recover exactly $(\bs x, \bs g)$ in~\cite[Thm. 1]{AhmedRechtRomberg2014} is then shown to be of the order of
		$m p = \cl O(\max\{\upmu^2_{\max} h, \upmu_{\rm p}^2 \, n\} \log^3 (mp))$. This is equivalent to what we will find in our Thm.~\ref{theorem:convergence-subs}, $mp = \cl O\big((n + \upmu^2_{\max}(\bs B) h) \log^2(m(p+n))\big)$, if no subspace prior holds on $\bs x$ (\ie $\bs Z = \I_n$). This equivalence of sample complexities (up to $\log$ factors), even if obtained using different algorithmic frameworks and in a setup more general than ours, suggests that this rate is somehow intrinsic to this class of bilinear inverse problems. 
		\medskip
		
		\noindent {\em Ling and Strohmer~\cite{LingStrohmer2015a}:} \ This recent contribution also proposed a lifting approach to jointly recover $(\x,{\bs g})$ in \eqref{eq:measurement-model-matrix}, \ie specifically for blind calibration. 
		The setup of~\cite{LingStrohmer2015a} is indeed the closest to the ones analysed in this paper. A first and foremost difference is in that, by letting $\A_l, l \in [p]$ be \iid sub-Gaussian random matrices we have partitioned the sensing operator in several independent snapshots, as opposed to a single snapshot in \cite{LingStrohmer2015a}. Moreover, Ling and Strohmer assume that the signal $\x$ is complex-valued and sparse, while we tackled only known subspace priors. Their vector ${\bs w}$ in \eqref{eq:strohmertous} is also complex-valued and described by a known subspace. Hence, their setup is still more general than the models we address in this paper. 
		
		The techniques used in Ling and Strohmer are similar in principle to those of Ahmed, Recht and Romberg: both contributions use lifting and obtain recovery guarantees via the aforementioned golfing scheme. In detail,~\cite[Thm.  3.1]{LingStrohmer2015a} shows that the solution to \eqref{eq:strohmertous} for \iid Gaussian $\A_l$ can be found with a sample complexity given by~\cite[(3.15)]{LingStrohmer2015a}, that is ${mp} = {\cl O}(m n \log (m n) \, \log^2 (mp))$ in absence of subspace priors. Otherwise, when such priors are given, this requirement would be $mp = {\cl O}(\upmu^2_{\max}(\bs B) h k \log h n \, \log^2 (mp))$. Our main results, while less general, will otherwise show that $mp$ $=$ $\cl O\big((k + \upmu^2_{\max}(\bs B) h)$ $\log^2(m(p+n))\big)$ when $n = {\cl O}(\log mp)$.
		
		From a numerical standpoint the main limitation of the lifting approach of~\cite{LingStrohmer2015a} is computational, \ie a semidefinite problem must be solved to recover a large-scale rank-one matrix $\x {\bs g}^\top$. This approach becomes computationally inefficient and unaffordable quite rapidly as $m$ and $n$ in the uncalibrated sensing model exceed a few hundreds. This is the main reason why we have undertaken the challenge of devising a non-convex optimisation framework for blind calibration, drawing mainly from the principles and concepts presented by Cand\`es \etal~\cite{CandesEldarStrohmerEtAl2015} in the context of phase retrieval. Just like a non-convex approach to phase retrieval based on a simple gradient descent algorithm with a careful initialisation strategy improved upon the former work of Cand\`es \etal~\cite{CandesStrohmerVoroninski2013}, we found that the same type of approach can lead to highly efficient blind calibration in the presence of unknown gains, that is computationally affordable for very high-dimensional signals and gains. 
		\medskip
		
		\noindent {\em Li, Ling, Strohmer and Wei~\cite{LiLingStrohmerEtAl2016}:} \ 	
		Between our first short communication \cite{CambareriJacques2016} and the finalisation of this paper a remarkable contribution from Li \etal~\cite{LiLingStrohmerEtAl2016} showed that a non-convex approach to blind deconvolution is indeed capable of provably exact recovery, in a framework that would be applicable to blind calibration. Both our work and Li \etal~were inspired by the same non-convex approach of Cand\`es \etal~\cite{CandesLiSoltanolkotabi2015}. Thus, there are indeed some similarities between our paper and Li \etal~since,  loosely speaking, the general methodology both papers rely on is $(i)$ the definition of an initialisation for a descent algorithm, and $(ii)$ the verification of some mild regularity conditions on the gradient of a non-convex objective in a neighbourhood defined by the initialiser, so that a suitable gradient descent  algorithm converges to the exact solution. Moreover, both Li \etal~and our work consider known subspace priors on the signal and the gains, so our sensing model (when suitably recast as in \eqref{eq:strohmertous}) is essentially equivalent to theirs. 
		
		There are however some important differences worth pointing out. Our approach uses $\bs A$ comprised of $p$ independent snapshots with \iid sub-Gaussian sensing matrices. Li \etal~tackle the case of a complex Gaussian sensing matrix $\bs A$. In our case, we will bound {\em a priori} the perturbation between the uncalibrated and true sensing model, \ie we will have a condition on $\|\bs g - \vone_m\|_\infty$ for $\bs g^\top \vone_m = m$. Li \etal~instead assume a prior on the aforementioned peak-to-energy ratio $\upmu_{\rm p}$, as well as a small $\upmu_{\max}(\bs B)$. Our initialisation is a simple back-projection in the signal domain, which also uses the boundedness of $\|\bs g - \vone_m\|_\infty$ and the use of $p$ snapshots. Li \etal~use a spectral initialisation closer to what was done in \cite{CandesLiSoltanolkotabi2015}, followed by an optimisation problem that enforces a constraint on $\upmu_{\rm p}$. Given such priors, these must then be enforced by our respective algorithms: in our case, we carry out a projected gradient descent that minimises a non-convex objective with a convex constraint, with the projector affecting only the gain domain (so the gains verify the bound on $\|\bs g - \vone_m\|_\infty$). The method of Li \etal~uses instead a regularised objective function without constraints, which however must enforce their condition on $\upmu_{\rm p}$. 
		
		In terms of sample complexity, our results range from a worst-case setting in which no subspace model is assumed to the case of known subspace priors, where we require $mp = \cl O\big((k + \upmu^2_{\max}(\bs B) h) \log^2(m(p+n))\big)$. In Li \etal~the obtained sample complexity, by a comparison through \eqref{eq:strohmertous}, would be $mp =  {\cl O}(\max\{\upmu^2_{\max}(\bs B) h, \upmu_{\rm p}^2 k\} \log^2(mp))$ under a careful account of $\upmu_{\max}(\bs B)$. Thus, while the problem setup of Li \etal~is quite general, their sample complexity obtained in~\cite[Thm. 3.2]{LiLingStrohmerEtAl2016} is substantially the same and indeed close to that of Ahmed \etal. In fact, we stress that the sample complexities we obtain and the former ones are substantially similar and given by the intrinsic properties of this bilinear inverse problem. 
		Hence, we conclude that our work is essentially alternative to Li \etal, as it applies to sub-Gaussian sensing matrices and uses some specific conditions related to blind calibration of sensor gains, while admittedly not addressing the more general case of \eqref{eq:bctobd}. Moreover, the theory we leverage to prove our main results is slightly simpler.
		
		
	\subsection{Main Contributions and Outline}	
	
	The main contribution of this paper is in showing that for all the uncalibrated sensing models specified in Def.~\ref{def:model} and Def.~\ref{def:modelsub} it is possible to prove that a very simple and efficient projected gradient descent algorithm promoting the Euclidean data fidelity to the measurements actually converges (under very mild hypotheses) to the exact solution, which verifies \eqref{eq:measurement-model-matrix} up to an unrecoverable scaling factor. This descent algorithm strongly relies on an initialisation strategy that is, in fact, a simple back-projection of the measurements. This provides an unbiased estimate of the signal-domain solution under the hypotheses of Def.~\ref{def:model}, and puts the first iteration of our descent algorithm in a neighbourhood of the global minimiser. Once this neighbourhood can be shown to be sufficiently small, and provided that the perturbations with respect to the available information on the sensing model are also small (in particular, far from the loss of information corresponding to any zero gain $g_i = 0$) the behaviour of the gradients used in the iterates of our algorithms is close to its expectation as the {sample complexity} $mp$ grows, \ie as a consequence of the concentration of measure phenomenon.  
	This allows us to find the conditions on $mp$ that ensure convergence to the exact solution, depending on which sensing model is chosen. In particular, our substantial contribution is in showing that this sample complexity grows as $mp = \cl O\big((n + m) \log^2(m(p+n))\big)$, \ie only proportionally to the number of unknowns $n + m$ of this problem (up to $\log$ factors). Moreover, when this number of unknowns is reduced by the use of known subspace priors, \ie as in Def.~\ref{def:modelsub}, we show that this complexity only needs to grow as $mp = \cl O\big((k + \upmu^2_{\max} h) \log^2(m(p+n))\big)$, \ie again linearly (up to $\log$ factors and the effect of a coherence parameter $\upmu_{\max} \in \big[1,\sqrt{\tfrac{m}{h}}\big]$) in the number of unknowns $k$ and $h$. 
		
	Note that a short communication that partially overlaps with this work was published by the authors~\cite{CambareriJacques2016}. It is here improved and expanded with revised proofs, results on the stability of the proposed approach in the presence of noise, and the possibility of introducing known subspaces to model the signal and gain domains with the aim of reducing the sample complexity of this problem, \ie minimising the amount of required snapshots in Def.~\ref{def:modelsub} when more information on the unknowns is available. 
	
	The rest of this paper is structured as follows. In Sec.~\ref{sec:nonconv} we formalise the blind calibration problem in its non-convex form and in absence of priors. We explore some of its geometric properties, both in expectation as well as for a finite number of snapshots. There, we define the main notions of distance and neighbourhood used in proving the properties of this problem, and we see that some local convexity properties do hold in expectation. Our main algorithm is then introduced in Sec.~\ref{sec:solbypgd} and followed by its convergence guarantees in absence of priors, which actually enables a simple understanding of our main results.  
	In Sec.~\ref{sec:blind-calibration-with-subspace-priors} we discuss a modification of our algorithm in the case of known subspace priors for $\x$ and ${\bs g}$. We show how the anticipated sample complexity improvement is achieved using such prior models and discuss the convergence of a descent algorithm that enforces them properly. Most importantly, the proofs for this case are the cornerstones of this paper, and serve to prove the somewhat simpler results we obtained in absence of priors and advocated in \cite{CambareriJacques2016}. In Sec.~\ref{sec:nstability-sub} a noise stability analysis is then proposed to show how the accuracy of the solution is affected by the presence of bounded noise in the measurements. 
	In Sec.~\ref{sec:numerical-experiments} we provide numerical evidence on our algorithm's empirical phase transition, highlighting the regime that grants exact recovery of the signal and gains. This is followed by an empirical discussion of the descent algorithm's step size update, and by an assessment of the stability of our algorithm in the presence of noise. All three cases are carried out in absence of priors, \ie in a worst-case setup.~ A practical application of our method to a realistic computational sensing context, both in absence and in presence of known subspace priors, concludes the experiments carried out in this paper. The proofs of all mathematical statements and tools used in this paper are reported in the appendices.  
	
	\subsection{Notation and Conventions}
		\label{sec:notation}
		The notation throughout the paper is as follows: vectors and matrices are denoted by boldface lower-case and upper-case letters respectively, \eg $\bs q$ and $\bs Q$, while scalars and constants are denoted as $q$, $Q$. 
		The vectors $\vzer_q$ and $\vone_q$ indicate a vector of dimension $q$, respectively of all zeros or ones and with size specified in the subscript. The identity matrix of dimension $q$ is $\I_q$. The rows, columns and entries of a matrix $\bs Q$ will be denoted as  $\bs Q_{j, \cdot}$, $\bs Q_{\cdot, j}$ and $Q_{ij}$ respectively. An exception to this rule are the rows of the sensing matrices $\bs A_{l}$, denoted as $\ail^\top$ (\ie as the column vectors $\ail$). 
		Sets and operators are generally denoted with calligraphic capital letters, \eg $\cl S$, $\cl A$. The $n$-variate Gaussian distribution is denoted as $\cl N(\vzer_n,\I_n)$, the uniform distribution is $\cl U_{\cl C}$ over a set $\cl C$ specified at the argument. The usual big-O notation is indicated by $\cl O$. The Kronecker product between vectors of matrices is denoted by $\otimes$. Collections of vectors or matrices can be indexed by single or double subscripts, \eg $\a_i$ or $\ail$.  For the sake of brevity, we may omit the boundaries of sums that are identical throughout the development: unless otherwise stated, $\ts\sum_i$ denotes $\ts\sum^{m}_{i=1}$, $\ts\sum_l$ denotes $\ts\sum^{p}_{l=1}$, $\ts\sum_{i,l}$ denotes $\ts\sum^{m}_{i=1} \ts\sum^{p}_{l=1}$.  For some integer $q>0$, we denote the set $[q] \coloneqq \{1, \ldots, q\}$. The norms in $\ell_p(\R^q)$ are denoted as usual with $\|\cdot\|_p$ with $\|\cdot\| = \|\cdot\|_2$. The spectral norm of a matrix reads $\|\cdot\|$, while the Frobenius norm and scalar product are $\|\cdot\|_F$ and $\langle\cdot,\cdot\rangle_F$, respectively. Spheres and balls in $\ell_p(\R^q)$ will be denoted by $\bb S^{q-1}_p$ and $\B^q_p$ respectively. For matrices, we also introduce the Frobenius sphere and ball, defined respectively as $\bb S_F^{n\times m}=\{\bs U \in \bb R^{n \times m}: \|\bs U\|_F=1\}$ and $\bb B_F^{n\times m}=\{\bs U \in \bb R^{n \times m}: \|\bs U\|_F\leq 1\}$. The projection operator on a closed convex set $\cl C$ is ${\cl P}_{\cl C}$, while orthogonal projection on a linear subspace $\cl B$ is denoted by the projection matrix $\P_{\cl B}$. The symbol $\sim_{\rm \iid}$ indicates that a collection of random variables (abbreviated as \rv) or vectors on the left hand side (\lhs) of the operator are independent and follow the same distribution given on the right hand side (\rhs). The Orlicz norm of a random variable $A$ is denoted as $\|A\|_{\psi_q} \coloneqq \sup_{r \geq 1} \ts r^{-1/q} {\Ex[|A|^r]}^{1/r}$ with the sub-exponential and sub-Gaussian norms being the cases for $q = 1$ and $q = 2$ respectively. We will often resort to some constants $C,c > 0$, as traditional in the derivation of non-asymptotic results for sub-Gaussian random vectors and matrices. The value of these constants is not relevant and may change from line to line, as long as it does not depend on the problem dimensions. The quantity $\bs\nabla_\v f(\ldots, \v, \ldots)$ denotes the gradient operator with respect to the vector $\v$ specified in the subscript, as applied on a function $f$. In absence of subscripts, it is the gradient of $f$ with respect to all of its components. $\H f$ denotes the Hessian matrix of $f$. The set $\Pi_+^m = \{\v \in \R^m_+, \, \vone^\top_m \v = m\}$ denotes the scaled {\em probability simplex}.  We will also refer to the orthogonal complement $\onep \coloneqq \{\v \in \R^m : \vone_m^\top \v= 0\} \subset \bb R^m$. Moreover, when vectors and matrices are projected or lie on the latter subspace they will be denoted with the superscript $\cdot^{\perp}$.  The canonical basis vectors of $\R^m$ are denoted by ${\bs c}_i$, $i \in [m]$. Their projection on $\onep$ is ${\bs c}^{\perp}_i \coloneqq \Ponep {\bs c}_i$. The operator $\succeq$ denotes the L\"owner ordering on the convex cone of symmetric positive-semidefinite matrices (if strict, this is denoted as $\succ$). The absence of this ordering is denoted as $\nsucceq$. The restriction of this ordering to test vectors belonging to a set $\cl A$ is denoted in the subscript, \eg $\succ_{\cl A}$. The accent $\tilde{\cdot}$ will denote the noisy version of a quantity at the argument that was previously defined in absence of noise, while the accent $\bar{\cdot}$ denotes the estimates attained by an optimisation algorithm. The accent $\hat{\cdot}$ denotes unit vectors obtained from the argument ($\hat{\bs q} = \tfrac{\bs q}{\|\bs q\|}$) or unit matrices with respect to the Frobenius norm ($\widehat{\bs Q} = \tfrac{\bs Q}{\|\bs Q\|_F}$). The superscript $\cdot^{\rm s}$ denotes a quantity defined to accommodate subspace priors, while the superscript $.^{\rm c}$ denotes the complementary of an event.


\section{A Non-Convex Approach to Blind Calibration}
	\label{sec:nonconv}
	We now proceed to introduce our main non-convex optimisation problem and its geometric properties, depending on the dimensions of the setup in Def.~\ref{def:model}.
	\subsection{The Blind Calibration Problem}
	The formulation of an inverse problem for~\eqref{eq:measurement-model-matrix} is quite natural by means of a Euclidean data fidelity objective function $f(\bxi,{\bs\gamma}) \coloneqq  \tfrac{1}{2mp} {\ts \sum_{l=1}^{p} \left\Vert \diag({\bs\gamma}) \A_l \bxi - \y_l\right\Vert^2}$ (this is further expanded in Table~\ref{tab:quants} for both finite and asymptotic $p$). Since no {\em a priori} structure is assumed on the solution $(\x,{\bs g}) \in \R^n \times \R^m_+$ for now, let us operate in the overdetermined case $m p \geq n + m$ and solve the optimisation problem
	\begin{equation}%
		\label{eq:bcpearly}
		(\bar{\x}, \bar{{\bs g}}) \coloneqq {\mathop{\argmin}_{(\bxi,  {\bs\gamma}) \in \R^n \times \R^m}}  \tfrac{1}{2mp} {\ts \sum_{l=1}^{p} \left\Vert \diag({\bs\gamma}) \A_l \bxi - \y_l\right\Vert^2},
	\end{equation}%
	with $\y_l \in \R^m$, $\A_l\in \R^{m\times n}$, $l \in [p]$ as in Def.~\ref{def:model}. 
	To begin with, replacing $\y_l$ by its model~\eqref{eq:measurement-model-matrix} in~\eqref{eq:bcpearly} shows that, in absence of prior information on $\bs x$ or ${\bs g}$, all points in 
	\[
	\cl X \coloneqq \{(\bxi,{\bs\gamma}) \in\R^n \times \R^m : \bxi= {\alpha}^{-1} \x, \, {\bs\gamma} = \alpha {\bs g}, \, \alpha \in \R \setminus \{0\}\}
	\]
	are global minimisers of $f(\bxi,{\bs\gamma})$ up to an unrecoverable scaling factor $\alpha$. This ambiguity is an inevitable aspect of many bilinear inverse problems that is well recognised in the referenced literature (for a more general theory on the identifiability of this class of problems, we refer the reader to some recent contributions~\cite{BahmaniRomberg2015a,KechKrahmer2016,ChoudharyMitra2013}). 
	In most applications this scaling ambiguity is equivalent to ignoring the norm of the original signal and is an acceptable loss of information. Thus, since we also know that $\bs g$ is positive, we could apply a constraint such as ${\bs\gamma} \in \Pi_+^m \coloneqq \{\v \in \R^m_+, \, \vone^\top_m \v = m\}$ in~\eqref{eq:bcpearly}, which would fix the $\ell_1$-norm of the gain-domain solution. This yields the minimiser $(\x^\star,{\bs g}^\star) \coloneqq \big( \tfrac{\|{\bs g}\|_1}{m} \x, \tfrac{m}{\|{\bs g}\|_1}{\bs g}\big) \in \cl X \cap (\R^n \times \Pi_+^m)$, \ie scaled by $\alpha = \tfrac{m}{\|{\bs g}\|_1}$. In other words,~\eqref{eq:bcpearly} has only one non-trivial global minimiser over $\cl X  \cap (\R^n \times \Pi_+^m)$ and at least one in $\R^n \times \Pi_+^m$ (this minimiser's uniqueness can only be shown in a neighbourhood and shall be proved afterwards).  
	As a result, since  ${\bs g} \in \R^m_+$ in~Def.~\ref{def:model} is positive, bounded and close to unity, the gain-domain solution of~\eqref{eq:bcpearly} with the additional constraint $\bs\gamma \in \Pi_+^m$ will actually be
	\[
		{\bs g}^\star \in {\cl G}_{\rho} \subset \Pi^m_+, \, {\cl G}_{\rho} \coloneqq \vone_m + \onep \cap \rho\,\B^m_\infty, \, \rho < 1,
	\]
	for a maximum deviation $\rho \geq \|{\bs g}^\star-\vone_m\|_\infty$ which we assume known at least for what concerns the analysis of the proposed algorithms. 
	Thus, under the constraint $\bs \gamma \in {\cl G}_{\rho}$ we can specify that ${\bs g}^\star \coloneqq \vone_m + {\bs e}$ for ${\bs e} \in \onep \cap \rho\,\B^m_\infty$ as well as ${\bs\gamma} \coloneqq \vone_m + \eps$ for $\eps \in \onep \cap \rho\,\B^m_\infty$. With these simple considerations obtained by construction from Def.~\ref{def:model} we arrive to the following problem, that is simply \eqref{eq:bcpearly} with ${\bs\gamma} \in {\cl G}_{\rho} \subset \Pi_+^m$. 
	%
	\begin{definition}[Non-convex Blind Calibration Problem]\label{def:ncbc}
		We define {\em non-convex blind calibration} the optimisation problem
		\begin{equation}%
		\label{eq:bcp}
		(\bar{\x}, \bar{{\bs g}}) \coloneqq {\mathop{\argmin}_{(\bxi,{\bs\gamma}) \in \R^n \times \cl G_\rho}}  \tfrac{1}{2mp} {\ts \sum_{l=1}^{p} \left\Vert \diag({\bs\gamma}) \A_l \bxi - \y_l\right\Vert^2},
		\end{equation}%
		where
		\begin{equation}
			{\cl G}_{\rho} \coloneqq \big\{\bs v \in \bb R^m : \bs v \in \vone_m + \onep \cap \rho\,\B^m_\infty\big\},
		\end{equation} 
		given $\rho < 1$, $\y_l \in \R^m$, $\A_l\in \R^{m\times n}$, $l \in [p]$ as in Def.~\ref{def:model}.
	\end{definition}
	For finite $p$, we remark that the minimiser $(\bs x^\star, \bs g^\star)$ of~\eqref{eq:bcp} is still not necessarily the only one: to begin with, we would have to ensure that $\cl K \coloneqq \bigcap \limits_{l=1}^p {\rm Ker} \A_l = \{\vzer_n\}$ which, however, holds with probability $1$ if $mp > n$ and $\A_l$ is an \iid sub-Gaussian random matrix.	Hence, taking the number of measurements $mp \geq n+m$ to be sufficiently large reduces, but does not generally exclude the possibility of stationary points in the domain of~\eqref{eq:bcp}. This possibility will only be cleared out later in Cor.~\ref{coro:unique}, where we will be able to show that the magnitude of the gradient of~\eqref{eq:bcp} is non-null in a neighbourhood of the global minimiser $(\x^\star,{\bs g}^\star)$. 
	However, since $f(\bxi,{\bs\gamma})$ is non-convex (as further detailed below), devising an algorithm to find such a minimiser is non-trivial and requires a better understanding of the geometry of this problem, starting from its discussion in the next section.

	Hereafter, we shall simplify the notation $(\bs x^\star, \bs g^\star)$ to $(\bs x, \bs g)$, \ie in all generality, we can assume $\bs g$ is directly normalized so that $\|\bs g\|_1= m$, \ie so that it lies in $\cl G_\rho \subset \Pi^m_+$.
	
	\subsection{The Geometry of Blind Calibration}
	\label{sec:geom-blind}
	\subsubsection{Preliminaries}
	\label{sec:prelims}
	We now expand the objective function $f(\bxi,{\bs\gamma})$ of~\eqref{eq:bcp}, its gradient~$\bs\nabla f(\bxi,{\bs\gamma}) = \begin{bsmallmatrix}(\bs\nabla_\bxi f(\bxi,{\bs\gamma}))^\top & (\bs\nabla_{\bs\gamma} f(\bxi,{\bs\gamma}))^\top\end{bsmallmatrix}^\top$ and Hessian matrix $\H f(\bxi,{\bs\gamma})$ as reported in Table~\ref{tab:quants} in two useful forms for this paper (the matrix form is more convenient for the implementation, the explicit form as a function of the sensing vectors $\ail$ is analogue to that used in the proofs of our main results). There, we confirm that $f(\bxi,{\bs\gamma})$ is generally non-convex. In fact, as noted in~\cite{LingStrohmer2015} it is bilinear and {\em biconvex}, \ie convex once either $\bxi$ or ${\bs \gamma}$ are fixed in~\eqref{eq:bcp} (this suggests that an alternating minimisation with sufficiently many samples $mp$ could also converge, although proving this is an open problem). As a confirmation of this, there exist plenty of counterexamples for which the Hessian matrix $\H f(\bxi,{\bs\gamma}) \nsucceq 0$, the simplest being $(\bxi, {\bs\gamma}) = (\vzer_n,\vzer_m)$. 

	Moreover, note that the constraint in \eqref{eq:bcp} is so that each $\bs \gamma = \vone_m + \bs \varepsilon$ with $\bs \varepsilon \in \onep$. Thus, the steps that our descent algorithm will take must lie on $\onep$, so in our analysis we will also consider the projected gradient and Hessian matrix components by suitably projecting them on $\onep$. 
	Hence, we define the projected gradient as 
	\[
		\ts\bs\nabla^\perp f(\bxi,{\bs\gamma})\coloneqq\begin{bsmallmatrix}\I_n& \vzer_{n \times m} \\ \vzer_{m\times n} & \Ponep \end{bsmallmatrix} \bs\nabla f(\bxi,{\bs\gamma}) = \begin{bmatrix}
		\bs\nabla_\bxi f(\bxi,{\bs\gamma})\\
		\bs\nabla^\perp_{\bs\gamma} f(\bxi,{\bs\gamma})
		\end{bmatrix},
	\]
	whose component
	 $\bs\nabla^\perp_{\bs\gamma} f(\bxi,{\bs\gamma}) \coloneqq\Ponep \bs\nabla_{\bs\gamma} f(\bxi,{\bs\gamma})$, and the projected Hessian matrix 
	\[
	\H^\perp f(\bxi,{\bs\gamma}) \coloneqq \begin{bsmallmatrix}\I_n& \vzer_{n \times m} \\ \vzer_{m\times n} & \Ponep \end{bsmallmatrix} \H f(\bxi,{\bs\gamma}) \begin{bsmallmatrix}\I_n& \vzer_{n \times m} \\ \vzer_{m\times n} & \Ponep \end{bsmallmatrix},
	\]
	both fully developed in Table~\ref{tab:quants}, where the projection matrix $\Ponep \coloneqq \I_m - \tfrac{1}{m} \vone_m \vone^\top_m$. 
	These quantities allow us to discuss our problem~\eqref{eq:bcp} in terms of the deviations ${\bs e},\eps \in \onep$~around $\vone_m$, where by $\bs e \in \onep  \cap \rho \B^m_\infty$ we are allowed to carry out a perturbation analysis for sufficiently small values of $\rho<1$. 
	
	\subsubsection{Distances and Neighbourhoods}
	Since $f(\bxi,{\bs\gamma})$ is non-convex, applying a constraint as in~\eqref{eq:bcp} will not grant the convexity of problem~\eqref{eq:bcp} on its domain. However, a {\em local} notion of convexity may hold when testing those points that are close, in some sense, to the global minimiser of interest $(\x,{\bs g})$. Hence, we define a {\em distance} and a {\em neighbourhood} of the global minimiser as follows. 
	We first note that the pre-metric 
	\begin{equation}
		{\Delta}_F(\bxi,{\bs\gamma}) \coloneqq \tfrac{1}{m} \big\|\bxi {\bs\gamma}^\top - \x {\bs g}^\top \big\|^2_F	= 2 \Ex f(\bxi,{\bs\gamma})
		\label{eq:distF}
	\end{equation}
	is exactly the expected objective, the last equivalence being immediate from the form reported in Table~\ref{tab:quants}. Thus, the corresponding distance ${\Delta}^{\scriptstyle\frac{1}{2}}_F(\bxi,{\bs\gamma})$ is a naturally balanced definition that does not depend on the scaling, and truly measures the distance between $(\bxi,{\bs\gamma})$ and $(\x,{\bs g})$ in the product space $\R^n \times \R^m$. However, this choice would complicate the convergence proof for the algorithms described below, so we resort to the simpler
	\begin{equation}
	{\Delta}(\bxi,{\bs\gamma}) \coloneqq \|\bxi-\x\|^2 + \tfrac{\|\x\|^2}{m}\|{\bs \gamma}-{\bs g}\|^2.
	\label{eq:dist}
	\end{equation}
	To relate~\eqref{eq:dist} and~\eqref{eq:distF} note that, for $(\bxi,{\bs\gamma}) \in \R^n \times {\cl G}_{\rho}, \rho \in (0,1)$,  we have the bounds
	\begin{equation}
	\label{eq:boundsondelta}
	(1- \rho){\Delta}(\bxi,{\bs\gamma})\leq{{\Delta}_F(\bxi,{\bs\gamma})} \leq(1 + 2\rho){\Delta}(\bxi,{\bs\gamma}).
	\end{equation}
	Little is then lost in using ${\Delta}(\bxi,{\bs\gamma})$ in the following (a proof of~\eqref{eq:boundsondelta} is reported in App.~\ref{sec:appB}). 
	Thus, with this simpler definition of distance (noting that~\eqref{eq:dist} is still a pre-metric) we may define a neighbourhood of $(\x,{\bs g})$ as follows.
		\begin{definition}[($\kappa,\rho$)-neighbourhood]
			\label{def:neighbour}
			We define ($\kappa,\rho$)-neighbourhood of the global minimiser $(\x,{\bs g})$ the set 
			\begin{equation}
			\label{eq:neigh}
			{{\cl D}}_{\kappa,\rho} \coloneqq \{(\bxi,{\bs\gamma})\in \R^n\times {\cl G}_\rho : \Delta(\bxi,{\bs\gamma}) \leq \kappa^2 \|\x\|^2\},
			\end{equation}
			for $\kappa,\rho \in [0,1)$.
		\end{definition}
	Geometrically, we remark that~\eqref{eq:neigh} is simply the intersection of an ellipsoid in $\R^n \times \R^m$, as defined by $\Delta(\bxi,{\bs\gamma}) \leq \kappa^2 \|\x\|^2$, with $\R^n \times {\cl G}_{\rho}$. Hence, for some fixed and sufficiently small $\kappa,\rho$, such a neighbourhood defines a set of points in the product space on which we will be able to bound functions of the projected gradient. Before these considerations, we try and develop some intuition on the role of the total number of observations $mp$ in making the problem solvable by means of a descent algorithm as $p\rightarrow \infty$.
	
	\begin{table*}[tb]
			\centering
			\resizebox{\textwidth}{!}{
					\def\arraystretch{2}
					\begin{tabular}{ccc}
						\toprule
						\bf Quantity & \bf  Finite-sample value $(p < \infty)$ & \bf  Expectation $({\bb E}_{\ail}, p\rightarrow \infty)$ \\
						\toprule
						$f(\bxi,{\bs\gamma})$ &
						$\tfrac{1}{2mp}\ts \sum_{l=1}^{p} \left\Vert\diag({{\bs \gamma}}) \A_l \bxi - \diag({\bs g})\A_l \x\right\Vert^2 = \tfrac{1}{2mp}\ts \sum_{i,l} (\gamma_i \ail^\top\bxi - g_i \ail^\top \x)^2$
						& $ \begin{gathered} \tfrac{1}{2m}\left(\|\bxi\|^2 \|{\bs \gamma}\|^2\right)+\tfrac{1}{2m}\|\x\|^2 \|{\bs g}\|^2 - 2 ({\bs\gamma}^\top {\bs g})(\bxi^\top \x) \end{gathered}$ \\
						\midrule
						$\bs\nabla_{\bxi} f(\bxi,{\bs\gamma})$ & {$\tfrac{1}{mp}\ts \sum^p_{l=1} \A^\top_l \diag({\bs\gamma}) \left(\diag({\bs\gamma})\A_l\bxi-\diag({\bs g})\A_l\x\right) = \tfrac{1}{mp}\ts \sum_{i,l} \gamma_i \ail \big(\gamma_i \ail^\top\bxi - g_i \ail^\top \x\big)$} & $\tfrac{1}{m}\left[\|{\bs \gamma}\|^2 \bxi - ({\bs\gamma}^\top {\bs g}) \x \right]$\\
						$\bs\nabla_{\bs \gamma} f(\bxi,{\bs\gamma})$& $\tfrac{1}{mp}\ts \sum^p_{l=1}  \diag(\A_l \bxi) \left(\diag({\bs\gamma}) \A_l \bxi - \diag({\bs g}) \A_l \x\right)=  \tfrac{1}{mp}\ts \sum_{i,l} (\ail^\top \bxi) \big(\gamma_i \ail^\top\bxi - g_i \ail^\top \x\big) {\bs c}_i $ & $\tfrac{1}{m} \left[\|\bxi\|^2 {\bs\gamma} - (\bxi^\top\x) {\bs g} \right]$\\
						$\bs\nabla^\perp_{{\bs \gamma}} f(\bxi,{\bs\gamma})$ & $\tfrac{1}{mp}\ts \sum^p_{l=1}  \Ponep \diag(\A_l \bxi) \left(\diag(\vone_m + {\bs \varepsilon}) \A_l \bxi - \diag(\vone_m + {\bs e}) \A_l \x\right)=\tfrac{1}{mp}\ts \sum_{i,l} (\ail^\top \bxi) \big((1+\varepsilon_i) \ail^\top\bxi - (1+e_i) \ail^\top \x\big) {\bs c}^{\perp}_i$ & $\tfrac{1}{m} \left[\|\bxi\|^2 \eps - (\bxi^\top\x) {\bs e} \right]$\\
						\midrule
						$\H f(\bxi,{\bs\gamma})$ & 
						$ \begin{gathered}
						\tfrac{1}{mp}\ts \sum^p_{l=1} 
						\begin{bsmallmatrix} \A_l^\top \diag({\bs \gamma})^2 \A_l & \A_l^\top\,\diag(2 \diag({\bs\gamma})\A_l \bxi - \diag({\bs g}) \A_l \x )\\ 
						\diag(2 \diag({\bs\gamma}) \A_l\bxi - \diag({\bs g}) \A_l \x)\,\A_l & \diag( \A_l \bxi)^2
						\end{bsmallmatrix} \\
						=\tfrac{1}{mp}\ts \sum_{i,l} 
						\begin{bsmallmatrix} \gamma^2_i \ail \ail^\top & \ail\ail^\top\,(2\gamma_i \bxi - g_i \x){\bs c}^\top_i \\ 
						{\bs c}_i (2\gamma_i \bxi - g_i \x)^\top \ail\ail^\top & (\ail^\top \bxi)^2 {\bs c}_i {\bs c}_i^\top
						\end{bsmallmatrix}
						\end{gathered} $ 
						& 
						$\tfrac{1}{m} {\begin{bsmallmatrix} \|{\bs \gamma}\|^2 \I_n & 2 \bxi {\bs\gamma}^\top - \x {\bs g}^\top \\ 2 {\bs\gamma} \bxi^\top - {\bs g} \x^\top & \|\bxi\|^2 \I_m\end{bsmallmatrix}}$
						\\[4ex]					
						$\H^\perp f(\bxi,{\bs\gamma})$ & 
						$\begin{gathered}
						{\tfrac{1}{mp}\ts \sum^p_{l=1} \begin{bsmallmatrix} \A_l^\top \diag({\bs\gamma})^2 \A_l & \A_l^\top \diag(2 \diag({\bs \gamma})\A_l\bxi - \diag({\bs g})\A_l\x)\Ponep \\ 
							\Ponep \diag(2 \diag({\bs\gamma})\A_l\bxi - \diag({\bs g})\A_l\x)^\top\A_l & \Ponep \diag(\A_l\bxi)^2 \Ponep 
							\end{bsmallmatrix}} \\
						=\tfrac{1}{mp}\ts \sum_{i,l} 
						\begin{bsmallmatrix} \gamma^2_i \ail \ail^\top & \ail\ail^\top\,(2\gamma_i \bxi - g_i \x)(\bs c^\perp_i)^\top \\ 
						{\bs c}^\perp_i  (2\gamma_i \bxi - g_i \x)^\top \ail\ail^\top & (\ail^\top \bxi)^2 {\bs c}^\perp_i ({\bs c}^\perp_i)^\top
						\end{bsmallmatrix}
						\end{gathered}$ 
						& {$\tfrac{1}{m} {\begin{bsmallmatrix} \|{\bs \gamma}\|^2 \I_n & 2 \bxi \eps^\top - \x {\bs e}^\top \\ 2 \eps \bxi^\top - {\bs e} \x^\top & \|\bxi\|^2 \Ponep \end{bsmallmatrix}}$}\\
						\midrule
						$(\bxi_0,\,{\bs \gamma}_0)$ & $\left(\tfrac{1}{mp}\ts \sum^{p}_{l = 1} \left(\A_l\right)^\top \diag({\bs g})\A_l \x,\, \vone_m\right)= \left(\tfrac{1}{mp}\ts \sum_{i,l} g_i \ail \ail^\top \x,\, \vone_m\right)$ & $\left(\tfrac{\|{\bs g}\|_1}{m}\x,\,\vone_m\right)$\\ 
						\bottomrule
					\end{tabular}}
		\caption{\label{tab:quants}Finite-sample and expected values of the objective function, its gradient and Hessian matrix, and the initialisation point for the problem in Def.~\ref{def:ncbc}~and in absence of noise.}
	\end{table*}
	
	To do this, we look at the geometric behaviour of the objective minimised in~\eqref{eq:bcp} to understand whether a region exists where the problem shows local convexity. Intuitively, we generate a random instance of~\eqref{eq:measurement-model-matrix} for $n=2$, $m=2$ and $\ail \sim_{\rm \iid} \cl N (\vzer_n, \I_n)$, with $(\x,\bs g)\coloneqq\big(\tfrac{1}{\sqrt 2}[1 -1]^\top,\vone_2+ \tfrac{\sqrt 2}{25}[1 -1]^\top\big) = (\x,\bs g)$. The amount of snapshots is varied as $p =\{2^0,2^1,\,\cdots, \infty\}$. We measure the $\log f(\bxi,{\bs\gamma})$ for $\bxi \in \bb S^1_2$ and {${\bs\gamma}  = \vone_2 + r  \tfrac{1}{\sqrt 2}[\begin{smallmatrix}1 & -1\end{smallmatrix}]^\top \in {\cl G}_{\rho}$} depending only on a parameter $r < \rho$. As for the case $p \rightarrow \infty$ we simply report the logarithm of $\Ex f(\bxi,{\bs\gamma})$ in Fig.~\ref{fig:excyl}. As $mp > n+m$ there is one global minimiser at {$(\x,{\bs g})$}~for $r = \rho = 8 \cdot 10^{-2} $, whose neighbourhood exhibits a smooth behaviour suggesting local convexity as $p$ increases, and thus the existence of a {\em basin of attraction} around the minimiser. In other words, there will be a critical value of $m p$ for which the objective function behaves arbitrarily close to its expectation which, as we will see below and suggested  by our example, is indeed locally convex for sufficiently small values of $\kappa,\rho$. This property is investigated in the following section.
	
		\begin{figure}[t]
			\null\hfill
			\begin{minipage}{0.45\textwidth}
				\subfloat[Summary of geometric considerations in the gain domain]{
					\tikzset{myptr/.style={thick,decoration={markings,mark=at position 1 with %
								{\arrow[scale=1]{>}}},postaction={decorate}}}
					\resizebox{\linewidth}{!}{
						\begin{tikzpicture}[scale=2]
						\draw [myptr] (0,0) -- (0,2.5) node (yaxis) [above] {$\gamma_2$};
						\draw [myptr] (0,0) -- (2.5,0) node (xaxis) [right] {$\gamma_1$};
						\fill[mathblue,opacity=0.2,rotate around={-45:(1,1)}] (-0.2,0.9) rectangle (2.2,1.1);
						\fill[mathpurp,opacity=0.2,rotate around={-45:(0,0)}] (-1.2,-0.1) rectangle (1.2,0.1);
						\node [align=right] at (-1.25,2.6) [gray] {$\R^2 \otimes \cdots$};
						\draw[mathblue,ultra thick] (0,2)--(2,0);
						\draw[mathsand,myptr] (0,0) -- (0.2,1.8) node [midway,anchor=west] {${\bs \gamma}$};
						\node at (2,0) [anchor=north,mathblue] {$\Pi^2_+$};
						\draw[black,thick, dashed] (1.5,-1.5) -- (-1.5, 1.5);
						\node at (-1,1) [anchor=north east,mathpurp] {$\vone^\perp_2 \cap \rho \B^2_\infty$};
						\draw[black,myptr,gray] (0,0) -- (1,1) node [midway,above, xshift=-1ex] {$\vone_2$};
						\draw[mathgrn,myptr] (0,0) -- (1.3,0.7) node [midway,right,xshift=1ex,yshift=-0.5ex] {${\bs g}$};
						\draw[mathgrn,dashed] (0,0) -- (2.9724,1.6) node [anchor=north,xshift=0ex,sloped,near end] {\footnotesize Line of solutions $\alpha {\bs g}$};
						\draw[mathpurp,myptr] (0,0) -- (0.3,-0.3) node [below, yshift=-0.5ex,mathpurp] {${\bs e}$};
						\draw[mathpurp,myptr] (0,0) -- (-0.8,0.8) node [below, yshift=-0.5ex,mathpurp] {$\eps$};
						\draw[mathblue,dashed] (2,0) -- (1,-1);
						\draw[mathblue,dashed] (0,2) -- (-1,1);
						\draw[black,myptr,dashed] (1.3,1.3) -- (0.9,1.7) node [midway,right,xshift=-0.25ex, yshift=0.25ex] {$r$}; 
						\node (lc) at (2.2,0.5) [mathblue] {${\cl G}_{\rho}$};
						\draw[mathblue,myptr,dashed] (lc) -- ++(-0.35,0);
						
						
						\end{tikzpicture}
					}
				}
			\end{minipage}
			\hfill
			\begin{minipage}{0.45\textwidth}
				\null\hfill 
				\subfloat[$p=1$]{\includegraphics[width=0.49\linewidth]{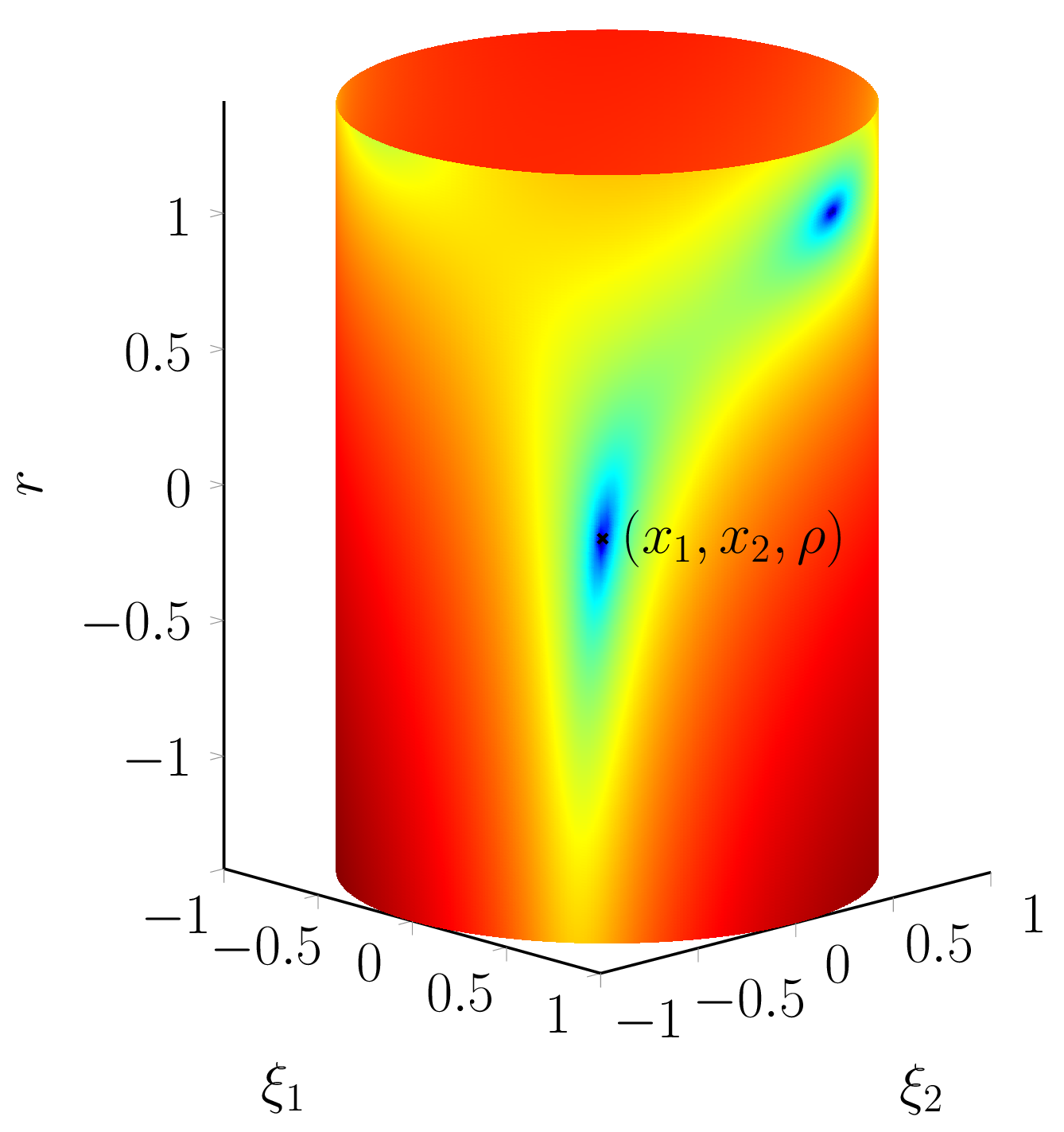}}\hfill
				\subfloat[$p=2$]{\includegraphics[width=0.49\linewidth]{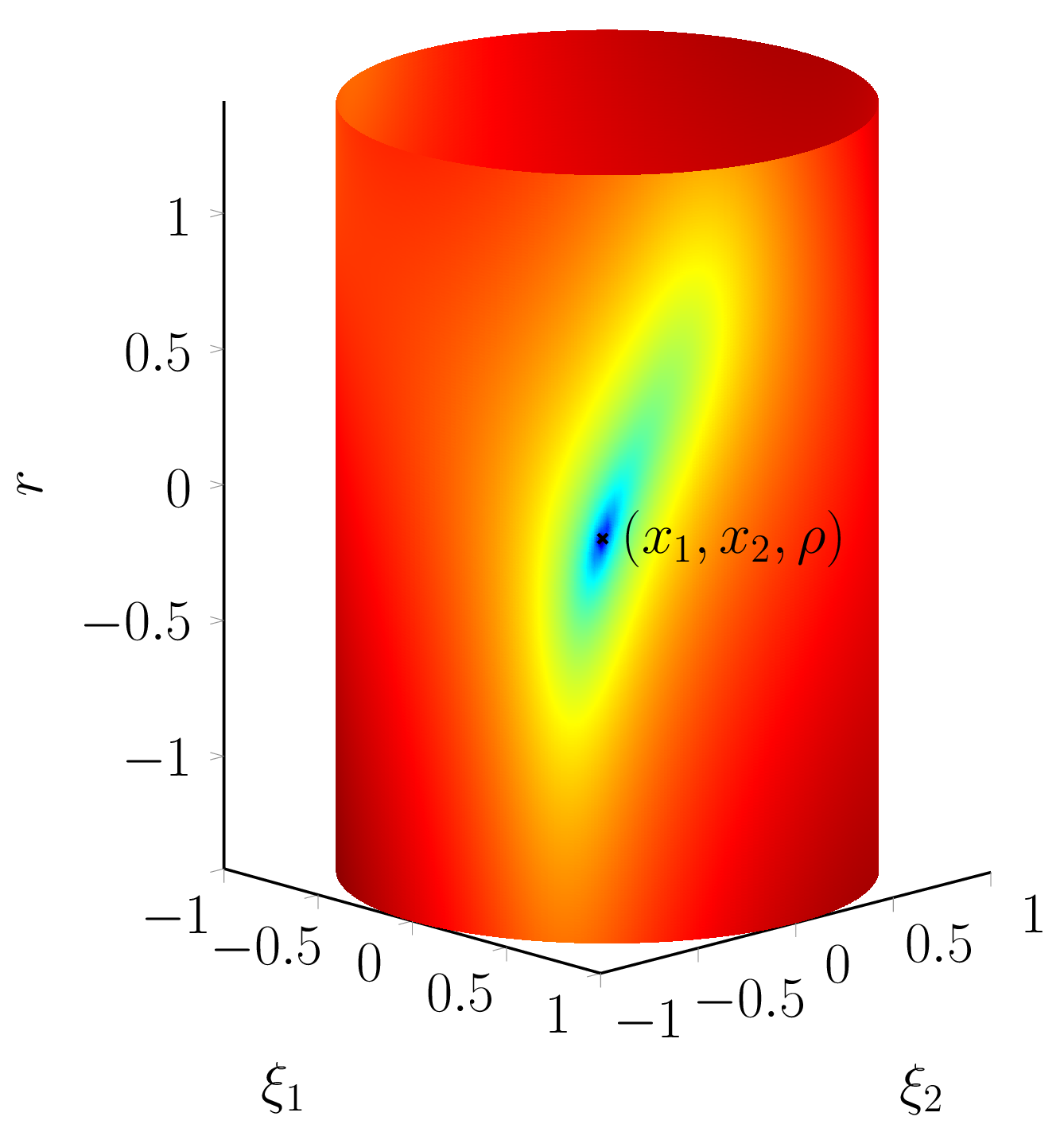}}\hfill\null\\
				\null\hfill
				\subfloat[$p=4$]{\includegraphics[width=0.49\linewidth]{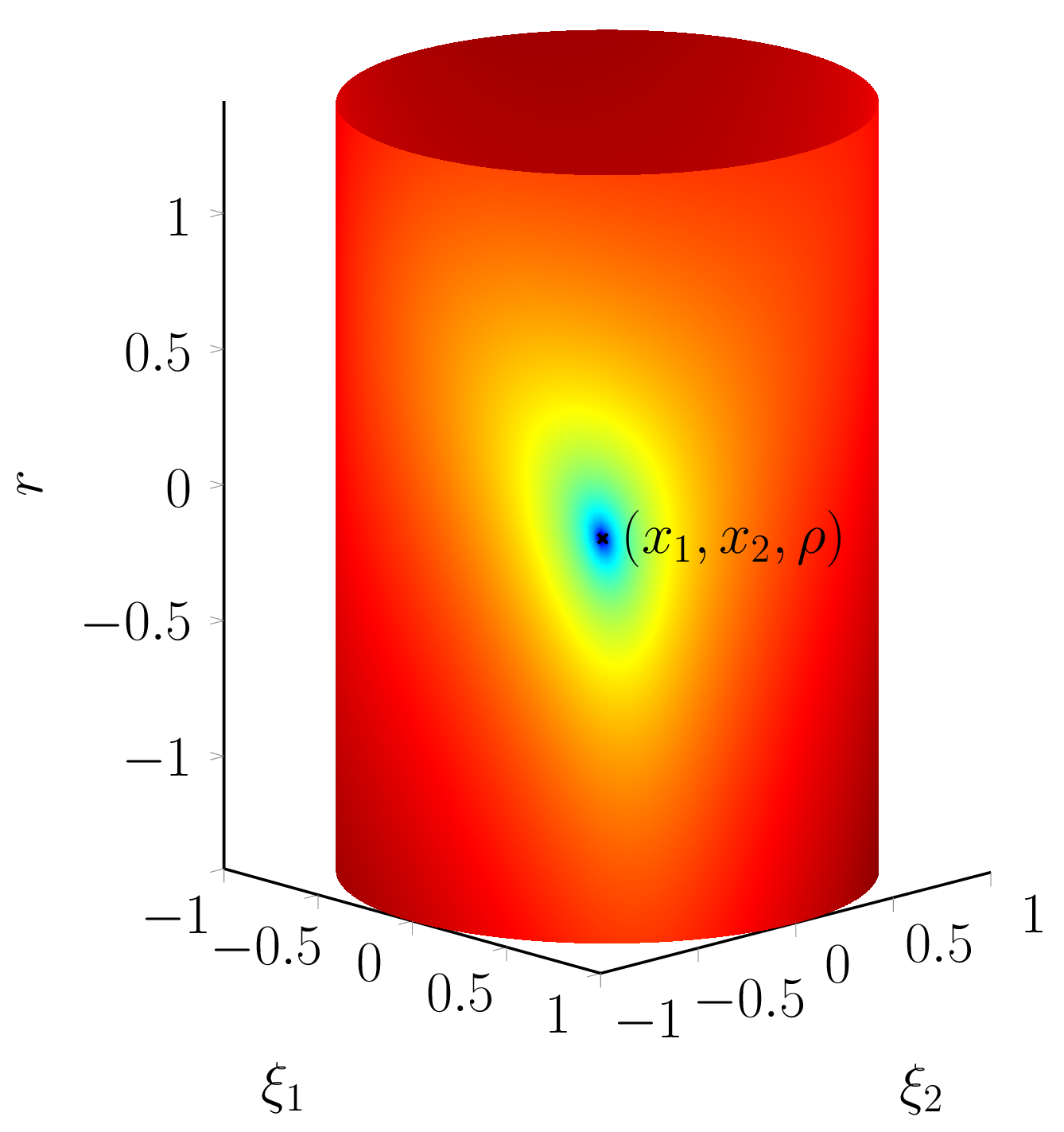}}\hfill
				\subfloat[Expectation $(p\rightarrow\infty)$]{\includegraphics[width=0.49\linewidth]{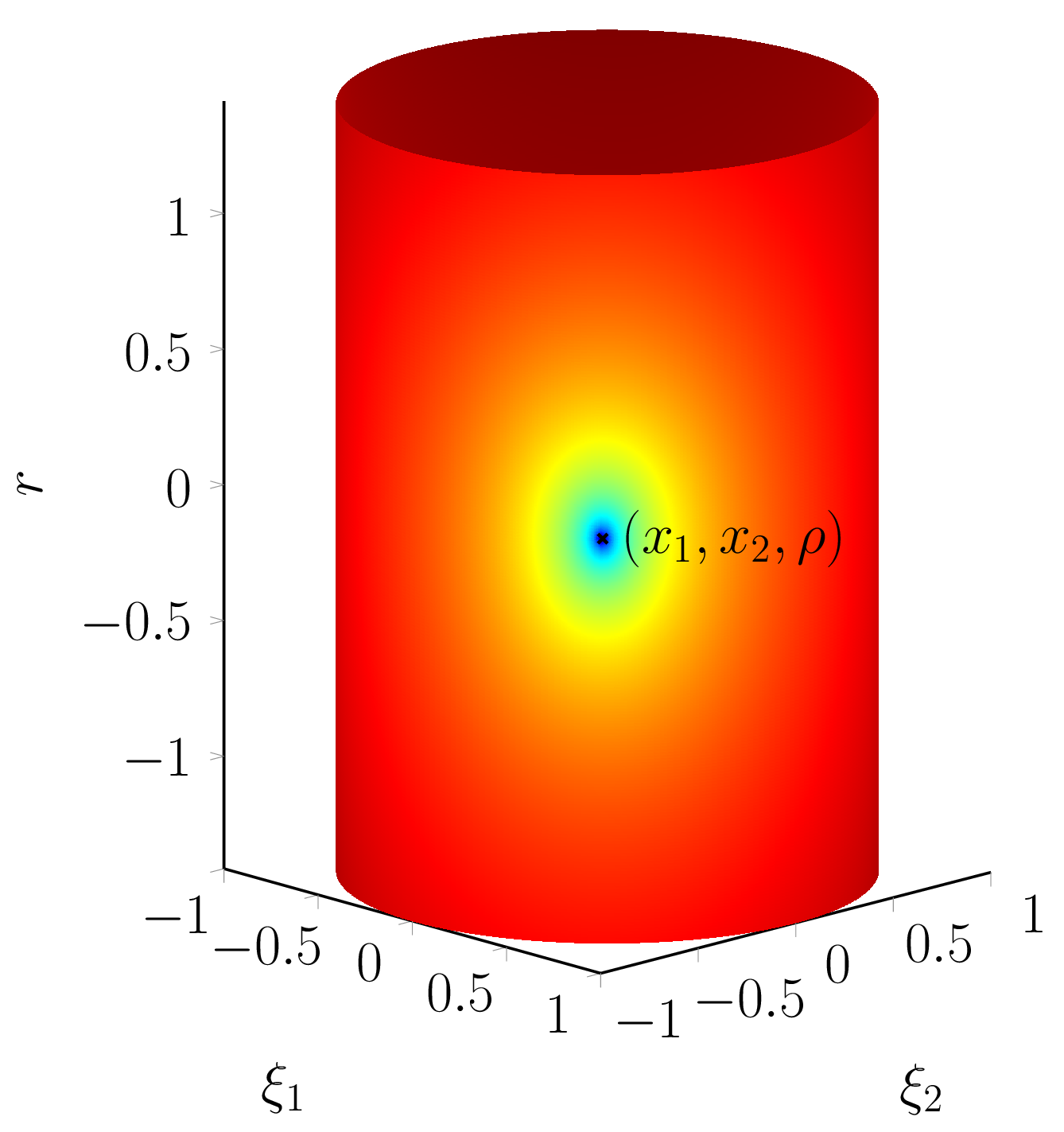}}\hfill\null\\
			\end{minipage}
			\hfill\null
			\caption{\label{fig:excyl}Geometric intuition on an instance of~\eqref{eq:bcp} for $n=2,m=2$: graphical representation of the parametrisation of the gain domain (a); heat map of $\log f(\bxi,{\bs\gamma})$ for the numerical example given in the text at $\|\bxi\|=1$ and increasing $p \rightarrow \infty$ (b--e). The vertical axis reports the parameter $r\in [-\sqrt 2, \sqrt 2]$.}
		\end{figure}
		
	\subsubsection{\new{Local Convexity in Expectation}}
	We proceed by highlighting two basic facts regarding~\eqref{eq:bcp} for $p \rightarrow \infty$, a case that is reported in Table~\ref{tab:quants}, where all finite-sample expressions of the quantities therein are unbiased estimates of their expectation with respect to  the \iid sensing vectors $\ail$. 
	To begin with, we define a set of test vectors $\cl V \coloneqq \R^n \times \onep$ used in the two following results. This is simply the set of all test vectors in the product space that are orthogonal to the direction $\vone_m$ in the gain domain. The proof of all following statements in this section is reported in App.~\ref{sec:appB}.
	\begin{proposition}[Global minimiser in expectation of~\eqref{eq:bcp}]
		\label{prop:expstatpoint}
		In expectation, the only stationary point of~\eqref{eq:bcp} is $\ts(\x,{\bs g}) \coloneqq \big( \tfrac{\|{\bs g}\|_1}{m} \x, \tfrac{m}{\|{\bs g}\|_1}{\bs g}\big)$. There, we have that $\Ex \H^\perp f(\x,{\bs g})\succ_{\cl V} 0$.
	\end{proposition}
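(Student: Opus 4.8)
The plan is to handle the two assertions separately, relying entirely on the closed-form expectations in the rightmost column of Table~\ref{tab:quants}. Writing $\bs\gamma = \vone_m + \eps$ and $\bs g = \vone_m + \bs e$ with $\eps,\bs e \in \onep$, a point $(\bxi,\bs\gamma) \in \R^n \times \cl G_\rho$ is stationary in expectation exactly when the projected gradient vanishes, i.e.\ when simultaneously $\|\bs\gamma\|^2\bxi = (\bs\gamma^\top\bs g)\x$ and $\|\bxi\|^2\eps = (\bxi^\top\x)\bs e$.

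First I would exclude $\bxi = \vzer_n$: the first equation would then force $(\bs\gamma^\top\bs g)\x = \vzer_n$, but $\bs\gamma^\top\bs g = m + \eps^\top\bs e \geq m - \|\eps\|\|\bs e\| > 0$ (using $\vone_m \perp \eps, \bs e$ and $\|\eps\|_\infty,\|\bs e\|_\infty \leq \rho < 1$), contradicting $\x \neq \vzer_n$. With $\bxi \neq \vzer_n$ the two equations force $\bxi = \lambda\x$ and $\eps = \beta\bs e$ for the scalars $\lambda = (\bs\gamma^\top\bs g)/\|\bs\gamma\|^2$ and $\beta = (\bxi^\top\x)/\|\bxi\|^2$; substituting $\bxi=\lambda\x$ into the latter gives $\beta = 1/\lambda$. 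Then $\|\bs\gamma\|^2 = m + \beta^2\|\bs e\|^2$ and $\bs\gamma^\top\bs g = m + \beta\|\bs e\|^2$, and imposing $\lambda = (\bs\gamma^\top\bs g)/\|\bs\gamma\|^2$ collapses to $\lambda m = m$, hence $\lambda = \beta = 1$ and $(\bxi,\bs\gamma) = (\x,\bs g)$ is the unique stationary point.

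For the second assertion I would evaluate the projected Hessian at the minimiser, where $\bxi = \x$, $\eps = \bs e$ make the off-diagonal block collapse ($2\bxi\eps^\top - \x\bs e^\top = \x\bs e^\top$), yielding
\[
\Ex \H^\perp f(\x,\bs g) = \tfrac{1}{m}\begin{bsmallmatrix} \|\bs g\|^2 \I_n & \x\bs e^\top \\ \bs e\x^\top & \|\x\|^2 \Ponep \end{bsmallmatrix}.
\]
Since the scalar $1/m > 0$ is irrelevant to the sign of the form, I would apply a Schur-complement argument on the positive-definite top-left block $\|\bs g\|^2\I_n$: its Schur complement is $\|\x\|^2\Ponep - \tfrac{\|\x\|^2}{\|\bs g\|^2}\bs e\bs e^\top$, and testing it against $\v \in \onep$ (where $\Ponep\v = \v$) gives $\|\x\|^2\|\v\|^2 - \tfrac{\|\x\|^2}{\|\bs g\|^2}(\bs e^\top\v)^2 \geq \|\x\|^2\|\v\|^2\big(1 - \tfrac{\|\bs e\|^2}{\|\bs g\|^2}\big)$ by Cauchy--Schwarz. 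Because $\bs e \perp \vone_m$ forces $\|\bs g\|^2 = m + \|\bs e\|^2$, the factor $1 - \|\bs e\|^2/\|\bs g\|^2 = m/(m+\|\bs e\|^2)$ is strictly positive, so (using $\|\x\| > 0$) the Schur complement is positive definite on $\onep$ and hence $\Ex\H^\perp f(\x,\bs g) \succ_{\cl V} 0$.

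Neither step is deep once Table~\ref{tab:quants} is available; the work is bookkeeping rather than conceptual. The point demanding the most care is the uniqueness in the first assertion---correctly discarding the degenerate candidate $\bxi = \vzer_n$ and checking that the scalar consistency equation admits only the root $\lambda = 1$---whereas in the second assertion the only thing to track is that the favourable gap $\|\bs g\|^2 = m + \|\bs e\|^2 > \|\bs e\|^2$, a direct consequence of $\bs e \perp \vone_m$, is precisely what renders the off-diagonal coupling harmless; notably the bound $\rho < 1$ is not even needed there.
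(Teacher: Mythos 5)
Your proof is correct, and it splits into one half that mirrors the paper and one half that genuinely does not. For uniqueness of the stationary point you proceed exactly as the paper: set the expected projected gradient from Table~\ref{tab:quants} to zero, deduce $\bxi = \lambda\x$ and $\eps = \beta\bs e$, and close with a scalar consistency equation. The difference is bookkeeping: your pair $(\lambda,\beta=1/\lambda)$ makes the $\|\bs e\|^2$-terms cancel identically, so everything collapses to $\lambda m = m$, whereas the paper works with the single scalar $\tau = \bxi^\top\x/\|\bxi\|^2$, reaches $\tau = \tfrac{m+\tau\|\bs e\|^2}{m+\tau^2\|\bs e\|^2}$, and excludes $\tau\neq 1$ by asserting $m = -\tau(1+\tau)\|\bs e\|^2<0$; strictly, $-\tau(1+\tau)>0$ for $\tau\in(-1,0)$, so that exclusion additionally leans on the feasibility bound $\|\bs e\|^2<m$ (it forces $m\leq\|\bs e\|^2/4$) — a delicate case your cleaner algebra never produces. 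Your explicit dismissal of $\bxi=\vzer_n$ is likewise a detail the paper leaves to its standing assumption $\x\neq\vzer_n$. For the definiteness of $\Ex\H^\perp f(\x,\bs g)$ on $\cl V$ you take a different route: a Schur-complement argument, \ie $\u^\top A\u + 2\u^\top B\v + \v^\top C\v = \|A^{1/2}\u + A^{-1/2}B\v\|^2 + \v^\top(C-B^\top A^{-1}B)\v$ with $A=\|\bs g\|^2\I_n$, $B=\x\bs e^\top$, $C=\|\x\|^2\Ponep$, which is legitimate on $\cl V=\R^n\times\onep$ precisely because the signal coordinate $\u$ is unconstrained there — this is the one step you should state explicitly, since the textbook Schur lemma is phrased for definiteness on the whole space. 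The paper instead rescales by $\bLambda^{-1/2}$ and bounds the off-diagonal coupling by the spectral norm of the Hermitian dilation, $\|{\bs H}(\x\bs e^\top)\|=\|\x\|\|\bs e\|$, obtaining the restricted lower bound $1-\rho/\sqrt{1+\rho^2}$. Each approach buys something: yours is more elementary and yields the sharper, $\rho$-free spectral factor $m/(m+\|\bs e\|^2)$, confirming your remark that $\rho<1$ is not needed for this part; the paper's dilation argument, though cruder here, is the one that transports verbatim to arbitrary points of $\cl D_{\kappa,\rho}$ and is reused to prove Prop.~\ref{prop:expconvexity}, which a Schur argument anchored at the exact minimiser would not give as directly.
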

	\noindent Once this stationary point is established, simple analysis of the projected Hessian matrix in expectation for all $(\bxi,{\bs\gamma}) \in {{\cl D}}_{\kappa,\rho}$ shows the following Proposition. 
	\begin{proposition}[{Convexity in expectation of~\eqref{eq:bcp} in a $(\kappa,\rho)$-neighbourhood}]
		\label{prop:expconvexity}
		For any $(\bxi,{\bs\gamma}) \in {\cl D}_{\kappa,\rho}$ for some $\ts{\kappa \in [0, 1), \rho \in \left[0,1-\tfrac{\sqrt{3} \kappa}{\sqrt{m}(1-\kappa)}\right)}$, we have that $\Ex \H^\perp f(\bxi,{\bs\gamma})\succ_{\cl V} 0$.
	\end{proposition}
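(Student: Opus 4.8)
The plan is to certify $\Ex\H^\perp f(\bxi,{\bs\gamma})\succ_{\cl V}0$ by a Schur-complement reduction that turns the $(n+m)$-dimensional spectral question into a single scalar inequality, and then to close that inequality using the geometry of ${\cl D}_{\kappa,\rho}$. First I would write the quadratic form explicitly. Fix a test direction $(\u,\v)\in\cl V=\R^n\times\onep$; since $\v\in\onep$ one has $\Ponep\v=\v$, so reading $\Ex\H^\perp f$ off Table~\ref{tab:quants} and abbreviating the projected mixed block as $\bs K\coloneqq 2\bxi\eps^\top-\x{\bs e}^\top$, the quantity to be shown positive is
\[
Q(\u,\v)\coloneqq\|{\bs\gamma}\|^2\|\u\|^2+2\scp{\u}{\bs K\v}+\|\bxi\|^2\|\v\|^2 .
\]
Because ${\bs\gamma}=\vone_m+\eps$ with $\eps\in\onep$, the signal block satisfies $\|{\bs\gamma}\|^2=m+\|\eps\|^2\geq m$, so it carries curvature of order $m$, whereas the gain block $\|\bxi\|^2$ is only of order $\|\x\|^2$; this strong asymmetry is what the argument must exploit.

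Next, since $\|{\bs\gamma}\|^2>0$, minimising $Q$ over the unconstrained variable $\u$ (optimum $\u=-\|{\bs\gamma}\|^{-2}\bs K\v$) shows that $\Ex\H^\perp f(\bxi,{\bs\gamma})\succ_{\cl V}0$ is equivalent to positivity of the Schur complement, i.e.
\[
\|\bs K\v\|^2<\|\bxi\|^2\|{\bs\gamma}\|^2\|\v\|^2\qquad\text{for all }\v\in\onep\setminus\{\vzer_m\} .
\]
The decisive gain is that the large curvature $\|{\bs\gamma}\|^2\geq m$ now appears in the denominator and divides down the $O(\sqrt m)$ size of the coupling. I would then expand, the only surviving vector inner product being $\scp{\bxi}{\x}$,
\[
\|\bs K\v\|^2=4\|\bxi\|^2(\eps^\top\v)^2-4\scp{\bxi}{\x}(\eps^\top\v)({\bs e}^\top\v)+\|\x\|^2({\bs e}^\top\v)^2 ,
\]
and compare its leading term against the $\|\bxi\|^2\|\eps\|^2$ piece of the right-hand side that stems from $\|{\bs\gamma}\|^2=m+\|\eps\|^2$. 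By Cauchy--Schwarz $(\eps^\top\v)^2\leq\|\eps\|^2\|\v\|^2$, whence $4\|\bxi\|^2(\eps^\top\v)^2-\|\bxi\|^2\|\eps\|^2\|\v\|^2\leq 3\|\bxi\|^2\|\eps\|^2\|\v\|^2$; this factor three is exactly where the $\sqrt 3$ of the stated threshold is born. The remaining terms are controlled with the neighbourhood bounds: ${\cl G}_\rho$ gives $\|\eps\|_\infty,\|{\bs e}\|_\infty\leq\rho$ hence $\|\eps\|,\|{\bs e}\|\leq\rho\sqrt m$, while $\Delta(\bxi,{\bs\gamma})\leq\kappa^2\|\x\|^2$ gives both $\|\bxi-\x\|\leq\kappa\|\x\|$ (so $\|\bxi\|\geq(1-\kappa)\|\x\|$ and $|\scp{\bxi}{\x}|\leq\|\bxi\|\|\x\|$) and $\|\eps-{\bs e}\|=\|{\bs\gamma}-{\bs g}\|\leq\kappa\sqrt m$. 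Substituting these, together with $\|{\bs\gamma}\|\geq\sqrt m$, reduces the Schur inequality to an algebraic condition on $(\kappa,\rho,m)$ that is implied by $\rho<1-\tfrac{\sqrt 3\,\kappa}{\sqrt m(1-\kappa)}$.

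I expect the main obstacle to be resisting the temptation of a crude perturbation bound, and correspondingly keeping the internal cancellation of $\bs K$ intact. At the minimiser (Prop.~\ref{prop:expstatpoint}) the mixed block collapses to the rank-one matrix $\x{\bs e}^\top$, but on ${\cl D}_{\kappa,\rho}$ it differs from this by a perturbation whose operator norm is of order $\sqrt m$, because $\eps$ and ${\bs e}$ may have Euclidean norm as large as $\rho\sqrt m$. A short computation from the same table shows the smallest eigenvalue of $\Ex\H^\perp f(\x,{\bs g})$ to be only of order $\|\x\|^2$, which is dwarfed by the $O(\sqrt m)$ perturbation; hence bounding $\|\Ex\H^\perp f(\bxi,{\bs\gamma})-\Ex\H^\perp f(\x,{\bs g})\|$ in operator norm and invoking Weyl's inequality is doomed. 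The Schur complement circumvents this precisely by pairing the $O(\sqrt m)$ coupling with the $\Omega(m)$ signal curvature, so that the genuinely relevant quantity is $\|\bs K\v\|/(\|\bxi\|\|{\bs\gamma}\|)$ rather than $\|\bs K\|$ alone; splitting $\bs K$ through the triangle inequality would discard the cancellation and spuriously reintroduce the lost $\sqrt m$, so I would work with the squared identity for $\|\bs K\v\|^2$ throughout.
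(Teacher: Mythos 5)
Your Schur-complement reduction is valid, and it is a genuinely different route from the paper's proof (which rescales by $\bLambda^{-1/2}$ with $\bLambda = \begin{bsmallmatrix}\|{\bs\gamma}\|^2\I_n & \vzer_{n\times m}\\ \vzer_{m\times n} & \|\bxi\|^2\I_m\end{bsmallmatrix}$ and bounds the two Hermitian dilations ${\bs H}(\bxi\eps^\top)$ and ${\bs H}(\bxi\eps^\top-\x{\bs e}^\top)$ in operator norm); your expansion of $\|\bs K\v\|^2$ is also correct. The gap is in the closing step. Once you absorb $4\|\bxi\|^2(\eps^\top\v)^2$ against only the $\|\bxi\|^2\|\eps\|^2\|\v\|^2$ portion of the right-hand side and then control the remaining two terms "with the neighbourhood bounds" (i.e.\ in absolute value via Cauchy--Schwarz), you have discarded exactly the cancellation you warned about --- only now it is the cancellation between the first two terms of $\|\bs K\v\|^2$, not the one inside $\bs K$. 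Quantitatively: taking $\v = \eps/\|\eps\|$ with $\|\eps\|=\rho\sqrt m$, your surviving term $3\|\bxi\|^2\|\eps\|^2\|\v\|^2 = 3\rho^2 m\|\bxi\|^2\|\v\|^2$ already exceeds the whole budget $m\|\bxi\|^2\|\v\|^2$ as soon as $\rho\geq 1/\sqrt3$, and adding the absolute-value bounds on the cross and last terms, the algebraic condition your chain produces is $\rho^2\,[\,3(1+\kappa)^2+4(1+\kappa)+1\,] < (1-\kappa)^2$, i.e.\ $\rho< 1/(2\sqrt2)$ even at $\kappa=0$. This is \emph{not} implied by $\rho<1-\tfrac{\sqrt3\,\kappa}{\sqrt m(1-\kappa)}$, which permits $\rho$ arbitrarily close to $1$. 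Indeed at $\kappa=0$ (so $\bxi=\x$, $\eps={\bs e}$) the truth is $\|\bs K\v\|^2 = \|\x\|^2({\bs e}^\top\v)^2 \leq \rho^2 m\|\x\|^2\|\v\|^2 < \|\x\|^2\|{\bs g}\|^2\|\v\|^2$ for every $\rho<1$, entirely through the cancellation your bounds destroy. A side error: the $\sqrt3$ in the threshold is not born of your ``$4-1=3$'' step; in the paper it is $\sqrt{1+2\rho}\leq\sqrt3$, coming from the equivalence $\Delta_F\leq(1+2\rho)\Delta$ in \eqref{eq:boundsondelta}.

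The repair keeps the cancellation inside $\bs K$: for $\v\in\onep$ write $\bs K\v = \bxi(\eps^\top\v) + (\bxi\eps^\top-\x{\bs e}^\top)\v$ and use that $\bxi\eps^\top-\x{\bs e}^\top$ agrees with $\bxi{\bs\gamma}^\top-\x{\bs g}^\top$ on $\onep$ (since $\vone_m^\top\v=0$); then $\|(\bxi{\bs\gamma}^\top-\x{\bs g}^\top)\v\| \leq \|\bxi{\bs\gamma}^\top-\x{\bs g}^\top\|_F\,\|\v\| = \sqrt{m\,\Delta_F(\bxi,{\bs\gamma})}\,\|\v\| \leq \sqrt{m(1+2\rho)}\,\kappa\|\x\|\,\|\v\|$, which is the paper's key step (the $\Delta_F$--$\Delta$ equivalence) and the one ingredient your argument never invokes --- knowing $\|\bxi-\x\|\leq\kappa\|\x\|$ and $\|\eps-{\bs e}\|\leq\kappa\sqrt m$ separately is strictly weaker than this joint Frobenius control. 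Combined with $|\eps^\top\v|\leq\rho\sqrt m\|\v\|$, $\|{\bs\gamma}\|\geq\sqrt m$ and $\|\bxi\|\geq(1-\kappa)\|\x\|$, the Schur inequality then follows whenever $\rho+\tfrac{\sqrt{1+2\rho}\,\kappa}{1-\kappa}<1$. Note that this delivers the threshold $\rho<1-\tfrac{\sqrt3\,\kappa}{1-\kappa}$, \emph{without} the $\tfrac{1}{\sqrt m}$ gain: the paper's own proof reaches the displayed $\tfrac{1}{\sqrt m}$ only through the identification $\|\bxi{\bs\gamma}^\top-\x{\bs g}^\top\|_F = \Delta_F^{1/2}$, which drops the factor $\sqrt m$ hidden in $\Delta_F=\tfrac1m\|\cdot\|_F^2$. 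So you should not expect a correct completion of your route (or of theirs) to attain the $\sqrt m$-improved constant as stated.
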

	We remark two interesting aspects of the upper bound $\rho < 1-\tfrac{\sqrt{3} \kappa}{\sqrt{m}(1-\kappa)}$. Note how this can be made arbitrarily close to $1$ both when $m \rightarrow \infty$, \ie asymptotically in the number of measurements, and when $\kappa \rightarrow 0$, \ie when the basin of attraction is made arbitrarily small around the global minimiser. 
	
	Thus, in expectation~\eqref{eq:bcp} is locally convex on ${{\cl D}}_{\kappa,\rho}$ when the variations with respect to the gain domain are taken on $\onep$. However, this last information is not particularly useful in practice, since in expectation we would already have an unbiased estimate of $\x$ in the form of $\bxi_0$ in Table \ref{tab:quants}, from which ${\bs g}$ would be easily obtained. 
	This would suggest the need for a non-asymptotic analysis to check the requirements on $mp$ needed to benefit of the consequences of local convexity for finite $p$.  
	Rather than testing \new{local convexity by the positive-semidefiniteness of the Hessian matrix} (as done, \eg in Sanghavi \etal~\cite{WhiteSanghaviWard2015}) the theory we develop in Sec.~\ref{sec:recguars} follows the methodology of Cand\`es \etal~\cite{CandesLiSoltanolkotabi2015}, \ie a simple first-order analysis of the local properties of the gradient $\bs\nabla^\perp f(\bxi,{\bs\gamma})$ and initialisation point $(\bxi_0,{\bs \gamma}_0)$ which suffice to prove our recovery guarantees.


\section{Blind Calibration by Projected Gradient Descent}
	\label{sec:solbypgd}
	In this section we discuss the actual algorithm used to solve \eqref{eq:bcp} in its non-convex form with the observations made in Sec.~\ref{sec:nonconv}. The algorithm is a simple projected gradient descent, as detailed below. We proceed by stating our method and providing right after the recovery guarantees ensured by a sample complexity requirement given on $m p$.
	\subsection{Descent Algorithm}
	\label{sec:algo}
	The solution of \eqref{eq:bcp} is here obtained as summarised in Alg.~\ref{alg1} and consists of an {\em initialisation} $(\bxi_0,{\bs \gamma}_0)$ followed by projected gradient descent. Similarly to~\cite{CandesLiSoltanolkotabi2015} we have chosen an initialisation $\bxi_0$ that is an unbiased estimator of the exact signal-domain solution as $p\rightarrow\infty$, \ie $\Ex\bxi_0 = \x$. This is indicated in Table~\ref{tab:quants} and straightforward since $\Ex \ail \ail^\top = \I_n$. For $p<\infty$ we will show in Prop.~\ref{prop:init} that, for $mp \gtrsim (n+m) \log n$ and sufficiently large $n$, the initialisation lands in $(\bxi_0,{\bs \gamma}_0) \in {\cl D}_{{\kappa},\rho}$ for $\rho \in [0, 1)$ and $\kappa>0$~with high probability. As for the gains, we initialise ${\bs \gamma}_0 \coloneqq \vone_m \ (\eps_0 \coloneqq \vzer_m)$. This initialisation allows us to establish the radius of the basin of attraction around $(\x, {\bs g})$. As a minor advantage, let us also remark that this initialisation is specific to our sensing model, but computationally cheaper than spectral initialisations such as those proposed in \cite{CandesLiSoltanolkotabi2015,LiLingStrohmerEtAl2016}. 
	
	Since $\rho < 1$ is small, we perform a few simplifications to devise our solver to \eqref{eq:bcp}. While generally we would need to project\footnote{Computationally this projection is entirely feasible, but would critically complicate the proofs.} any step in ${\bs \gamma}$ on $\cl G_\rho$, we first update the gains with the projected gradient $\bs\nabla^\perp_{\bs \gamma} f(\bxi_j, {\bs \gamma}_j)$ in step 5 (see Table \ref{tab:quants} for its expression). Then we apply ${\cl P}_{{\cl G}_{\rho}}$, \ie the projector on the closed convex set ${{\cl G}_{\rho}}$ (step 6). This is algorithmically easy, as it can be  obtained by, \eg alternate projection on convex sets\footnote{\new{Indeed, ${\cl P}_{{\cl G}_{\rho}}$ is obtained algorithmically by simple alternate projections between $\Ponep$ and ${\cl P}_{\rho\,\B^m_\infty}$ and adding $\vone_m$ to the output.}}. However, 
	 this step is merely a formal requirement to ensure that each iterate ${\bs \gamma}_{j+1} \in {\cl G}_{\rho} \subset \Pi^m_+$ for some fixed $\rho$ when proving the convergence of Alg.~\ref{alg1} to $(\x,{\bs g})$, but practically not needed, as we have observed that step 6 can be omitted since $\check{{\bs \gamma}}_{j+1} \in {\cl G}_{\rho}$ is always verified in our experiments. 
	
	Thus, Alg.~\ref{alg1} is as simple and efficient as a first-order descent algorithm with the projected gradient $\bs\nabla^\perp f(\bxi,{\bs \gamma})$. The proposed version performs two line searches in step 3 that can be solved in closed-form at each iteration, as made clearer in Sec.~\ref{sec:stepsize}. These searches are simply introduced to improve the convergence rate, but are not necessary and could be replaced by a careful choice of some fixed steps $\mu_\bxi,\mu_{\bs \gamma} > 0$ (for which, in fact, our main theoretical results are developed).
	
	\begin{algorithm}[!t]
		\caption{Non-Convex Blind Calibration by Projected Gradient Descent.}
		\label{alg1}
		{
			\begin{algorithmic}[1]
				\STATE Initialise $\bxi_0 \coloneqq \tfrac{1}{m p}
				\sum^{p}_{l = 1} \left(\A_l\right)^\top \y_l,
				\, {\bs \gamma}_0 \coloneqq \vone_m, \, j \coloneqq0$. 
				\WHILE{stop criteria not met}
				\STATE $\begin{cases} 
				\mu_\bxi \coloneqq \argmin_{\upsilon \in \R} f(\bxi_{j}-\upsilon\bs\nabla_\bxi f({\bxi_{j}, {\bs \gamma}_{j}}), {\bs \gamma}_{j}) \\
				\mu_{\bs \gamma} \coloneqq \argmin_{\upsilon \in \R} f(\bxi_{j},{\bs \gamma}_{j}-\upsilon\bs\nabla^\perp_{\bs \gamma} f({\bxi_{j}, {\bs \gamma}_{j}})) 
				\end{cases}$
				\STATE $\bxi_{j+1} \coloneqq \bxi_{j} - \mu_\bxi \bs\nabla_\bxi f({\bxi_{j}, {\bs \gamma}_{j}})$
				\STATE $\check{{\bs \gamma}}_{j+1} \coloneqq {\bs \gamma}_{j} - \mu_{\bs \gamma} \, \bs\nabla^\perp_{\bs \gamma} f({\bxi_{j}, {\bs \gamma}_{j}})$ 
				\STATE ${{\bs \gamma}}_{j+1} \coloneqq {\cl P}_{{\cl G}_{\rho}} \check{{\bs \gamma}}_{j+1}$
				\STATE $j \coloneqq j + 1$
				\ENDWHILE
			\end{algorithmic}
		}
	\end{algorithm}
	
	\subsection{Convergence Guarantees}
	\label{sec:recguars}
	We now obtain the conditions that ensure convergence of this descent algorithm to the exact solution $(\x,{\bs g})$ for some fixed step sizes $\mu_\bxi,\mu_{\bs \gamma}$. This is done in three steps: $(i)$ the initialisation $(\bxi_0,{\bs \gamma}_0)$ is shown to lie in ${\cl D}_{{\kappa},\rho}$ for $\rho \in [0,1)$ and some small $\kappa$ with high probability, \ie the probability that this is not verified decays exponentially in the problem dimensions; $(ii)$ {\em uniformly} on ${\cl D}_{{\kappa},\rho}$ we are able to show that, with high probability, a gradient descent update decreases the distance to $(\x,{\bs g})$ by bounding the magnitude of $\bs\nabla^\perp f(\bxi,{\bs \gamma})$ as well as its angle with the direction of the global minimiser; $(iii)$ still by uniformity, applying this property to any iterate $(\bxi_j,{\bs \gamma}_j)$ of Alg.~\ref{alg1} leads to finding some fixed step values $\mu_\bxi,\mu_{\bs \gamma}$ that grant convergence to $(\x,{\bs g})$ as $j\rightarrow \infty$, \ie what is commonly referred to as {\em linear convergence} (with respect to $\log j$). 
	The proofs of all following statements in this section are reported in App.~\ref{sec:appC}. The general proof strategy relies on concentration inequalities for functions of our~\iid sub-Gaussian sensing vectors ${\ail}$ as well as a particular application of matrix Bernstein's inequality. These fundamental mathematical tools are defined in App.~\ref{sec:appA}. 
	
	We first focus on the initialisation and assess its non-asymptotic properties in terms of the distance attained with respect to the global minimiser.
	\begin{proposition}[Initialisation Proximity]
		\label{prop:init}
		Let $(\bxi_0,{\bs \gamma}_0)$ be as in Table~\ref{tab:quants}. Given ${\delta} \in (0,1)$, $ t \geq 1$, provided $\ts mp \gtrsim {\delta}^{-2} (n + m) \log \tfrac{n}{\delta} $ and $n \gtrsim t \log mp$, with probability exceeding 
		\begin{equation}
		\label{eq:bound1}
		1 - C e^{- c {\delta}^2 m p} - (mp)^{-t}
		\end{equation}
		for some $C, c > 0$,  we have that $\|\bxi_0 - \x\| \leq {\delta}
		\|\x\|$. 
		Since ${\bs \gamma}_0 \coloneqq\bs 1_m$ we also have $\|{\bs \gamma}_0 - {\bs g}\|_\infty\leq\rho<1$. Thus $(\bxi_0,{\bs \gamma}_0) \in
		{\cl D}_{{\kappa},\rho}$ with the same probability and $\kappa_0 \coloneqq \sqrt{\delta^2+\rho^2}$.
	\end{proposition}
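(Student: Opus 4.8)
The plan is to separate the two coordinates of $(\bxi_0,{\bs\gamma}_0)$, since only the signal estimate is random. The gain coordinate is immediate: $ {\bs\gamma}_0\coloneqq\vone_m$ is deterministic and $\bs g\in{\cl G}_\rho$, so $\|{\bs\gamma}_0-\bs g\|_\infty=\|\vone_m-\bs g\|_\infty\le\rho<1$ holds with certainty; moreover $\|\vone_m-\bs g\|^2\le m\rho^2$, whence $\tfrac{\|\x\|^2}{m}\|{\bs\gamma}_0-\bs g\|^2\le\rho^2\|\x\|^2$. Hence the whole probabilistic content is the bound $\|\bxi_0-\x\|\le\delta\|\x\|$, after which the definition~\eqref{eq:dist} gives $\Delta(\bxi_0,{\bs\gamma}_0)\le(\delta^2+\rho^2)\|\x\|^2=\kappa_0^2\|\x\|^2$, placing $(\bxi_0,{\bs\gamma}_0)\in{\cl D}_{\kappa_0,\rho}$. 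Writing $\bxi_0=\tfrac1{mp}\sum_{i,l}g_i\ail\ail^\top\x$ and using $\Ex\ail\ail^\top=\I_n$ together with $\|\bs g\|_1=m$, I get $\Ex\bxi_0=\x$ and $\bxi_0-\x=\bs M\x$ with $\bs M\coloneqq\tfrac1{mp}\sum_{i,l}g_i(\ail\ail^\top-\I_n)$, so it suffices to control $\|\bs M\x\|$.

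I would split $\bs M\x$ along $\hat{\bs x}$ and its orthogonal complement. The parallel part is the scalar $a_\parallel\coloneqq\tfrac1{mp}\sum_{i,l}g_i(\ail^\top\hat{\bs x})^2$, whose mean is $\tfrac{\|\bs g\|_1}{m}=1$; each summand is sub-exponential with $\psi_1$-norm controlled by $\alpha$ and by $g_i<1+\rho$, so a scalar Bernstein inequality yields $|a_\parallel-1|\le\delta/2$ with failure probability at most $Ce^{-c\delta^2mp}$, which is exactly the first term of~\eqref{eq:bound1}. The orthogonal part $\tfrac1{mp}\sum_{i,l}g_i(\ail^\top\hat{\bs x})(\I_n-\hat{\bs x}\hat{\bs x}^\top)\ail$ is a mean-zero sum of sub-exponential vectors living in an $n$-dimensional space, which I would concentrate through the matrix Bernstein inequality of Prop.~\ref{prop:mbi} (equivalently, by bounding $\|\bs M\|$ in operator norm).

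Because the summands $g_i(\ail\ail^\top-\I_n)$ are unbounded, applying Prop.~\ref{prop:mbi} requires first truncating on the event $\{\|\ail\|^2\le 2n\ \text{for all } i,l\}$. A union bound over the $mp$ vectors with the sub-Gaussian tail $\Pro(\|\ail\|^2>2n)\le e^{-cn}$ shows this event fails with probability at most $mp\,e^{-cn}\le(mp)^{-t}$ precisely under the hypothesis $n\gtrsim t\log mp$, giving the second term of~\eqref{eq:bound1}. On this event the per-term operator bound and the matrix variance are both of order $n/(mp)$, so Prop.~\ref{prop:mbi} forces $mp\gtrsim\delta^{-2}n\log\tfrac n\delta$ to obtain $\|\bs M\|\le\delta/2$; this is implied by the stated (and over-Nyquist–respecting) requirement $mp\gtrsim\delta^{-2}(n+m)\log\tfrac n\delta$ used uniformly with the downstream theorems. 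The variance computation rests on the identity $\Ex[(\ail^\top\hat{\bs x})^2\ail\ail^\top]=\I_n+2\hat{\bs x}\hat{\bs x}^\top+(\Ex X^4-3)\diag(\hat{x}_j^2)$, in which the $(\Ex X^4-3)$ term couples to $\|\hat{\bs x}\|_4^4$; this is exactly where the Bernoulli Restriction enters, keeping the matrix variance of the centred summands non-degenerate in the edge case $\Ex X^4=1$, as anticipated in the discussion following Def.~\ref{def:modelsub}.

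The main obstacle is the orthogonal part. It is a sum of products $(\ail^\top\hat{\bs x})\ail$ of sub-Gaussian factors and is therefore only sub-exponential, so neither bounded-difference nor plain sub-Gaussian concentration applies directly; obtaining a clean exponential tail needs the truncation step above and a careful reconciliation of the ambient-dimension factor produced by Prop.~\ref{prop:mbi} with the logarithmic budget in the sample-complexity condition. This bookkeeping, together with verifying the non-degeneracy of the variance when $\Ex X^4=1$, is where the real work lies, whereas the parallel scalar part and the entire gain coordinate are essentially immediate.
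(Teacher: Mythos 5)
Your overall architecture is sound and close to the paper's: the gain coordinate is deterministic, the entire burden is the bound $\|\bxi_0-\x\|\le\delta\|\x\|$ with $\bxi_0-\x=\bs M\x$, $\bs M\coloneqq\tfrac1{mp}\sum_{i,l}g_i(\ail\ail^\top-\I_n)$, and a truncation event $\max_{i,l}\|\ail\|^2\lesssim n$ handled by union bound accounts for the $(mp)^{-t}$ term under $n\gtrsim t\log mp$ (the paper does exactly this in Prop.~\ref{prop:truncation1}, with a threshold $\vartheta n$, $\vartheta$ depending on $\alpha$, rather than $2n$). Your scalar-Bernstein treatment of the parallel component is also fine and yields the $e^{-c\delta^2 mp}$ rate. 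The genuine gap is the tool you choose for the operator-norm (orthogonal) part. The paper does not use matrix Bernstein here: it proves Prop.~\ref{prop:wg-cov-subsp} by an $\epsilon$-net over the sphere (and over the weight set), with a \emph{scalar} sub-exponential Bernstein bound per net point, so the per-point exponent is $-c\delta^2 mp$ and the net entropy $(n+m)\log(n/\delta)$ is absorbed into that exponent under the stated sample complexity; this is precisely what produces the term $Ce^{-c\delta^2 mp}$ in~\eqref{eq:bound1}. The proof of Prop.~\ref{prop:init} then just invokes Cor.~\ref{coro:appl-wcovsub} twice, once with weights $g_i-1$ (bounded by $\rho$) and once with weights $\vone_m$.

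Matrix Bernstein cannot reproduce this. For your summands $\bs J_{i,l}=\tfrac{g_i}{mp}(\ail\ail^\top-\I_n)$ one has $\Ex\big[(\a\a^\top-\I_n)^2\big]=(n-2+\Ex X^4)\,\I_n$, so the matrix variance is $v\asymp n/(mp)$ (and $T\asymp n/(mp)$), and Prop.~\ref{prop:mbi} yields a failure probability of order $n\exp\bigl(-c\,\delta^2 mp/(n\log n)\bigr)$: the ambient dimension sits in the \emph{denominator} of the exponent, not as an additive logarithm. Under the stated hypothesis $mp\gtrsim\delta^{-2}(n+m)\log(n/\delta)$ with a universal constant, this exponent is only of order $(1+\tfrac{m}{n})\log(\tfrac{n}{\delta})/\log n$, i.e.\ bounded for $m\lesssim n$, so the resulting bound is vacuous (it even grows linearly in $n$ through the dimensional prefactor). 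To make it $(mp)^{-t}$ you would need the strictly stronger requirement $mp\gtrsim t\,\delta^{-2}\,n\log(n)\log(mp)$, and even then you would never recover the $e^{-c\delta^2 mp}$ term claimed in~\eqref{eq:bound1}. So either adopt the covering argument of Prop.~\ref{prop:wg-cov-subsp}, or accept a weaker probability statement under a larger sample complexity; as written, your proposal does not prove the proposition as stated. A second, minor point: the Bernoulli Restriction is not where you think it is. For this initialisation bound the relevant variance $(n-2+\Ex X^4)\,\I_n$ is dominated by $n$ and never degenerates, whatever $\Ex X^4\ge 1$ is; the paper never invokes the Hypothesis in this proof. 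It is needed only in the regularity-condition machinery (Prop.~\ref{prop:m-w-covconc} and Prop.~\ref{prop:standalone-ahmed}), where the matrix variance to be lower-bounded can genuinely vanish when $\Ex X^4=1$.
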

	
	Secondly, we develop the requirements for convergence, \ie for ${\cl D}_{{\kappa_0},\rho}$ to be a {basin of attraction} around the global minimiser. Provided that the initialisation lands in ${\cl D}_{{\kappa_0},\rho}$, any update from $(\bxi,{\bs \gamma}) \in {\cl D}_{{\kappa_0},\rho}$ to some $\bxi_+ \coloneqq \bxi - \mu_\bxi \bs\nabla_\bxi f({\bxi, {\bs \gamma}}), \check{{\bs \gamma}}_+ \coloneqq {\bs \gamma} - \mu_{\bs \gamma} \bs\nabla^\perp_{\bs \gamma} f({\bxi,{\bs \gamma}})$ has distance from the solution
	\begin{align}
	{\Delta} (\bxi_+,\check{{\bs \gamma}}_+) & = {\Delta} (\bxi,{\bs \gamma}) -2 \big(\mu_\bxi \langle \bs\nabla_\bxi f(\bxi,{\bs \gamma}), \bxi-\x\rangle + \mu_{\bs \gamma} \tfrac{\|\x\|^2}{m}\langle \bs\nabla^\perp_{\bs \gamma} f(\bxi,{\bs \gamma}), {\bs \gamma}-{\bs g}\rangle\big) \nonumber  \\ 
	& \quad + {\mu_\bxi^2} \|\bs\nabla_\bxi f(\bxi,{\bs \gamma})\|^2 + {\mu_{\bs \gamma}^2}\tfrac{\|\x\|^2}{m}\|\bs\nabla^\perp_{\bs \gamma} f(\bxi,{\bs \gamma})\|^2. \label{eq:updatedist-main}
	\end{align}	
	
	As detailed in App.~\ref{sec:appC}, it is therefore clear from \eqref{eq:updatedist-main} that \new{a lower bound on $\big\langle \bs\nabla^\perp f(\bxi, {\bs \gamma}),\, [ (\bxi-\x)^\top ({\bs \gamma}-{\bs g})^\top]^\top \big\rangle$ and upper bounds on both $\|\bs\nabla_{\bs \xi} f(\bxi,{\bs \gamma})\|$ and $\|\bs\nabla^\perp_{\bs \gamma} f(\bxi,{\bs \gamma})\|$}, holding for all points in ${\cl D}_{{\kappa_0},\rho}$ will yield the requirements to attain the desired convergent behaviour (for sufficiently small step sizes), also provided these bounds can be expressed as a function of $\Delta(\bxi,{\bs \gamma})$. This leads to the following proposition formulated for a general neighbourhood of radius $\kappa>0$.  
	
	\begin{proposition}[Regularity condition in ${\cl D}_{{\kappa},\rho}$]
		\label{prop:regu}
		Given $\delta \in (0,1)$, $t\geq1$ and $\kappa > 0$, provided $\rho < \tfrac{1-4\delta}{31}$ and 
                \begin{align*}
                  \ts n&\ts \gtrsim t \log(mp),\\
                  \ts mp&\ts \gtrsim t\delta^{-2} (n + m)\, 
                          \log^2(m(p+n)) \log(\frac{1}{\delta}).
                \end{align*}
                with probability exceeding $1 - C (mp)^{-t}$ for some $C>0$, we have for all $(\bxi,{\bs \gamma}) \in {\cl D}_{{\kappa},\rho}$
		\begin{align}
                  \label{eq:bounded-curvature}
		\left\langle \bs\nabla^\perp f(\bxi, {\bs \gamma}), \begin{bsmallmatrix} \bxi-\x \\ {\bs \gamma}-{\bs g} \end{bsmallmatrix} \right\rangle  & \geq \eta\, {\Delta}(\bxi,{\bs \gamma}), 
		\end{align}\new{\vspace{-7mm}
		\begin{align}
		\label{eq:bounded-lipschitz-xi}        
		\|\bs\nabla_{\bs \xi} f(\bxi, {\bs \gamma}) \|^2 & \leq L_{\bs \xi}^2\ {\Delta}(\bxi,{\bs \gamma}),\\
		\label{eq:bounded-lipschitz-gamma-perp}        
		\|\bs\nabla_{\bs \gamma}^\perp f(\bxi, {\bs \gamma}) \|^2 & \leq L_{\bs \gamma}^2\ {\Delta}(\bxi,{\bs \gamma}),
		\end{align}
                 where $\eta \coloneqq 1-31\rho-4\delta \in (0,1)$, $L_{\bs \xi} = 8 \sqrt 2$ and $L_{\bs \gamma} \coloneqq 4\sqrt 2(1+\kappa) \|\bs x\|$.}
	\end{proposition}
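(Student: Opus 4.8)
The plan is to prove each of the three estimates by decomposing the finite-$p$ quantity of Table~\ref{tab:quants} into its expectation (the $p\to\infty$ column) and a centred deviation, handling the expectation by direct algebra and the deviation by a uniform concentration argument over $\cl D_{\kappa,\rho}$.

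First I would dispatch the expectations. Writing $\bs\gamma=\vone_m+\eps$, $\bs g=\vone_m+\bs e$ with $\eps,\bs e\in\onep$, and using $\|\bs\gamma\|^2=m+\|\eps\|^2$ and $\bs\gamma^\top\bs g=m+\eps^\top\bs e$, the signal part of $\scp{\Ex\bs\nabla^\perp f}{[(\bxi-\x)^\top\,(\bs\gamma-\bs g)^\top]^\top}$ equals $\|\bxi-\x\|^2$ plus terms scaled by $\|\eps\|^2/m\le\rho^2$, while the gain part equals $\tfrac{\|\x\|^2}{m}\|\bs\gamma-\bs g\|^2$ plus terms carrying the factors $\|\bxi\|^2-\|\x\|^2$ and $\bxi^\top\x-\|\x\|^2$, each controlled by $\|\bxi-\x\|\le\kappa\|\x\|$. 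Summing these reconstructs $\Delta(\bxi,\bs\gamma)$ exactly, and bounding every correction by $\cl O(\rho)\,\Delta$—using $\kappa<1$ to fold the neighbourhood-radius factors into universal constants—gives $\Ex\scp{\cdots}{\cdots}\ge(1-c\rho)\Delta$, the source of the $-31\rho$ in $\eta$. The same expected expressions bound $\|\Ex\bs\nabla_\bxi f\|^2$ and $\|\Ex\bs\nabla^\perp_{\bs\gamma} f\|^2$ by multiples of $\Delta$, where now the bound $\|\bxi\|\le(1+\kappa)\|\x\|$ is retained to produce the explicit constant in $L_{\bs\gamma}$.

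The bulk of the work is the uniform concentration of the deviations. Each scalar entry of $\bs\nabla^\perp f$ is $\tfrac{1}{mp}\sum_{i,l}$ of a term quadratic in $\ail$ (a product of two linear forms, e.g. $(\ail^\top\bxi)(\gamma_i\ail^\top\bxi-g_i\ail^\top\x)$), so the deviation from its mean is a sum of independent centred sub-exponential variables. I would proceed in three standard moves: (i) at a fixed $(\bxi,\bs\gamma)$, apply Bernstein's inequality—scalar for the inner product, and the matrix version (Prop.~\ref{prop:mbi}, App.~\ref{sec:appA}) for the gradient norms written as spectral norms of matrix-valued sums—to obtain a deviation of order $\delta\,\Delta$ with failure probability $e^{-c\min(\delta^2,\delta)mp}$; (ii) cover $\cl D_{\kappa,\rho}\subset\R^n\times\cl G_\rho$ by an $\epsilon$-net of cardinality $e^{C(n+m)}$, which is precisely where the $(n+m)$ factor originates; (iii) union-bound over the net and extend to all of $\cl D_{\kappa,\rho}$ through a Lipschitz bound on the gradient map, whose Lipschitz constant is governed by $\max_{i,l}\|\ail\|^2$ and $\max_l\|\A_l\|$. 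The two logarithmic factors in the requirement on $mp$ enter here: one from the sub-exponential tails of the quadratic terms, forcing truncation at scale $\log(m(p+n))$, and one from controlling the maximum of the $mp$ terms over the $m$ rows, $p$ snapshots and the $n$-dimensional net—hence the argument $m(p+n)$. The side condition $n\gtrsim t\log(mp)$ is what makes the relative fluctuation of $\|\ail\|^2/n$ and the attendant lower-order terms negligible.

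I expect the main obstacle to be the off-diagonal, signal--gain coupling term $\tfrac{1}{mp}\sum_{i,l}\gamma_i g_i(\ail^\top\bxi)(\ail^\top\x)$ appearing in both the inner product and the norm bounds. Its matrix-Bernstein variance proxy is exactly the quantity whose non-degeneracy the Bernoulli Restriction hypothesis is designed to guarantee, and it is through this term that $\Ex X^4$ enters. Keeping its deviation small enough that $\eta=1-31\rho-4\delta$ remains strictly positive—while allowing $\rho$ to be as large as the stated bound $\tfrac{1-4\delta}{31}$—demands the sharpest bookkeeping of universal constants in the whole argument, and is the delicate point on which the usefulness of the proposition rests.
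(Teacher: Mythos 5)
Your three-step skeleton (pointwise concentration, covering, continuity) is indeed the engine inside the paper's own technical lemmas, and you correctly locate the role of the Bernoulli restriction: it supplies the lower bound on the matrix-Bernstein variance of the signal--gain coupling term, without which the $\log(T^2q/v)$ correction in Prop.~\ref{prop:mbi} is uncontrolled. However, step (iii) of your plan has a genuine gap. You propose to cover $\cl D_{\kappa,\rho}$ by an $\epsilon$-net and transfer the pointwise bound $|\text{deviation}|\leq\delta\,\Delta$ to arbitrary points via a Lipschitz bound whose constant is governed by $\max_{i,l}\|\ail\|^2$ and $\max_l\|\A_l\|$. That constant is global (of order $n$ on the truncation event) and does not vanish near the minimiser, so the continuity error incurred when passing from a net point to a nearby point $(\bxi,\bs\gamma)$ is additive, of order $\epsilon\cdot\mathrm{poly}(n,m)$, whereas the quantity it must not exceed is $\delta\,\Delta(\bxi,\bs\gamma)$, which tends to $0$ as $(\bxi,\bs\gamma)\to(\x,\bs g)$. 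Even using that the gradient itself vanishes at the minimiser, no fixed $\epsilon$ covers the points with $\Delta(\bxi,\bs\gamma)\lesssim\epsilon^2\,\mathrm{poly}(n,m)/\delta$ — exactly the region the convergence theorem needs, since the iterates converge to $(\x,\bs g)$. The repair is to exploit homogeneity: writing $\diag(\bs\gamma)\A_l\bxi-\diag(\bs g)\A_l\x=\diag(\bs\gamma)\A_l(\bxi-\x)+\diag(\bs\gamma-\bs g)\A_l\x$, all three quantities become quadratic forms in the offset $(\bxi-\x,\bs\gamma-\bs g)$ with coefficients (the weights $\gamma_i$, $g_i$) ranging in a bounded set, so the netting must be over offset directions and weights, not over the neighbourhood itself.

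This homogenisation is precisely how the paper proceeds, and it uses no net over $\cl D_{\kappa,\rho}$ at all: Prop.~\ref{prop:regu} is proved as the special case $k=n$, $h=m$, $\upmu_{\max}=1$ of Prop.~\ref{prop:regusub}. There the curvature term is split as $P=P'+P''$: the remainder $P''$ is absorbed into the $31\rho$ loss using concentration results that are uniform over weight vectors in a subspace ball (Cor.~\ref{coro:appl-wcovsub} and Prop.~\ref{prop:m-w-covconc}), while the main term is recognised as a lifted quadratic form, $P'=\tinv{mp}\ts\sum_{i,l}\langle\bs Z^\top\ail\bs c_i^\top\bs B,\bs M\rangle_F^2$, with $\bs M$ depending linearly on the offsets and lying in the fixed $(k+h)$-dimensional matrix subspace $\cl M$ of Prop.~\ref{prop:subspaceM}; it is controlled once and for all by the restricted isometry property of Prop.~\ref{prop:standalone-ahmed} (proved via matrix Bernstein), combined with $\bb E P'=\tinv{m}\|\bs M\|_F^2\geq(1-\rho)\Delta$. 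Uniformity over the neighbourhood then follows by pure algebra, and the $(n+m)$ in the sample complexity enters as $\dim\cl M=k+h$ rather than as the cardinality exponent of a net of $\cl D_{\kappa,\rho}$. The Lipschitz bounds \eqref{eq:bounded-lipschitz-xi}--\eqref{eq:bounded-lipschitz-gamma-perp} are obtained the same way, by Cauchy--Schwarz against these operator-norm events. Your plan can be repaired along these lines, but the lifting to $\cl M$ is the idea it is missing.
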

	
	The interpretation of \eqref{eq:bounded-curvature}, \eqref{eq:bounded-lipschitz-xi} and \eqref{eq:bounded-lipschitz-gamma-perp} is clear, and analogue to the method pursued by Cand\`es \etal~\cite{CandesLiSoltanolkotabi2015}~in the context of phase retrieval. The first condition or {\em bounded curvature}~ensures that the angle between the gradient and the direction of the global minimiser is not too large. The second condition or {\em Lipschitz gradient} gives a bound on the maximum magnitude that the gradient is allowed to assume. Moreover, the bounded curvature in \eqref{eq:bounded-curvature} implies that 
	\[
	\|\bs\nabla^\perp f(\bxi, {\bs \gamma})\| \left\|\begin{bsmallmatrix} \bxi-\x \\ {\bs \gamma}-{\bs g} \end{bsmallmatrix} \right\| \geq \left\langle \bs\nabla^\perp f(\bxi, {\bs \gamma}), \begin{bsmallmatrix} \bxi-\x \\ {\bs \gamma}-{\bs g} \end{bsmallmatrix} \right\rangle \geq \eta\, {\Delta}(\bxi,{\bs \gamma}) > 0,
	\]
	which holds for all points $(\bxi, \bs \gamma) \in \cl D_{\kappa,\rho}$ except the solution $(\x, \bs g)$ in which the distance is $0$. Since for all such points $\left\|\begin{bsmallmatrix} \bxi-\x \\[-.5mm] {\bs \gamma}-{\bs g} \end{bsmallmatrix} \right\| > 0$, this straightforwardly proved the following claim. \new{Moreover, the orthogonal projection $\Ponep$ also implies $\|\bs\nabla f(\bxi,{\bs \gamma})\| \geq  \|\bs\nabla^\perp f(\bxi, {\bs \gamma})\| > 0$.}
	\begin{corollary}[Uniqueness of the minimiser in ${\cl D}_{{\kappa},\rho}$]
		\label{coro:unique}
		Under the conditions of Prop.~\ref{prop:regu}, the only stationary point of \eqref{eq:bcp} in ${\cl D}_{\kappa,\rho}$ is $\ts(\x,{\bs g})$. 
	\end{corollary}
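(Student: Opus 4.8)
The plan is to treat this as an almost immediate corollary of the bounded curvature bound \eqref{eq:bounded-curvature} already established (with high probability, uniformly over ${\cl D}_{\kappa,\rho}$) in Prop.~\ref{prop:regu}: no new concentration argument is needed, since all the analytic work is hidden inside that proposition. First I would fix an arbitrary feasible point $(\bxi,{\bs\gamma}) \in {\cl D}_{\kappa,\rho}$ with $(\bxi,{\bs\gamma}) \neq (\x,{\bs g})$ and observe that, because $\Delta(\bxi,{\bs\gamma}) = \|\bxi-\x\|^2 + \tfrac{\|\x\|^2}{m}\|{\bs\gamma}-{\bs g}\|^2$ is a pre-metric vanishing only at the global minimiser, we have $\Delta(\bxi,{\bs\gamma}) > 0$ strictly at any such point.

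Next I would chain \eqref{eq:bounded-curvature} with the Cauchy--Schwarz inequality applied to the stacked projected gradient and the stacked displacement vector, namely
\begin{equation*}
\eta\,\Delta(\bxi,{\bs\gamma}) \;\leq\; \left\langle \bs\nabla^\perp f(\bxi,{\bs\gamma}),\, \begin{bsmallmatrix} \bxi-\x \\ {\bs\gamma}-{\bs g} \end{bsmallmatrix}\right\rangle \;\leq\; \|\bs\nabla^\perp f(\bxi,{\bs\gamma})\|\,\left\|\begin{bsmallmatrix} \bxi-\x \\ {\bs\gamma}-{\bs g} \end{bsmallmatrix}\right\|.
\end{equation*}
Since $\eta = 1-31\rho-4\delta \in (0,1)$ and $\Delta(\bxi,{\bs\gamma}) > 0$, and since the displacement $\big\|[(\bxi-\x)^\top\ ({\bs\gamma}-{\bs g})^\top]^\top\big\|$ is finite and strictly positive away from the minimiser, the leftmost quantity is strictly positive, forcing $\|\bs\nabla^\perp f(\bxi,{\bs\gamma})\| > 0$. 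Finally, because $\Ponep$ is an orthogonal projection it can only contract norms, so $\|\bs\nabla f(\bxi,{\bs\gamma})\| \geq \|\bs\nabla^\perp f(\bxi,{\bs\gamma})\| > 0$ as well.

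I would then conclude by reading off the feasibility structure of \eqref{eq:bcp}: the constraint confines the gain iterates to $\cl G_\rho \subset \vone_m + \onep$, so the admissible first-order variations in the gain domain lie in $\onep$, which is exactly why $\bs\nabla^\perp f$ (and not $\bs\nabla f$) is the correct stationarity object. A stationary point must therefore have vanishing feasible gradient, whereas the display above shows the projected gradient is nonzero at every point of ${\cl D}_{\kappa,\rho}$ except $(\x,{\bs g})$ itself; hence $(\x,{\bs g})$ is the unique stationary point in the neighbourhood. I do not expect a genuine obstacle here, as the real difficulty was discharged in Prop.~\ref{prop:regu}; the only point worth stating carefully is this identification of $\bs\nabla^\perp f$ with the feasible gradient, so that ``no feasible descent direction'' is correctly equated with $\bs\nabla^\perp f = \vzer$ along the $\onep$ tangent directions.
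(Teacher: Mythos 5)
Your proof is correct and follows essentially the same route as the paper: both deduce from the bounded curvature condition \eqref{eq:bounded-curvature}, via Cauchy--Schwarz, that $\|\bs\nabla^\perp f(\bxi,{\bs\gamma})\| > 0$ at every point of ${\cl D}_{\kappa,\rho}$ other than $(\x,{\bs g})$, and then note $\|\bs\nabla f\| \geq \|\bs\nabla^\perp f\|$ since the block projection is non-expansive. Your added remark identifying $\bs\nabla^\perp f$ as the correct stationarity object for the constraint $\cl G_\rho \subset \vone_m + \onep$ is a sensible clarification of what the paper leaves implicit, but it does not change the argument.
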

	\medskip
	
	With the conditions obtained in Prop.~\ref{prop:regu} we are in the position of finding the values of the step sizes $\mu_\bxi, \mu_{\bs \gamma}$ that make convergence to the global minimiser possible, clearly after jointly verifying the previous properties of the initialisation and the basin of attraction established in Prop.~\ref{prop:init}.
	
	\begin{theorem}[Provable Convergence to the Exact Solution]
		\label{theorem:convergence}
		Given $\delta \in (0,1)$ and $t\geq1$, let us take $\ts n \gtrsim t \log(mp)$, 
                  $mp \gtrsim t\delta^{-2} (n + m) \log^2 (m(p+n))\log(\frac{1}{\delta})$, $\rho < \tfrac{1-4\delta}{31}$. There exists $\mu_0 >0$ with
					\new{$$
					\ts \mu_0 \lesssim \frac{\eta}{m},
					$$}%
for $\eta = 1-4\delta-31\rho \in (0,1)$ such that for any $0<\mu<\mu_0$ and with probability exceeding $1 - C (mp)^{-t}$ for some $C>0$, Alg.~\ref{alg1} with step sizes set to $\mu_\bxi \coloneqq \mu$ and $\mu_{\bs \gamma} \coloneqq \mu \tfrac{m}{\|\x\|^2}$ has distance decay
		\begin{equation}
		\ts {\Delta}(\bxi_j,{\bs \gamma}_j)\leq
		(1- \eta \mu)^j \big(\delta^2+\rho^2)\|\x\|^2,
		\label{eq:convres}
		\end{equation}
		at any iteration $j > 0$. Hence, ${\Delta}(\bxi_j,{\bs \gamma}_j)\underset{j\rightarrow\infty}{\longrightarrow}0$. 
	\end{theorem}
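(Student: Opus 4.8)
The plan is to reduce the theorem to a single deterministic one-step contraction which, once the high-probability events of Prop.~\ref{prop:init} and Prop.~\ref{prop:regu} are in force, can be iterated by induction. I would first fix $\kappa \coloneqq \kappa_0 = \sqrt{\delta^2+\rho^2}$ and note that $\eta = 1-4\delta-31\rho \in (0,1)$ forces $\delta < \tfrac14$ and $\rho < \tfrac{1}{31}$, so that $\kappa_0 < 1$ and $\cl D_{\kappa_0,\rho}$ is a legitimate $(\kappa,\rho)$-neighbourhood. I then condition on the intersection of the event of Prop.~\ref{prop:init}, which places $(\bxi_0,\bs\gamma_0)$ in $\cl D_{\kappa_0,\rho}$, with the event of Prop.~\ref{prop:regu}, which makes \eqref{eq:bounded-curvature}--\eqref{eq:bounded-lipschitz-gamma-perp} hold \emph{uniformly} over $\cl D_{\kappa_0,\rho}$. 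Under the stated sample-complexity requirements each event fails with probability at most a term of order $(mp)^{-t}$ (the exponential term $e^{-c\delta^2 mp}$ of Prop.~\ref{prop:init} being absorbed into $C(mp)^{-t}$ for $mp$ above the required threshold), so a union bound yields the claimed $1 - C(mp)^{-t}$.

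The one-step estimate is the heart of the argument. Starting from the exact expansion \eqref{eq:updatedist-main} and inserting $\mu_\bxi = \mu$, $\mu_{\bs\gamma} = \mu\tfrac{m}{\|\x\|^2}$, the weight $\tfrac{\|\x\|^2}{m}$ cancels in both the cross and the quadratic terms: the two cross terms collapse into $\mu\,\langle \bs\nabla^\perp f(\bxi,\bs\gamma),\,[(\bxi-\x)^\top\ (\bs\gamma-\bs g)^\top]^\top\rangle$, which \eqref{eq:bounded-curvature} bounds below by $\mu\,\eta\,\Delta(\bxi,\bs\gamma)$, while the quadratic terms become $\mu^2\big(\|\bs\nabla_\bxi f\|^2 + \tfrac{m}{\|\x\|^2}\|\bs\nabla^\perp_{\bs\gamma} f\|^2\big)$, bounded above via \eqref{eq:bounded-lipschitz-xi}--\eqref{eq:bounded-lipschitz-gamma-perp} by $\mu^2 L^2\,\Delta(\bxi,\bs\gamma)$ with $L^2 \coloneqq L_\bxi^2 + \tfrac{m}{\|\x\|^2} L_{\bs\gamma}^2 = 128 + 32\,m\,(1+\kappa_0)^2$. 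Hence the pre-projection iterate obeys $\Delta(\bxi_+,\check{\bs\gamma}_+) \leq (1 - 2\mu\eta + \mu^2 L^2)\,\Delta(\bxi,\bs\gamma)$. I would then invoke that $\cl P_{\cl G_\rho}$ is the orthogonal projection onto a closed convex set containing the true gain $\bs g$ (so $\cl P_{\cl G_\rho}\bs g = \bs g$), hence non-expansive, giving $\|\bs\gamma_+-\bs g\| \leq \|\check{\bs\gamma}_+-\bs g\|$ and therefore $\Delta(\bxi_+,\bs\gamma_+) \leq \Delta(\bxi_+,\check{\bs\gamma}_+)$, the signal component being untouched by the projection. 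Choosing any $0<\mu<\mu_0 \coloneqq \eta/L^2$ makes $\mu^2 L^2 < \mu\eta$, so $1 - 2\mu\eta + \mu^2 L^2 < 1-\mu\eta$ and the step strictly contracts by the factor $1-\eta\mu$; since $L^2 \asymp m$, one has $\mu_0 \lesssim \eta/m$ as claimed.

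It remains to iterate, and here the crucial — and only genuinely delicate — point is that the contraction must be applied at every iterate without drawing fresh randomness. This is exactly why Prop.~\ref{prop:regu} is stated uniformly over $\cl D_{\kappa_0,\rho}$: on the conditioned event the inequality $\Delta(\bxi_{j+1},\bs\gamma_{j+1}) \leq (1-\eta\mu)\,\Delta(\bxi_j,\bs\gamma_j)$ holds for \emph{any} iterate lying in the neighbourhood. I would close the loop by induction: $(\bxi_0,\bs\gamma_0)\in\cl D_{\kappa_0,\rho}$ by Prop.~\ref{prop:init}; assuming $(\bxi_j,\bs\gamma_j)\in\cl D_{\kappa_0,\rho}$, the one-step estimate gives $\Delta(\bxi_{j+1},\bs\gamma_{j+1}) \leq \Delta(\bxi_j,\bs\gamma_j)\leq \kappa_0^2\|\x\|^2$, while step~6 of Alg.~\ref{alg1} forces $\bs\gamma_{j+1}\in\cl G_\rho$, so $(\bxi_{j+1},\bs\gamma_{j+1})$ is again in $\cl D_{\kappa_0,\rho}$ and the hypothesis of the next step is met. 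Unrolling the recursion and using $\Delta(\bxi_0,\bs\gamma_0)\leq\kappa_0^2\|\x\|^2=(\delta^2+\rho^2)\|\x\|^2$ yields \eqref{eq:convres}, and $1-\eta\mu\in(0,1)$ gives $\Delta(\bxi_j,\bs\gamma_j)\to 0$. The main obstacle is thus not any isolated estimate but the interplay between the uniform regularity guarantee and the projection: one must ensure the iterates never leave $\cl D_{\kappa_0,\rho}$, which rests jointly on the monotone decrease of $\Delta$ and on the non-expansiveness of $\cl P_{\cl G_\rho}$ keeping $\bs\gamma_{j+1}$ feasible.
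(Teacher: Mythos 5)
Your proposal is correct and follows essentially the same route as the paper: the paper proves this theorem as the special case $\cl Z = \bb R^n$, $\cl B = \bb R^m$ of Thm.~\ref{theorem:convergence-subs}, whose proof likewise conditions on the joint event of the initialisation and regularity propositions, derives the one-step contraction $\Delta(\bxi_{j+1},\bs\gamma_{j+1}) \leq (1-2\mu\eta+\mu^2 L^2)\Delta(\bxi_j,\bs\gamma_j)$ from \eqref{eq:updatedist-main} together with the non-expansiveness of $\cl P_{\cl G_\rho}$ (Prop.~\ref{lemma-ccs}), chooses $\mu_0 \asymp \eta/L^2 \lesssim \eta/m$, and iterates while noting the iterates remain in $\cl D_{\kappa_0,\rho}$. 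Your explicit induction making the "iterates never leave the neighbourhood" step precise is exactly the implicit content of the paper's recursion, so there is no substantive difference.
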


	Thus, we have verified that performing blind calibration by means of our non-convex algorithm in the noiseless case of Def.~\ref{def:model} provably recovers the exact solution with linear convergence, under some requirements on $\rho$ due to Prop.~\ref{prop:regu} and essentially when ${m}p \gtrsim (n+m) \log^2(m(p+n))$. 
	
	It is worth noting that the condition on $m p$ appearing in Prop.~\ref{prop:regu} improves upon the rate advocated in~\cite{CambareriJacques2016}, $\sqrt{m} p \gtrsim (n+m)\log n$. The additional $\sqrt{m}$ factor was indeed due to some technical conditions when proving the regularity condition, that were here refined using a more sophisticated application of matrix Bernstein's inequality~\cite{Tropp2015}. 
	
	Moreover, the experiments suggest that the requirements on $\rho$ are actually much weaker than the conditions given in this theorem (and in Prop.~\ref{prop:regu}). 	
	Interestingly, we also note that the ratio between $\tfrac{\mu_{\bs \xi}}{\mu_{\bs \gamma}} \simeq \tfrac{\|\bs x\|^2}{m}$ is also observed in our experimental results when these step sizes are not fixed, but rather varied according to the line searches reported in Alg.~\ref{alg1}.  
	Finally, we remark that the obtained sample complexity ${m}p \gtrsim (n+m) \log^2 (m(p+n))$ is essentially equivalent to that found in prior literature when devising recovery guarantees for different algorithms~\cite{AhmedRechtRomberg2014,LiLingStrohmerEtAl2016}.
	
		
\section{Blind Calibration with Subspace Priors}
		\label{sec:blind-calibration-with-subspace-priors}
		
			
			We now apply the same principles developed in Sec.~\ref{sec:nonconv} to define a simple modification of Alg.~\ref{alg1} specialised to the sensing model in Def.~\ref{def:modelsub},~\ie when 
                        \begin{equation}
                          \label{eq:sub-sens-model}
                          \ts \y_l = \diag(\bs B \bs b) \A_l {\bs Z} {\bs z}\ \in \R^m,\quad l \in [p].
                        \end{equation}
			Thus, we consider known subspace priors on the signal ${\bs x} = {\bs Z} {\bs z} \in \cl Z \subset \bb R^n$ and the gains ${\bs g} = \bs B \bs b \in \cl B \subset \bb R^m_+$. To better specify the latter subspace, we resume our previous considerations in Sec.~\ref{sec:nonconv} and also assume that ${\bs g} = \vone_m + \bs e \in \vone_m + \onep \cap \rho \B^m_\infty$ for $\rho < 1$,~\ie the gains are small perturbations around $\vone_m$. It is then clear that $\bs g \in {\cl B} \cap (\vone_m +\onep \cap \rho \B^m_\infty)$. For simplicity we can then consider that our basis for $\cl B$ is  $\bs B \coloneqq \big(\tfrac{\vone_m}{\sqrt{m}},\ \bs B^\perp\big)$ where $\bs B^\perp \in \R^{m\times h-1}$ is comprised of $h-1$ orthonormal vectors in $\R^m$ that span $\onep$. To fix ideas, such $\bs B$ could be constructed as $h$ basis elements of the discrete cosine transform (DCT), including the so-called ``DC component'' $\tfrac{\vone_m}{\sqrt{m}}$. To conclude our specification of $\bs g = \bs B \bs b$ the coefficients $\bs b \coloneqq \big(\sqrt m,\ (\bs b^\perp)^\top\big){}^\top$, where $\bs b^\perp \in \R^{h-1}$ corresponds to ${\bs e} = \bs B^\perp \bs b^\perp  \in \onep \cap \rho \B^m_\infty$. 
			This holds in all generality, since we can apply a scaling factor to the gains ${\bs g} \in \R^m_+$ so that they follow this model. In this setup, the blind calibration problem is summarised as the following definition.
			
			\begin{definition}[Non-convex Blind Calibration Problem with Subspace Priors]
				\label{def:blind-calibr-with-sub}
				Given two subspaces $\cl B \subset \bb R^m$ and $\cl Z \subset \bb R^n$, with dimension $h \coloneqq \dim \cl B \leq m$ and $k \coloneqq \dim \cl Z \leq n$, and with orthonormal bases $\bs B \in \R^{m\times h}$ and ${\bs Z} \in \R^{n \times k}$ (\ie tight frames), respectively, we define {\em non-convex blind calibration with subspace priors} the optimisation problem 
					\begin{equation}%
					\label{eq:bcpk}
					(\bar{{\bs z}}, \bar{\bs b}) \coloneqq {\mathop{\argmin}_{({\bs \zeta},{\bs \beta}) \in {\cl Z} \times \cl B_\rho}} \tfrac{1}{2mp} {\ts \sum_{l=1}^{p} \left\Vert \diag({\bs B }{\bs \beta}) \A_l {\bs Z} {\bs \zeta} - \y_l\right\Vert^2},
					\end{equation}%
				where 
				\begin{equation}
				\label{eq:subconstr}
				{\cl B}_\rho \coloneqq \left\{\bs v \in \R^h : \bs B \bs v \in \vone_m + \bs 1^\perp_m \cap \rho \bb B^m_\infty\right\},
				\end{equation}
				given $\rho < 1$, $\y_l = \diag({\bs B}{\bs b}) \A_l {\bs Z} {\bs z} \in \R^m$, $l \in [p]$ as in Def.~\ref{def:modelsub}.
			\end{definition}
			Moreover, we also define\footnote{Hereafter $\upmu_{\max}(\bs B)$ always refers to $\bs B$, as shall be clear from the context. Hence the simplified notation $\upmu_{\max}$.} the \emph{coherence} of $\bs B$ as 
			\new{\begin{equation}
				\label{eq:mumax}
				\ts\upmu_{\max} \coloneqq \sqrt\frac{m}{h} \max_{i \in [m]} \|\bs B^\top \bs c_i\| \, \in \, [1, \sqrt\frac{m}{h}].    
			\end{equation}
			}
	        This quantity also appears in \cite{AhmedRechtRomberg2014} and has a fundamental role in characterising $\bs B$ and its effect on the possibility of recovering $\bs b$. It measures how evenly the energy of $\bs B$ is distributed among its rows, given that the basis vectors (\ie its columns) are orthonormal. In particular, two classical examples are in order. As mentioned above, we could form $\bs B$ as $h$ basis elements drawn at random from the $m$ elements that define the DCT in $\R^m$. \new{Since the DCT is a \emph{universal basis}, with entries smaller than $c/\sqrt m$ in amplitude for some constant $c>0$ \cite{puy2012universal}, t}his actually leads to a low-coherence basis $\bs B$, \ie $\upmu_{\max} \simeq 1$ \new{as imposed by $\|\bs B^\top \bs c_i\|^2 = \sum_{j=1}^h |B_{ij}|^2 \leq c^2 h/m$}. On the other hand, if\footnote{This case is not possible in our model since $\tfrac{\vone_m}{\sqrt m}$ must be a basis vector. It is however easy to find highly coherent examples of $\bs B$ containing $\frac{\vone_m}{\sqrt{m}}$ and working on the remaining $h-1$ basis vectors of $\bs B^\perp$.} 
	        $\bs B := [\bs I_h, \vzer_{h \times (m-h)}]^\top$ it is straightforward that $\upmu_{\max} = \sqrt\frac{m}{h}$, \ie the worst case setting in which only $h$ out of $m$ rows are contributing to the energy in $\bs B$. We shall use this quantity in deriving the sample complexity for the known subspace case. 

			Let us now proceed by taking the gradient of the objective function
			\begin{equation}
\label{eq:sub-f-cost}
				f^{\rm s}({\bs \zeta},{\bs \beta}) \coloneqq  \tfrac{1}{2mp} {\ts \sum_{l=1}^{p} \left\Vert \diag({\bs B}{\bs \beta}) \A_l {\bs Z} {\bs \zeta} - \y_l\right\Vert^2},
			\end{equation}
			that is the vector $\bs\nabla f^{\rm s}({\bs \zeta},{\bs \beta}) \coloneqq 
			\begin{bsmallmatrix} \bs\nabla_{{\bs \zeta}} f^{\rm s}({\bs \zeta},{\bs \beta})^\top &  \bs\nabla_{{\bs \beta}} f^{\rm s}({\bs \zeta},{\bs \beta})^\top  \end{bsmallmatrix}^\top$. 
			Firstly, in the signal-domain subspace we have 
			\begin{align}
				\bs\nabla_{{\bs \zeta}} f^{\rm s}({\bs \zeta},{\bs \beta}) & =  \tfrac{1}{mp} \ts\sum^p_{l=1} {\bs Z}^\top \A^\top_l \diag({\bs B}{\bs \beta}) \left(\diag({\bs B}{\bs \beta}) \A_l {\bs Z} {\bs \zeta} - \diag({\bs B \bs b}) \A_l {\bs Z} {\bs z}\right) \nonumber \\ 
				\label{eq:gradw-bis}
				& = \tfrac{1}{mp} \ts\sum_{i,l}  (\bs c^\top_i {\bs B}{\bs\beta}){\bs Z}^\top  \ail \left[(\bs c^\top_i {\bs B}{\bs\beta})  \ail^\top \bs Z \bs\zeta  - (\bs c^\top_i {\bs B}{\bs b}) \ail^\top \bs Z \bs z\right]\\
				& = {\bs Z}^\top \bs\nabla_{\bxi} f(\bxi,{\bs \gamma})\big\vert_{\bxi = {\bs Z} {\bs \zeta}, {\bs \gamma} = {\bs B}{\bs \beta}}
				\label{eq:gradw}
			\end{align}
			Then we can take the partial derivatives with respect to the gain-domain subspace components $\beta_j, j\in[h]$, yielding
			\begin{align*}
				\frac{\partial}{\partial \beta_j}  f^{\rm s}({\bs \zeta},{\bs \beta}) & \coloneqq \tfrac{1}{mp} \ts\sum^p_{l=1} \left(\frac{\partial}{\partial \beta_j} (\sum^{h}_{t=1} {\bs B}_{\cdot, t} \beta_t)\right)^\top \diag(\A_l {\bs Z} {\bs \zeta}) \left(\diag({\bs B}{\bs \beta}) \A_l {\bs Z} {\bs \zeta} - \diag({\bs B \bs b}) \A_l {\bs Z} {\bs z}\right) \\
				& = \tfrac{1}{mp} \ts\sum_{i,l} ({\bs B}^\top_{\cdot,j} {\bs c}_i)  ( \ail^\top {\bs Z} {\bs\zeta}) \left[({\bs c}^\top_i {\bs B}{\bs\beta}) \ail^\top \bs Z \bs\zeta  - ({\bs c}^\top_i {\bs B} {\bs b}) \ail^\top \bs Z \bs z\right] \\
				& = {\bs B}^\top_{\cdot,j} \bs\nabla_{{\bs \gamma}} f(\bxi,{\bs \gamma})\big\vert_{\bxi = {\bs Z} {\bs \zeta}, {\bs \gamma} = {\bs B}{\bs \beta}}.
			\end{align*}
			where $\bs B_{\cdot,j} \in \R^m, j\in[h]$ denotes the columns of $\bs B$. Thus, we can collect 
			\begin{align}
				\label{eq:gradz}
				\bs\nabla_{{\bs \beta}} f^{\rm s}({\bs \zeta},{\bs \beta}) & = \tfrac{1}{mp} {\bs B}^\top \left(\ts\sum_{i,l} ( \ail^\top {\bs Z} {\bs\zeta}) \left[({\bs c}^\top_i {\bs B}{\bs\beta}) \ail^\top \bs Z \bs\zeta  - ({\bs c}^\top_i {\bs B} {\bs b}) \ail^\top \bs Z \bs z\right] {\bs c_i}\right) \nonumber \\
				& =  {\bs B}^\top \bs\nabla_{{\bs \gamma}} f(\bxi,{\bs \gamma})\big\vert_{\bxi = {\bs Z} {\bs \zeta}, {\bs \gamma} = {\bs B}{\bs \beta}}.
			\end{align}
			
			We now elaborate on the constraint~\eqref{eq:subconstr} in a fashion similar to what we did for $\bs \gamma \in \cl G_\rho$ in Sec.~\ref{sec:nonconv}. The constraint now imposes $\bs \beta \in \cl B_\rho \subset \vone_m + \onep$, so the steps of our descent algorithm will still lie on $\onep$ in the gain domain. However, since the optimisation in \eqref{eq:bcpk} is with respect to the subspace coefficients $\bs\beta$, the orthogonal projector that maps $\bs v \in \bb R^h$ on the projection of $\onep$ in the subspace $\cl B$ 
			is instead $\bs P^{\rm s}_{\onep} \coloneqq \bs B^\top \Ponep \bs B = \diag \big(\begin{bsmallmatrix} 0 & \vone^\top_{h-1}\end{bsmallmatrix}\big)$, that is the operator that sets the first component to $v_1 = 0$. Hence, we can define the projected gradient
			\[
			\ts\bs\nabla^\perp f^{\rm s}(\bs\zeta,{\bs\beta})\coloneqq\begin{bsmallmatrix}\I_k & \vzer_{k \times h} \\ \vzer_{h\times k} & \bs P^{\rm s}_{\onep} \end{bsmallmatrix} \bs\nabla f^{\rm s}(\bs \zeta,{\bs\beta}) = \begin{bmatrix}
			\bs\nabla_{\bs\zeta} f^{\rm s}(\bs\zeta,{\bs\beta})\\
			\bs\nabla^\perp_{\bs\beta} f^{\rm s}(\bs\zeta,{\bs\beta})
			\end{bmatrix},
			\]
			where 
			\[
			\bs\nabla^\perp_{\bs\beta}f^{\rm s}(\bs\zeta,{\bs\beta}) \coloneqq \bs P^{\rm s}_{\onep}	\bs\nabla_{\bs\beta} f^{\rm s}(\bs\zeta,{\bs\beta}) = \bs B^\top \bs\nabla^\perp_{{\bs \gamma}} f(\bxi,{\bs \gamma})\big\vert_{\bxi = {\bs Z} {\bs \zeta}, {\bs \gamma} = {\bs B}{\bs \beta}} = \left.\begin{bsmallmatrix} 0 \\ ({\bs B^\perp})^\top \bs\nabla_{{\bs \gamma}} f(\bxi,{\bs \gamma}) \end{bsmallmatrix}\right\vert_{\bxi = {\bs Z} {\bs \zeta}, {\bs \gamma} = {\bs B}{\bs \beta}} ,
			\]
			the latter equalities being due to the fact that $\diag \big(\begin{bsmallmatrix} 0 & \vone^\top_{h-1}\end{bsmallmatrix}\big) \bs B^\top \bs\nabla_{{\bs \gamma}} f(\bxi,{\bs \gamma}) =  \bs B^\top \bs\nabla^\perp_{{\bs \gamma}} f(\bxi,{\bs \gamma})$. 
			
			
			The definitions introduced in Sec.~\ref{sec:geom-blind} also require a few changes to be adapted to the sensing model with subspace priors. Indeed, we have just developed the gradient components of $\bs\nabla^\perp f^{\rm s} ({\bs \zeta},{\bs \beta})$ that can be obtained by those of $\bs\nabla^\perp f(\bxi,{\bs \gamma})$. Moreover, it is straightforward to obtain the initialisation ${\bs \zeta}_0$ in the signal-domain subspace from $\bxi_0$ as ${\bs \zeta}_0 \coloneqq {\bs Z}^\top \bxi_0$, while we can let ${\bs \beta}_ 0 \coloneqq \begin{bsmallmatrix} \sqrt m \\ \vzer_{h-1}\end{bsmallmatrix} = \bs B^\top \vone_m$, equivalently to what we adopted in Sec.~\ref{sec:nonconv}. By our choice of ${\bs Z}$ and recalling $\x = {\bs Z} {\bs z}$ we see that the initialisation is still so that
			\[
				\Ex {\bs \zeta}_0 = \tfrac{1}{m p}  
				\ts\sum_{i,l} (\bs c^\top_i \bs B \bs b) \underbrace{\Ex {\bs Z}^\top \ail  \ail^\top {\bs Z}}_{\I_k} \bs z  = {\bs z} ,
			\]
			thus yielding an unbiased estimate of the exact subspace coefficients $\bs z$. The neighbourhood of the global minimiser $({\bs z},{\bs b})$ (for which we require at least $mp \geq k + h$) is then defined using the distance  
			\begin{equation}
				{\Delta}({\bs \zeta},{\bs \beta}) \coloneqq \|{\bs \zeta}-{\bs z}\|^2 + \tfrac{\|{\bs z}\|^2}{m}\|{\bs \beta}- {\bs b}\|^2
				\label{eq:dist_subs}
			\end{equation}
			and noting that $({\bs \zeta}-{\bs z})^\top ({\bs \zeta}-{\bs z}) \equiv (\bxi-\x)^\top (\bxi-\x),  ({\bs \beta}- {\bs b})^\top ({\bs \beta}- {\bs b})\equiv ({\bs \gamma}-{\bs g})^\top ({\bs \gamma}-{\bs g})$ with our choices of ${\bs B},{\bs Z}$ as tight frames. Indeed, this guarantees that~\eqref{eq:dist_subs} has the same value as~\eqref{eq:dist}, so we can modify Def.~\ref{def:neighbour} to express it in terms of the coefficients in their respective subspaces, \ie 
			\begin{equation}
				{{\cl D}^{\rm s}_{\kappa,\rho}} \coloneqq \{({\bs \zeta},{\bs \beta})\in \R^k \times {\cl B}_\rho : \Delta({\bs \zeta},{\bs \beta}) \leq \kappa^2 \|{\bs z}\|^2\}, \ \rho \in [0, 1).
			\end{equation}
			Thus, we are allowed to use the very same theory to obtain analogue results with respect to Prop.~\ref{prop:init} and~\ref{prop:regu}, anticipating the advantage of a lower sample complexity given by the use of subspace priors. In fact, it is simply shown that the former propositions are special cases of Prop.~\ref{prop:initsub} and~\ref{prop:regusub}, as obtained when $\cl B \coloneqq \bb R^m$ and $\cl Z \coloneqq \bb R^n$, \ie $h=m$, $k=n$. Hence, to show convergence we will have to prove Prop.~\ref{prop:initsub} and~\ref{prop:regusub}.
			
			\begin{algorithm}[!t]
				{   \small
					\begin{algorithmic}[1]
						\STATE Initialise ${\bs \zeta}_0 \coloneqq \tfrac{1}{m p}
						\sum^{p}_{l = 1} \left(\A_l {\bs Z} \right)^\top \y_l,
						\, {\bs \beta}_0 \coloneqq \begin{bsmallmatrix}
						\sqrt m \\ \vzer_{h-1}
						\end{bsmallmatrix}, \, j \coloneqq0$. 
						\WHILE{stop criteria not met}
						\STATE $\begin{cases} 
						\mu_{\bs \zeta} \coloneqq \argmin_{\upsilon \in \R} f^{\rm s}({\bs \zeta}_{j}-\upsilon\bs\nabla_{\bs \zeta} f^{\rm s}({{\bs \zeta}_{j}, {\bs \beta}_ {j}}), {\bs \beta}_ {j}) \\
						\mu_{{\bs \beta}} \coloneqq \argmin_{\upsilon \in \R} f^{\rm s}({\bs \zeta}_{j},{\bs \beta}_ {j}-\upsilon\bs\nabla^\perp_{\bs \beta} f^{\rm s}({{\bs \zeta}_{j}, {\bs \beta}_ {j}})) 
						\end{cases}$
						\STATE ${\bs \zeta}_{j+1} \coloneqq {\bs \zeta}_{j} - \mu_{\bs \zeta} \bs\nabla_{\bs \zeta} f^{\rm s}({{\bs \zeta}_{j}, {\bs \beta}_ {j}})$
						\STATE $\check{{\bs \beta}}_{j+1} \coloneqq {\bs \beta}_ {j} - \mu_{\bs \beta} \, \bs\nabla^\perp_{\bs \beta} f^{\rm s}({{\bs \zeta}_{j}, {\bs \beta}_ {j}})$ 
						\STATE ${{\bs \beta}}_{j+1} \coloneqq{\cl P}_{\cl B_\rho} \check{{\bs \beta}}_{j+1}$
						\STATE $j \coloneqq j + 1$
						\ENDWHILE
					\end{algorithmic}
				}
				\caption{\label{alg2} Non-Convex Blind Calibration by Projected Gradient Descent with Known Signal and Gain Subspaces.}
			\end{algorithm}
			
			For what concerns the descent algorithm, a modification is straightforwardly obtained as Alg.~\ref{alg2}. The main difference is that the optimisation is carried out on the subspace coefficients $\bs \zeta$ and ${\bs\beta}$ rather than the signal and gains, with the gradient expressions given in~\eqref{eq:gradw} and~\eqref{eq:gradz}. The descent will run from $({\bs \zeta}_0, {\bs \beta}_ 0)$ up to $({\bs \zeta}_j, {\bs \beta}_ j)$ at the $j$-th iteration with the updates specified in steps 4:-6:. Again, the gradient update in the gain-domain subspace is followed by the projection on $\cl B_\rho$, which is a closed convex set. This is still algorithmically simple and not needed  in the numerical results (\ie it is only required by our proofs), as for $\cl G_\rho$ in absence of subspace priors. 
			
			We now establish in detail how the main theoretical statements are modified, as an extension of our results in Sec.~\ref{sec:recguars}. The proofs of the next statements are reported in App.~\ref{sec:appC} and rely on the fact that the technical arguments used in proving the results of Sec.~\ref{sec:recguars}, as already mentioned, are actually a special case of those in the subspace case, with a sample complexity that is reduced thanks to the low-dimensional description of the signal and gains.  			
			
			Firstly, we have that the sample complexity requirements for the initialisation in Prop.~\ref{prop:init} are essentially reduced to $mp \gtrsim (k+h)\log n$.
			
			\begin{proposition}[Initialisation Proximity with Subspace Priors]
				\label{prop:initsub}
				Let $({\bs \zeta}_0,{\bs \beta}_ 0)$ be as in Alg.~\ref{alg2}. Given ${\delta} \in (0,1)$, $t \geq 1$, provided $\ts mp \gtrsim {\delta}^{-2} (k+h) \log \tfrac{n}{\delta} $ and  $n \gtrsim t \log mp $, with probability exceeding
				\begin{equation}
				\label{eq:bounds1}
				1 - C e^{- c {\delta}^2 m p} - (mp)^{-t}
				\end{equation}
				for some $C,c>0$, we have that $\|{\bs \zeta}_0 - {\bs z} \| \leq {\delta} \|{\bs z}\|$. Thus, $({\bs \zeta}_0,{\bs \beta}_ 0) \in
				{\cl D}^{\rm s}_{{\kappa_0},\rho}$ with the same probability and $\kappa_0 \coloneqq \sqrt{\delta^2+\rho^2}$.
			\end{proposition}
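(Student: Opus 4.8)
\emph{Proof proposal.} The plan is to follow the strategy behind Prop.~\ref{prop:init}, of which this is the subspace generalisation, while tracking how the relevant dimensions contract from $(n,m)$ to $(k,h)$. First I would split $\Delta({\bs \zeta}_0,{\bs \beta}_0) = \|{\bs \zeta}_0 - {\bs z}\|^2 + \tfrac{\|{\bs z}\|^2}{m}\|{\bs \beta}_0 - \bs b\|^2$ into a deterministic gain contribution and a random signal contribution. The gain term needs no probabilistic argument: since $\tfrac{\vone_m}{\sqrt m}\in\cl B$ we have $\bs B{\bs \beta}_0 = \bs B\bs B^\top\vone_m = \vone_m$, so $\bs B({\bs \beta}_0-\bs b) = \vone_m - {\bs g} = -{\bs e}$ and, $\bs B$ being a tight frame ($\bs B^\top\bs B = \I_h$), $\|{\bs \beta}_0-\bs b\| = \|{\bs e}\|\le\sqrt m\,\rho$, whence $\tfrac{\|{\bs z}\|^2}{m}\|{\bs \beta}_0-\bs b\|^2\le\rho^2\|{\bs z}\|^2$ and ${\bs \beta}_0\in\cl B_\rho$ trivially. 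Everything therefore reduces to the signal-domain bound $\|{\bs \zeta}_0 - {\bs z}\|\le\delta\|{\bs z}\|$ with the stated probability: combining the two gives $\Delta({\bs \zeta}_0,{\bs \beta}_0)\le(\delta^2+\rho^2)\|{\bs z}\|^2 = \kappa_0^2\|{\bs z}\|^2$, i.e. membership in $\cl D^{\rm s}_{\kappa_0,\rho}$.

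For the signal bound I would write, using $\y_l = \diag(\bs B\bs b)\A_l\bs Z{\bs z}$, ${\bs g} = \bs B\bs b$ and $\bs{u}_{i,l}\coloneqq\bs Z^\top\ail$,
\[
{\bs \zeta}_0 - {\bs z} = \tfrac{1}{mp}\ts\sum_{i,l} g_i\big(\bs{u}_{i,l}\bs{u}_{i,l}^\top - \I_k\big){\bs z},
\]
where the recentring uses $\Ex\bs{u}_{i,l}\bs{u}_{i,l}^\top = \bs Z^\top\bs Z = \I_k$ and $\tfrac1m\sum_i g_i = 1$ (this is exactly the computation giving $\Ex{\bs \zeta}_0 = {\bs z}$ in the text). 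The key observation is that, because $\bs Z$ is a tight frame, $\bs{u}_{i,l}$ is again an isotropic sub-Gaussian vector with the same sub-Gaussian norm $\alpha$, now living in $\R^k$; hence the object to concentrate is a weighted empirical-covariance deviation of the same form as in Prop.~\ref{prop:init} but with $n$ replaced by $k$, and the same machinery applies.

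Concretely I would $(i)$ control the heavy tails of the (unbounded) outer products $\bs{u}_{i,l}\bs{u}_{i,l}^\top$ by passing to the truncation event $\max_{i,l}\|\ail\|^2\lesssim n$, which by a union bound over the $mp$ vectors holds with probability at least $1-(mp)^{-t}$ precisely under $n\gtrsim t\log mp$; and $(ii)$ on that event, apply a Bernstein-type tail bound to the fixed-direction sums $\langle\bs\phi,{\bs \zeta}_0-{\bs z}\rangle$, whose variance is $\cl O(\tfrac1{mp}\|{\bs z}\|^2)$ since $\sum_{i,l}g_i^2 = p\|{\bs g}\|^2\le pm(1+\rho^2)$, and then pass to the supremum over $\bs\phi\in\bb S^{k-1}_2$ by an $\varepsilon$-net argument. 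The covering number $e^{\cl O(k)}$ of $\bb S^{k-1}_2$ is what produces the $k$-dependence and the $e^{-c\delta^2 mp}$ failure term: balancing the net cardinality against the tail (together with the sub-exponential union bound feeding the $\log\tfrac n\delta$ factor) forces $mp\gtrsim\delta^{-2}k\log\tfrac n\delta$ while leaving a residual exponent $\gtrsim\delta^2 mp$. The extra $h$ in $mp\gtrsim\delta^{-2}(k+h)\log\tfrac n\delta$ enters through the shared concentration lemma of App.~\ref{sec:appC}, where the signal and gain subspaces are treated jointly, the base case $\cl Z=\R^n,\cl B=\R^m$ recovering the $(n+m)$ rate of Prop.~\ref{prop:init}.

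I expect the main obstacle to be step $(ii)$: handling the sub-exponential tails of $g_i(\bs{u}_{i,l}^\top{\bs z})\bs{u}_{i,l}$ so that truncation plus Bernstein yields a deviation of order $\delta\|{\bs z}\|$ with the advertised exponential probability, and keeping the constants so that the variance term dominates the boundedness ($R\tau$) term uniformly over the net — this is exactly where $n\gtrsim t\log mp$ and $mp\gtrsim\delta^{-2}(k+h)\log\tfrac n\delta$ become necessary. The deterministic gain estimate and the reduction $\bs{u}_{i,l}=\bs Z^\top\ail$ are routine by comparison.
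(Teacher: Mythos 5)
Your proposal is correct and follows essentially the same route as the paper: the gain term is handled deterministically via $\|{\bs\beta}_0-{\bs b}\|=\|{\bs e}\|\le\sqrt m\,\rho$, and the signal bound reduces, after the recentring enabled by $\vone_m^\top{\bs g}=m$, to the concentration of the $g_i$-weighted covariance residual $\tfrac{1}{mp}\ts\sum_{i,l}g_i\,\bs Z^\top(\ail\ail^\top-\I_n)\bs Z$, which the paper obtains by invoking Cor.~\ref{coro:appl-wcovsub} — whose proof (Prop.~\ref{prop:wg-cov-subsp}) is precisely your truncation + Bernstein + $\epsilon$-net scheme, with the truncation event of Prop.~\ref{prop:truncation1} supplying the $(mp)^{-t}$ term and the net resolution the $\log\tfrac n\delta$ factor. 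Two minor points: the shared concentration tools live in App.~\ref{sec:appA} (not App.~\ref{sec:appC}), and in the rigorous version Bernstein should be applied unconditionally at the net points, with the truncation event used only to control the net-to-sphere continuity error and intersected by union bound at the end (rather than conditioning the Bernstein bound on the truncation event), exactly as the paper does.
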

			
                          This following proposition actually reduces the regularity condition formulated in Prop.~\ref{prop:regu} according to the knowledge of the subspaces where both the signal and the gain lie and given a general neighbourhood of radius $\kappa > 0$ around the solution.
                          \begin{proposition}[Regularity Condition with Subspace Priors]
				\label{prop:regusub}
				Given $\delta \in (0,1)$, $t \geq 1$ and $\kappa>0$ , 
                                provided $\rho < \tfrac{1-4\delta}{31}$ and 
                \begin{align*}
                  \label{eq:glob-requirements-reg-prior}
                   \ts n&\ts \gtrsim t \log(mp),\\
                   \ts mp&\ts \gtrsim \delta^{-2} (k + \upmu^2_{\max} h)\, \log^2(m(p+n))\log(\frac{1}{\delta}),            
                \end{align*}
                with probability exceeding $1 - C (mp)^{-t}$ for some $C>0$ and for all $(\bs \zeta,\bs \beta) \in \cl D^{\rm s}_{\kappa,\rho}$, we have
\new{
                \begin{subequations}
                  \begin{align}
\ts \left\langle \bs\nabla^\perp f^{\rm s}({\bs \zeta}, {\bs \beta}), \left[\begin{smallmatrix} {\bs \zeta}-{\bs z} \\ {\bs \beta}- {\bs b} \end{smallmatrix}\right] \right\rangle&\ts \geq \eta\, {\Delta}({\bs \zeta},{\bs \beta}), \\
\|\bs\nabla_{\bs \zeta} f^{\rm s}({\bs \zeta}, {\bs \beta}) \|^2&\ts \leq L_{\bs \zeta}^2\ {\Delta}({\bs \zeta},{\bs \beta}),\\
\ts \|\bs\nabla^\perp_{{\bs\beta}} f^{\rm s}(\bs\zeta, \bs\beta)\|^2 \leq \|\bs\nabla_{{\bs\beta}} f^{\rm s}(\bs\zeta, \bs\beta)\|^2&\ts \leq L_{\bs \beta}^2\ \Delta(\bs \xi, \bs \gamma),                    
                \end{align}
                \label{eq:reg-event-prop-reg-subspace}
                \end{subequations}
                 where $\eta \coloneqq 1-31\rho-4\delta \in (0,1)$, $L_{\bs \zeta} = 8 \sqrt 2$ and $L_{\bs \beta} \coloneqq 4\sqrt 2(1+\kappa) \|\bs z\|$.}
			\end{proposition}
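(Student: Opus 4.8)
The plan is to reduce everything to the full-space gradients via the identities \eqref{eq:gradw}, \eqref{eq:gradz} and the projected-gain relation $\bs\nabla^\perp_{\bs\beta} f^{\rm s}(\bs\zeta,\bs\beta) = \bs B^\top \bs\nabla^\perp_{\bs\gamma} f(\bxi,\bs\gamma)\big|_{\bxi=\bs Z\bs\zeta,\,\bs\gamma=\bs B\bs\beta}$, which express every subspace gradient as the corresponding full-space gradient pre-multiplied by $\bs Z^\top$ or $\bs B^\top$ and restricted to $\bxi\in\cl Z$, $\bs\gamma\in\cl B$. Since $\bs Z,\bs B$ are tight frames we have $\bs Z^\top(\bxi-\x)=\bs\zeta-\bs z$ and $\bs B^\top(\bs\gamma-\bs g)=\bs\beta-\bs b$, so $\Delta(\bs\zeta,\bs\beta)=\Delta(\bxi,\bs\gamma)$ and the curvature term collapses onto its full-space counterpart,
\[
\scp{\bs\nabla^\perp f^{\rm s}(\bs\zeta,\bs\beta)}{[\,(\bs\zeta-\bs z)^\top\ (\bs\beta-\bs b)^\top\,]^\top} = \scp{\bs\nabla^\perp f(\bxi,\bs\gamma)}{[\,(\bxi-\x)^\top\ (\bs\gamma-\bs g)^\top\,]^\top},
\]
while $\|\bs\nabla_{\bs\zeta} f^{\rm s}\|\le\|\bs\nabla_\bxi f\|$ and $\|\bs\nabla^\perp_{\bs\beta} f^{\rm s}\|\le\|\bs\nabla^\perp_{\bs\gamma} f\|$ because $\bs Z^\top,\bs B^\top$ are contractions. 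Hence it suffices to prove the three inequalities \eqref{eq:reg-event-prop-reg-subspace} for the full-space gradients, but now with $(\bxi,\bs\gamma)$ constrained to the low-dimensional set $\bxi\in\cl Z$, $\bs\gamma\in\cl B$. It is exactly this restriction that lets the sample complexity fall from $n+m$ to $k+\upmu^2_{\max}h$, and setting $\bs Z=\I_n$, $\bs B=\I_m$ (so $k=n$, $h=m$, $\upmu_{\max}=1$) recovers Prop.~\ref{prop:regu} as a special case.

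For each inequality I would split the finite-sample quantity into its expectation and a zero-mean fluctuation. The expectation part is deterministic and, through the $p\to\infty$ formulas of Table~\ref{tab:quants}, reproduces the local convexity of Prop.~\ref{prop:expconvexity}: a direct computation of $\scp{\Ex\bs\nabla^\perp f}{[\,(\bxi-\x)^\top\ (\bs\gamma-\bs g)^\top\,]^\top}$ on $\cl D^{\rm s}_{\kappa,\rho}$ gives a lower bound $\gtrsim(1-O(\rho))\,\Delta(\bxi,\bs\gamma)$, and the expected gradient norms give the leading parts of the Lipschitz constants $L_{\bs\zeta}=8\sqrt2$ and $L_{\bs\beta}=4\sqrt2(1+\kappa)\|\bs z\|$. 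The remaining task is to bound the fluctuation of each of the three quantities by $O(\delta)\,\Delta(\bxi,\bs\gamma)$, uniformly, so that subtracting it from the expectation estimate produces the stated $\eta=1-31\rho-4\delta$ and absorbs the $O(\delta)$ loss into the Lipschitz constants.

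The fluctuations are controlled uniformly over $\cl D^{\rm s}_{\kappa,\rho}$ by an $\varepsilon$-net argument: the neighbourhood is a bounded piece of a $(k+h)$-dimensional set, so its covering number contributes one $\log$ factor, and a second $\log$ arises from truncating the heavy tails of products such as $(\ail^\top\bs Z\bs\zeta)(\ail^\top\bs Z\bs z)$ and $(\ail^\top\bs Z\bs\zeta)^2$, which are sub-exponential rather than sub-Gaussian — this is the origin of the $\log^2(m(p+n))$ in the sample complexity. Pointwise, the signal-domain terms involve only the $k$-dimensional sub-Gaussian vectors $\bs Z^\top\ail$, so their concentration scales with $k$; the gain-domain term $\bs B^\top\bs\nabla^\perp_{\bs\gamma} f$ is a sum of centred matrices whose spectral fluctuation I would control by the matrix Bernstein inequality of Prop.~\ref{prop:mbi}, with matrix-variance governed by $\max_i\|\bs B^\top\bs c_i\|^2=\tfrac{h}{m}\upmu^2_{\max}$, thereby producing the $\upmu^2_{\max}h$ term.

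I expect the gain-domain fluctuation to be the main obstacle, for two reasons. First, obtaining the sharp $\upmu^2_{\max}h$ dependence (rather than a crude $m$) hinges on a careful variance computation in the matrix Bernstein bound — exactly where the coherence of $\bs B$ enters — and this refinement is what removes the spurious $\sqrt m$ factor of \cite{CambareriJacques2016}. Second, Prop.~\ref{prop:mbi} requires a strictly positive matrix variance for the concentrating sum, which degenerates when $\Ex X^4=1$; this is precisely why the Bernoulli Restriction hypothesis is imposed, and checking that it keeps the variance non-degenerate uniformly over $\bxi\in\cl Z\cap\bb S^{n-1}$ is the delicate point. Once the three fluctuation bounds hold on a single high-probability event of the stated form $1-C(mp)^{-t}$, combining them with the expectation estimates and rescaling $\delta$ yields \eqref{eq:reg-event-prop-reg-subspace} with $\eta\in(0,1)$ under the condition $\rho<\tfrac{1-4\delta}{31}$.
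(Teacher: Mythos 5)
Your toolkit is the right one (weighted covariance concentration, matrix Bernstein with variance controlled by $\max_i\|\bs B^\top\bs c_i\|^2 = \upmu_{\max}^2 h/m$, the Bernoulli restriction to keep that variance non-degenerate, and two log factors from covering plus sub-exponential tails), but the central mechanism of your plan has a gap. You propose to bound the fluctuation of each of the three quantities by $O(\delta)\,\Delta(\bs\zeta,\bs\beta)$ uniformly via an $\varepsilon$-net of $\cl D^{\rm s}_{\kappa,\rho}$ itself. A net-plus-union-bound over that neighbourhood only delivers an \emph{additive} error, uniform over net points; but all three inequalities require a \emph{relative} error, one that vanishes proportionally to $\Delta$ (resp.\ $\Delta^{1/2}$ for the gradient norms) as $(\bs\zeta,\bs\beta)\to(\bs z,\bs b)$, since both the curvature and the gradients are zero at the minimiser. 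Your write-up never explains how this $\Delta$-scaling is obtained, and a naive implementation fails arbitrarily close to $(\bs z,\bs b)$. The repair requires exploiting homogeneity in the differences $(\bs\zeta-\bs z,\bs\beta-\bs b)$, and that is exactly what your reduction to ``full-space gradients restricted to the subspaces'' does not provide, because the curvature term is not homogeneous in those differences: it carries coefficients like $(\bs c_i^\top\bs B\bs\beta)^2$ that mix $\bs b$ with $\bs\beta-\bs b$.

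The paper resolves this with an algebraic device you are missing. It splits the curvature as $P = P' + P''$, where $P''$ collects the inhomogeneous cross-terms (bounded by $30\rho\,\Delta$ using the weighted covariance results, since every such term carries a factor $\bs B_{i,\cdot}(\bs\beta-\bs b)$ of size $O(\rho)$), while $P'$ is rewritten exactly as $\tfrac{1}{mp}\sum_{i,l}\langle \bs Z^\top\ail\bs c_i^\top\bs B,\, \bs M\rangle_F^2$ with $\bs M = (\bs\zeta-\bs z)\|\bs g\|\hat{\bs b}^\top + \hat{\bs z}\|\bs z\|(\bs\beta-\bs b)^\top$ a matrix in the fixed $(k+h)$-dimensional subspace $\cl M$ of Prop.~\ref{prop:subspaceM}, satisfying $\|\bs M\|_F^2 \asymp m\,\Delta(\bs\zeta,\bs\beta)$. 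Uniformity and the correct $\Delta$-scaling then come for free from a single operator-level statement, the restricted isometry $\|{\cl P}_{\cl M}\cl A^*\cl A{\cl P}_{\cl M} - {\cl P}_{\cl M}\|\leq\delta$ of Prop.~\ref{prop:standalone-ahmed}; and it is inside the proof of that proposition (specifically the cross term $M''' = \tfrac{2}{p}\sum_{i,l}\hat{\bs b}^\top\bs W^\top\bs u_{i,l}\bs v_{i,l}^\top\bs W^\top\hat{\bs z}$) that matrix Bernstein, the coherence $\upmu_{\max}$, and the Bernoulli restriction actually enter --- not in a direct bound on $\bs B^\top\bs\nabla^\perp_{\bs\gamma}f$ as you suggest. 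The Lipschitz bounds are likewise obtained directly by Cauchy--Schwarz against the operator-level concentration events, not by an expectation-plus-net argument. So the lifting to $\cl M$ is not an optional shortcut; it is the step that makes the relative-error bounds, and hence $\eta = 1-31\rho-4\delta$, attainable.
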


			Given Prop.~\ref{prop:initsub} and~Prop.~\ref{prop:regusub}, we finally obtain a more general version of Thm.~\ref{theorem:convergence} as follows.
			\begin{theorem}[Provable Convergence to the Exact Solution with Subspace Priors]
                          \label{theorem:convergence-subs}
                          Given $\delta \in (0,1)$ and $t\geq1$, let us take $n \gtrsim t \log(mp)$, $mp \gtrsim \delta^{-2} (k + \upmu^2_{\max} h)\, \log^2(m(p+n))\log(\frac{1}{\delta})$ and $\rho < \tfrac{1-4\delta}{31}$. There exists $\mu_0 >0$ with 
\new{$$
\ts \mu_0 \lesssim \frac{\eta}{m},
$$}%
for $\eta = 1-31\rho-4\delta \in (0,1)$ such that for any $0<\mu<\mu_0$ and with probability exceeding $1 - C (mp)^{-t}$ for some $C>0$, Alg.~\ref{alg2} with step sizes set to $\mu_{\bs \zeta} \coloneqq \mu$ and $\mu_{\bs \beta} \coloneqq \mu \tfrac{m}{\|\bs z\|^2}$ has distance decay
                          \begin{equation}
                            \ts {\Delta}({\bs \zeta}_j,{\bs \beta}_j) \leq
                            (1- \eta \mu)^j \big(\delta^2+\rho^2)\|\bs z\|^2, 
                            \label{eq:convres}
                          \end{equation}
                          at any iteration $j > 0$. Hence, ${\Delta}({\bs \zeta}_j,{\bs \beta}_j)\underset{j\rightarrow\infty}{\longrightarrow}0$. 
                        \end{theorem}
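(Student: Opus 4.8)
The plan is to mirror the proof of Thm.~\ref{theorem:convergence} verbatim, viewing that statement as the special case $\cl Z = \bb R^n$, $\cl B = \bb R^m$ and relying on Prop.~\ref{prop:initsub} and Prop.~\ref{prop:regusub} as the two ingredients that now carry the reduced, $\upmu_{\max}$-dependent sample complexity. First I would write down the one-step distance recursion in subspace coordinates, i.e. the analogue of \eqref{eq:updatedist-main} obtained by expanding $\Delta(\bs\zeta_+,\check{\bs\beta}_+)$ with $\bs\zeta_+ = \bs\zeta - \mu_{\bs\zeta}\bs\nabla_{\bs\zeta}f^{\rm s}(\bs\zeta,\bs\beta)$ and $\check{\bs\beta}_+ = \bs\beta - \mu_{\bs\beta}\bs\nabla^\perp_{\bs\beta}f^{\rm s}(\bs\zeta,\bs\beta)$; this is purely algebraic and uses only the definition \eqref{eq:dist_subs} of $\Delta$.

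The crucial observation is that the choice $\mu_{\bs\zeta} = \mu$, $\mu_{\bs\beta} = \mu\,\tfrac{m}{\|\bs z\|^2}$ is engineered so that the two cross terms recombine into the single inner product $\mu\,\langle \bs\nabla^\perp f^{\rm s}(\bs\zeta,\bs\beta),\,[(\bs\zeta-\bs z)^\top\ (\bs\beta-\bs b)^\top]^\top\rangle$, to which the bounded-curvature bound of Prop.~\ref{prop:regusub} applies and produces a decrement $\geq \mu\eta\,\Delta(\bs\zeta,\bs\beta)$. Likewise the two quadratic terms collapse to $\mu^2\big(L_{\bs\zeta}^2 + \tfrac{m}{\|\bs z\|^2}L_{\bs\beta}^2\big)\Delta(\bs\zeta,\bs\beta)$, and substituting $L_{\bs\zeta} = 8\sqrt 2$ and $L_{\bs\beta} = 4\sqrt 2(1+\kappa)\|\bs z\|$ the factor $\|\bs z\|$ cancels, leaving the coefficient $L^2 := 128 + 32\,m(1+\kappa)^2 = \cl O(m)$. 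Hence $\Delta(\bs\zeta_+,\check{\bs\beta}_+) \leq (1 - 2\mu\eta + \mu^2 L^2)\,\Delta(\bs\zeta,\bs\beta)$, and imposing $\mu \leq \eta/L^2$ forces $1 - 2\mu\eta + \mu^2 L^2 \leq 1 - \mu\eta$; this is exactly where the declared bound $\mu_0 \lesssim \eta/m$ originates.

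Next I would dispose of the projection: since $\cl B_\rho$ is convex and the true $\bs b$ lies in it, $\cl P_{\cl B_\rho}$ is non-expansive in the $\bs\beta$-coordinate, so $\|\bs\beta_+ - \bs b\| \leq \|\check{\bs\beta}_+ - \bs b\|$ and therefore $\Delta(\bs\zeta_+,\bs\beta_+) \leq \Delta(\bs\zeta_+,\check{\bs\beta}_+) \leq (1-\mu\eta)\,\Delta(\bs\zeta,\bs\beta)$. Setting $\kappa = \kappa_0 = \sqrt{\delta^2+\rho^2}$ as in Prop.~\ref{prop:initsub}, a union bound over the initialisation and regularity events gives overall probability at least $1 - C(mp)^{-t}$ under the stated hypotheses on $n$ and $mp$. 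The conclusion follows by induction on $j$: the base case $(\bs\zeta_0,\bs\beta_0)\in\cl D^{\rm s}_{\kappa_0,\rho}$ is Prop.~\ref{prop:initsub}, the contraction just derived shows $\Delta$ is non-increasing (so the ellipsoid constraint is preserved) while the projection keeps $\bs\beta_j \in \cl B_\rho$, so each iterate stays in $\cl D^{\rm s}_{\kappa_0,\rho}$ and Prop.~\ref{prop:regusub} may be reused uniformly, yielding $\Delta(\bs\zeta_j,\bs\beta_j) \leq (1-\eta\mu)^j(\delta^2+\rho^2)\|\bs z\|^2$.

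The genuinely hard content sits in Prop.~\ref{prop:regusub}—the \emph{uniform} (over the whole neighbourhood) concentration of the gradient around its expectation—which I may assume. Given it, the only real care needed here is the bookkeeping that keeps every iterate inside $\cl D^{\rm s}_{\kappa_0,\rho}$ so that the \emph{same} high-probability regularity event can be invoked at each iteration: if the iterates could escape, uniformity would break and the recursion would collapse. The monotone decrease of $\Delta$ together with non-expansiveness of $\cl P_{\cl B_\rho}$ onto a set containing $\bs b$ is precisely what rules this out, so I expect this invariance argument, rather than any new estimate, to be the subtle step of the proof.
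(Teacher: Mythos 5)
Your proposal is correct and follows essentially the same route as the paper's own proof: condition jointly on Prop.~\ref{prop:initsub} (with $\kappa_0=\sqrt{\delta^2+\rho^2}$) and Prop.~\ref{prop:regusub}, expand the one-step recursion \eqref{eq:updatedist} with the scaled step sizes, use non-expansiveness of $\cl P_{\cl B_\rho}$ (Prop.~\ref{lemma-ccs}) to absorb the projection, choose $\mu < \mu_0 \asymp \eta/m$ so the contraction factor is at most $1-\mu\eta$, and close the induction by noting the contraction plus the projection keep every iterate in $\cl D^{\rm s}_{\kappa_0,\rho}$ so the uniform regularity event can be reused. Even your identification of the iterate-invariance bookkeeping as the only delicate step matches the structure of the paper's argument.
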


			Thus, subspace priors allow for a provably convergent application of Alg.~\ref{alg2}. Let us note that $mp \gtrsim \delta^{-2} (k + \upmu^2_{\max} h)\, \log^2(m(p+n))\log(\frac{1}{\delta})$ is a clearly more stringent requirement than $mp \gtrsim (k+h) \log n$, emerging from Prop.~\ref{prop:initsub}. 
			%
			However, we see that if $\upmu_{\max} \simeq 1$, and if $h$ and $k$ are small before $m$, having $p=1$ is allowed by the requirement. This will be also confirmed in our experiments, provided that both $k \ll n$, $h \ll m$ and $\upmu_{\max}$ is sufficiently small. This allows for a single-snapshot application of our setup and algorithm to the blind calibration of sensors with side information regarding the subspaces to which $(\bs x,{\bs g})$ belong. 
	

\section{Blind Calibration in the Presence of Noise}
\label{sec:nstability-sub}

\begin{table}[t]
	\centering
     \resizebox{\textwidth}{!}{	
        \def\arraystretch{2}

          \begin{tabular}{c c c}
            \toprule
            \bf  Noisy-case quantity & \bf Expression (or relation w.r.t. noiseless-case counterpart)\\
            \toprule
            $\tilde{f}^{\rm s}({\bs \zeta},{\bs \beta})$ &
                                                           $f^{\rm s}({\bs \zeta},{\bs \beta}) + \tfrac{1}{2}\sigma^2 - \tfrac{1}{mp}\ts \sum_{i,l} \nu_{i,l} (\gamma_i \ail^\top\bs Z\bs \zeta - g_i \ail^\top \bs Z \bs z)$\\
            \midrule
            $\bs\nabla_{\bs \zeta} \tilde{f}^{\rm s}(\bs \zeta,\bs \beta) =  {\bs Z}^\top \bs\nabla_{\bxi} \tilde{f}(\bxi,{\bs \gamma})\big\vert_{\bxi = {\bs Z} {\bs \zeta}, {\bs \gamma} = {\bs B}{\bs \beta}}$ & ${\bs Z}^\top \bs\nabla_{\bxi} f(\bxi,{\bs \gamma})\big\vert_{\bxi = {\bs Z} {\bs \zeta}, {\bs \gamma} = {\bs B}{\bs \beta}} - \tfrac{1}{mp}\ts \sum_{i,l} \nu_{i,l} \gamma_i {\bs Z}^\top\ail$\\
            $\bs\nabla^\perp_{\bs \beta} \tilde{f}^{\rm s}(\bs \zeta,\bs \beta) = {\bs B}^\top \bs\nabla^\perp_{{\bs \gamma}} \tilde{f}(\bxi,{\bs \gamma})\big\vert_{\bxi = {\bs Z} {\bs \zeta}, {\bs \gamma} = {\bs B}{\bs \beta}}$ & ${\bs B}^\top \bs\nabla^\perp_{{\bs \gamma}} f(\bxi,{\bs \gamma})\big\vert_{\bxi = {\bs Z} {\bs \zeta}, {\bs \gamma} = {\bs B}{\bs \beta}}-\tfrac{1}{mp}\ts \sum_{i,l} \nu_{i,l} (\ail^\top {\bs Z} {\bs \zeta})  {\bs B}^\top {\bs c}^\perp_i$\\
            \midrule
            $(\tilde{\bs \zeta}_0,\,\tilde{\bs \beta}_0 \coloneqq \bs \beta_0)$& $\left(
                                                                          \tfrac{1}{mp}\ts \sum_{i,l} g_i \bs Z^\top \ail \ail^\top \bs Z \bs z +  \nu_{i,l} \bs Z^\top \ail,\, \begin{bsmallmatrix} \sqrt m \\ \vzer_h \end{bsmallmatrix}\right)$\\
            \bottomrule
          \end{tabular}
    }
	\caption{\label{tab:quantsnoise-sub}Objective function, its gradient, and the initialisation point for the problem in Def.~\ref{def:blind-calibr-with-sub}~and in the presence of noise. The expressions are expanded to highlight the noise-dependent terms against the corresponding noiseless-case values. Note that above $\gamma_i = {\bs c}^\top_i {\bs B }{\bs \beta}$ and $g_i = {\bs c}^\top_i {\bs B }{\bs b}$.}
\end{table}

This purpose of this section is to study the stability of the solution produced by Def.~\ref{def:blind-calibr-with-sub} when an additive and bounded noise $\{\bs \nu_l\}$ affects the measurements according to 
\begin{equation}
  \label{eq:noisy-mod-with-prior}
  \y_l = \diag({\bs B \bs b}) \A_l \bs Z \bs z + \bnu_l,\quad l \in [p],
\end{equation}
\ie following the setup\footnote{Let us recall the energy of the signal is preserved in its representation in $\cl Z$ so that $\|\bs x\|=\|\bs z\|$.} of the uncalibrated multi-snapshot sensing model with subspace priors described in Def.~\ref{def:modelsub}. We recall the assumption that all noise vectors are collected as the columns of a matrix $\bs N \coloneqq (\bs \nu_1,\,\cdots,\bs \nu_p)$, defining $\sigma \coloneqq \tfrac{1}{\sqrt{mp}} \|\bs N\|_F$ as a bound on the amount of noise injected into the model over $p$ snapshots. 		

While Alg.~\ref{alg2} still estimates $(\bar{\bs z},\bar{{\bs b}})$ in \eqref{eq:bcpk} as before, the presence of $\bnu_l$ in \eqref{eq:measurement-model-matrix-noise} modifies the values of the objective function $\tilde{f}(\bs \zeta, \bs \beta)$, as well those of its projected gradient. In Table \ref{tab:quantsnoise-sub}, the impact of noise is highlighted as a deviation from the noiseless quantities discussed in Sec.~\ref{sec:blind-calibration-with-subspace-priors}, as obtained by plugging in the measurements $\y_l$ defined in \eqref{eq:noisy-mod-with-prior}. This deviation vanishes as $\sigma \simeq 0$. Our theoretical results will be modified accordingly, \ie the initialisation $\tilde{\bs \zeta}_0$ will be negatively affected and require more observations with respect to Prop.~\ref{prop:initsub} to attain the same distance with respect to $\bs z$. Similarly, some noise-dependent terms will worsen the upper and lower bounds of Prop.~\ref{prop:regusub}. However, it is still possible to show in the following Thm.~\ref{theorem:stability} that the algorithm is robust and converges to a minimiser whose distance from the exact solution is controlled by $\sigma$. 

We begin by discussing the effect of noise on the initialisation point. The proofs of this and all following statements in this section are reported in App.~\ref{sec:appD-sub}.

\begin{proposition}[Initialisation Proximity in the Presence of Noise]
  \label{prop:initnoisy}
  Under the noisy sensing model \eqref{eq:noisy-mod-with-prior} with
  $\sigma \coloneqq \tinv{\sqrt{mp}} \|\bs N\|_F$, let $(\tilde{\bs \zeta}_0, \tilde{\bs \beta}_0)$ be as in
  Table \ref{tab:quantsnoise-sub}. Given $\delta \in (0,1)$, $t \geq 1$, provided $\ts mp \gtrsim {\delta}^{-2} (k + h) \log \tfrac{n}{\delta}$ and $n \gtrsim t \log mp $, with probability exceeding 
  $$
  1 - C [e^{-c\delta^2 mp} + (mp)^{-t}]
  $$
  for some $C,c>0$, we have $(\tilde{\bs \zeta}_0, \bs \beta_0) \in {\cl D}^{\rm s}_{\tilde{\kappa}_0, \rho}$ with
  \begin{equation}
    \label{eq:tilde-kappa-0-def}
      \ts \tilde{\kappa}_0^2 \coloneqq \delta^2 + \rho^2 + \frac{4\sigma^2}{\|\bs z\|^2}.    
  \end{equation}
\end{proposition}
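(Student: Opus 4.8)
The plan is to split the squared distance $\Delta(\tilde{\bs \zeta}_0, \bs \beta_0) = \|\tilde{\bs \zeta}_0 - \bs z\|^2 + \tfrac{\|\bs z\|^2}{m}\|\bs \beta_0 - \bs b\|^2$ into a gain-domain and a signal-domain part, isolating the noise contribution inside the latter. The gain-domain part is deterministic and is handled exactly as in the noiseless case: since $\bs \beta_0 = [\sqrt m,\ \vzer_{h-1}^\top]^\top$ while $\bs b = [\sqrt m,\ (\bs b^\perp)^\top]^\top$, we have $\bs \beta_0 - \bs b = [0,\ -(\bs b^\perp)^\top]^\top$, so $\|\bs \beta_0 - \bs b\|^2 = \|\bs b^\perp\|^2 = \|\bs e\|^2$ because $\bs B^\perp$ is an isometry and $\bs e = \bs B^\perp \bs b^\perp$. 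As $\bs e \in \onep \cap \rho\B^m_\infty$ forces $\|\bs e\|^2 \le m\|\bs e\|_\infty^2 \le m\rho^2$, this term is at most $\rho^2\|\bs z\|^2$, which contributes the $\rho^2$ appearing in~\eqref{eq:tilde-kappa-0-def} and simultaneously certifies $\bs \beta_0 \in \cl B_\rho$.

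For the signal-domain part I would use the additive structure of the initialiser in Table~\ref{tab:quantsnoise-sub}, writing $\tilde{\bs \zeta}_0 = \bs \zeta_0 + \bs r_0$, where $\bs \zeta_0 = \tfrac1{mp}\sum_{i,l} g_i \bs Z^\top\ail\ail^\top\bs Z\bs z$ is precisely the noiseless initialiser already analysed in Prop.~\ref{prop:initsub}, and $\bs r_0 \coloneqq \tfrac1{mp}\sum_{i,l}\nu_{i,l}\bs Z^\top\ail$ is the noise-dependent term. Under the stated requirements $mp \gtrsim \delta^{-2}(k+h)\log\tfrac{n}{\delta}$ and $n \gtrsim t\log mp$, Prop.~\ref{prop:initsub} already supplies $\|\bs \zeta_0 - \bs z\| \le \delta\|\bs z\|$ on an event of probability at least $1 - Ce^{-c\delta^2 mp} - (mp)^{-t}$. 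Hence the only genuinely new ingredient is an estimate $\|\bs r_0\| \lesssim \sigma$, after which the triangle inequality combines the two pieces.

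To bound the noise term I would rewrite $\bs r_0 = \tfrac{1}{mp}(\bs A\bs Z)^\top\bs n$, collecting the entries $\nu_{i,l}$ into a vector $\bs n \in \R^{mp}$ with $\|\bs n\| = \|\bs N\|_F = \sqrt{mp}\,\sigma$ and using $\bs A = [\A_1^\top \cdots \A_p^\top]^\top$ as in~\eqref{eq:strohmertous}. This gives $\|\bs r_0\| \le \tfrac{1}{mp}\|\bs A\bs Z\|\,\|\bs n\| = \tfrac{\|\bs A\bs Z\|}{\sqrt{mp}}\,\sigma$, reducing the task to controlling the largest singular value of the tall matrix $\bs A\bs Z \in \R^{mp\times k}$, whose rows $(\bs Z^\top\ail)^\top$ are \iid isotropic sub-Gaussian vectors in $\R^k$ (indeed $\Ex \bs Z^\top\ail\ail^\top\bs Z = \I_k$). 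A standard non-asymptotic spectral-norm bound yields $\|\bs A\bs Z\| \le \sqrt{mp} + C(\sqrt k + u)$ with probability at least $1 - 2e^{-cu^2}$; choosing $u \asymp \sqrt{mp}$ and invoking $k \lesssim \delta^2 mp$ from the sample complexity gives $\|\bs A\bs Z\| \le 2\sqrt{mp}$, hence $\|\bs r_0\| \le 2\sigma$, on an event whose failure probability is of the order $e^{-cmp}$ and is absorbed into the constants of the bound~\eqref{eq:bounds1}. A union bound with the event of Prop.~\ref{prop:initsub} preserves the claimed probability $1 - C[e^{-c\delta^2 mp} + (mp)^{-t}]$.

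Combining the two estimates, $\|\tilde{\bs \zeta}_0 - \bs z\| \le \|\bs \zeta_0 - \bs z\| + \|\bs r_0\| \le \delta\|\bs z\| + 2\sigma$, and squaring then adding the gain-domain term gives $\Delta(\tilde{\bs \zeta}_0,\bs \beta_0) \le \delta^2\|\bs z\|^2 + \rho^2\|\bs z\|^2 + 4\sigma^2 = \tilde{\kappa}_0^2\|\bs z\|^2$, the cross term $4\delta\sigma\|\bs z\|$ being absorbed into the universal constants (equivalently, into a harmless rescaling of the radius tolerated in Prop.~\ref{prop:initsub}). The main obstacle is precisely the noise estimate of the third paragraph: because $\bs N$ is adversarial and only Frobenius-bounded, it cannot be averaged out and one must pay the worst-case operator-norm price $\|\bs A\bs Z\|/\sqrt{mp}\approx 1$ incurred when the adversary aligns $\bs n$ with the top singular direction of $\bs A\bs Z$; keeping this factor close to $1$ is exactly what forces the explicit use of the sample-complexity lower bound, and tracking the constants there is what fixes the coefficient $4$ in front of $\sigma^2/\|\bs z\|^2$ in~\eqref{eq:tilde-kappa-0-def}.
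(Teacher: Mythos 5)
Your proposal is correct in substance and follows essentially the same route as the paper: split off the deterministic gain-domain term $\rho^2\|\bs z\|^2$, handle the noiseless part of $\tilde{\bs \zeta}_0$ by the noiseless initialisation analysis, and reduce the noise term via Cauchy--Schwarz to $\sigma\,\|\bs A\bs Z\|/\sqrt{mp}$ --- the paper does exactly this, writing the noise contribution as $\sup_{\bs u\in\bb S^{k-1}_2}\big|\tfrac{1}{mp}\sum_{i,l}\nu_{i,l}\ail^\top\bs Z\bs u\big|$. The one genuine difference is the tool controlling $\|\bs A\bs Z\|/\sqrt{mp}$: you invoke a separate standard sub-Gaussian spectral-norm bound (an additional union-bound event with $\alpha$-dependent constants), whereas the paper simply reuses Cor.~\ref{coro:appl-wcovsub} with unit weights --- an event already in force for the noiseless term --- giving $\|\tfrac{1}{mp}\sum_{i,l}\bs Z^\top\ail\ail^\top\bs Z\|\leq 1+\delta$ and hence the noise bound $\sqrt{1+\delta}\,\sigma\leq\sqrt 2\,\sigma$. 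That sharper constant matters here: unlike $\delta$, which enters the sample complexity and can be rescaled by universal constants, $\sigma$ is a fixed problem parameter, so your bound $\|\bs r_0\|\leq 2\sigma$ together with the cross term $4\delta\sigma\|\bs z\|$ (which must be split by AM--GM, not ``absorbed'') produces a coefficient strictly larger than the stated $4$ in front of $\sigma^2/\|\bs z\|^2$. The paper closes the computation via $(2\delta\|\bs z\|+\sqrt 2\,\sigma)^2\leq 8\delta^2\|\bs z\|^2+4\sigma^2$ followed by a rescaling of $\delta$ alone; to recover the stated constant along your route, take $u\asymp\delta\sqrt{mp}$ rather than $u\asymp\sqrt{mp}$ in the spectral bound and use $k\lesssim\delta^2 mp$, which gives $\|\bs A\bs Z\|/\sqrt{mp}\leq 1+C\delta\leq\sqrt 2$ after an $\alpha$-dependent rescaling of $\delta$, with failure probability still of the form $e^{-c\delta^2 mp}$.
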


Note how the quality of the initialisation increases with a ``signal-to-noise ratio''-like quantity $\tfrac{\|\bs x\|}{\sigma}$ in~\eqref{eq:measurement-model-matrix-noise}. 

\new{
  \begin{remark}
Our stability analysis in Prop.~\ref{prop:initnoisy}, Prop.~\ref{prop:regunoisy} and in Thm~\ref{theorem:stability} does not use any statistical property of the noise $\bs N$. For instance, assuming $\bs N$ to be an additive white Gaussian noise could lead, from a standard statistical argument related to the consistency of the maximum likelihood estimator, to a reduction of the noise impact as $mp$ increases. This interesting improvement is, however, postponed to a future study.
  \end{remark}
}


We now apply again the regularity condition and assess how the presence of noise modifies the requirements on the projected gradient used in Alg.~\ref{alg2}, \ie on
\[
  \bs\nabla^\perp \tilde{f}^{\rm s}(\bs \zeta, \bs \beta) \coloneqq 
  \ts \begin{pmatrix}
    \ts \bs\nabla_{\bs \zeta} \tilde{f}^{\rm s}(\bs \zeta, \bs \beta)\\
    \ts \bs\nabla_{\bs \beta}^\perp \tilde{f}^{\rm s}(\bs \zeta, \bs \beta)
  \end{pmatrix},
\]
with $\bs\nabla_{\bs \beta}^\perp \tilde f^{\rm s}(\bs \zeta, \bs \beta) = \bs P^{\rm s}_{\onep} \tilde f^{\rm s}(\bs \zeta, \bs \beta) = \big(\bs B^\top (\bs I_m - \frac{\bs 1 \bs 1^\top}{m})\bs B\big) \bs\nabla_{\bs \beta} \tilde f^{\rm s}(\bs \zeta, \bs \beta)$, as defined in Sec.~\ref{sec:blind-calibration-with-subspace-priors}.

\begin{proposition}[Regularity condition in the Presence of Noise]
  \label{prop:regunoisy}
  Given $\delta \in (0,1)$, $t\geq1$ and $\kappa > 0$, provided $\rho < \tfrac{1-4\delta}{31}$,  $n \gtrsim t \log(mp)$, and 
  $mp \gtrsim \delta^{-2} (k + \upmu^2_{\max} h)\, \log^2(m(p+n))\log(\frac{1}{\delta})$,  with probability exceeding $1 - C (mp)^{-t}$ for some $C>0$, we have for all $(\bs \zeta, \bs \beta) \in {\cl D}^{\rm s}_{\kappa,\rho}$,
  \begin{subequations}
      \label{eq:reg-noise}
    \begin{align}
      \label{eq:bounded-curvature-noise}
      \big\langle \bs\nabla^\perp \tilde{f}^{\rm s}({\bs \zeta}, {\bs \beta}), \big[\begin{smallmatrix} {\bs \zeta}-{\bs z} \\ {\bs \beta}- {\bs b} \end{smallmatrix}\big] \big\rangle&\geq \eta\, {\Delta}({\bs \zeta},{\bs \beta}) - o_C\,\sigma,
    \end{align}\new{\vspace{-7mm}
\begin{align} 
   \label{eq:bounded-lipschitz-noise-zeta}        
      \|\bs\nabla_{\bs \zeta} \tilde{f}^{\rm s}({\bs \zeta}, {\bs \beta}) \|^2&\leq
                                                                          2L_{\bs \zeta}^2\ {\Delta}({\bs \zeta},{\bs \beta}) + o_{L,\bs \zeta}\,\sigma^2,\\
   \label{eq:bounded-lipschitz-noise-beta}        
      \|\bs\nabla_{\bs \beta}^\perp \tilde{f}^{\rm s}({\bs \zeta}, {\bs \beta}) \|^2 \leq \|\bs\nabla_{\bs \beta} \tilde{f}^{\rm s}({\bs \zeta}, {\bs \beta}) \|^2&\leq
                                                                          2 L_{\bs \beta}^2\ {\Delta}({\bs \zeta},{\bs \beta}) + o_{L,\bs \gamma}\,\sigma^2,
    \end{align}
}  
  \end{subequations}
\new{with $\eta$, $L_{\bs \zeta}$ and $L_{\bs \beta}$ determined in Prop.~\ref{prop:regusub}, $o_C = \sqrt{2}(1+ 3\kappa) \|{\bs z}\|$, $o_{L,\bs \zeta} := 32$, and $o_{L,\bs \beta} \coloneqq 4\,(1+\kappa)^2\,\|\bs z\|^2$.}
\end{proposition}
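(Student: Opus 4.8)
The plan is to exploit the fact that the noisy objective's gradient is \emph{affine} in the noise, as made explicit in Table~\ref{tab:quantsnoise-sub}: each noisy gradient block equals its noiseless counterpart minus an explicit noise-dependent residual,
\[
\bs r_{\bs \zeta} \coloneqq \ts\tfrac{1}{mp}\sum_{i,l}\nu_{i,l}\,\gamma_i\,{\bs Z}^\top\ail,
\qquad
\bs r_{\bs \beta} \coloneqq \ts\tfrac{1}{mp}\sum_{i,l}\nu_{i,l}\,(\ail^\top {\bs Z}\bs\zeta)\,{\bs B}^\top {\bs c}^\perp_i,
\]
with $\gamma_i = {\bs c}_i^\top {\bs B}\bs\beta$. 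Since Prop.~\ref{prop:regusub} already controls the noiseless gradient uniformly on $\cl D^{\rm s}_{\kappa,\rho}$, it suffices to bound $\|\bs r_{\bs \zeta}\|$ and $\|\bs r_{\bs \beta}\|$ by a multiple of $\sigma$, uniformly over the neighbourhood, and then recombine. I would condition on the same high-probability events invoked in Prop.~\ref{prop:regusub}, so that the stated sample complexity $mp \gtrsim \delta^{-2}(k+\upmu_{\max}^2 h)\log^2(m(p+n))\log(\tfrac1\delta)$ and the probability $1 - C(mp)^{-t}$ are inherited unchanged.

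The central estimate is the uniform bound on the residual norms, which I would obtain by Cauchy--Schwarz in the $(i,l)$-index. Writing $\|\bs r_{\bs \zeta}\| = \sup_{\|\bs u\|=1}\tfrac{1}{mp}\sum_{i,l}\nu_{i,l}\gamma_i(\ail^\top {\bs Z}\bs u)$ and separating the noise factor gives
\[
\|\bs r_{\bs \zeta}\|\ \leq\ \ts\tfrac{1}{mp}\,\|\bs N\|_F\,\sup_{\|\bs u\|=1}\Big(\sum_{i,l}\gamma_i^2\,(\ail^\top {\bs Z}\bs u)^2\Big)^{1/2}.
\]
The gain constraint $\bs\beta\in\cl B_\rho$ forces $\gamma_i^2\leq(1+\rho)^2$ \emph{deterministically}, so the supremum is controlled by the top eigenvalue of $\sum_{i,l}{\bs Z}^\top\ail\ail^\top {\bs Z}$, which concentrates around $mp\,\I_k$ by the matrix-Bernstein argument of App.~\ref{sec:appA}. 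Recalling $\|\bs N\|_F=\sqrt{mp}\,\sigma$, this yields $\|\bs r_{\bs \zeta}\|\lesssim\sigma$. The same scheme applied to $\bs r_{\bs \beta}$, now pulling $(\ail^\top{\bs Z}\bs\zeta)^2$ through a per-row operator-norm bound on $\sum_l {\bs Z}^\top\ail\ail^\top {\bs Z}$ and using $\sum_i(\bs w^\top {\bs B}^\top{\bs c}_i^\perp)^2=\bs w^\top{\bs B}^\top\Ponep{\bs B}\bs w\leq 1$, gives $\|\bs r_{\bs \beta}\|\lesssim(1+\kappa)\|\bs z\|\,\sigma$, where the factor $(1+\kappa)\|\bs z\|$ arises from $\|\bs\zeta\|\leq(1+\kappa)\|\bs z\|$ on the neighbourhood. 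Crucially, factoring out $\gamma_i^2$ and $\|\bs\zeta\|$ \emph{before} invoking concentration dissolves the dependence on the test point, so no $\varepsilon$-net over $\cl D^{\rm s}_{\kappa,\rho}$ is needed for the noise part and the argument stays agnostic to the statistics of $\bs N$.

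Given these residual bounds, the three displayed inequalities follow by routine recombination. For the curvature \eqref{eq:bounded-curvature-noise} I would expand $\scp{\bs\nabla^\perp\tilde f^{\rm s}}{[\,(\bs\zeta-\bs z)^\top\ (\bs\beta-\bs b)^\top]^\top}$, apply Prop.~\ref{prop:regusub} to the noiseless part to recover $\eta\,\Delta(\bs\zeta,\bs\beta)$, and bound the two residual inner products by Cauchy--Schwarz together with the neighbourhood bounds $\|\bs\zeta-\bs z\|\leq\kappa\|\bs z\|$ and $\|\bs\beta-\bs b\|\leq\kappa\sqrt m$, assembling the constant $o_C=\sqrt2(1+3\kappa)\|\bs z\|$ (the $\sqrt m$ in the $\bs\beta$-term cancels against the $1/\sqrt m$ in $\|\bs r_{\bs\beta}\|$). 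For the Lipschitz bounds \eqref{eq:bounded-lipschitz-noise-zeta}--\eqref{eq:bounded-lipschitz-noise-beta} I would use $\|\bs a-\bs b\|^2\leq 2\|\bs a\|^2+2\|\bs b\|^2$: the noiseless parts give $2L_{\bs\zeta}^2\Delta$ and $2L_{\bs\beta}^2\Delta$ via Prop.~\ref{prop:regusub}, while twice the squared residuals supply the additive $o_{L,\bs\zeta}\sigma^2$ and $o_{L,\bs\beta}\sigma^2$ terms. The main obstacle is precisely the \emph{uniform} residual estimate of the previous paragraph---making the bounds hold simultaneously for every $(\bs\zeta,\bs\beta)\in\cl D^{\rm s}_{\kappa,\rho}$ while keeping the noise adversarial---but this reduces, through the factoring trick, to spectral bounds on fixed random matrices already established in the proof of Prop.~\ref{prop:regusub}, so no new concentration machinery is required.
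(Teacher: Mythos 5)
Your overall strategy coincides with the paper's: write $\bs\nabla^\perp \tilde f^{\rm s} = \bs\nabla^\perp f^{\rm s} - \bs{\sf v}$ with the explicit noise residuals, invoke Prop.~\ref{prop:regusub} for the noiseless part, and bound the residuals. Your treatment of $\bs r_{\bs\zeta}$ and the two Lipschitz bounds \eqref{eq:bounded-lipschitz-noise-zeta}--\eqref{eq:bounded-lipschitz-noise-beta} is sound (modulo the remark below on how $\|\bs r_{\bs\beta}\|$ is obtained). However, the curvature bound \eqref{eq:bounded-curvature-noise} has a genuine gap.

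You bound $\big|\langle \bs r_{\bs\beta},\bs\beta-\bs b\rangle\big|$ by Euclidean Cauchy--Schwarz, $\|\bs r_{\bs\beta}\|\,\|\bs\beta-\bs b\|$ with $\|\bs\beta-\bs b\|\leq\kappa\sqrt m$, and claim the $\sqrt m$ cancels against a factor $\tfrac{1}{\sqrt m}$ in $\|\bs r_{\bs\beta}\|$. But your own derivation two paragraphs earlier gives $\|\bs r_{\bs\beta}\|\lesssim(1+\kappa)\|\bs z\|\sigma$ \emph{without} any $\tfrac{1}{\sqrt m}$, and no such improvement is available: the bound $\sup_{(\bs\zeta,\bs\beta)\in\cl D^{\rm s}_{\kappa,\rho}}\|\bs r_{\bs\beta}\|\lesssim\sigma\|\bs z\|/\sqrt m$ is in fact false for adversarial $\bs N$. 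For instance with $p=1$, $\bs Z=\I_n$, take $\bs\zeta=\bs z+\kappa\|\bs z\|\hat{\bs a}_{i_0,1}$ (which lies in the neighbourhood) and noise aligned with $(\a_{i,1}^\top\bs\zeta)\,\bs w^\top\bs B^\top\bs c_i^\perp$; then $\|\bs r_{\bs\beta}\|\gtrsim\tfrac{\sigma}{\sqrt m}\,\kappa\|\bs z\|\sqrt{n}\,\|\bs B^\top\bs c_{i_0}^\perp\|\approx\sigma\kappa\|\bs z\|\sqrt{\tfrac{nh}{m}}\cdot\tfrac{1}{\sqrt m}$, which exceeds $\sigma\|\bs z\|/\sqrt m$ by a factor of order $\kappa\sqrt{nh/m}$. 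Consequently your route yields $o_C\propto\sqrt m\,\|\bs z\|$ rather than the dimension-free $\sqrt2(1+3\kappa)\|\bs z\|$, which would ruin Thm.~\ref{theorem:stability}. The paper avoids this by never passing through $\|\bs r_{\bs\beta}\|$: it applies Cauchy--Schwarz directly over the index $(i,l)$ to the pairing $\langle\bs r_{\bs\beta},\bs\beta-\bs b\rangle$, pulls out $\max_i\big|(\bs\beta-\bs b)^\top\bs B^\top\bs c_i\big| = \|\bs B(\bs\beta-\bs b)\|_\infty\leq 2\rho<1$ --- the crucial $\ell_\infty$ control coming from the constraint set $\cl B_\rho$, which holds even though $\|\bs\beta-\bs b\|$ can be of order $\sqrt m$ --- and controls the remaining factor $\tfrac{1}{mp}\sum_{i,l}(\ail^\top\bs Z\bs\zeta)^2\leq(1+\delta)\|\bs\zeta\|^2$ by Cor.~\ref{coro:appl-wcovsub}. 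That $\ell_\infty$ step is the idea your argument is missing.

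A secondary point: your per-row operator-norm bound on $\sum_l\bs Z^\top\ail\ail^\top\bs Z$ is not among the tools established in App.~\ref{sec:appA}; a single-row sum of $p$ rank-one terms does not concentrate when $p$ is small (its norm is at least $\max_l\|\bs Z^\top\ail\|^2\approx k$), and the crude bound via Prop.~\ref{prop:truncation1} only gives $\|\bs r_{\bs\beta}\|\lesssim\sigma\|\bs\zeta\|\sqrt{n/m}$, which is too weak. The fix within existing machinery is the one the paper uses for the norm as well: bound $\max_i(\bs w^\top\bs B^\top\bs c_i)^2\leq\|\bs B\bs w\|_\infty^2\leq\|\bs w\|^2$ before invoking Cor.~\ref{coro:appl-wcovsub} on $\tfrac{1}{mp}\sum_{i,l}(\ail^\top\bs Z\bs\zeta)^2$.
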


As clear from these propositions, the sample complexity is not modified by the presence of noise (\ie the nature of the dependency on the problem dimensions remains unaltered), but compared to~\eqref{eq:reg-event-prop-reg-subspace} the bounds in~\eqref{eq:reg-noise} are directly impacted by the noise level $\sigma$ as determined by two factors $o_C$ and $o_L$ depending only on the radius $\kappa$ of the considered neighbourhood $\cl D^{\rm s}_{\kappa,\rho}$ and on $\|\bs z\|$. As made clear in the next theorem, these disturbances have direct impact on the quality attained asymptotically in the number of iterations of Alg.~\ref{alg2}. 

\begin{theorem}[Stable Recovery of the Exact Solution]
  \label{theorem:stability}
  Given $\delta \in (0,1)$ and $t\geq1$, let us take $n \gtrsim t \log(mp)$, $mp \gtrsim \delta^{-2} (k + \upmu^2_{\max} h)\, \log^2(m(p+n))\log(\frac{1}{\delta})$ and $\rho < \tfrac{1-4\delta}{31}$. If $\sigma \lesssim \|\bs z\|$, there exists $\mu_0 >0$ with 
  \new{$$
  \ts \mu_0 \lesssim \frac{\eta}{m}\, \min(1,\frac{\|\bs z\|^2}{\sigma^2}),
  $$}%
for $\eta = 1-4\delta-31\rho \in (0,1)$ such that for any $0<\mu<\mu_0$ and with probability exceeding $1 - C (mp)^{-t}$ for some $C>0$, Alg.~\ref{alg2} with step sizes set to $\mu_{\bs \zeta} \coloneqq \mu$ and $\mu_{\bs \beta} \coloneqq \mu \tfrac{m}{\|\bs z\|^2}$ has distance decay
\begin{equation}
\label{eq:updnoisedec}
\new{\ts \Delta(\bs\zeta_{j}, \bs\beta_{j}) \lesssim (1 - \eta\mu)^{j} \|\bs z\|^2 + \frac{1}{{\eta}} \sigma \|\bs z\|.}
\end{equation}
Hence, \new{$\lim_{j\to + \infty}{\Delta}({\bs \zeta}_j,{\bs \beta}_j) \lesssim \frac{1}{\eta} \sigma \|\bs z\|$.}
\end{theorem}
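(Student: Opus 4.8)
The plan is to run the same contraction argument that proves the noiseless Thm.~\ref{theorem:convergence-subs}, but now drag along the inhomogeneous noise terms furnished by the noisy regularity condition, so that the clean geometric factor $(1-\eta\mu)$ survives while a constant additive term accumulates into a noise floor. First I would fix the initialisation: under the stated sample complexity, Prop.~\ref{prop:initnoisy} places $(\tilde{\bs\zeta}_0,\bs\beta_0)$ in $\cl D^{\rm s}_{\tilde\kappa_0,\rho}$ with $\tilde\kappa_0^2$ as in \eqref{eq:tilde-kappa-0-def}. I would then select a fixed, dimension-free radius $\kappa$ (depending only on $\delta,\rho$ and the assumed ratio $\sigma/\|\bs z\|\lesssim1$) and invoke Prop.~\ref{prop:regunoisy} on $\cl D^{\rm s}_{\kappa,\rho}$; since its sample complexity is independent of $\kappa$, the two events can be intersected and a union bound absorbs $e^{-c\delta^2 mp}$ into $(mp)^{-t}$ under the sample-complexity hypothesis, leaving the stated success probability.

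Next comes the one-step update. Exactly as in \eqref{eq:updatedist-main}, but written for the subspace coefficients and the noisy gradient, for $\bs\zeta_+=\bs\zeta-\mu\,\bs\nabla_{\bs\zeta}\tilde f^{\rm s}$ and $\check{\bs\beta}_+=\bs\beta-\mu\tfrac{m}{\|\bs z\|^2}\bs\nabla^\perp_{\bs\beta}\tilde f^{\rm s}$, the choices $\mu_{\bs\zeta}=\mu$, $\mu_{\bs\beta}=\mu\tfrac{m}{\|\bs z\|^2}$ collapse the two cross terms into the single inner product $-2\mu\langle\bs\nabla^\perp\tilde f^{\rm s},[(\bs\zeta-\bs z)^\top\ (\bs\beta-\bs b)^\top]^\top\rangle$ and leave the quadratic terms $\mu^2\|\bs\nabla_{\bs\zeta}\tilde f^{\rm s}\|^2+\mu^2\tfrac{m}{\|\bs z\|^2}\|\bs\nabla^\perp_{\bs\beta}\tilde f^{\rm s}\|^2$. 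Substituting the three bounds of \eqref{eq:reg-noise}, and writing $K:=2L_{\bs\zeta}^2+2\tfrac{m}{\|\bs z\|^2}L_{\bs\beta}^2=O(m)$ and $K':=o_{L,\bs\zeta}+\tfrac{m}{\|\bs z\|^2}o_{L,\bs\beta}=O(m)$, I obtain
\[
\Delta(\bs\zeta_+,\check{\bs\beta}_+)\ \leq\ (1-2\eta\mu+K\mu^2)\,\Delta(\bs\zeta,\bs\beta)+2\,o_C\,\mu\sigma+K'\mu^2\sigma^2 .
\]
Non-expansiveness of $\cl P_{\cl B_\rho}$ (recall $\bs b\in\cl B_\rho$) gives $\Delta(\bs\zeta_+,\bs\beta_+)\leq\Delta(\bs\zeta_+,\check{\bs\beta}_+)$, so the same inequality holds for the true iterate.

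Then I would choose $\mu<\mu_0$ with $\mu_0\lesssim\eta/m$ so that $K\mu\leq\eta$, forcing the homogeneous factor below $1-\eta\mu$, and write the recursion as $\Delta_{j+1}\leq(1-\eta\mu)\Delta_j+B$ with $B:=2o_C\mu\sigma+K'\mu^2\sigma^2$. An induction on $j$, using the fixed $\kappa$ to keep every iterate inside $\cl D^{\rm s}_{\kappa,\rho}$ where \eqref{eq:reg-noise} applies uniformly, unrolls this to
\[
\Delta(\bs\zeta_j,\bs\beta_j)\ \leq\ (1-\eta\mu)^j\,\tilde\kappa_0^2\|\bs z\|^2+\tfrac{B}{\eta\mu}.
\]
Here $\tfrac{B}{\eta\mu}=\tfrac{2o_C\sigma}{\eta}+\tfrac{K'\mu\sigma^2}{\eta}$: the first summand is $\lesssim\tfrac1\eta\sigma\|\bs z\|$ because $o_C=\sqrt2(1+3\kappa)\|\bs z\|$ with $\kappa$ constant, and the second is exactly what the extra $\min(1,\|\bs z\|^2/\sigma^2)$ factor in $\mu_0$ controls, since it enforces $K'\mu\sigma^2=O(m)\mu\sigma^2\lesssim\eta\sigma\|\bs z\|$. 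Finally $\tilde\kappa_0^2\|\bs z\|^2=(\delta^2+\rho^2)\|\bs z\|^2+4\sigma^2\lesssim\|\bs z\|^2$ under $\sigma\lesssim\|\bs z\|$, which yields \eqref{eq:updnoisedec}; sending $j\to\infty$ collapses the geometric term and leaves the floor $\tfrac1\eta\sigma\|\bs z\|$.

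The main obstacle I expect is the self-consistent choice of $\kappa$. Unlike the noiseless case the distance no longer decreases monotonically to $0$ but only to the noise floor, so one must verify \emph{a priori} that a fixed constant radius $\kappa$ traps all iterates — complicated by the feedback that $\kappa$ enters $L_{\bs\beta}$, $o_C$, $o_{L,\bs\beta}$ and hence $\mu_0$ and $B$ themselves. This is precisely where $\sigma\lesssim\|\bs z\|$ is needed, so that both the initial radius $\tilde\kappa_0$ and the floor stay bounded by a dimension-free multiple of $\|\bs z\|$ and a consistent constant $\kappa$ exists. The secondary, purely bookkeeping obstacle is balancing the $\mu^2\sigma^2$ geometric-series contribution against the target $\tfrac1\eta\sigma\|\bs z\|$ floor, which is what dictates the $\min(1,\|\bs z\|^2/\sigma^2)$ correction to $\mu_0$.
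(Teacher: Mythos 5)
Your proposal is correct and follows essentially the same route as the paper's own proof: noisy initialisation (Prop.~5.1) combined with the noisy regularity condition (Prop.~5.2), a one-step contraction carrying the additive noise terms, non-expansiveness of ${\cl P}_{\cl B_\rho}$, and unrolling into geometric decay plus a noise floor of order $\tfrac{1}{\eta}\sigma\|\bs z\|$, with the $\min(1,\|\bs z\|^2/\sigma^2)$ factor in $\mu_0$ taming the $\mu^2\sigma^2$ geometric-series contribution exactly as you describe. The one step you leave schematic --- the self-consistent choice of the trapping radius $\kappa$ --- is resolved in the paper precisely as you anticipate: the requirement that the decay bound stay below $\kappa^2\|\bs z\|^2$ for all iterations becomes a quadratic inequality in $\kappa$, solved by $\kappa = 3 + \tfrac{10\sqrt 2}{\eta}\tfrac{\sigma}{\|\bs z\|} + \tfrac{6\sigma^2}{\|\bs z\|^2}$, which is $O(1)$ exactly when $\sigma \lesssim \|\bs z\|$.
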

Note that this result holds identically for Alg.~\ref{alg1}, since the latter is only a particular case of Alg.~\ref{alg2} in absence of subspace priors. Let us also emphasise that our result was shown for $\sigma \lesssim \|\bs x\|$, \ie when $\sigma$ is a small fraction of the signal energy. If not, the dependency reported in \eqref{eq:updnoisedec} will not be linear but in general polynomial with respect to $\tfrac{\sigma}{\|\bs z\|}$, as shall be seen in the proof of Thm.~\ref{theorem:stability}.

Finally, note that the dependency on $\|\bs z\|$, which is generally unknown, is not concerning since the initialisation $\|\bs \zeta_0\| \in [(1-\delta) \|\bs z\|, (1+\delta)\|\bs z\|]$ for some $\delta \in (0,1)$ can still be used as a rough estimate of the former.

	
\section{Numerical Experiments}
		\label{sec:numerical-experiments}
		We now introduce some experiments and applications of our blind calibration framework to assess its practical performances for finite values of $m,n,p$ and in settings agreeing with our sensing models. In the following we will adopt as a figure of merit 
		\[
		\text{RMSE}_{\max} \coloneqq 20 \log_{10} \max \left\{\tfrac{\|\bar{\x} - \x\|}{\|\x\|},\tfrac{\|\bar{{\bs g}}-{\bs g}\|}{\|{\bs g}\|} \right\},
		\]
		\ie the {\em maximum relative mean square error}  taking the worst-case performances achieved by the estimates $(\bar{\x}, \bar{\bs g})$ between the signal and gain domain. This is used to assess when the recovery of the signal and gains on any one instance of \eqref{eq:bcp} or \eqref{eq:bcpk} is achieved successfully by either Alg.~\ref{alg1} (in absence of priors) or \ref{alg2} (with subspace priors). For our experiments, we have chosen to generate $\A_l$ in our sensing models with \iid random sensing vectors distributed as $\ail \sim_{\rm \iid} \cl N(\vzer_n,\I_n)$. However, the same performances can be achieved in high dimensions with other \iid sub-Gaussian random vectors, such as those with symmetric Bernoulli-distributed entries, in a fashion fully compatible with the Gaussian case. Moreover, while the theory in this paper addresses only sub-Gaussian random matrix ensembles, the experiments can be empirically run when $\A_l$ is implemented (\eg optically) as a random convolution~\cite{Romberg2009} albeit requiring a higher number $p$ of snapshots to achieve the exact solution. This suggests that our framework could be extended to random matrix ensembles not covered by the theory in Sec.~\ref{sec:appC}, increasing the applicability of the proposed framework to implementation-friendly and fast sensing matrix configurations. 
		
		As a point of comparison we will adopt the least-squares solution in absence of prior information on the gains, \ie 
		\begin{equation}
		\label{eq:ls}
		\bar{\x}_{\rm ls} \coloneqq {\mathop{\argmin}_{\bxi \in \R^n}}  \tfrac{1}{2mp} {\ts \sum_{l=1}^{p} \left\Vert \A_l \bxi - \y_l\right\Vert^2},
		\end{equation}%
		given $\y_l \in \R^m, \A_l\in \R^{m\times n}, l \in [p]$ as in Def.~\ref{def:model} (its extension to the known subspace case is trivial and merely involves solving \eqref{eq:ls} with respect to $\bs\xi = \bs Z\bs\zeta$). \new{Indeed, this convex problem can be solved by taking the gradient and solving the resulting linear system via, \eg the LSQR algorithm \cite{paige1982lsqr}}. 
		This comparison merely aims to show the advantages of performing blind calibration with respect to ignoring the effect of $\bs g$. Resorting to~\eqref{eq:ls} as a valid reference problem is in fact due to the observation that most of the algorithms addressing blind calibration employ additional assumptions or a different setup, such as signal-domain sparsity or multiple and possibly independent inputs, that are incompatible with our sensing model.
					
			\begin{figure}
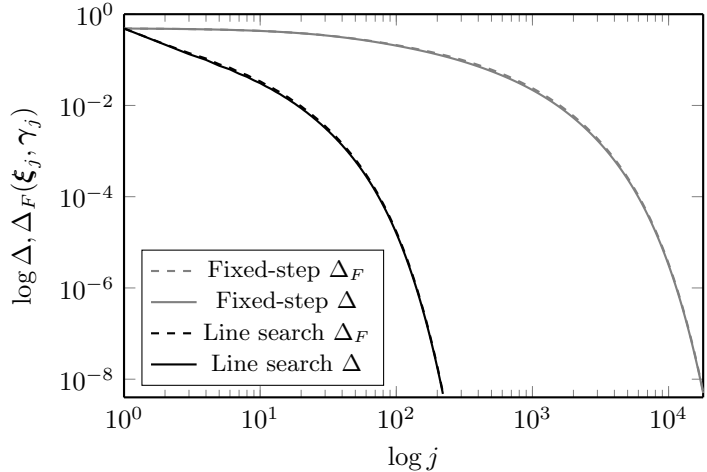

				\centering
%
				\caption{\label{fig:decay} Residual evolution of two exemplary runs of Alg.~\ref{alg1}: with fixed steps $\mu_\bxi,\mu_{\bs \gamma}$ (gray); with the line search updates given in \eqref{eq:step1} and \eqref{eq:step2} (black).}
			\end{figure}					
		
		{
			\begin{figure}[!p]
				\definecolor{mycolor1}{rgb}{0.254,0.254,0.254}%
				\definecolor{mycolor2}{rgb}{0.508,0.508,0.508}%
				\definecolor{mycolor3}{rgb}{0.762,0.762,0.762}%
				\centering
				\subfloat[{$n=2^6,\rho=10^{-3}$}]{%
						\includegraphics{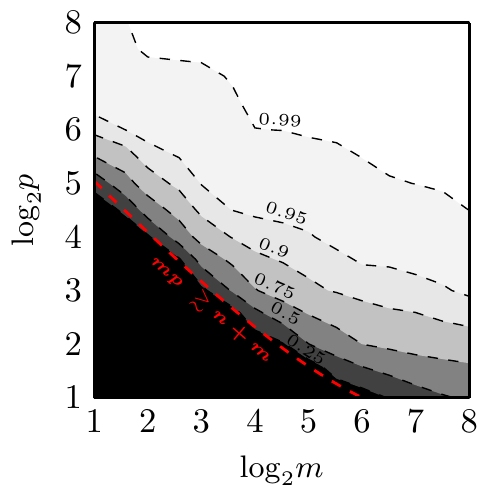}
				}
				\hfill
				\subfloat[{$n=2^6,\rho=10^{-2}$}]{%
						\includegraphics{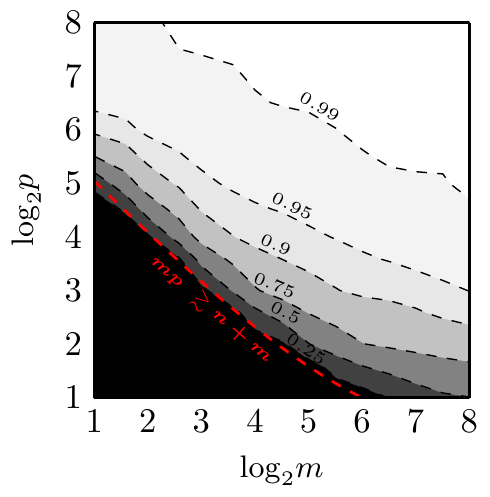}
				}
				\hfill
				\subfloat[{$n=2^6,\rho=10^{-1}$}]{%
						\includegraphics{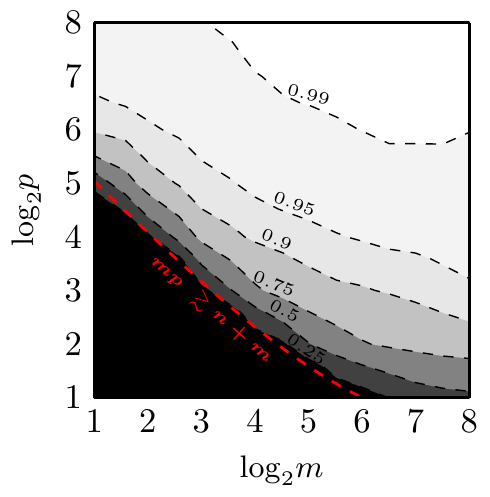}
				}\\
				\subfloat[{$n=2^7,\rho=10^{-3}$}]{%
						\includegraphics{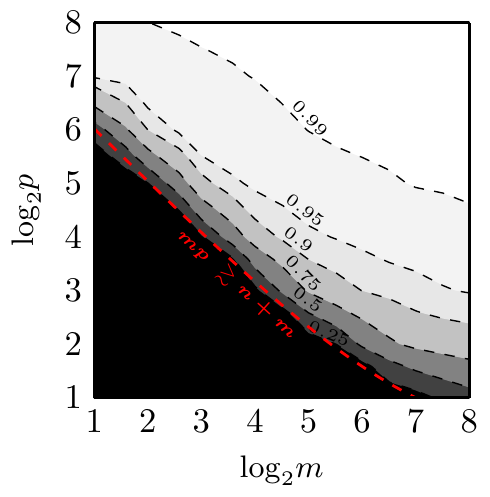}
				}
				\hfill
				\subfloat[{$n=2^7,\rho=10^{-2}$}]{%
						\includegraphics{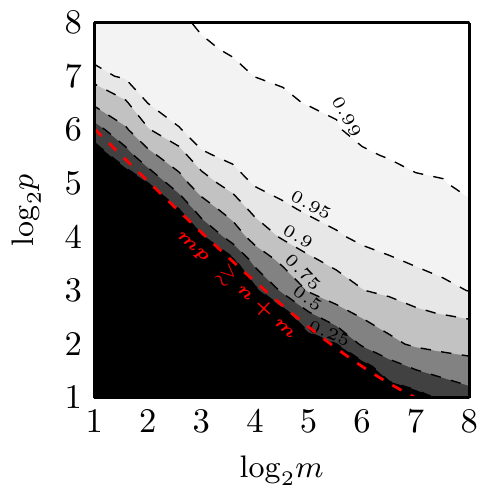}
				}
				\hfill
				\subfloat[{$n=2^7,\rho=10^{-1}$}]{%
						\includegraphics{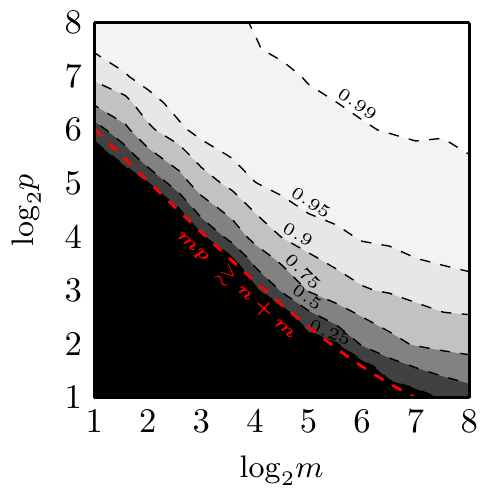}
				}\\
				\subfloat[{$n=2^8,\rho=10^{-3}$}]{%
						\includegraphics{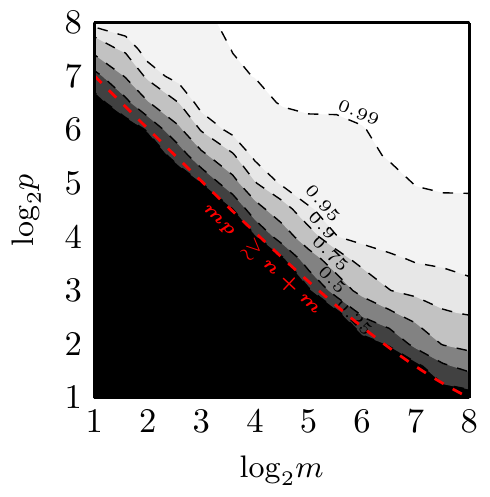}
				}
				\hfill
				\subfloat[{$n=2^8,\rho=10^{-2}$}]{%
						\includegraphics{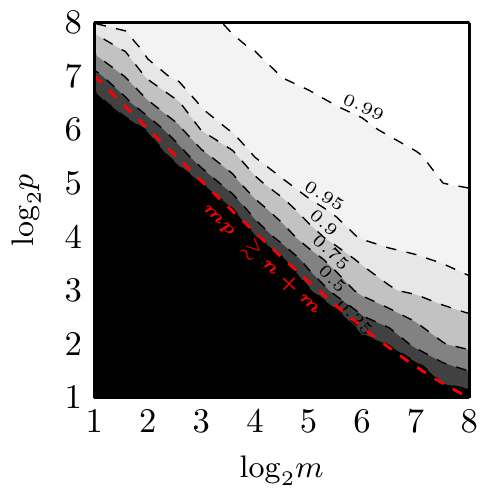}
				}
				\hfill
				\subfloat[{$n=2^8,\rho=10^{-1}$}]{%
						\includegraphics{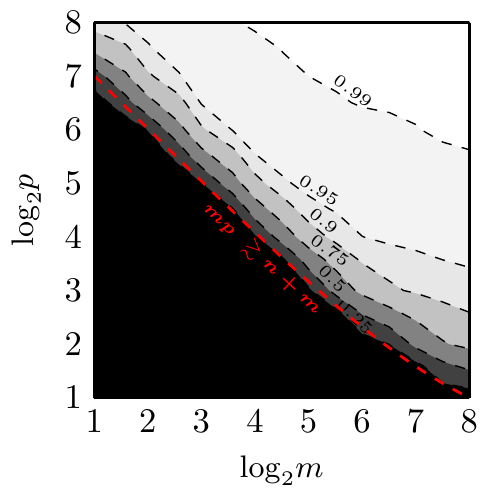}
				}
				\caption{\label{fig:ept}Empirical phase	transition of \eqref{eq:bcp} for increasing values of $n$ (top to bottom) and {$\rho$} (left to right). We report the contours $\{0.25,0.5,0.75,{0.9},0.95,0.99\}$ of the probability of exact recovery $\rm P_T$. The superimposed curve in red corresponds to the sample complexity bound obtained in Theorem \ref{theorem:convergence}.}
			\end{figure}%
		}
		\subsection{Step Size Updates}
		\label{sec:stepsize}
			We begin our experiments by addressing a computational issue arising in Alg.~\ref{alg1}, that is the choice of a step size for the gradient descent updates. Thm.~\ref{theorem:convergence} confirms that there is a scaling factor $\tfrac{\mu_\bxi}{\mu_{\bs \gamma}} \simeq \tfrac{m}{\|\x\|^2}$ between the two fixed step sizes that ensures convergence to the global minimiser once we fix, \eg a sufficiently small value for $\mu_\bxi = \mu$ (and set $\mu_{\bs \gamma}$ accordingly). 
			
			However, in the practical application of our method we have found that updating the step sizes with the line searches reported in Alg.~\ref{alg1} is advantageous in terms of rate of convergence. Firstly, for some fixed values of $(\bxi_j, {\bs \gamma}_j)$ the line searches can be solved in closed form at iteration $j$ as 
			\begin{align}
			\mu_\bxi & \coloneqq \frac{\sum^p_{l=1} \big\langle \diag({\bs \gamma}_j) \A_l \bxi_j,  \diag({{\bs \gamma}})_j \A_l \bs\nabla_\bxi f(\bxi_j,{\bs \gamma}_j) \big\rangle}{\sum^p_{l=1} \|\diag({{\bs \gamma}})_j \A_l \bs\nabla_\bxi f(\bxi_j,{\bs \gamma}_j)\|^2}\label{eq:step1}\\
			\mu_{\bs \gamma} & \coloneqq \frac{\sum^p_{l=1} \big\langle \diag({\A_l \bxi_j}) {\bs \gamma}_j,  \diag({\A_l \bxi_j}) \bs\nabla^\perp_{\bs \gamma} f(\bxi_j,{\bs \gamma}_j) \big\rangle}{\sum^p_{l=1} \|\diag({\A_l \bxi_j}) \bs\nabla^\perp_{\bs \gamma} f(\bxi_j,{\bs \gamma}_j)\|^2}\label{eq:step2}
			\end{align}
			Equivalent formulas can be straightforwardly derived for Alg.~\ref{alg2}. Note that \eqref{eq:step1} and \eqref{eq:step2} do not constitute a single exact line search, which would require solving another bilinear problem jointly with respect to the two step sizes. In fact, \eqref{eq:step1} and \eqref{eq:step2} are exact line searches when the signal- or the gain-domain iteration is fixed, in which case the two corresponding optimisation problems are convex and solved in closed-form; this strategy is, in all the simulations we carried out, a stable way of updating the step size that largely outperforms the choice of a fixed value for $\mu_\bxi$ and $\mu_{\bs \gamma}$. 
			
			As an exemplary case, we randomly generated a typical problem instance with $\x \in \bb S^{n-1}$ and ${\bs g} \in \vone_m + \onep \cap \rho {\bb S}^{m-1}_\infty$, setting $n = 256$, $m = 64$, $\rho = 0.99$, and taking $p = 10$ snapshots so that $\tfrac{mp}{n+m} = 2$. Then, we ran Alg.~\ref{alg1} with either: $(i)$ $\mu_\bxi = \mu = 10^{-4}$, $\mu_{\bs \gamma} = m \mu$, or $(ii)$ the step-size updates in \eqref{eq:step1},\eqref{eq:step2}. The results are reported in Fig. \ref{fig:decay} in terms of the decay of both $\Delta(\bxi_j,\bs\gamma_j)$ (solid lines) and $\Delta_F(\bxi_j,\bs\gamma_j)$ (dashed lines), as a means to allow us to validate numerically the use of \eqref{eq:dist} instead of \eqref{eq:distF} to assess the convergence of our descent algorithm. Indeed, as expected $\Delta(\bxi_j,\bs\gamma_j)$ has the same decay as $\Delta_F(\bxi_j,\bs\gamma_j)$ up to some constant factor. For what concerns the step-size choice, there is a clear advantage in choosing the rules in \eqref{eq:step1}, \eqref{eq:step2} (black lines). For the given instance, convergence up to an objective value $f(\bxi_j, {\bs \gamma}_j) = 10^{-8}$ is attained after $j = 220$ iterations. On the other hand, the same problem instance using a fixed step size (with $\mu$ chosen fairly by inspecting the values given by the update rules) does converge to the global minimiser with the same criterion, but requires $j = 17951$ iterations, \ie convergence is slower by two orders of magnitude in the number of iterates. With both step-size choices, the algorithm terminates with ${\rm RMSE}_{\max} > \unit[86.49]{dB}$. Thus, since convergence and accuracy do not ultimately depend on our step-size choice, we adopt the faster update rule for all further experiments. 
			In addition, the step-size updates in \eqref{eq:step1}, \eqref{eq:step2} did verify $\tfrac{\mu_{\bs \gamma}}{\mu_\bxi} \simeq \tinv{m}$ empirically in our experiments as the number of iterations $j$ increases. 
			
			Let us finally mention that analogue updates to \eqref{eq:step1},\eqref{eq:step2} are easily derived for Alg.~\ref{alg2} and consistently yield faster convergence than choosing fixed step sizes as those established in Thm.~\ref{theorem:convergence-subs}. 
		\subsection{Empirical Phase Transition}
		\label{sec:ept}
			To characterise the {\em phase transition} of \eqref{eq:bcp}, that is the transition between a region in which Alg.~\ref{alg1} successfully recovers $(\x,{\bs g})$ with probability $1$ and that in which it does with probability $0$, we ran some extensive simulations by generating $256$ random instances of \eqref{eq:measurement-model-matrix} for each $n = \{2^1, \ldots, 2^8\}$, probing the same range for $m$ and $p$; we also varied for each configuration $\rho = \{10^{-3}, 10^{-2}, \ldots, 1\}$, generating ${\bs g} = \vone_m + {\bs e} $ with ${\bs e}$ drawn uniformly at random on $\onep \cap \rho\,\bb S^{m-1}_\infty$. Then we evaluated ${\rm P}_T \coloneqq \Pro\left[\ts\max\left\{\tfrac{\|\bar{{\bs g}}-{\bs g}\|}{\|{\bs g}\|},\tfrac{\|\bar{\x}-\x\|}{\|\x\|}\right\} < T\right]$ on the trials with threshold $T =  10^{-3}$ chosen according to the stop criterion $f(\bxi,{\bs \gamma}) < 10^{-8}$. Of this large dataset we report the cases specified in Fig.~\ref{fig:ept}, highlighting the contour lines of ${\rm P}_T$~ for $\rho = \{10^{-3}, 10^{-2}, 10^{-1}\}$ estimated from the outcome of our experiments as a function of $\log_2 m$ and $\log_2 p$. There, we also plot a curve corresponding to the sample complexity bound that grants the verification of Thm.~\ref{theorem:convergence} (up to $\log$ factors), with the corresponding phase transition occurring at $\log_2 p \simeq \log_2\left(\tfrac{n}{m} + 1\right)$. This curve matches the empirical phase transition up to a shift that is due to the required accuracy (\ie the value of $\delta$) of the concentration inequalities. Moreover, we appreciate how an increase in $\rho$ does affect, albeit mildly, the requirements on $(m,p)$ for a successful recovery of $(\x, {\bs g})$, and just as expected given its effect on the initialisation and the distance decay in \eqref{eq:convres}. 
			\new{
			\subsection{Subspace Priors and the Effect of Coherence}
				Since Alg.~\ref{alg2} generalises Alg.~\ref{alg1} to known subspaces, we now focus on how the phase transition in Fig.~\ref{fig:ept} changes under subspace priors, as Thm.~\ref{theorem:convergence-subs} suggests that the transition will occur when $ m p \gtrsim k+\upmu_{\max}^2 h$ (up to $\log$ factors).  				
				Indeed, the second term scales with $\upmu_{\max} \in [1, \sqrt{\tfrac{m}{h}}]$ defined in \eqref{eq:mumax}, \ie the larger $\upmu_{\max}$, the farther the phase transition curve will be in the $(\log_2 m,\log_2 p)$ diagram.
				
				As $\upmu_{\max}$ depends on the nature of the known subspace prior $\bs B \coloneqq \big(\tfrac{\vone_m}{\sqrt{m}},\ \bs B^\perp\big)$ in Def.~\ref{def:blind-calibr-with-sub}, let us first fix $\bs Z$ by drawing its $k$ column vectors $\bs Z_i \sim \cl N(\vzer_n,\I_n)$ and running Gram-Schmidt orthonormalisation to comply with the hypothesis $\bs Z^\top \bs Z = \bs I_k$. Then, we compare three cases of $\bs B$ ranging between low and high coherence as follows; firstly, let us take $\bs B^\perp \coloneqq {\bs C}_m  {\bs S}_\Omega$ with ${\bs C}_m$ an orthonormal basis of $\bb R^m$ specified below, and $\bs S_{\Omega}$  the selection operator at a randomly drawn index set $\Omega \subset [m] : |\Omega| = h-1$. We now proceed to detail the different choices of $\bs C_m$ and their coherence bounds.

				\begin{itemize}
				
				\item {\emph{Case 1} (DCT):}  we let $\bs C_m$ be the $m$-dimensional type-II DCT matrix, \ie
				\[
				(\bs C_m)_{i,j} = 
				\begin{cases} 
				\tinv{\sqrt{m}}, & i \in [m], j = 1\\
				\sqrt{\tfrac{2}{m}} \cos(\tfrac{\pi}{2m} (j-1) (2 i-1)), & i \in [m], j \in [m] \setminus \{1\}
				\end{cases}.
				\]
				In this case, we pose that $1\notin \Omega$ to avoid selecting $\tfrac{\vone_m}{\sqrt{m}}$ as by construction it is already the first column of $\bs B$.
				It is then simply estimated that \[
				\upmu_{\max} = \sqrt{\tfrac{m}{h}} \max_{i\in [m]} \sqrt{\tinv{m}+\sum_{j \in \Omega} {\tfrac{2}{m}} \cos^2(\tfrac{\pi}{2m} (j-1) (2 i-1))} <  
				\sqrt{\tfrac{m}{h}}  \sqrt{\tfrac{2 h-1}{m}} < \sqrt{2},
				\]
				\ie that this ``DCT''  case always attains relatively low coherence, similarly to the  discrete Fourier transform (see \cite[Sec.~I-D]{AhmedRechtRomberg2014});
				
				\item {\emph{Case 2} (Id.):} we let $\bs C_m \coloneqq \begin{bmatrix}
				{\bs U} & \bs U^\perp
				\end{bmatrix}$, where  $\bs U \coloneqq \begin{bsmallmatrix} \tinv{\sqrt{m}} \vone_m & \sqrt{\tfrac{m}{m-1}}\begin{psmallmatrix}1-\tinv{m} \\ -\tinv{m} \vone_{m-1} \end{psmallmatrix} \end{bsmallmatrix}$ and $\bs U^\perp$ is an orthonormal basis for the null space of $\bs U^\top$ (again, obtained by the Gram-Schmidt process). We pose again $1\notin \Omega$, and note that the resulting $\bs C_m$ resembles an identity matrix with a negative offset (hence the shorthand ``Id.''). In fact, since the first row obtained this way always has only two non-zero elements, it is easily shown that $\|\bs B^\top \bs c_1 \|^2_2 = \tinv{m} +  (1-\tinv{m})^2{\tfrac{m}{m-1}} = 1$ which sets $\upmu_{\max} = \sqrt{\tfrac{m}{h}}$. This choice attains the upper-bound for $\upmu_{\max}$, and is thus expected to exhibit the worst-case phase transition;
				
				\item {\emph{Case 3} (Rand.):} we let $\bs C_m \coloneqq {\bs U} \bs V$, where ${\bs U} \in \bb R^{m\times m-1}$ is now an orthonormal basis for the null space of $\vone_m$ (again, obtained by the Gram-Schmidt process). We then apply a random rotation by $\bs V \in \bb R^{m-1 \times h-1}$ generated by drawing $h-1$ column vectors $\bs V_j \sim \cl N(\vzer_{m-1},\bs I_{m-1})$, orthogonalising them afterwards. 
				This results in $\bs V^\top$ and $\bs C^\top_m$ verifying the restricted isometry property. In particular, by \cite[Lemma 3]{DavenportBoufounosWakinEtAl2010} it follows that the rows of $\bs B$ have norm $\|\bs B^\top \bs c_i\| \leq \tinv{\sqrt{m}} + (1+\delta) \sqrt\frac{h-1}{m} \lesssim \sqrt{\tfrac{2h}{m}} $, hence $\upmu_{\max} \gtrsim \sqrt{2}$ as in the ``DCT'' case.
				
				\end{itemize}
				
				With these three cases at hand, we may now run some simulations to assess how Alg.~\ref{alg2} actually performs on randomly generated instances of \eqref{eq:sub-sens-model}. To do so, we repeat the generation of Sec.~\ref{sec:ept} for $n = 2^8$, $k = 2^6$, varying $h = \{2^4,2^5\}$ and $m = \{2^{\log_2 h}, \ldots, 2^8\}$ since only $m \geq h$ is meaningful. The resulting phase transitions at probability $0.9$ are reported in Fig.~\ref{fig:phtalg2}. There, we can see that a large $\upmu_{\max}$ does have a negative effect on the phase transition for successful recovery, that is consistent with the result in Thm.~4.1.
				
			}
		\begin{figure}[t]
		\new{
			\centering
			\null\hfill
			\subfloat[$h = \dim \cl B = 2^4$]{
				\begin{tikzpicture}
				\begin{axis}[%
				width=4.747in,
				height=4.747in,
				scale only axis,
				scale=0.4,
				point meta min=0,
				point meta max=1,
				separate axis lines,
				every outer x axis line/.append style={thick,black},
				every x tick label/.append style={font=\color{black}},
				xmin=2.99995,
				xmax=8.00005,
				xlabel={$\log_2{m}$},
				xtick = {1,...,8},
				every outer y axis line/.append style={thick,black},
				every y tick label/.append style={font=\color{black}},
				ymin=-0.00005,
				ymax=8.00005,
				ylabel={$\log_2{p}$},
				ytick = {0,...,8},
				set layers,
				every axis plot/.append style={on layer=pre main},
				axis background/.style={fill=white},
				legend style={at={(0.92,0.92)},anchor=north east,legend cell align=right,align=right,draw=none}
				]
				
				\addplot[draw=black,loosely dashed,thick]
				table[row sep=crcr] {%
					x	y\\
					6.93940023116219	0\\
					6.75488750216347	0.470046082949308\\
					6.60765600410156	1\\
					6.5077946401987	1.18435181840909\\
					6.2667865406949	1.41166486997403\\
					6	1.49937103966411\\
					5.82321532235684	1.58496250072116\\
					5.75488750216347	1.68552927939257\\
					5.52356195605701	1.95108985698734\\
					5.40097973838324	2\\
					5.28540221886225	2.13937734674223\\
					5	2.50613141333454\\
					4.91195959261382	2.58496250072116\\
					4.75488750216347	2.81359719330634\\
					4.57323308558907	3\\
					4.52356195605701	3.11632777002978\\
					4.32192809488736	3.48962330861467\\
					4.19859200658636	3.58496250072116\\
					4	3.75491320085689\\
				};
				
				\addplot[densely dotted,draw=black,thick]
				table[row sep=crcr] {%
					x	y\\
					4	3.880992831748\\
					4.32192809488736	3.72056881236672\\
					4.52356195605701	3.64245687291563\\
					4.75488750216347	3.60204223320177\\
					4.82916401665938	3.58496250072116\\
					5	3.54850142344043\\
					5.10761067268577	3.58496250072116\\
					5.28540221886225	3.66381962558414\\
					5.52356195605701	3.78834741072633\\
					5.75488750216347	3.8593560633673\\
					6	3.87085269304355\\
					6.2667865406949	3.86608629885969\\
					6.5077946401987	3.93341644396596\\
					6.67626704608377	4\\
					6.75488750216347	4.0557404363863\\
					7	4.04775734058628\\
					7.25738784269265	4.02303058457729\\
					7.5077946401987	4.04390115780602\\
					7.75488750216347	4.01777525159453\\
					7.83659166810898	4\\
					8	3.98588744206471\\
				};
				
				\addplot[solid,draw=gray,thick]
				table[row sep=crcr] {%
					x	y\\
					6.9179496299615	0\\
					6.75488750216347	0.480122324159021\\
					6.64463609393772	1\\
					6.5077946401987	1.25821566454428\\
					6.2667865406949	1.4608997582649\\
					6	1.53033164162297\\
					5.85237542743936	1.58496250072116\\
					5.75488750216347	1.7007100104802\\
					5.52356195605701	1.9431166879161\\
					5.42203069967398	2\\
					5.28540221886225	2.30218441980382\\
					5.02778251688039	2.58496250072116\\
					5	2.60924660972151\\
					4.75488750216347	2.87596580481322\\
					4.56395213077401	3\\
					4.52356195605701	3.05823156115776\\
					4.32192809488736	3.52196653910503\\
					4.2317882283189	3.58496250072116\\
					4	3.75221641834099\\
				};

				\addlegendentry{DCT}
				\addlegendentry{Id.}
				\addlegendentry{Rand.}
				
				\fill[gray,opacity = 0.1] (axis cs:3,0) rectangle (axis cs:4,8);
				\node at (axis cs:3.5,0.5) [color=black,opacity=1] {\tiny$m < h$};
				\end{axis}
				\end{tikzpicture}%
			}
			\hfill
			\subfloat[$h = \dim \cl B = 2^5$]{
				\begin{tikzpicture}
				\begin{axis}[%
				width=4.747in,
				height=4.747in,
				scale only axis,
				scale=0.4,
				point meta min=0,
				point meta max=1,
				separate axis lines,
				every outer x axis line/.append style={thick,black},
				every x tick label/.append style={font=\color{black}},
				xmin=2.99995,
				xmax=8.00005,
				xlabel={$\log_2 {m}$},
				xtick = {1,...,8},
				every outer y axis line/.append style={thick,black},
				every y tick label/.append style={font=\color{black}},
				ymin=-0.00005,
				ymax=8.00005,
				ylabel={$\log_2 {p}$},
				ytick = {0,...,8},
				set layers,
				every axis plot/.append style={on layer=pre main},
				axis background/.style={fill=white},
				legend style={at={(0.92,0.92)},anchor=north east,legend cell align=right,align=right,draw=none}
				]
				
				\addplot[draw=black,loosely dashed,thick]
				table[row sep=crcr] {%
					x	y\\
					5	3.99208375341975\\
					5.28540221886225	3.64848864856996\\
					5.3287039892613	3.58496250072116\\
					5.52356195605701	3.36560156295072\\
					5.69880858189524	3\\
					5.75488750216347	2.90719658401218\\
					6	2.60044897457484\\
					6.01699277329267	2.58496250072116\\
					6.2667865406949	2.35985913002479\\
					6.5077946401987	2.05131250006326\\
					6.53613097757998	2\\
					6.75488750216347	1.76017294203348\\
					6.89481002117137	1.58496250072116\\
					7	1.49908157956265\\
					7.25738784269265	1.295843103813\\
					7.5077946401987	1.045221539956\\
					7.55232881880863	1\\
					7.75488750216347	0.735955056179776\\
					8	0.269938650306749\\
					nan	nan\\
				};
				
				\addplot[densely dotted,draw=black,thick]
				table[row sep=crcr] {%
					x	y\\
					5	3.88359867836317\\
					5.28540221886225	3.93582961357783\\
					5.52356195605701	3.94084745921493\\
					5.6675503061845	4\\
					5.75488750216347	4.06072662186241\\
					6	4.14615333755655\\
					6.2667865406949	4.25466306623921\\
					6.5077946401987	4.33546303820587\\
					6.75488750216347	4.39353258868101\\
					7	4.42747082518549\\
					7.25738784269265	4.44562069399027\\
					7.5077946401987	4.44768341170092\\
					7.75488750216347	4.42233444632808\\
					8	4.41287603917248\\
					nan	nan\\
				};
				
				\addplot[solid,draw=gray,thick]
				table[row sep=crcr] {%
					x	y\\
					5	3.89691879102878\\
					5.27125004272032	3.58496250072116\\
					5.28540221886225	3.56863796581731\\
					5.52356195605701	3.39519205734339\\
					5.75488750216347	3.00254331522053\\
					5.75694727105285	3\\
					6	2.81657518758463\\
					6.224583020133	2.58496250072116\\
					6.2667865406949	2.54021127115779\\
					6.5077946401987	2.18110294139974\\
					6.64818831176959	2\\
					6.75488750216347	1.90987757158517\\
					7	1.58928580800531\\
					7.00417385690853	1.58496250072116\\
					7.25738784269265	1.37743194490975\\
					7.5077946401987	1.09275620216039\\
					7.58363502357402	1\\
					7.75488750216347	0.765494137353433\\
					8	0.415274463007158\\
					nan	nan\\
				};

				\addlegendentry{DCT}
				\addlegendentry{Id.}
				\addlegendentry{Rand.}
				
				\fill[gray,opacity = 0.1] (axis cs:3,0) rectangle (axis cs:5,8); 
				\node at (axis cs:4,0.5) [color=black,opacity=1] {\tiny$m < h$};
				
				\end{axis}
				\end{tikzpicture}%
			}
			\hfill\null
			\caption{\label{fig:phtalg2}Phase transition of Alg.~\ref{alg2} at probability $0.9$ for $n=2^8$, $k=2^6$, and three different cases of $\bs B$ (\ie above each curve, the corresponding probability of successful recovery exceeds $0.9$).}}
		\end{figure}
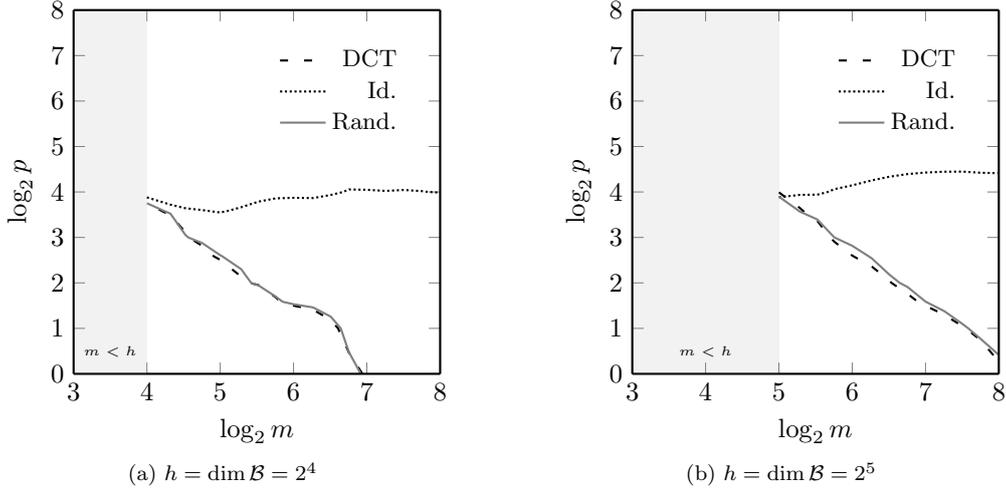
		
		\subsection{Noise Stability}
			\begin{figure}[!t]
				\centering
				\begin{tikzpicture}
				\begin{axis}[%
				width=3in,
				height=2in,
				scale only axis,
				unbounded coords=jump,
				separate axis lines,
				every outer x axis line/.append style={black},
				every x tick label/.append style={font=\color{black}},
				xmin=5,
				xmax=85,
				xlabel={${-20\log_{10}{\sigma}} \ (\unit{dB})$},
				every outer y axis line/.append style={black},
				every y tick label/.append style={font=\color{black}},
				ymin=-120,
				ymax=0,
				ylabel={$20 \log_{10}\Ex {\rm RMSE}_{\max} \ (\unit{dB})$},
				axis background/.style={fill=white},
				legend style={at={(0.05,0.05)},anchor=south west,legend cell align=left,align=left,draw=black}
				]
				\addplot [color=mathblue,line width=0.75pt,only marks,mark=+,mark options={solid}]
				table[row sep=crcr]{%
					10	-5.29268236869554\\
					20	-20.9007843716715\\
					30	-31.5147348714436\\
					40	-41.5316504944748\\
					50	-51.4163019398795\\
					60	-60.1371859280373\\
					70	-69.0315417863631\\
					80	-73.7155871693971\\
				};
				\addlegendentry{$p = 4$};
				
				\addplot [color=mathgrn,line width=0.75pt,only marks,mark=triangle*,mark options={solid}]
				table[row sep=crcr]{%
					10	-21.1870013280045\\
					20	-31.1879386808852\\
					30	-41.2138231776866\\
					40	-51.1319522279787\\
					50	-61.1222979505465\\
					60	-70.9175507334258\\
					70	-80.3731213950819\\
					80	-85.8151992710055\\
				};
				\addlegendentry{$p = 16$};
				
				\addplot [color=mathpurp,line width=0.75pt,only marks,mark=square*,mark options={solid}]
				table[row sep=crcr]{%
					10	-27.8570849822191\\
					20	-37.8035785102795\\
					30	-47.8245628228927\\
					40	-57.7750766075998\\
					50	-67.2463903420312\\
					60	-77.4026243519139\\
					70	-86.9457786884811\\
					80	-94.7564428261315\\
				};
				\addlegendentry{$p = 64$};
				
				\addplot [color=mathsand,line width=0.75pt,only marks,mark=*,mark options={solid}]
				table[row sep=crcr]{%
					10	-34.0640112029873\\
					20	-44.0016723550515\\
					30	-54.0765191092057\\
					40	-64.0204621310396\\
					50	-73.7623860132102\\
					60	-83.7072616363363\\
					70	-93.5922052497583\\
					80	-100.850265890787\\
				};
				\addlegendentry{$p = 256$};
				
				\addplot [color=mathblue,dashed,line width=0.5pt,forget plot]
				table[row sep=crcr]{%
					0	8.12217234997463\\
					5	0.822388422670697\\
					10	-6.18958207711677\\
					15	-12.9137391493878\\
					20	-19.3500827941423\\
					25	-25.4986130113804\\
					30	-31.359329801102\\
					35	-36.9322331633072\\
					40	-42.2173230979959\\
					45	-47.2145996051682\\
					50	-51.924062684824\\
					55	-56.3457123369633\\
					60	-60.4795485615861\\
					65	-64.3255713586925\\
					70	-67.8837807282825\\
					75	-71.154176670356\\
					80	-74.136759184913\\
					85	-76.8315282719535\\
					90	-79.2384839314776\\
				};
				\addplot [color=mathgrn,dashed,line width=0.5pt,forget plot]
				table[row sep=crcr]{%
					0	-9.33791373004191\\
					5	-15.0227511590819\\
					10	-20.5972049322976\\
					15	-26.0612750496892\\
					20	-31.4149615112566\\
					25	-36.6582643169998\\
					30	-41.7911834669188\\
					35	-46.8137189610137\\
					40	-51.7258707992843\\
					45	-56.5276389817308\\
					50	-61.2190235083531\\
					55	-65.8000243791512\\
					60	-70.2706415941251\\
					65	-74.6308751532748\\
					70	-78.8807250566004\\
					75	-83.0201913041017\\
					80	-87.0492738957789\\
					85	-90.9679728316319\\
					90	-94.7762881116607\\
				};
				\addplot [color=mathpurp,dashed,line width=0.5pt,forget plot]
				table[row sep=crcr]{%
					0	-17.1057239193638\\
					5	-22.3892214960612\\
					10	-27.6199070358082\\
					15	-32.7977805386047\\
					20	-37.9228420044507\\
					25	-42.9950914333462\\
					30	-48.0145288252913\\
					35	-52.9811541802859\\
					40	-57.89496749833\\
					45	-62.7559687794236\\
					50	-67.5641580235668\\
					55	-72.3195352307594\\
					60	-77.0221004010017\\
					65	-81.6718535342934\\
					70	-86.2687946306346\\
					75	-90.8129236900254\\
					80	-95.3042407124657\\
					85	-99.7427456979555\\
					90	-104.128438646495\\
				};
				\addplot [color=mathsand,dashed,line width=0.5pt,forget plot]
				table[row sep=crcr]{%
					0	-23.0873725067917\\
					5	-28.4459772863299\\
					10	-33.7442494298663\\
					15	-38.9821889374007\\
					20	-44.1597958089332\\
					25	-49.2770700444638\\
					30	-54.3340116439925\\
					35	-59.3306206075193\\
					40	-64.2668969350442\\
					45	-69.1428406265671\\
					50	-73.9584516820882\\
					55	-78.7137301016073\\
					60	-83.4086758851245\\
					65	-88.0432890326398\\
					70	-92.6175695441532\\
					75	-97.1315174196647\\
					80	-101.585132659174\\
					85	-105.978415262682\\
					90	-110.311365230188\\
				};
				\end{axis}
				\end{tikzpicture}%
				\caption{\label{fig:noise}Performances of Alg.~\ref{alg1} in the presence of noise. This experiment details the exemplary case $m = n = 2^8, \rho = 10^{-1}$ for different values of $p$; we report the average relative mean-square error as a function of the noise level $\sigma$.}
			\end{figure}
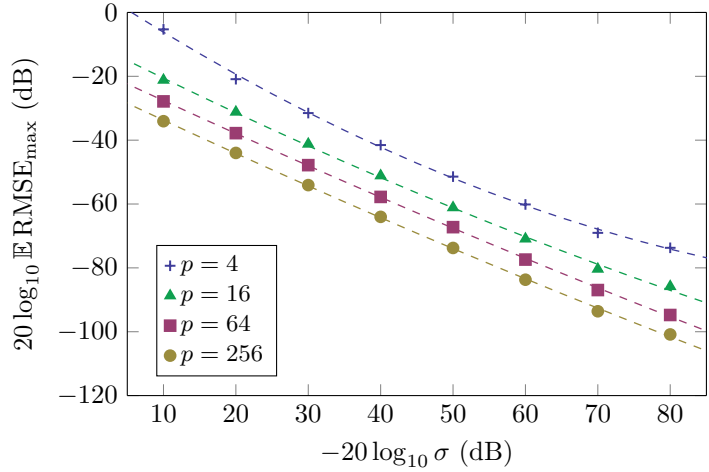
			
			As a numerical confirmation of the theory devised in Sec.~\ref{sec:nstability-sub} we run some numerical experiments to verify the effect of noise on the estimate obtained by Alg.~\ref{alg1}, \ie without subspace priors for simplicity. The simulations are carried out as follows: we generate $64$ instances\footnote{This lower value is due to the fact that, when our algorithm does not converge, meeting the stop criterion requires a larger number of iterations due to the presence of noise.} of \eqref{eq:measurement-model-matrix-noise}, fixing $n = m = 2^8, \rho = 10^{-1}$ and varying $p =\{2^1, \ldots, 2^8\}$. Moreover, we generate additive and bounded noise instances collected in a matrix $\bs N$ whose entries are drawn as  $\nu_{i,l} \sim_{\rm \iid} \cl N(0,1)$ and normalised afterwards to meet a given $\sigma = \tfrac{1}{\sqrt{mp}}\|{\bs N}\|_F$. This noise level value is then varied as $\sigma = \unit[\{10, 20, \ldots, 80\}]{dB}$. Since the objective function will reach a noise-dependent value, we change the stop criterion to the condition 
			$\max\left\{\tfrac{\|\bxi_{j+1} -\bxi_{j}\|}{\|\bxi_j\|}, \tfrac{\|{\bs \gamma}_{j+1} -{\bs \gamma}_{j}\|}{\|{\bs \gamma}_j\|} \right\} < 10^{-6}$, 
			terminating the algorithm and returning its best estimate up to the given tolerance. To extract a single figure of merit from the outcomes of this experiment, we compute the $\Ex {\rm RMSE}_{\max}$ on our set of $64$ trials, with $\Ex$ here denoting the {\em sample average} over the trials' outcomes. 
			The results after running Alg.~\ref{alg1} are reported in Fig.~\ref{fig:noise} and confirm the predicted graceful decay in the $\Ex  {\rm RMSE}_{\max}$ as a function of $\sigma$, \ie the achieved relative mean-square error decreases linearly (in a $\log-\log$ scale) with the amount of noise injected into the model, as suggested from Thm.~\ref{theorem:stability}.
				
	\subsection{A Computational Imaging Example}
	
	We envision that our framework could be applied broadly to sensing systems where obtaining calibrated gains as formulated in our models is a critical issue. Generally speaking, whenever there are means to capture measurements of the type $\A_l \x$, and whenever the sensors (antennas, pixels, nodes) assigned to capturing the output of this operation are subject to gain uncertainties, it is worth putting into account the presence of ${\bs g}$ and to calibrate the sensing system against it. Clearly, the main requirement is indeed the introduction of several random draws of the sensing matrices $\A_l, l\in[p]$ which, depending on $m,n$ and the presence of subspace priors, will allow for lower values of $p$ as shown by our main sample complexity results. As an example, one could consider an image formation model in which a source $\x$ illuminates a programmable medium, set to apply a sensing matrix $\A_l$ and to capture the output of this operation by means of a focal plane array. The disturbance could then be regarded as uncalibrated fixed pattern noise on this sensor array, as stylised in Fig.~\ref{fig:modello}, or in fact as any attenuation such as some ``haze'' affecting the sensor array in a multiplicative fashion. In fact, this type of issue could physically arise in imaging modalities where $\A_l$ are random convolutions rather than sub-Gaussian random matrices, as anticipated before. With a due gap between the theory covered in this paper and the actual nature of the sensing operator, our algorithm is still practically applicable and will eventually converge once a sufficient amount of observations with random sensing matrices is collected. 
	
	To apply our result in a realistic computational imaging context, we assume that $\x$ is a $n=\unit[128\times128]{pixel}$ monochromatic image acquired by an uncalibrated sensing device that implements \eqref{eq:measurement-model-matrix} in which its $m=\unit[64\times64]{pixel}$ sensor array has an unknown set of gains ${\bs g}\in \Pi^m_+$. This set of gains is specialised in two cases described hereafter. For the sake of this application example, we maintain $\A_l$ comprised of \iid rows $\ail\sim \cl N(\vzer_n,\I_n)$. 
	
	\paragraph{Blind Calibration in Absence of Priors}
	{
	\begin{figure}[t]
		\centering
            \includegraphics[width=.6\textwidth]{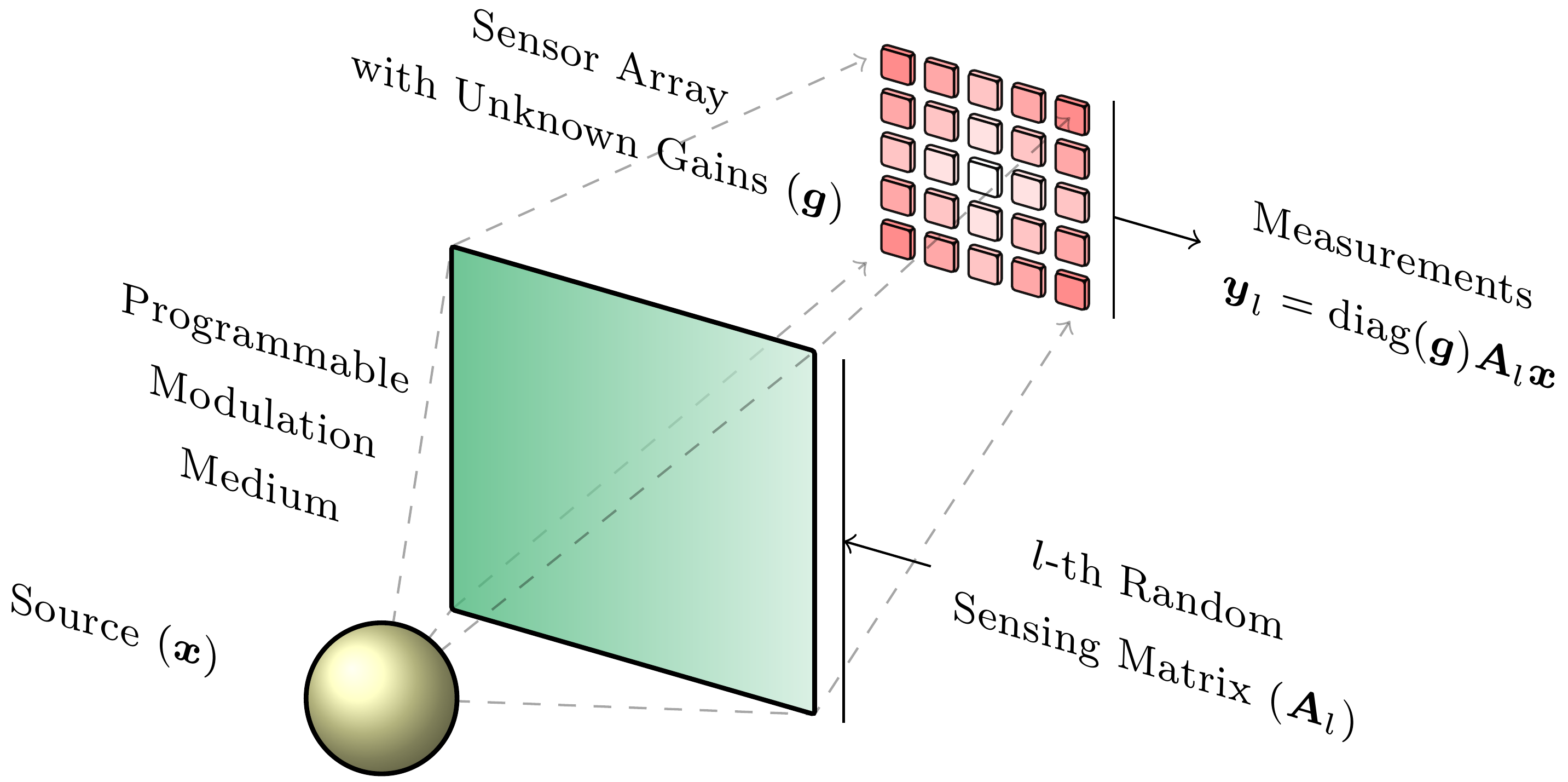}
	 	\caption{\label{fig:modello} A computational imaging model: our blind calibration framework entails the joint recovery of the source $\x$ and sensor gains ${\bs g}$ by exploiting multiple random sensing matrices $\A_l$ (\eg $p$ programmable random masks in a random convolution setup~\cite{Romberg2009}). The intensity of ${\bs g}$ is represented in shades of red as a possible {vignetting} of the sensor array.}
	 \end{figure}
	 }
	 	\begin{figure}[p]
	 		\null\hfill
	 		\subfloat[{$\x$ (true signal).}]{
	 			\includegraphics{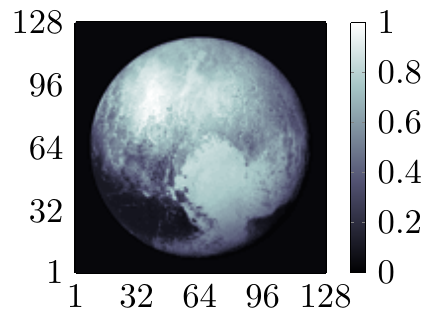}
	 		}
	 		\hfill
	 		\subfloat[{$\bar{\x}_{\rm ls}$ recovered by least squares with model error; ${\rm RMSE} = \unit[-9.22]{dB}$.}]{
	 			\includegraphics{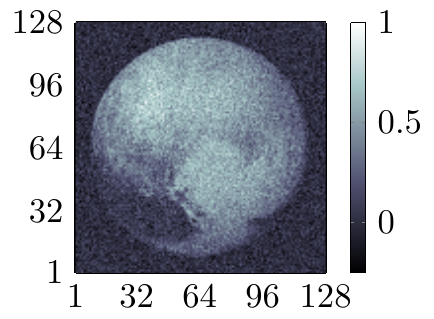}
	 		}
	 		\hfill
	 		\subfloat[{$\bar{\x}$ recovered by Alg.~\ref{alg1}; ${\rm RMSE} = \unit[-149.96]{dB}$.}]{
	 			\includegraphics{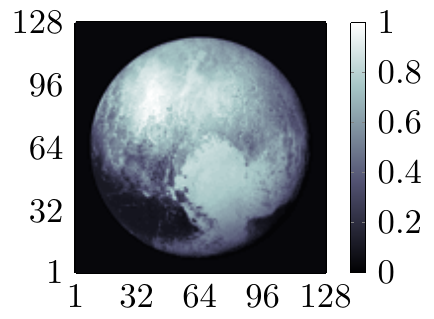}
	 		}
	 		\hfill\null			
	 		\\
	 		\null\hfill
	 		\subfloat[{${\bs g}$ (true sensor gains), $\rho = 0.99$.}]{				
	 			\includegraphics{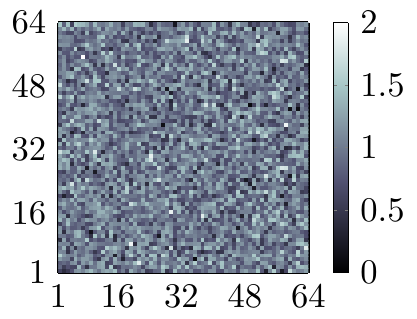}
	 		}
	 		\hfill
	 		\subfloat[{$ \bar{{\bs g}}$ recovered by Alg.~\ref{alg1}; ${\rm RMSE} = \unit[-145.39]{dB}$.}]{
	 			\includegraphics{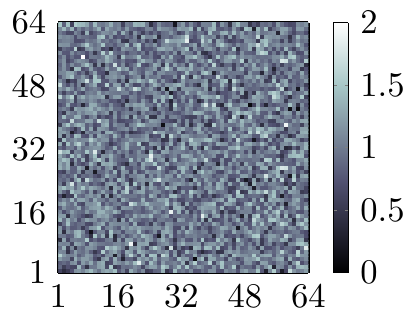}
	 		}
	 		\hfill\null
	 		\caption{\label{fig:randexp} A high-dimensional example of blind calibration for computational imaging. The unknown gains ${\bs g}$ $(m=\unit[64\times64]{pixel})$ and signal $\x$ $(n = \unit[128\times128]{pixel})$ are perfectly recovered with $p = 10$ snapshots.}
	 	\end{figure}
	 	\begin{figure}[p]
	 		\null\hfill
	 		\subfloat[{$\x$ (true signal)}]{
	 			\includegraphics{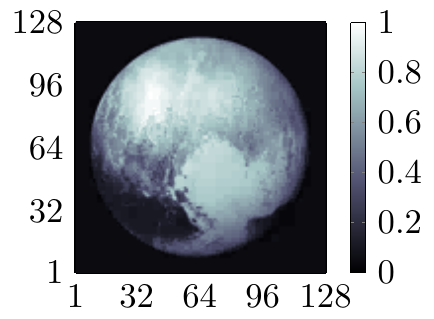}
	 		}
	 		\hfill
	 		\subfloat[{$\bar{\x}_{\rm ls}$ recovered by least squares with model error; ${\rm RMSE} = \unit[-3.42]{dB}$.}]{
	 			\includegraphics{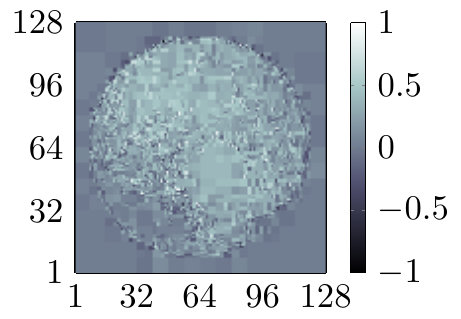}
	 		}
	 		\hfill
	 		\subfloat[{$\bar{\x}$ recovered by Alg.~\ref{alg2}; ${\rm RMSE} = \unit[-138.84]{dB}$.}]{
	 			\includegraphics{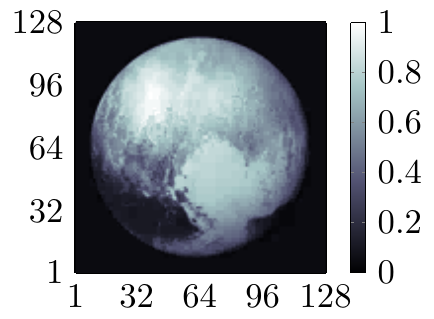}
	 		}
	 		\hfill\null			
	 		\\
	 		\null\hfill
	 		\subfloat[{${\bs g}$ (true sensor gains), $\rho = 0.99$}]{				
	 			\includegraphics{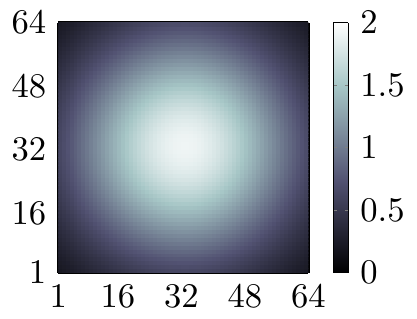}
	 		}
	 		\hfill
	 		\subfloat[{$ \bar{{\bs g}}$ recovered by Alg.~\ref{alg2}; ${\rm RMSE} = \unit[-144.99]{dB}$.}]{
	 			\includegraphics{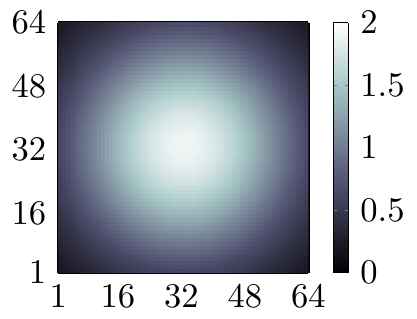}
	 		}
	 		\hfill\null
				\caption{\label{fig:subexp} A high-dimensional example of blind calibration for computational imaging with subspace priors. The unknown gains ${\bs g}$ $(m=\unit[64\times64]{pixel})$ and signal $\x$ $(n = \unit[128\times128]{pixel})$ are perfectly recovered with $p = 1$ snapshot, since they are described by known subspaces of dimension $h=256$ and $k=2730$, respectively.}
		\end{figure}
		
		In this first example we randomly draw the gains ${\bs g} \in \Pi^m_+$ at random as ${\bs g} \in \vone_m + \onep \cap \rho \bb S^{m-1}_\infty$, fixing $\rho=0.99$. We capture $p=10$ snapshots, again so that $\tfrac{mp}{n+m} = 2$. By running Alg.~\ref{alg1} we obtain the results depicted in Fig.~\ref{fig:randexp}. The recovered $(\bar{\x},\bar{{\bs g}})\approx({\x},{{\bs g}})$ by solving \eqref{eq:bcp} attains ${\rm RMSE}_{\max} \approx \unit[-138.84]{dB}$ in accordance with the stop criterion at $f(\bxi_k,{\bs \gamma}_k) < 10^{-7}$. 
		Instead, by fixing ${\bs \gamma} \coloneqq \vone_m$ and solving \eqref{eq:ls} only with respect to $\bxi$, the least-squares solution $\bar{\x}_{\rm ls}$ reaches a ${\rm RMSE} \coloneqq \tfrac{\|\bar{\x}-\x\|}{\|\x\|} \approx \unit[-9.22]{dB}$. 	
	
	\paragraph{Blind Calibration with Subspace Priors}
	
		We now proceed to evaluate the effect of subspace priors on the same exemplary case, with $\x$ constructed so that it matches the known subspace prior ${\bs x} = {\bs Z} {\bs z}$, where ${\bs Z}$ is a set of $k=2730$ basis elements of a two-dimensional Haar orthonormal wavelet basis in $\R^{n}$. Moreover, we generate the gains with a subspace prior that entails ${\bs g} = {\bs B}{\bs b}$, where ${\bs B}$ is a set of $h = 256$ basis elements of a two-dimensional discrete cosine transform basis in $\R^m$, including the DC component as its first column. The vector ${\bs e} = {\bs B}^\perp {\bs b}^\perp$ is drawn with a low-pass profile as shown in Fig.~\ref{fig:subexp} as a means to simulate a structured model for the gains. We also fix $\|{\bs e}\|_\infty = \rho$ on the generated profile, with $\rho = 0.99$. Substantially, what changes with respect to the previous experiment is the introduction of a known subspace prior for the signal and gains, both benefiting from such a low-dimensional model. 
Due to this additional prior we can take a single snapshot (\ie $p=1$) provided that $\bs B$ has sufficiently low $\upmu_{\max}$ and that $m > c ({k + \upmu^2_{\max} h}) \, \log^2(m(n+1))$ for some $c > 0$, \ie $m$ must exceed, up to some constant and $\log$ factors, ${k + \upmu^2_{\max} h}$.

		Then, by running Alg.~\ref{alg2} we obtain the results depicted in Fig.~\ref{fig:subexp}. The recovered $(\bar{\x},\bar{{\bs g}})\coloneqq (\bs Z \bar{\bs z},\bs B \bar{{\bs b}})\approx({\x},{{\bs g}})$ achieves a ${\rm RMSE}_{\max} = \unit[-138.84]{dB}$ in accordance with the stop criterion at $f^{\rm s}(\bs \zeta_k,{\bs\beta}_k) < 10^{-7}$. 
		Instead, by fixing ${\bs\beta} \coloneqq (\sqrt{m}, \vzer_m^\top)^\top$ and solving \eqref{eq:ls} when setting $\bs\xi =\bs Z\bs \zeta$ and with respect to $\bs\zeta$ yields a least-squares solution $\bar{\bs z}_{\rm ls}$ that reaches a ${\rm RMSE} \coloneqq \tfrac{\|\bar{\bs z}_{\rm ls}-\bs z \|}{\|\bs z\|} \approx \unit[-3.42]{dB}$. 	
	
	\medskip
	
	We have therefore seen how the algorithms devised in this paper work on practical instances of the uncalibrated sensing models studied in this paper. The application of this bilinear inverse problem in the solution of physical instances of \eqref{eq:measurement-model-matrix} or \eqref{eq:measurement-model-matrix-noise} is an open subject for future developments.
		
\section{Conclusion}
	\label{sec:conclusion}
		We presented and solved a non-convex formulation of blind calibration in the specific, yet important case of linear random sensing models affected by unknown gains. In absence of {\em a priori} structure on the signal and gains, we have devised and analysed a descent algorithm based on a simple projected gradient descent. In this case, our main results have shown that provable convergence of the algorithm can be achieved at a rate  $mp$ $=$ ${\cl O}\big((m+n)$ $\log^2(m(p+n))\big)$ that is linear (up to $\log$ factors) with respect to the number of unknowns in this bilinear inverse problem. Similarly, using a subspace prior on both the signal and gains, a straightforward extension of the previous descent algorithm into Alg.~\ref{alg2} proved that the sample complexity ensuring convergence in this setting is~$mp$ $=$ ${\cl O}\big((k + \upmu^2_{\max} h)$ $\log^2(m(p+n))\big)$, leading to a worst-case value of $mp$ $=$ ${\cl O}\big((k + m)$ $\log^2(m(p+n))\big)$ when $\bs B$ achieves maximum coherence $\upmu_{\max} = \sqrt{\tfrac{m}{h}}$. 
		
		We envision that our results could be extended to the case of complex gains (\ie ${\bs g} \in \bb C^m$) and complex sensing operators $\bs A_l$ by means of Wirtinger calculus (up to redefining our bounded $\ell_\infty$-norm assumption made throughout this paper). Another important extension is the adaptation of Alg.~\ref{alg2}~to enforce the sparsity of $\x$ (or ${\bs g}$, although this may not be verified in practice). We explored numerically this possibility in \cite{CambareriJacques2017}. By using sparsity we expect a reduction of the sample complexity that is similar to the subspace case, but without using such strong assumptions as prior information on a fixed, known tight frame~$\bs Z$ in whose span~${\bs x}$ must lie. 
		
		Finally, our general technique is in line with the surge of new results on non-convex problems with linear, bilinear or quadratic random models. In fact,
                the core principle underlying this work is that, as
                several random instances of the same non-convex
                problem are taken, we approach its behaviour in
                expectation which, as previously discussed, is
                significantly more benign than the non-asymptotic
                case. This is an extremely general approach that could
                be applied to very different models than the ones we
                discussed.
	
\renewcommand{\theHsection}{A\arabic{section}}
\appendix
	\section*{Appendices}
	We now provide arguments to support the main results in this paper. Unless otherwise noted, the proofs are first given for the known subspace case discussed in Sec.~\ref{sec:blind-calibration-with-subspace-priors}. In fact, the proofs for our statements in absence of priors (\ie as discussed throughout Sec.~\ref{sec:nonconv}) are particular cases of the corresponding statements in the known subspace case (Sec.~\ref{sec:blind-calibration-with-subspace-priors}). 
	
	\section{Technical Tools}
	\label{sec:appA}
	
	We begin by introducing some technical results used in the following proofs. Hereafter, we recall that $\bs a \in \bb R^n$ denotes a sub-Gaussian and isotropic random vector comprised of \rvs $a_i \sim_{\rm iid} X$, $i \in [n]$, with $X$ a centred and unit-variance sub-Gaussian \rv with sub-Gaussian norm $\|X\|_{\psi_2} = \alpha>0$. 
	
	\begin{proposition}[Concentration Bound on $\ell_2$-Norm of sub-Gaussian \rv's]
		\label{prop:truncation1}
		Let $\a_i \sim_{\iid} \bs a$ be a set of $q\geq 2$
                \iid sub-Gaussian random vectors. There exists a
                value $\vartheta > 1$ only depending on $\alpha$ such
                that, provided $n \gtrsim t \log q$ for some $t \geq 1$,
                the event
		\begin{gather}
		\cl E_{\max} \coloneqq \big\{\max_{i \in [q]} \|\a_i\|^2
                  \leq \vartheta n \big\}, \label{eq:truncation1} 
		\end{gather}
                occurs with probability exceeding $1-2 q^{-t}$.
	\end{proposition}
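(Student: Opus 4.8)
The plan is to treat each squared norm $\|\a_i\|^2$ as a sum of $n$ independent sub-exponential contributions, apply a one-sided Bernstein bound to a single vector, and conclude with a union bound over the $q$ vectors. First I would fix one index $i$ and write $\|\a_i\|^2 = \sum_{j=1}^n a_{i,j}^2$, where the $a_{i,j} \sim_{\iid} X$ are centred, of unit variance and sub-Gaussian with $\|X\|_{\psi_2} = \alpha$. Since each $a_{i,j}^2$ has $\Ex a_{i,j}^2 = 1$, the sum is unbiased for $n$, i.e. $\Ex\|\a_i\|^2 = n$; the task is therefore to control the upper deviation of this sum above its mean by a constant multiplicative factor.

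The key ingredient is the standard fact that the square of a sub-Gaussian variable is sub-exponential, with $\|a_{i,j}^2\|_{\psi_1} \leq 2\|a_{i,j}\|_{\psi_2}^2 = 2\alpha^2 =: K$ (see, e.g., \cite[Lemma 5.14]{Vershynin2012a}). The centred variables $a_{i,j}^2 - 1$ are then \iid, mean-zero and sub-exponential with norm controlled by $K$, so Bernstein's inequality for sums of sub-exponential random variables yields, for any deviation $u > 0$,
$$
\Pro\big[\|\a_i\|^2 - n \geq u\big] \leq \exp\Big(-c\,\min\big(\tfrac{u^2}{K^2 n},\,\tfrac{u}{K}\big)\Big),
$$
with $c$ an absolute constant. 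Choosing the deviation proportional to the mean, $u = (\vartheta - 1)\,n$ for a fixed $\vartheta > 1$ to be selected, both arguments of the minimum become linear in $n$, so the right-hand side is at most $e^{-c'\,n}$ for some $c' = c'(\vartheta,\alpha) > 0$. This gives the single-vector tail $\Pro[\|\a_i\|^2 > \vartheta n] \leq e^{-c' n}$, and I would simply fix $\vartheta$ (hence $c'$) as a constant depending only on $\alpha$.

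Finally, a union bound over $i \in [q]$ gives $\Pro[\cl E_{\max}^{\rm c}] = \Pro[\max_{i} \|\a_i\|^2 > \vartheta n] \leq q\,e^{-c' n} = \exp(\log q - c' n)$. Imposing $n \gtrsim t \log q$ with the hidden constant large enough that $c' n \geq (t+1)\log q$ (possible since $t \geq 1$) forces $\log q - c' n \leq -t \log q$, whence the failure probability is at most $q^{-t} \leq 2 q^{-t}$, which is the claim.

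This argument is essentially routine and presents no serious obstacle. The only points requiring care are the bookkeeping of how $K$, and thus $\vartheta$ and $c'$, depend solely on $\alpha$, and checking that the regime $u = (\vartheta-1)n$ indeed produces purely linear-in-$n$ exponents so that the tail decays like $e^{-\Theta(n)}$; this linear decay is precisely what pins down the sample-size condition $n \gtrsim t\log q$ needed to beat the union bound.
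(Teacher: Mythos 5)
Your proposal is correct and follows essentially the same route as the paper: both treat $\|\a_i\|^2 - n$ via sub-exponential (Bernstein-type) concentration — the paper cites \cite[Cor.~5.17]{Vershynin2012a}, which is exactly the bound you invoke — choose the deviation proportional to the mean to get an $e^{-\Theta(n)}$ tail, and finish with a union bound beaten by $n \gtrsim t \log q$. The only cosmetic difference is where the $\alpha$-dependent largeness is placed (the paper pushes it into the choice of $\vartheta$, you push it into the hidden constant of $n \gtrsim t\log q$), which is immaterial to the statement.
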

	
	\begin{proof}[Proof of Prop.~\ref{prop:truncation1}]
		Since all $\a_i \sim \bs a$ are comprised of \iid sub-Gaussian
                \rvs with sub-Gaussian norm $\alpha$, then
                $\|\a_i\|^2$ 
                can be bounded 
                by standard concentration inequalities~\cite[Cor.~5.17]{Vershynin2012a}. In detail, since
                $\|\a_i\|^2$, $i \in [q]$ are sub-exponential
                variables with $\bb E \|\a_i\|^2 = n$, there exists
                some $c > 0$ such that the probability of the
                complementary event is bounded as 
				\[
                  \Pro[\cl E_{\max}^{\rm c}] \leq q\,\Pro\big[\|\a\|^2 >
                  (1+\vartheta') n\big] \leq q\,\Pro\big[\big|\|\a\|^2 -
                  n\big| > \vartheta' n\big] < 2 \exp\big(\log q - c
                  \min\big\{\tfrac{\vartheta'^2}{\alpha^4}, \tfrac{\vartheta'}{\alpha^2}\big\} n\big),
                \]
                with $\vartheta' \coloneqq \vartheta - 1 > 0$. 
                Therefore, provided $
                  \vartheta'\min(\vartheta',
                  \alpha^2) > \tfrac{2}{c}\alpha^4$ (which leads to a
lower bound on $\vartheta$ only depending on $\alpha$) we find $\Pro[\cl
E_{\max}^{\rm c}] < 2q^{-t}$ when $n
                \geq t \log q$ for $t \geq 1$.
	\end{proof}

	We now introduce a proposition that provides a concentration inequality for a weighted sum of matrices (these being functions of \iid sub-Gaussian random vectors) that will be frequently used in the following proofs. This result is related to finding a bound for the spectral norm of the residual between a covariance matrix and its sample estimate~\cite{Vershynin2012b} in the special case where a weighting affects the computation, and where both the
	corresponding weights and the vectors to which this residual applies are assumed to lie in known subspaces. 
	
	\begin{proposition}[Weighted Covariance Concentration in Subspaces]
		\label{prop:wg-cov-subsp} Consider a set of $q$ random vectors $\bs
		a_i \sim_{\iid} \bs a$, and two subspaces $\cl Z \subset \R^n$ and $\cl S
		\subset \R^q$ of dimensions $k \leq n$ and $s \leq q$, 
		respectively. Let $\bs Z \in \R^{n\times k}$ be an orthonormal basis of $\cl Z$, \ie $\bs Z^\top \bs Z = \I_k$.  
		Given $\delta \in (0,1)$, $t \geq 1$, provided $n \gtrsim t \log(q)$ and 
		\[
			\ts q \gtrsim \delta^{-2} (k + s)\log\big(\frac{n}{\delta}\big),
		\]
		with probability exceeding 
		\[
			1 - C \exp(- c \delta^2 q) - q^{-t}
		\]
		for some $C,c >0$ depending only on $\alpha$, we have for all $\bs w \in \cl S$
		\begin{equation}
		\label{eq:final-wg-concent}
			\big\|\tinv{q}\,\ts\sum_{i=1}^q w_i \bs Z^\top (\bs a_i \bs a^\top_i - \bs I_n) \bs Z\big\| \leq \delta\,\|{\bs w}\|_\infty.    
		\end{equation}
	\end{proposition}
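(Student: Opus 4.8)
The plan is to exploit the linearity and symmetry of the random operator
$M(\bs w) \coloneqq \tfrac{1}{q}\sum_{i=1}^q w_i \bs Z^\top(\bs a_i\bs a_i^\top-\bs I_n)\bs Z \in \R^{k\times k}$ and to establish the uniform bound by a double covering argument combined with a scalar Bernstein inequality. First I would reduce by homogeneity: since both $\|M(\bs w)\|$ and $\delta\|\bs w\|_\infty$ are positively homogeneous in $\bs w$, the claim $\|M(\bs w)\|\le\delta\|\bs w\|_\infty$ for all $\bs w\in\cl S$ is equivalent to $\sup_{\bs w\in T}\|M(\bs w)\|\le\delta$ over $T\coloneqq\cl S\cap\B^q_\infty$. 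As each $M(\bs w)$ is symmetric, I would replace the spectral norm by a quadratic form over a $1/4$-net $\cl N_{\bs u}$ of $\bb S^{k-1}$ (with $|\cl N_{\bs u}|\le 9^k$), so that $\|M(\bs w)\|\le 2\max_{\bs u\in\cl N_{\bs u}}|\bs u^\top M(\bs w)\bs u|$. The scalarised form is $\bs u^\top M(\bs w)\bs u=\tfrac{1}{q}\sum_i w_i Y_i^{(\bs u)}$ with $Y_i^{(\bs u)}\coloneqq\langle\bs Z\bs u,\bs a_i\rangle^2-1$; since $\bs Z\bs u$ is a unit vector of $\R^n$ and $\bs a$ is isotropic sub-Gaussian with norm $\alpha$, each $Y_i^{(\bs u)}$ is centred and sub-exponential with norm $\lesssim\alpha^2$.

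For a fixed pair $(\bs u,\bs w)$ with $\|\bs w\|_\infty\le 1$, I would apply the weighted Bernstein inequality for independent centred sub-exponential variables \cite{Vershynin2012a}. Using $\sum_i w_i^2\le q\|\bs w\|_\infty^2\le q$ and $\max_i|w_i|\le 1$ to control the variance and the range, this gives, for a threshold $\delta/2$,
\[
\Pro\Big[\big|\tfrac{1}{q}\ts\sum_i w_i Y_i^{(\bs u)}\big|>\tfrac{\delta}{2}\Big]\le 2\exp\big(-c\,q\,\delta^2\big),
\]
where $c>0$ depends only on $\alpha$ (here I use $\delta\in(0,1)$ to absorb the $\min$ in Bernstein's exponent into the quadratic term).

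Next I would handle uniformity over $\bs w$ by discretisation. Since $T\subset\cl S\cap\sqrt q\,\B^q_2$ lives in an $s$-dimensional subspace, an $\ell_2$-net $\cl N_{\bs w}$ of $T$ at resolution $\eta$ satisfies $|\cl N_{\bs w}|\le(1+2\sqrt q/\eta)^s$, with net points chosen in $T$ (hence $\|\bs w_0\|_\infty\le 1$). On the event $\cl E_{\max}$ of Prop.~\ref{prop:truncation1} (valid with probability $\ge 1-2q^{-t}$ under $n\gtrsim t\log q$), linearity gives the Lipschitz estimate $\|M(\bs w)-M(\bs w')\|=\|M(\bs w-\bs w')\|\le\tfrac{\vartheta n+1}{\sqrt q}\|\bs w-\bs w'\|_2$, using $\|\bs Z^\top(\bs a_i\bs a_i^\top-\bs I_n)\bs Z\|\le\|\bs a_i\|^2+1\le\vartheta n+1$. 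Choosing $\eta\asymp\delta\sqrt q/n$ makes the discretisation error at most $\delta/4$, so that $\sup_{\bs w\in T}\|M(\bs w)\|\le 2\max_{\bs u\in\cl N_{\bs u},\,\bs w_0\in\cl N_{\bs w}}|\bs u^\top M(\bs w_0)\bs u|+\delta/2$. A union bound over $\cl N_{\bs u}\times\cl N_{\bs w}$ then yields failure probability at most $2\,|\cl N_{\bs u}|\,|\cl N_{\bs w}|\,\exp(-cq\delta^2)+\Pro[\cl E_{\max}^{\rm c}]$, and since $\log(|\cl N_{\bs u}|\,|\cl N_{\bs w}|)\lesssim k+s\log(n/\delta)\le(k+s)\log(\tfrac{n}{\delta})$, the assumed $q\gtrsim\delta^{-2}(k+s)\log(\tfrac{n}{\delta})$ lets $cq\delta^2$ dominate twice the log-cardinality, leaving a bound $C\exp(-c'\delta^2 q)+2q^{-t}$.

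I expect the main obstacle to be precisely the uniformity over $\bs w$ in the $\ell_\infty$ ball. Unlike an $\ell_2$ ball, $T=\cl S\cap\B^q_\infty$ is an $s$-dimensional polytope whose $\ell_\infty$ radius only translates into an $\ell_2$ radius of order $\sqrt q$; this blow-up is what forces the Lipschitz/truncation step, and it is exactly that step which both produces the extra $q^{-t}$ term and generates the $\log(n/\delta)$ factor in the sample-complexity requirement. The remaining delicate bookkeeping is to verify that all constants ($c,c',C,\vartheta$) depend only on the sub-Gaussian parameter $\alpha$ and that the two net cardinalities combine to give exactly the stated threshold rather than a spurious extra logarithm.
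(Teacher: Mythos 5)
Your proposal is correct and follows essentially the same route as the paper's proof: reduction by homogeneity to $\|\bs w\|_\infty \le 1$, a scalar Bernstein bound for the sub-exponential variables $w_i(\langle \bs Z\bs u, \bs a_i\rangle^2 - 1)$ at a fixed pair, a double covering in $(\bs u,\bs w)$, and a Lipschitz/continuity step controlled by the truncation event $\max_i\|\bs a_i\|^2 \le \vartheta n$ of Prop.~\ref{prop:truncation1}, which is exactly what produces the $q^{-t}$ term, the requirement $n \gtrsim t\log q$, and the $\log(n/\delta)$ factor. The only (harmless) difference is that you handle the $\bs u$-variable with the standard quarter-net bound for symmetric matrices, giving a net of size $9^k$ instead of the paper's $(n/\delta)$-resolution net, which slightly sharpens the log factor on $k$ but lands within the same stated sample complexity.
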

	
	\begin{proof}[Proof of Prop.~\ref{prop:wg-cov-subsp}]
		
		Note that by the assumptions on the \rvs $a_{ij}$, $i\in [q]$, $j \in [n]$ we have
		that the random vectors $\a_i$, $i\in [q]$ are \iid centred ($\Ex
		\a_i = \vzer_n$) and isotropic ($\Ex \a_i \a_i^\top = \I_n$). By
		homogeneity of~\eqref{eq:final-wg-concent} it suffices to show the
		probability that the event
		\begin{align*}
		\cl E \coloneqq \big\{\big\|\tinv{q}\,\ts\sum_{i=1}^q w_i \bs Z^\top (\bs a_i \bs a^\top_i - \bs I_n) \bs Z\big\|\leq \delta\big\} 
		\end{align*}
		holds for all ${\bs w} \in \cl S \cap \mathbb{S}^{q-1}_\infty$. 
		\medskip
		
		\noindent {\em Step 1: Concentration} \ Let us first observe that, since the rank-one matrices $\a_i \a^\top_i, i \in [q]$ are symmetric and $\bs Z^\top \bs Z = \bs I_k$,
		\begin{align*}
		\big\|\tinv{q}\,\ts\sum_{i=1}^q w_i \bs Z^\top (\a_i \a^\top_i - \bs I_n)
		\bs Z \big\|
		& = \sup_{\bs z \in {\bb S}_2^{k-1}} \tinv{q}\,\big|\ts\sum_{i=1}^q w_i                           
		(\bs z^\top \bs Z^\top \a_i \a^\top_i \bs Z \bs z - \|\bs Z \bs z\|^2)\big| \nonumber\\
		& = \sup_{\u \in \bs Z {\bb S}_2^{k-1}} \tinv{q}\,\big|\ts\sum_{i=1}^q w_i                        
		(\u^\top \a_i \a^\top_i \u - \|\u\|^2)\big| \nonumber\\
		& = \sup_{\u \in \bs Z\bb S_2^{n-1}} \tinv{q}\,\big|\ts\sum_{i=1}^q (V_i(\u) - w_i\|\u\|^2)\big|, 
		\end{align*}
		with $V_i(\u) \coloneqq w_i (\bs a^\top_i \u)^2, i \in [q]$ and $\bb E V_i(\u) = w_i \|\u\|^2$. 
		
		Let us fix $\u\in \bs Z\bb S^{k-1}_2$ and ${\bs w} \in
                \cl S^* \coloneqq \cl S \cap \bb S_\infty^{q-1}$ (which implies  $|w_i|\leq 1, i \in [q]$). Firstly, we observe that
                $V_i(\u)$ is a sub-exponential \rv since each $\a_i$ is formed by sub-Gaussian \rvs, so we can bound the sub-exponential norm 
		\begin{align*}
		\|V_i(\u)\|_{\psi_1} = |w_i|\|(\a_i^\top \u)^2\|_{\psi_1}       
		\leq 2 |w_i|\|\bs \a_i^\top \u \|^2_{\psi_2} \lesssim \alpha^2, 
		\end{align*}
		where we used the fact that $\|V\|^2_{\psi_2} \leq \|V^2\|_{\psi_1} \leq
		2\|V\|^2_{\psi_2}$ for any \rv $V$~\cite[Lemma~5.14]{Vershynin2012a} and that, given any $\bs{\sf v} \in \bb S_2^{n-1}$, 
		$\|\bs a^\top_i \bs {\sf v}\|^2_{\psi_2} \lesssim \|\bs {\sf v}\|^2\|X\|^2_{\psi_2} \lesssim \alpha^2$ by the rotational invariance of $\a_i$~\cite[Lemma~5.9]{Vershynin2012a}. 	
		Thus, the \rv $V\coloneqq\ts\sum_{i=1}^q V_i(\u)$ is in turn a sum of sub-exponential {variables} that concentrates
		around $\Ex V =\ts\sum_{i=1}^q w_i \|\u\|^2$, \ie for some
		$c>0$ and $\delta \in (0,1)$, applying~\cite[Cor.~5.17]{Vershynin2012a} yields
		\begin{align*}
		\ts \Pro[|V - \Ex V| > \delta q] < 2 \exp\big(-c q                                                                                                                            
		\min(\tfrac{\delta^2}{\alpha^4}, \tfrac{\delta}{\alpha^2}) \big) < 2 \exp\big(-c q \tfrac{\delta^2 \alpha^{-4}}{1+\alpha^{-2}} \big) < 2 \exp\big(- c_\alpha \delta^2 q\big), 
		\end{align*}
		for some $c_\alpha >0$ depending only on $\alpha$. Thus, with
		probability exceeding $1-2\exp(-c_\alpha \delta^2 q)$, the event
		\begin{equation}
		\label{eq:wg-concent1}
		\tinv{q} \big|\ts\sum_{i=1}^q (V_i(\u) - w_i \|\u\|^2)\big|
		\leq \delta 
		\end{equation}
		holds for some fixed $\u \in \bb S^{n-1}_2, {\bs w} \in \cl S^*$. 
		\medskip
		
		\noindent {\em Step 2: Covering} \ To obtain a {\em uniform}
		result we resort to a covering argument. Given two radii
		$\epsilon, \epsilon' > 0$ to be fixed later, let us take an
		$\epsilon$-net $\cl N_\epsilon$ of $\bs Z\,\bb B^{k}_2$ (\ie with respect to $\u$) and
		another $\epsilon'$-net $\cl W_{\epsilon'}$ of $\cl S \cap \bb
		B^q_\infty$. Noting that $\bs
		Z\,\bb S^{k-1}_2 \subset \bs Z\,\bb B^{k}_2$ and $\cl S^* \subset \cl S \cap \bb
		B^q_\infty$, $\cl N_\epsilon$ and $\cl W_{\epsilon'}$ are also the
		corresponding nets of $\bs
		Z\,\bb S^{k-1}_2$ and $\cl S^*$, respectively. Since $\bs Z\, \bb B^k_2$ is
		isomorphic to $\bb B^k_2$, a standard geometric argument
		provides that $|\cl N_\epsilon| \leq (1 + \tfrac{2}{\epsilon})^{k}
		\leq (\tfrac{3}{\epsilon})^k$~\cite{Pisier1999}. Moreover, given
		an orthonormal basis $\bs S \in \bb R^{q \times s}$ of $\cl S
		\subset \bb R^q$, since
		$\|\cdot\|_{\bs S,\infty} \coloneqq \|\bs S \cdot\|_{\infty}$ is a norm
		such that $\cl S \cap \bb B^q_\infty$ matches the unit ball
		$\bb B^s_{\bs S, \infty} \coloneqq
		\{\bs s \in \bb R^s: \|\cdot\|_{\bs S,\infty} \leq 1\}$, 
		we also have $|\cl W_{\epsilon'}| \leq (1 + \tfrac{2}{\epsilon'})^{s}
		\leq (\tfrac{3}{\epsilon'})^s$~\cite[Prop.~4.10]{Pisier1999}.
		
		Since $\cl N_\epsilon \times \cl W_{\epsilon'} \subset \bs Z\, \bb
		B^k_2 \times (\cl S \cap \bb
		B^q_\infty)$ has no more than $|\cl
		N_\epsilon||\cl W_{\epsilon'}|$ points, by union bound on this set we find that
		\eqref{eq:wg-concent1} holds with probability exceeding 
		\begin{equation}
		\label{eq:wg-prob-mid}
		\ts 1 - 2\exp\big(k \log\big(\frac{3}{\epsilon}\big) + s \log\big(\frac{2}{\epsilon'}\big)
		-c_\alpha \delta^2 q\big)
		\end{equation}
		for all $\u \in \cl N_\epsilon, {\bs w} \in \cl W_{\epsilon'}$. 
		
		\medskip
		\noindent {\em Step 3: Continuity} \ To obtain a final, uniform
		result we require a continuity argument for points that lie in the
		vicinity of those in the net. For all $\u \in \bs Z\, \bb S_2^{k-1}$ and
		${\bs w} \in \cl S^*$ we can take the closest points in their
		respective nets $\u' \in \cl N_\epsilon,{\bs w}' \in \cl
		W_{\epsilon'}$. By using H\"older's inequality (\ie $|\bs v^\top
		\bs  v'| \leq \|\bs v\|_1 \|\bs v'\|_\infty$) and Cauchy-Schwarz's
		inequality, we see that 
		\begin{align}
		\tinv{q}\big|\ts\sum_{i=1}^q (V_i(\u) - w_i\|\u\|^2)\big| & \leq \tinv{q}\big|\ts\sum_{i=1}^q (w_i - w'_i) \u^\top \a_i \a^\top_i \u\big| \nonumber                                                                   \\ 
		& \quad + \tinv{q}\big|\ts\sum_{i=1}^q (w'_i \u^\top \a_i \a^\top_i \u - w'_i\|\u\|^2)\big| + \tinv{q}\big|\ts\sum_{i=1}^q (w'_i - w_i) \|\u\|^2\big|\nonumber \\
		& \leq \epsilon' (A + 1) + \tinv{q}\big|\ts\sum_{i=1}^q (w'_i \u^\top \a_i \a^\top_i \u - w'_i \|\u\|^2)\big|,                                              
		\label{eq:wg-last}
		\end{align}
		where we substituted ${\bs w} = {\bs w}'+({\bs w}-{\bs w}')$ in the first line, noting
		that $\a_i\a^\top_i \succeq 0, i \in [q]$ and defining $A
		\coloneqq \max_{i \in [q]} \|\a_i \a^\top_i\| = \max_{i \in [q]} \|\a_i\|^2$. We can then
		bound the last term in~\eqref{eq:wg-last} by substituting $\u =
		\u'+ (\u-\u')$ to obtain
		\begin{align}
		\tinv{q}\big|\ts\sum_{i=1}^q (w'_i \u^\top \a_i \a^\top_i \u - w'_i \|\u\|^2)\big| & \leq \tinv{q}\big|\ts\sum_{i=1}^q w'_i ( (\a^\top_i\u')^2 - (\bs a^\top_i\u)^2)\big| \nonumber                     \\ 
		&                                                                                                                 
		\quad +\tinv{q}\big|\ts\sum_{i=1}^q (w'_i (\u')^\top \a_i \a^\top_i \u'-w'_i\|\u'\|^2)\big|\nonumber\\
		& \quad+\tinv{q}\big|\ts\sum_{i=1}^q w'_i (\|\u'\|^2- \|\u\|^2)\big|\nonumber                                      \\
		& \leq {2\epsilon}(A+1) +\tinv{q}\big|\ts\sum_{i=1}^q w'_i \big((\u')^\top \a_i \a^\top_i \u' - \|\u'\|^2\big)\big|, 
		\label{eq:wg-secondlast}
		\end{align}
		which follows again by H\"older's inequality after noting that $\big| \|\u'\|^2- \|\bs
		u\|^2\big| = \big|(\u' - \u)^\top(\u' + \u)\big| \leq 2\epsilon$  and
		\begin{align}
		\big|(\a^\top_i\u')^2 - (\a^\top_i\u)^2\big| & \leq \big|(\u'-\u)^\top \a_i \a_i^\top (\u'+\u)\big| \leq \|\u'-\u\|\|\u'+\u\|\,\|\a_i \a^\top_i\|\leq 2 A \epsilon, \nonumber 
		\end{align}
		since both $\u,\u' \in \bb S^{n-1}_2$. 
		In conclusion, by bounding the last term in
		\eqref{eq:wg-secondlast} by $\delta$ using~\eqref{eq:wg-concent1} we have, 
		with the same probability~\eqref{eq:wg-prob-mid},
		\begin{align}
		\tinv{q}\big|\ts\sum_{i=1}^q (V_i(\u) - w_i \|\u\|^2)\big|
		& \leq (\epsilon'+2\epsilon) (A+1) + \delta. 
		\label{eq:wg-probref}
		\end{align}
		
Let us now bound the value of $A$. By Prop.~\ref{prop:truncation1}, we
know that there exists a $\vartheta >0$ depending on the distribution
of $X$ such that $\bb P[A \leq \vartheta n] \geq 1 - 2 q^{-t}$
provided $n \gtrsim t \log (mp)$ for some $t\geq 1$. Hence, by union bound with~\eqref{eq:wg-prob-mid}, we have that
this last event and the event
\eqref{eq:wg-probref} hold with probability exceeding $1 - 2 q^{-t}
- 
2\exp\big(k \log\big(\frac{3}{\epsilon}\big) + s
\log\big(\frac{2}{\epsilon'}\big) -c_\alpha \delta^2 q\big)$, in which
case 
\begin{align*}
  \tinv{q}\big|\ts\sum_{i=1}^q (V_i(\u) - w_i \|\u\|^2)\big|
  & \leq (\epsilon'+2\epsilon) (\vartheta n+1) + \delta. 
\end{align*}

		
Thus, setting $\epsilon' (\vartheta n+1) = \epsilon (\vartheta n+1) =
{\delta}$ and rescaling $\delta$ to $\tfrac{\delta}{4}$, we obtain 
		\begin{align*}
		\Pro\big[ \tinv{q}|\ts\sum_{i=1}^q (V_i(\u) - w_i \|\u\|^2)| \leq \delta \big] \geq \ts 1 - 2\exp\big(k \log(\tfrac{c n}{\delta}) + s \log(\tfrac{c' n}{\delta})	-c_\alpha \delta^2 q\big) - 2q^{-t} 
		\end{align*}
		for $c,c',c_\alpha>0$ and $t \geq 1$. Therefore,
                by summarising the previous requirements, this last
                probability exceeds $1 - C[\exp(-c\delta^2 q) +
                q^{-t}]$ provided
		\begin{align*}
		\ts q \gtrsim \delta^{-2}\big(k \log(\frac{n}{\delta}) + s \log(\frac{n}{\delta})\big), \ n \gtrsim t \log(q) 
		\end{align*}
		which concludes the proof.
	\end{proof}
	
	We now adapt Prop.~\ref{prop:wg-cov-subsp} to the sensing model in Def.~\ref{def:modelsub} by the following corollary.
	
	\begin{corollary}[Application of Prop.~\ref{prop:wg-cov-subsp} to Blind Calibration with Subspace Priors]
		\label{coro:appl-wcovsub}
		Consider two subspaces $\cl B \subset \bb R^m$ and $\cl Z \subset \bb
		R^n$ with $\dim \cl B = h \leq m$ and $\dim \cl Z = k \leq n$, with 
		$\bs Z \in \bb R^{n \times k}$ an orthonormal basis of $\cl
		Z$, \ie $\bs Z^\top \bs Z = \I_k$. Let us define $mp$ random
		vectors $\bs a_{i,l} \sim_{\iid} \bs a$, $i\in [m]$, $l \in p$. Given $\delta \in (0,1)$, $t \geq 1$, provided $n \gtrsim t \log(mp)$ and 
		\[
		\ts mp \gtrsim \delta^{-2}(k+h)\log(\frac{n}{\delta}),
		\]
		with probability exceeding
		\begin{equation}
		\label{eq:pcoro2}
		1 - C \exp(- c \delta^2 m p) - (mp)^{-t}
		\end{equation} 
		for some $C,c >0$ depending only on $\alpha$, we have $\forall \bs \theta \in \cl B$,
		\begin{align}
		\|\tinv{mp}\,\ts\sum_{i=1}^m\ts\sum_{l=1}^p \theta_i \bs Z^\top
		(\ail \ail^\top - \I_n) \bs Z \|& \leq 
		\delta\,\|\bs \theta\|_\infty.
		\label{eq:coro-ineq-wcov}
		\end{align}
	\end{corollary}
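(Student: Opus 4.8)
The plan is to reduce the statement directly to Prop.~\ref{prop:wg-cov-subsp} by flattening the double index $(i,l)$ into a single index ranging over $[mp]$ and by identifying the correct weight subspace. To this end, I would fix a bijection between $[mp]$ and $[m]\times[p]$ and relabel the $\{\ail\}$ as a single collection of $q \coloneqq mp$ vectors; since all $\ail \sim_{\iid} \a$ by hypothesis, this relabelled collection satisfies the \iid assumption of Prop.~\ref{prop:wg-cov-subsp}.

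The key construction is the weight subspace. Since the weights $\theta_i$ in~\eqref{eq:coro-ineq-wcov} depend only on the first index $i$, I would embed $\cl B$ into $\R^{mp}$ through the ``repetition'' map $R\colon \R^m \to \R^{mp}$ that copies each coordinate $\theta_i$ across all $l \in [p]$, \ie $R\bs\theta = \vone_p \otimes \bs\theta$ up to the chosen reordering, and set $\cl S \coloneqq R(\cl B)$. Two elementary facts then need checking: first, $R$ is injective, so $s \coloneqq \dim\cl S = \dim\cl B = h$; second, repetition does not alter the largest-magnitude entry, so $\|R\bs\theta\|_\infty = \|\bs\theta\|_\infty$. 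With $\bs w \coloneqq R\bs\theta \in \cl S$, the weighted sum in~\eqref{eq:coro-ineq-wcov} coincides termwise with the sum $\tinv{q}\sum_{i=1}^q w_i \bs Z^\top(\a_i\a_i^\top - \I_n)\bs Z$ appearing in Prop.~\ref{prop:wg-cov-subsp}.

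It then remains only to transcribe the hypotheses and conclusion of Prop.~\ref{prop:wg-cov-subsp} under the substitutions $q = mp$ and $s = h$. The sample-complexity requirement $q \gtrsim \delta^{-2}(k+s)\log(n/\delta)$ becomes exactly $mp \gtrsim \delta^{-2}(k+h)\log(n/\delta)$, the dimension condition $n \gtrsim t\log q$ becomes $n \gtrsim t\log(mp)$, and the failure probability $C\exp(-c\delta^2 q) + q^{-t}$ becomes $C\exp(-c\delta^2 mp) + (mp)^{-t}$, matching~\eqref{eq:pcoro2}. The conclusion $\|\cdot\| \leq \delta\|\bs w\|_\infty = \delta\|\bs\theta\|_\infty$ then holds uniformly over $\bs w \in \cl S$, equivalently over $\bs\theta \in \cl B$, which is precisely~\eqref{eq:coro-ineq-wcov}.

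There is essentially no hard analytic step here — all the probabilistic work is already done in Prop.~\ref{prop:wg-cov-subsp}. The only point requiring (minor) care is the bookkeeping around the weight subspace: one must verify that the repetition embedding genuinely produces a subspace of the correct dimension $h$ (rather than accidentally collapsing it) and that it is an $\ell_\infty$-isometry onto its image, so that the per-entry bound $\delta\|\bs w\|_\infty$ returns the intended $\delta\|\bs\theta\|_\infty$ and the dimension $s=h$ feeds correctly into the sample complexity.
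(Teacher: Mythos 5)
Your proposal is correct and follows essentially the same route as the paper's proof, which likewise reduces the corollary to Prop.~\ref{prop:wg-cov-subsp} by setting $q = mp$, $s = h$, and $\bs w = \vone_p \otimes \bs\theta \in \cl S \coloneqq \vone_p \otimes \cl B$, then reading off the sample complexity and failure probability by substitution. Your additional bookkeeping (injectivity of the repetition map, hence $\dim \cl S = h$, and the $\ell_\infty$-isometry $\|\vone_p \otimes \bs\theta\|_\infty = \|\bs\theta\|_\infty$) is exactly the content the paper leaves implicit, so nothing is missing.
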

	The proof is straightforward and given below.
	\begin{proof}[Proof of Cor.~\ref{coro:appl-wcovsub}]
		To prove~\eqref{eq:coro-ineq-wcov} only under the setup of this corollary, we  just have to set $q = mp$, $s=h$ and 
		${\bs w} = \vone_p \otimes \bs\theta \in \vone_p \otimes \cl B$ in
		Prop.~\ref{prop:wg-cov-subsp}, with $\cl S = \vone_p \otimes \cl B$ a subspace of $\R^{mp}$ of dimension~$h$. 
		%
		%
		%
		Hence, by straightforward substitution this statement holds with probability exceeding $1 - C e^{-c \delta^2 m p} - (mp)^{-t}$ provided $mp \gtrsim \delta^{-2} (k + h) \log (\tfrac{n}{\delta})$.
	\end{proof}
	
	A remark allows us to cover the sensing model of Def.~\ref{def:model} (as presented in Sec.~\ref{sec:nonconv}).
	\begin{remark}
		Cor.~\ref{coro:appl-wcovsub} is straightforwardly extended in absence of known subspace priors, \ie by setting 
		$\cl B \coloneqq \bb R^m$, $\cl Z \coloneqq
		\bb R^n$, with which $h=m$ and $k=n$.  
	\end{remark}
	\medskip
	
We also recall a fundamental result on the concentration of sums of
\iid random matrices, \ie matrix Bernstein's inequality. This one is
developed in \cite{Tropp2015} for random matrices with bounded
spectral norms; we here adopt its variant which assumes that these norms
have sub-exponential tail bounds. This result is only a slight adaptation 
of matrix Bernstein's inequality used in \cite{AhmedRechtRomberg2014}.

          	\begin{proposition}[Matrix Bernstein's Inequality,
                  adapted from Prop.~3 in \cite{AhmedRechtRomberg2014}]
		\label{prop:mbi}
		Let $\bs J_i \in \R^{n \times m}, i \in [q]$ be a
                sequence of $q$ \iid random matrices. Assume that the \rvs
                $N_i \coloneqq \|\bs J_i - \bb E \bs J_i\|$ are
                sub-exponential with norm $\|N_i\|_{\psi_1} \leq T$.
Define the {\em matrix variance}
		\begin{equation}
			\label{eq:matrix-var-T}
			v \coloneqq \max \big\{\big\Vert\ts\sum_i \bb E (\bs J_i  - \bb E \bs J_i) (\bs J_i -  \bb E \bs J_i)^\top \big\Vert, \big\Vert\ts\sum_i \bb E (\bs J_i  - \bb E \bs J_i)^\top (\bs J_i -  \bb E \bs J_i) \big\Vert \big\}.
		\end{equation}
		Then for $\delta \geq 0$ and some constant $c > 0$, 
		\begin{equation}
                  \label{eq:mbi-prob}
\ts			\Pro\big[\big\Vert \ts\sum_i \bs J_i  - \bb E \bs J_i \big\Vert \geq \delta \big] \leq (n+m) \exp\left(-\frac{c\,\delta^2}{v\, +\, T\log(\frac{T^2q}{v})\delta }\right).
		\end{equation}
	\end{proposition}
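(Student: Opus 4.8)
The plan is to reduce this sub-exponential variant to the classical bounded-norm matrix Bernstein inequality of \cite{Tropp2015}, which under the hypothesis $\|\bs J_i - \Ex \bs J_i\| \le L$ almost surely yields a tail of the form $(n+m)\exp(-\tfrac{\delta^2/2}{v + L\delta/3})$. Since the present assumption replaces boundedness by a sub-exponential control $\|N_i\|_{\psi_1} \le T$ on the spectral-norm deviations, the natural device is truncation. First I would fix a threshold $R > 0$ (to be tuned at the end), write $\bs Y_i := \bs J_i - \Ex \bs J_i$ so that $N_i = \|\bs Y_i\|$ and $\Ex \bs Y_i = \vzer$, and split $\bs Y_i = \bs Y_i\,\mathbf{1}\{N_i \le R\} + \bs Y_i\,\mathbf{1}\{N_i > R\}$.

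I would then dispose of the tail term $\ts\sum_i \bs Y_i\,\mathbf{1}\{N_i > R\}$: it vanishes on the event $\{\max_i N_i \le R\}$, whose complement has probability at most $\ts\sum_i \Pro[N_i > R] \le 2q\,e^{-cR/T}$ by the standard sub-exponential tail bound. Next I would re-center the truncated pieces by setting $\tilde{\bs Y}_i := \bs Y_i\,\mathbf{1}\{N_i \le R\} - \Ex[\bs Y_i\,\mathbf{1}\{N_i \le R\}]$, which are zero-mean and, since $\|\bs Y_i\,\mathbf{1}\{N_i\le R\}\| \le R$ and $\|\Ex[\bs Y_i\,\mathbf{1}\{N_i\le R\}]\| \le R$, obey $\|\tilde{\bs Y}_i\| \le 2R$; thus the bounded-case inequality applies to $\ts\sum_i \tilde{\bs Y}_i$. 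The bias introduced by truncation is $\ts\sum_i \Ex[\bs Y_i\,\mathbf{1}\{N_i\le R\}] = -\ts\sum_i \Ex[\bs Y_i\,\mathbf{1}\{N_i > R\}]$, whose norm is at most $q\,\Ex[N_i\,\mathbf{1}\{N_i > R\}]$; integrating the sub-exponential tail shows this decays like $q\,T\,e^{-cR/T}$ up to a polynomial factor in $R/T$, hence can be forced below $\delta/2$ by the same choice of $R$.

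A further point is that truncation does not inflate the variance: since $\mathbf{1}\{N_i \le R\}\,\bs Y_i \bs Y_i^\top \preceq \bs Y_i\bs Y_i^\top$ and subtracting the positive-semidefinite outer product of the truncated mean only decreases the expression in the L\"owner order, one gets $\Ex \tilde{\bs Y}_i \tilde{\bs Y}_i^\top \preceq \Ex \bs Y_i \bs Y_i^\top$ and likewise for the transposed products, so the variance proxy $\tilde v$ of the $\tilde{\bs Y}_i$ satisfies $\tilde v \le v$. Assembling the three estimates by a union bound,
\[
\Pro\big[\|\ts\sum_i \bs Y_i\| \ge \delta\big] \le 2q\,e^{-cR/T} + (n+m)\exp\Big(-\tfrac{(\delta/2)^2/2}{v + 2R(\delta/2)/3}\Big),
\]
and the last ingredient is to set $R \asymp T\log(\tfrac{T^2 q}{v})$. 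This single choice makes the factor $2q\,e^{-cR/T}$ a negligible polynomial contribution absorbed into the $(n+m)$ prefactor, keeps the bias below $\delta/2$, and turns the denominator term $2R\delta/3$ into exactly $T\log(\tfrac{T^2 q}{v})\,\delta$ up to constants, matching the claimed form \eqref{eq:mbi-prob} after renaming $c$.

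I expect the main obstacle to be the bookkeeping in calibrating $R$: one must verify that the \emph{single} choice $R \asymp T\log(\tfrac{T^2 q}{v})$ simultaneously controls all three error contributions and produces precisely the factor $\log(\tfrac{T^2 q}{v})$ in the denominator, rather than a mismatched power of $q$ or $T$. Here the scale-free ratio $\tfrac{T^2 q}{v}$ is essential: because $v \lesssim q\,T^2$ (each $\Ex\bs Y_i\bs Y_i^\top$ has norm $\lesssim \Ex N_i^2 \lesssim T^2$), this ratio is dimensionless and bounded below by a constant, which is exactly what makes the logarithm well posed and what guarantees the tail term is genuinely swallowed by the concentration bound.
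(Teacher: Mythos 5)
Your overall route --- truncate at level $R$, re-center, apply the bounded-norm matrix Bernstein inequality of \cite{Tropp2015} to the truncated parts, and control the truncation failure and the bias separately --- is a legitimate strategy, and it is indeed the kind of argument that underlies results of this type. But it is genuinely different from the paper's proof, which never truncates: the paper takes Prop.~3 of \cite{AhmedRechtRomberg2014} (a sub-exponential matrix Bernstein inequality stated in high-probability form) as a black box, reconciles the two Orlicz-norm normalisations via Jensen's inequality (yielding a norm $T' \lesssim T$), and then algebraically inverts the deviation-as-a-function-of-$t$ statement into the tail form \eqref{eq:mbi-prob} by solving the quadratic relation coming from $\max(\sqrt{\alpha s},\beta s)\le \beta s + \sqrt{\beta^2 s^2 + 4\alpha s}$. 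So you are attempting to reprove the cited ingredient from scratch rather than to convert it.

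More importantly, your execution has a genuine gap at the calibration step. Your final display bounds the tail by the sum of two terms, $2q\,e^{-cR/T}$ plus the Bernstein term, and you then assert that the choice $R \asymp T\log(\tfrac{T^2 q}{v})$ makes the first term ``a negligible polynomial contribution absorbed into the $(n+m)$ prefactor''. This cannot work: with $R$ fixed independently of $\delta$, the quantity $2q\,e^{-cR/T} = 2q\,(v/(T^2 q))^{c'}$ is a \emph{constant} in $\delta$, whereas the right-hand side of \eqref{eq:mbi-prob} tends to $0$, like $\exp\big(-\tfrac{c\,\delta}{T\log(T^2q/v)}\big)$, as $\delta \to \infty$. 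A $\delta$-independent additive term can never be dominated, uniformly in $\delta$, by a bound that decays to zero, so under your argument the claimed inequality fails for all sufficiently large $\delta$. The same deficiency hits the bias bound $q(R+T)e^{-cR/T}\le \delta/2$; note also that when $v$ is near its maximal order $qT^2$ the ratio $T^2q/v$ is $O(1)$, the truncation level $R\asymp T\log(T^2q/v)$ is $O(T)$ or even ill-defined, and $e^{-cR/T}$ provides no decay whatsoever. Repairing the proof forces the truncation level to grow with $\delta$, but then the Bernstein denominator contains $R\delta \gg T\log(\tfrac{T^2q}{v})\,\delta$ and the stated constant is lost; recovering it requires a multi-scale (dyadic peeling over truncation levels) argument rather than a single cutoff --- which is precisely the work that the cited Prop.~3 of \cite{AhmedRechtRomberg2014} packages away and that the paper's short conversion argument deliberately reuses.
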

        \begin{proof}[Proof of Prop.~\ref{prop:mbi}]
          Under the conditions of this proposition we observe that,
          from the sub-exponentiality of $N_i$, $\bb E\exp\big(c \tfrac{N_i}{T}\big)
          \leq e$ for some universal constant $c>0$
          \cite[Eq.~(5.16)]{Vershynin2012a}. Therefore, by Jensen's inequality we find
          $$\bb E \exp\big(c N_i \tfrac{\log(2)}{T}\big) \leq \big(\bb E\exp\big(c \tfrac{N_i}{T}\big)\big)^{\log
            2} \leq e^{\log 2} = 2,$$ and define
          \begin{equation}
          \label{eq:Tprime-A3}
          T'\coloneqq \inf_{u\geq 0}\left\{\bb E \exp(\|\bs J_i - \bb E \bs
          J_i\|)^{\tinv{u}} \leq 2\right\} \leq \tfrac{T}{c \log(2)} \lesssim T.
          \end{equation}
          We now proceed to use our estimation~\eqref{eq:Tprime-A3} of $T'$ in \cite[Prop.~3]{AhmedRechtRomberg2014}; taking 
          $$
	          \ts \delta' \coloneqq C \max\{\sqrt{v}\sqrt{t + \log(n+m)},\ T'\log(\frac{T'\sqrt{q}}{\sqrt{v}})(t+\log(n+m))\}
          $$
          for some constant $C >0$, we have by the aforementioned proposition
          $$
          \ts \bb P\big[\big\Vert \ts\sum_i \bs J_i  - \bb E \bs J_i
          \big\Vert > \delta' \big] \leq e^{-t}. 
          $$
          However, since for some $s\geq 0$, $\max(\sqrt{\alpha s}, \beta s) \leq \beta s
          + \sqrt{\beta^2s^2 + 4\alpha s}$ and since we can take $2\delta'' = \beta s
          + \sqrt{\beta^2s^2 + 4\alpha s}$ we have $s = \tfrac{\delta''^2}{\alpha +
          \beta \delta''}$, and we obtain $\delta' \leq 2C\delta''$ by setting $s = t + \log(n+m)$,
          $\alpha = v$ and $\beta = T' \log(T'\sqrt{\tfrac{q}{v}})$. Thus, replacing this bound on $\delta'$ yields 
          $$
          \ts \bb P\big[\big\Vert \ts\sum_i \bs J_i  - \bb E \bs J_i 
          \big\Vert > 2 C\delta'' \big] \leq  \bb P\big[\big\Vert \ts\sum_i \bs J_i  - \bb E \bs J_i
          \big\Vert > \delta' \big] \leq e^{-t} =
          (n+m) \exp(- \frac{\delta''^2}{v +T' \log(T'\sqrt \frac{q}{v}) \delta''}). 
          $$
          Finally, setting $\delta = 2C\delta''$ and recalling that
          $T' \lesssim T$ provides the result in~\eqref{eq:mbi-prob}.
       \end{proof}

          Matrix Bernstein's inequality allows us to prove the following
          concentration result, that is used several times in these appendices
          and which characterises the concentration of a special form
          of ``matrix-weighted'' covariance.
          \begin{proposition}[Matrix-Weighted Covariance Concentration]
            \label{prop:m-w-covconc}
            Given the setup defined in Def.~\ref{def:modelsub}, $\delta\in(0,1)$ and $t \geq 1$, provided
            \begin{equation}
              \label{eq:nice-prop-first-cond}
             mp \gtrsim t \delta^{-2} \upmu_{\max}^2 h \log(mp)\log(mn),
           \end{equation}
           we have  
            \begin{equation}
              \label{eq:nice-prop-first-res}
            \big\| \tinv{p} \ts\sum_{i,l}  ({\bs B}^\top  {\bs c}_i  {\bs c}^\top_i {\bs
              B})\ (\bs Z \hat{\bs z})^\top\, (\ail \ail^\top - \I_n)\, {\bs Z}
            \hat{\bs z}  \big\|\ \leq\ \delta, 
            \end{equation}
            with probability exceeding $1-(mp)^{-t}$. 
            Moreover, we have 
            \begin{equation}
              \label{eq:nice-prop-snd-res}
            \big\| \tinv{p} \ts\sum_{i,l}  ({\bs c}_i  {\bs c}^\top_i)\ \hat{\bs x}^\top\, (\ail \ail^\top - \I_n)\, 
            \hat{\bs x}  \big\|\ \leq\ \delta, 
          \end{equation}
          with probability exceeding $1-(mp)^{-t}$ provided $p \gtrsim t \delta^{-2} \log(mp)\log(mn)$.
          \end{proposition}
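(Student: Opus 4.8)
The plan is to obtain both \eqref{eq:nice-prop-first-res} and \eqref{eq:nice-prop-snd-res} as applications of matrix Bernstein's inequality (Prop.~\ref{prop:mbi}) to families of $mp$ \iid, zero-mean random matrices. For the first bound I would set $\bs u \coloneqq \bs Z\hat{\bs z}$, a unit vector since $\bs Z^\top \bs Z = \I_k$ and $\|\hat{\bs z}\| = 1$, and define the $h\times h$ matrices $\bs J_{i,l} \coloneqq (\bs B^\top \bs c_i \bs c_i^\top \bs B)\,[(\ail^\top \bs u)^2 - 1]$, so that $\tinv{p}\ts\sum_{i,l}\bs J_{i,l}$ is exactly the object to be controlled. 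Isotropy of $\ail$ gives $\bb E[(\ail^\top\bs u)^2 - 1] = 0$, hence $\bb E \bs J_{i,l} = \vzer$, and the sub-exponential estimate $\|(\ail^\top\bs u)^2 - 1\|_{\psi_1}\lesssim\alpha^2$ already established in the proof of Prop.~\ref{prop:wg-cov-subsp} applies. Since all $\{\ail\}$ are \iid over $(i,l)$, so are the $\bs J_{i,l}$, and matrix Bernstein applies with $q = mp$ terms and dimension factor $2h$.

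The key quantities are the sub-exponential norm $T$ and the matrix variance $v$. For $T$ I would invoke the coherence definition \eqref{eq:mumax}, which yields $\|\bs B^\top\bs c_i\|^2 \leq \upmu_{\max}^2\tfrac{h}{m}$, whence $N_{i,l} = \|\bs J_{i,l}\| = \|\bs B^\top\bs c_i\|^2\,|(\ail^\top\bs u)^2 - 1|$ is sub-exponential with $T \lesssim \upmu_{\max}^2\tfrac{h}{m}$. For the upper bound on $v$ I would use the rank-one identity $(\bs B^\top\bs c_i\bs c_i^\top\bs B)^2 = \|\bs B^\top\bs c_i\|^2\,\bs B^\top\bs c_i\bs c_i^\top\bs B$ together with the telescoping sum $\ts\sum_i \bs B^\top\bs c_i\bs c_i^\top\bs B = \bs B^\top\bs B = \I_h$; combined with $\bb E[(\ail^\top\bs u)^2-1]^2 \lesssim \alpha^4$ this gives $v \lesssim p\,\max_i\|\bs B^\top\bs c_i\|^2 \lesssim p\,\upmu_{\max}^2\tfrac{h}{m}$. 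Substituting $q = mp$, threshold $\delta p$, and factor $2h$ into \eqref{eq:mbi-prob} and balancing the variance-dominated and tail-dominated regimes then reproduces the requirement \eqref{eq:nice-prop-first-cond}.

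The delicate point — and where the Bernoulli Restriction hypothesis enters — is the logarithmic factor $\log(\tfrac{T^2 q}{v})$ in the denominator of \eqref{eq:mbi-prob}, which forces a \emph{lower} bound on $v$. Here I would write $v = \bb E[(\ail^\top\bs u)^2-1]^2\cdot p\,\|\ts\sum_i\|\bs B^\top\bs c_i\|^2\bs B^\top\bs c_i\bs c_i^\top\bs B\|$ and bound the spectral norm below by $\tfrac1h\,\ts\sum_i\|\bs B^\top\bs c_i\|^4 \geq \tfrac{h}{m}$ via trace and Cauchy--Schwarz, so that $v \gtrsim \tfrac{p h}{m}\,\bb E[(\ail^\top\bs u)^2-1]^2$. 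A moment computation gives $\bb E[(\ail^\top\bs u)^2-1]^2 = 2 + (\bb E X^4 - 3)\|\bs u\|_4^4$, which is bounded below by a universal constant except in the degenerate case $\bb E X^4 = 1$; precisely there the hypothesis $\|\hat{\bs x}\|_4^4 \leq 1-\tfrac{c}{n}$ (recall $\bs u = \hat{\bs x}$) forces this factor to be $\gtrsim \tfrac1n$. Either way $\tfrac{T^2 q}{v} \lesssim \upmu_{\max}^4 h\,n \leq (mn)^2$, so $\log(\tfrac{T^2 q}{v}) \lesssim \log(mn)$, which is the origin of the $\log(mn)$ factor in \eqref{eq:nice-prop-first-cond}. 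I expect this lower bound on the variance to be the main obstacle, as it is the only place the estimate can genuinely degenerate.

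Finally, for \eqref{eq:nice-prop-snd-res} I would run the same argument with the $m\times m$ matrices $\bs J_{i,l} \coloneqq \bs c_i\bs c_i^\top\,[(\ail^\top\hat{\bs x})^2-1]$; these now satisfy $T \lesssim 1$ and, using $\ts\sum_i \bs c_i\bs c_i^\top = \I_m$, a variance of order $v \lesssim p$, with the same lower-variance argument bounding $\log(\tfrac{T^2 q}{v}) \lesssim \log(mn)$. Since $\ts\sum_{i,l}\bs J_{i,l}$ is diagonal, its spectral norm equals $\max_{i\in[m]}|\ts\sum_l[(\ail^\top\hat{\bs x})^2-1]|$, so this part also follows directly from scalar Bernstein and a union bound over $i\in[m]$, yielding the stated requirement $p \gtrsim t\delta^{-2}\log(mp)\log(mn)$.
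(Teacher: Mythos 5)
Your proposal is correct and follows essentially the same route as the paper's proof: matrix Bernstein (Prop.~\ref{prop:mbi}) applied to the centred matrices $(\bs B^\top\bs c_i\bs c_i^\top\bs B)\big[(\ail^\top\bs Z\hat{\bs z})^2-1\big]$, with $T\lesssim \upmu_{\max}^2 h/m$ via coherence, $v\lesssim \upmu_{\max}^2 hp/m$ via $\sum_i\bs B^\top\bs c_i\bs c_i^\top\bs B=\I_h$, and the Bernoulli Restriction Hypothesis entering exactly where you place it, namely in the lower bound on $v$ needed to control $\log(T^2 q/v)\lesssim\log(mn)$. The only minor variations are your trace/Cauchy--Schwarz lower bound on the variance (the paper instead keeps the single largest summand, giving $(\tfrac{h}{m})^2\upmu_{\max}^4$ in place of your $\tfrac{h}{m}$ --- both suffice) and your observation that the diagonal structure of the second claim also admits a scalar-Bernstein-plus-union-bound derivation, where the paper simply specialises $\bs B=\I_m$, $h=m$, $\upmu_{\max}=1$.
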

          \begin{proof}[Proof of Prop.~{\ref{prop:m-w-covconc}}]
   
            Let us define a set of symmetric matrices 
            $$
            \{\bs V_{i,l} \coloneqq (\bs B^\top \bs c_i \bs
            c_i^\top \bs B)\, X^2_{i,l} \in \bb R^{h\times h},\ i\in
            [m], l\in [p]\}, 
            $$
            with the \rvs $X_{i,l} \coloneqq \ail^\top
            \hat{\bs x}$; in fact, all $X_{i,l} \sim X \coloneqq \bs a^\top \hat{\bs x}$, where $\bb E X^2 = 1$. We are now going to prove this proposition by invoking matrix Bernstein's inequality (in the form of Prop.~\ref{prop:mbi}) to show that
            $\|\frac{1}{p}(\sum_{i,l} \bs V_{i,l} - \bb E \bs V_{i,l})\| \leq
            \delta$ with high probability. This requires a bound $T$ over the norms \new{$\|\|\bs V_{i,l}\|\|_{\psi_1}$} and another  bound on the matrix variance $v$. 

            To begin with, note that $\bb E \bs V_{i,l} = (\bs B^\top \bs c_i \bs
            c_i^\top \bs B)$ and $\|\bs V_{i,l}\| \leq X_{i,l}^2\,\|\bs B^\top \bs c_i\|^2 \leq X_{i,l}^2\,\upmu_{\max}^2 \frac{h}{m}$, 
            where the coherence $\upmu_{\max}$ is defined in~\eqref{eq:mumax}. Therefore, 
            \begin{align*}
            \ts\big\|\|\bs V_{i,l} - \bb E \bs
            V_{i,l}\|\big\|_{\psi_1}&\ts \leq\ \|\bb E \bs
            V_{i,l}\| + \big\| \|\bs V_{i,l} \|\big\|_{\psi_1}\\
&\ts =\ 
            (1+\|X_{i,l}^2\|_{\psi_1})\,\upmu_{\max}^2 \frac{h}{m}
            \lesssim\ \|X_{i,l}\|^2_{\psi_2}\,\upmu_{\max}^2 \frac{h}{m}
            \lesssim\ \alpha^2\,\upmu_{\max}^2 \frac{h}{m},
            \end{align*}
            where we used the fact that $\|X\|^2_{\psi_2} \leq \|X^2\|_{\psi_1} \leq
            2\|X\|^2_{\psi_2}$ for any \rv $X$~\cite[Lemma~5.14]{Vershynin2012a}, and the
			rotational invariance of the random vectors $\ail \sim_{\iid} \bs a$ 
			that involves $\|\bs a^\top\hat{\bs x}\|^2_{\psi_2} \lesssim \alpha^2$~\cite[Lemma~5.9]{Vershynin2012a}.  
			Thus, there exists a $T>0$ such $\big\|\|\bs V_{i,l} - \bb E \bs V_{i,l}\|\big\|_{\psi_1}\leq T$ with $T \lesssim \upmu_{\max}^2 \frac{h}{m}$. 

            Secondly, by the symmetry of $\bs V_{i,l}$, the matrix variance is developed as follows:
            \begin{align*}
             \ts v \coloneqq \|\sum_{i,l}\,\bb E(\bs V_{i,l} - \bb E \bs
            V_{i,l})(\bs V_{i,l} - \bb E \bs
            V_{i,l})^\top\|&=\ts \|\sum_{i,l}\,\bb E(\bs V_{i,l}\bs V_{i,l}^\top) - (\bb E \bs
            V_{i,l})(\bb E \bs
            V_{i,l})^\top\|\\
			&\ts \leq \|\sum_{i,l}\bb E(\bs V_{i,l}\bs V_{i,l}^\top)\| + \|\sum_{i,l} (\bs B^\top \bs c_i \bs
            c_i^\top \bs B) (\bs B^\top \bs c_i \bs
            c_i^\top \bs B)\|\\
			&\ts \leq \|\sum_{i,l}\bb E(\bs V_{i,l}\bs V_{i,l}^\top)\| +
			\upmu_{\max}^2\tfrac{hp}{m}\,\|\bs B^\top (\sum_{i} \bs c_i \bs
            c_i^\top) \bs B\|\\
			&\ts = \|\sum_{i,l}\bb E(\bs V_{i,l}\bs V_{i,l}^\top)\| +
			\upmu_{\max}^2\tfrac{hp}{m},
            \end{align*}
            where we used $\bs B^\top (\sum_{i} \bs c_i \bs
            c_i^\top) \bs B = \bs B^\top\bs B= \bs I_h$. Moreover, using the same developments as above, we find
            \begin{align*}
            \ts \|\sum_{i,l}\bb E(\bs V_{i,l}\bs V_{i,l}^\top)\|&\ts = \|\sum_{i,l}(\bb E X^4_{i,l}) (\bs B^\top \bs c_i \bs
            c_i^\top \bs B) (\bs B^\top \bs c_i \bs
            c_i^\top \bs B)\| \lesssim \alpha^4 \upmu_{\max}^2 \tfrac{hp}{m},           
            \end{align*}
            where, again by rotational invariance,
            $\|X_{ij}\|_{\psi_2} = \|X\|_{\psi_2} = \|\bs a^\top \hat{\bs x}\|_{\psi_2}
            \lesssim \alpha$ so that $\bb E X_{ij}^4 = \bb E X^4 \lesssim
            \alpha^4$. Consequently, we have $v \lesssim \upmu_{\max}^2
            \tfrac{hp}{m}$. In addition, a lower bound on $v$ is provided by
            \begin{align*}
\ts v &\ts \coloneqq \|\sum_{i,l}\,\bb E(\bs V_{i,l} - \bb E \bs
            V_{i,l})(\bs V_{i,l} - \bb E \bs
            V_{i,l})^\top\|\\
&\ts = \|\sum_{i,l}\bb E(X^2_{i,l} - 1)^2\,(\bs B^\top \bs c_i \bs
            c_i^\top \bs B) (\bs B^\top \bs c_i \bs
            c_i^\top \bs B)\|\\
&\ts = p\,\bb E((\bs a^\top \hat{\bs x})^4 - 1)\,\|\sum_{i}(\bs B^\top \bs c_i \bs
            c_i^\top \bs B) (\bs B^\top \bs c_i \bs
            c_i^\top \bs B)\|, 
            \end{align*}
			where $X_{i,l} \sim X$ and $\bb E X^2 =1$. Let us then note that\footnote{This is easily shown by expanding the entries of $(\hat{\bs x}^\top \bs a)^2 \bs a \bs a^\top$ as done, \eg in the proof of \cite[Lemma~3.5]{WhiteSanghaviWard2015}.} 
			\begin{equation}
			\label{eq:expect_axsq_aa}
				{\bb E}(\hat{\bs x}^\top \bs a)^2 \bs a \bs a^\top = 2\hat{\bs x}\hat{\bs x}^\top + \bs I_{n} + (\bb E X^4 - 3)\diag(\hat{\bs x})^2.                            
			\end{equation}
			By this fact and the Bernoulli Restriction Hypothesis in Sec.~\ref{def:model}, there exists a $c>0$ such that  
			\begin{equation}
			  \label{eq:low-bound-fourth-mom}
			\ts \bb E (\bs a^\top \hat{\bs x})^4 - 1
			= \hat{\bs x}^\top [\bb E (\bs a^\top \hat{\bs x})^2 \bs a \bs a^\top]\,\hat{\bs x} 
			=2(1-\|\hat{\bs x}\|^4_4) + (\bb E X^4
			- 1) \|\hat{\bs x}\|^4_4 > \tfrac{c}{n},
			\end{equation}
			where we also used the fact that $\|\hat{\bs x}\|^4_4 \geq
			\tfrac{1}{n}\|\hat{\bs x}\|^4 = \tfrac{1}{n}$ which shows that the
			bound is also respected, whatever the value of $\hat{\bs x}$, provided $\bb E X^4 > 1$.
			
			Consequently, since 
			$$
			\ts \|\sum_{i}(\bs B^\top \bs c_i \bs
			            c_i^\top \bs B) (\bs B^\top \bs c_i \bs c_i^\top \bs B)\| \geq 
			\max_{i} \|(\bs B^\top \bs c_i \bs c_i^\top \bs B) (\bs B^\top \bs c_i \bs
			             c_i^\top \bs B)\| = (\tfrac{h}{m})^2\mu^4_{\max}, 
			$$
			we can consider that $v \gtrsim \mu^4_{\max} \frac{p}{n}
			(\frac{h}{m})^2$ so that, with the previous bound on $T$, we find 
			$$
			\ts \frac{T^2 mp}{v} \lesssim mp 
			\frac{h^2}{m^2} \frac{n}{p}
			(\frac{m}{h})^2 \leq mn
			$$ 
			Inserting the bounds on $T$, $v$ and $\log(\frac{T^2 mp}{v}) \lesssim \log(mn)$ in Prop.~\ref{prop:mbi} provides
			\begin{align*}
			  \ts \bb P\big[\|\frac{1}{p}(\sum_{i,l} \bs V_{i,l} - \bb E \bs V_{ij})\| >
			  \delta\big]&\ts \lesssim \exp\left(\log(h) - c \frac{\delta^2
			               mp}{\upmu_{\max}^2 h \log(mn)}\right) ,
			\end{align*}
			for some $c>0$. Consequently, we observe as announced that $\bb P[\|\frac{1}{p}(\sum_{i,l} \bs V_{i,l} - \bb E \bs V_{ij})\| > \delta] \lesssim (mp)^{-t}$ if $mp \gtrsim t \delta^{-2} \upmu_{\max}^2 h \log(mp)\log(mn)$. The last result of the proposition is simply obtained by replacing $\bs B$ by $\bs I_m$, \ie letting $h=m$ and $\upmu_{\max} = 1$ in the developments above. 
		\end{proof}
        
        \medskip

        Following a framework defined in \cite{AhmedRechtRomberg2014,LingStrohmer2015a}, matrix Bernstein's inequality also allows us to study the concentration of random projections in the sensing model of Def.~\ref{def:modelsub} when these are seen as a linear random operator acting on $\bs z \bs b^\top \in \bb R^{k \times h}$. Indeed, introducing the linear random map 
        \begin{equation}
          \label{eq:linear-gen-map}
        \ts \cl A: \bs W \in \bb
        R^{k \times h} \to \big\{\cl A_{il}(\bs W) \coloneqq \tinv{\sqrt p}\big\langle \bs Z^\top \ail \bs c^\top_i
        \bs B, \bs W \big\rangle_F:\ i \in [m], l \in [p]\big\} \in \bb
        R^{mp},
        \end{equation}
        for which the adjoint $\cl A^*: \bb R^{mp} \to \bb R^{k \times
        h}$ is such that 
        \begin{equation}
          \label{eq:linear-gen-map-adjoint-def}
        \ts \cl A^* \cl A(\bs W) =  \tinv{p}\sum_{i,l}\,\big\langle \bs Z^\top \ail \bs c^\top_i
        \bs B,\, \bs W \big\rangle_F \ \bs Z^\top \ail \bs c^\top_i
        \bs B,
      \end{equation}
      it is easy to see that the sensing model~\eqref{eq:sub-sens-model} is then equivalent to the action of $\cl A$ on the rank-1 matrix $\bs z\bs b^\top$, \ie for $i \in [m],\, l \in [p]$,
        \begin{equation}
          \label{eq:sub-sens-model-lin-mtx-oper}
          \ts y_{i,l} = g_i (\ail^\top \bs x) = (\bs c_i^\top \bs B \bs b)\, \ail^\top \bs Z \bs z = \langle \bs Z^\top \ail \bs c^\top_i
        \bs B,\, \bs z \bs b^\top \rangle_F = \sqrt{p}\,\cl A_{i,l}(\bs z \bs b^\top).
        \end{equation}
        
        Our developments actually need to characterise the concentration of $\cl A^*\cl A$ around its mean when evaluated on any element of a particular $(k+h)$-dimensional subspace $\cl M$ of $\bb R^{k \times h}$ to which $\bs z \bs b^\top$ belongs. This subspace is defined in the following proposition along with its projector.     
	\begin{proposition}
		\label{prop:subspaceM}
		In the setup of Def.~\ref{def:modelsub}, define the
                $(k+h)$-dimensional linear subspace
		\begin{equation}
		\label{eq:bigM}
			\cl M \coloneqq \big\{\,\u\,\hat{{\bs b}}^\top
                        + \hat{\bs z}\,\v^\top :\ \u \in \R^k,\, \v \in \R^{h}\big\}\ \subset\ \bb R^{k\times h}, 
		\end{equation}
		associated to $\bs x = \bs Z
                \bs z$ and $\bs g = \bs B \bs b$. Then the orthogonal
                projector on $\cl M$ is the linear operator
		\begin{equation}
		\label{eq:projM}
			{\cl P}_{\cl M}(\bs Q) \coloneqq \hat{\bs z} \hat{\bs z}^\top {\bs Q} + {\bs Q} \hat{{\bs b}} \hat{{\bs b}}^\top -  \hat{\bs z} \hat{\bs z}^\top {\bs Q} \hat{{\bs b}} \hat{{\bs b}}^\top.
		\end{equation}
	\end{proposition}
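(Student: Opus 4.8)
The plan is to verify the two defining properties of an orthogonal projector onto a subspace, namely that the operator $\cl P_{\cl M}$ given in \eqref{eq:projM} is idempotent, that its range is exactly $\cl M$, and that its fixed-point set coincides with $\cl M$ (equivalently, that $\I - \cl P_{\cl M}$ annihilates $\cl M$). Since the ambient space $\bb R^{k\times h}$ is equipped with the Frobenius inner product, and since an idempotent linear operator whose range equals its set of fixed points and whose kernel is orthogonal to that range is the orthogonal projector, I would structure the proof around these checks, exploiting throughout that $\hat{\bs z}$ and $\hat{\bs b}$ are unit vectors so that $\hat{\bs z}^\top \hat{\bs z}=1$ and $\hat{\bs b}^\top \hat{\bs b}=1$.

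First I would confirm that $\cl P_{\cl M}$ maps into $\cl M$. Writing $\cl P_{\cl M}(\bs Q) = \hat{\bs z}(\hat{\bs z}^\top \bs Q) + (\bs Q\hat{\bs b} - \hat{\bs z}\hat{\bs z}^\top \bs Q \hat{\bs b})\hat{\bs b}^\top$, the first term has the form $\hat{\bs z}\,\v^\top$ with $\v = \bs Q^\top\hat{\bs z}\in\R^h$, and the second has the form $\u\,\hat{\bs b}^\top$ with $\u = \bs Q\hat{\bs b}-\hat{\bs z}\hat{\bs z}^\top\bs Q\hat{\bs b}\in\R^k$; hence every output lies in $\cl M$ as defined in \eqref{eq:bigM}. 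Conversely, I would check that $\cl P_{\cl M}$ acts as the identity on any generator $\u\hat{\bs b}^\top + \hat{\bs z}\v^\top$ of $\cl M$: substituting into \eqref{eq:projM} and using the unit-norm identities, the cross terms telescope and one recovers $\u\hat{\bs b}^\top + \hat{\bs z}\v^\top$ exactly. This simultaneously establishes that $\cl M$ is contained in the fixed-point set and, combined with the range computation, that $\cl P_{\cl M}$ is idempotent, so $\cl P_{\cl M}^2 = \cl P_{\cl M}$.

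Finally I would verify self-adjointness with respect to the Frobenius inner product, which together with idempotency characterises orthogonality: for any $\bs Q,\bs R$ one has $\langle \cl P_{\cl M}(\bs Q),\bs R\rangle_F = \langle \bs Q, \cl P_{\cl M}(\bs R)\rangle_F$, since each of the three summands in \eqref{eq:projM} is built from the symmetric rank-one factors $\hat{\bs z}\hat{\bs z}^\top$ (acting on the left) and $\hat{\bs b}\hat{\bs b}^\top$ (acting on the right), whose transposes coincide with themselves. Checking that $\langle \hat{\bs z}\hat{\bs z}^\top\bs Q,\bs R\rangle_F = \langle \bs Q,\hat{\bs z}\hat{\bs z}^\top\bs R\rangle_F$ and the analogous right-multiplication identity is an immediate use of the trace cyclicity $\langle \bs A\bs Q,\bs R\rangle_F = \langle \bs Q,\bs A^\top\bs R\rangle_F$. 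Once self-adjointness and idempotency are in place, $\cl P_{\cl M}$ is the orthogonal projector onto its range, which the first step identified as $\cl M$, completing the argument.

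I do not expect any genuine obstacle here, as the statement is essentially an algebraic identity; the only mild care needed is in the bookkeeping of the cross term $-\hat{\bs z}\hat{\bs z}^\top\bs Q\hat{\bs b}\hat{\bs b}^\top$, which is precisely what corrects the overcounting when the left and right rank-one projections overlap, so that idempotency and the fixed-point property hold exactly rather than up to that overlap. It is worth verifying that $\dim\cl M = k+h$ as claimed, which holds generically and follows from $\hat{\bs z}\neq\vzer_k$, $\hat{\bs b}\neq\vzer_h$; this dimension count is consistent with $\tr \cl P_{\cl M} = k + h - 1 \cdot 1$ reconciled against the inclusion-exclusion structure, but since the characterisation via self-adjoint idempotence does not require the dimension a priori, I would treat it as a remark rather than a load-bearing step.
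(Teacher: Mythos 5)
Your proof is correct. The paper itself omits the verification entirely, remarking only that one should check the two properties ${\cl P}_{\cl M}(\bs Q) \in \cl M$ and $\langle {\cl P}_{\cl M}(\bs Q),\, \bs Q - {\cl P}_{\cl M}(\bs Q)\rangle_F = 0$, calling this ``long but extremely simple.'' Your route is the textbook characterisation instead: range contained in $\cl M$, identity on $\cl M$ (hence idempotency and range equal to $\cl M$), and self-adjointness via trace cyclicity, from which orthogonality of the projector follows. This is arguably the cleaner of the two, since the paper's literal condition (residual orthogonal to the projection of $\bs Q$, rather than to all of $\cl M$) is on its face weaker than what an orthogonal projector requires and needs the range/fixed-point structure to be made into a complete argument -- exactly the structure you verify explicitly. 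Your cancellation of the cross terms in the fixed-point check is the crux, and you handle it correctly.

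One point worth cleaning up: your closing remark on the dimension is muddled, and in fact your own trace computation resolves it against both your first guess and the paper's statement. Since $\tr {\cl P}_{\cl M} = h + k - 1$ (the three summands contribute $h$, $k$ and $1$ respectively, as operators on $\bb R^{k\times h}$), and the trace of an orthogonal projector equals the dimension of its range, one has $\dim \cl M = k+h-1$, not $k+h$; equivalently, the parametrisation $(\u,\v) \mapsto \u\hat{\bs b}^\top + \hat{\bs z}\v^\top$ has the one-dimensional kernel spanned by $(-\hat{\bs z}, \hat{\bs b})$, reflecting the overlap ${\rm span}\{\hat{\bs z}\hat{\bs b}^\top\}$ of the two slices of $\cl M$. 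So the claim does not ``hold generically''; the correct count is always $k+h-1$, and the proposition's label ``$(k+h)$-dimensional'' is itself off by one. As you correctly observe, nothing in your argument (nor in the paper's subsequent use of $\cl M$, where only upper bounds of the form $k+h$ matter) depends on this count, so it is harmless -- but state it as a correction rather than a reconciliation.
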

        The verification that~\eqref{eq:projM} is the correct
        projector, \ie that both ${\cl P}_{\cl M}(\bs Q) \in \cl M$
        and $\langle {\cl P}_{\cl M} (\bs Q), \bs Q - {\cl P}_{\cl
          M}(\bs Q)\rangle_F = 0$ for some $\bs Q \in \bb R^{k \times
          h}$ is long but extremely simple and omitted for the sake of
        brevity. Its linearity is also obvious.	
\medskip

        The announced concentration is then characterised in the next proposition and is important in the proof of
        Prop.~\ref{prop:regusub} in Sec.~\ref{sec:appC}. In a
        nutshell, it characterises a certain isometry
        property of $\cl A$ when the latter is \emph{restricted}~to matrices belonging to $\cl M$. This is closely related in principle to
        \cite[Cor.~2]{AhmedRechtRomberg2014} (and also \cite[Lemma
        4.3]{LingStrohmer2015a}) where random projections based on
        complex Gaussian random vectors $\ail$ are considered.  Our proof differs in that
        we provide results for real signals and \iid sub-Gaussian
        random sensing vectors. 

	\begin{proposition}[A Restricted Isometry for sub-Gaussian $\cl A$ on $\cl M$]
		\label{prop:standalone-ahmed}
		With the linear random map $\cl A$ defined in
               ~\eqref{eq:linear-gen-map} and the subspace $\cl M$
                associated to the projector ${\cl P}_{\cl M}$ defined in
                Prop.~\ref{prop:subspaceM}, given $\delta \in (0,1)$,
                $t\geq1$, and provided 
                \begin{subequations}
                  \label{eq:scseparate}
                \begin{align}                  
  \ts n&\ts \gtrsim t \log(mp),\\
\ts mp&\ts \gtrsim \delta^{-2} \max(k,\upmu_{\max}^2 h)\, \max\big(t
        \log(mp) \log(\frac{mn}{1-\rho}), \log(\frac{n}{\delta})\big),            
                \end{align}
                \end{subequations}
		we have 
                \begin{talign}
                  \label{eq:expect-AstA}
                  \bb E {\cl P}_{\cl M} \cl A^* \cl A\cl
                  P_{\cl M} &= {\cl P}_{\cl M},\\
                  \label{eq:concent-AstA}
\ts                  \big\| {\cl P}_{\cl M} \cl A^* \cl A\cl
                  P_{\cl M} - {\cl P}_{\cl M} \big\|&\coloneqq \sup_{\bs W \in \bb S_F^{k\times h}} \big|\langle \bs W, ({\cl P}_{\cl M} \cl A^* \cl A\cl
                  P_{\cl M} - {\cl P}_{\cl M}) \bs W\rangle_F\big|\ \leq\ \delta
                \end{talign}
                with probability exceeding $1 - C (mp)^{-t}$ for some $C > 0$.
	\end{proposition}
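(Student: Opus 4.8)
The plan is to first establish the mean identity~\eqref{eq:expect-AstA} by a direct second-moment computation, and then to obtain~\eqref{eq:concent-AstA} by decomposing any test matrix of $\cl M$ along the three orthogonal tangent directions of the rank-one manifold at $\hat{\bs z}\hat{\bs b}^\top$ and controlling the resulting quadratic and bilinear forms with the concentration results already proven in this appendix. For the mean, writing $\bs M_{i,l}\coloneqq\bs Z^\top\ail\bs c_i^\top\bs B$ so that $\cl A^*\cl A(\bs W)=\tfrac1p\sum_{i,l}\langle\bs M_{i,l},\bs W\rangle_F\bs M_{i,l}$ and $\langle\bs M_{i,l},\bs W\rangle_F=\ail^\top\bs Z\bs W\bs B^\top\bs c_i$, the isotropy $\bb E\ail\ail^\top=\I_n$ with $\bs Z^\top\bs Z=\I_k$ gives $\bb E[\langle\bs M_{i,l},\bs W\rangle_F\bs M_{i,l}]=\bs W\bs B^\top\bs c_i\bs c_i^\top\bs B$; summing over $l$ and using $\sum_i\bs c_i\bs c_i^\top=\I_m$ and $\bs B^\top\bs B=\I_h$ yields $\bb E\cl A^*\cl A=\mathrm{Id}$ on $\bb R^{k\times h}$, whence~\eqref{eq:expect-AstA} follows from idempotence of $\cl P_{\cl M}$ (Prop.~\ref{prop:subspaceM}).

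Because $\cl P_{\cl M}$ is self-adjoint, $\langle\bs W,\cl P_{\cl M}\cl A^*\cl A\cl P_{\cl M}\bs W\rangle_F=\|\cl A(\cl P_{\cl M}\bs W)\|^2$ and $\langle\bs W,\cl P_{\cl M}\bs W\rangle_F=\|\cl P_{\cl M}\bs W\|_F^2$, so by homogeneity~\eqref{eq:concent-AstA} reduces to showing $\sup_{\bs V\in\cl M,\,\|\bs V\|_F=1}\big|\|\cl A\bs V\|^2-\|\bs V\|_F^2\big|\le\delta$. Every $\bs V\in\cl M$ splits orthogonally as $\bs V=a\,\hat{\bs z}\hat{\bs b}^\top+\hat{\bs z}(\v^\perp)^\top+\u^\perp\hat{\bs b}^\top$ with $a\in\R$, $\v^\perp\perp\hat{\bs b}$, $\u^\perp\perp\hat{\bs z}$ and $a^2+\|\v^\perp\|^2+\|\u^\perp\|^2=1$; using $\hat{\bs b}^\top\bs B^\top\bs c_i=\hat{\bs g}_i$ and $\bs Z\hat{\bs z}=\hat{\bs x}$ one finds $\langle\bs M_{i,l},\bs V\rangle_F=(\ail^\top\hat{\bs x})\big(a\hat{\bs g}_i+(\v^\perp)^\top\bs B^\top\bs c_i\big)+\hat{\bs g}_i(\ail^\top\bs Z\u^\perp)$, so $\|\cl A\bs V\|^2$ expands into three self-energy terms and three cross terms, each a quadratic or bilinear form whose mean is the matching term of $\|\bs V\|_F^2$.

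The three self-energy terms are matched directly to the tools of this appendix. The $\hat{\bs z}(\v^\perp)^\top$ term, $\tfrac1p\sum_{i,l}(\ail^\top\hat{\bs x})^2(\bs c_i^\top\bs B\v^\perp)^2$, deviates from $\|\v^\perp\|^2$ uniformly in $\v^\perp$ by at most $\|\tfrac1p\sum_{i,l}((\ail^\top\hat{\bs x})^2-1)\bs B^\top\bs c_i\bs c_i^\top\bs B\|$, which is exactly~\eqref{eq:nice-prop-first-res} of Prop.~\ref{prop:m-w-covconc} and supplies the $\upmu_{\max}^2 h$ contribution with the factor $\log(mp)\log(mn)$. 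The scalar $\hat{\bs z}\hat{\bs b}^\top$ term is controlled by~\eqref{eq:nice-prop-snd-res} after weighting by $\hat{\bs g}_i^2$ (which sum to $1$). The $\u^\perp\hat{\bs b}^\top$ term, $\tfrac1p\sum_{i,l}\hat{\bs g}_i^2(\ail^\top\bs Z\u^\perp)^2$, equals $\|\tfrac1{mp}\sum_{i,l}(m\hat{\bs g}_i^2)\bs Z^\top(\ail\ail^\top-\I_n)\bs Z\|$ evaluated against $\u^\perp$, controlled by Prop.~\ref{prop:wg-cov-subsp} with the fixed weight $\bs w=m\hat{\bs g}^2$; its supremum over the $k$-dimensional $\u^\perp$ is where the $k\log(\tfrac n\delta)$ requirement arises, while the peak-to-energy bound $m\max_i\hat{\bs g}_i^2\lesssim(1-\rho)^{-2}$ for $\hat{\bs g}$ derived from $\bs g\in\cl G_\rho$ accounts for the $(1-\rho)$ dependence. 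Throughout, the truncation event $\cl E_{\max}$ of Prop.~\ref{prop:truncation1} (needing $n\gtrsim t\log(mp)$, i.e.\ the first line of~\eqref{eq:scseparate}) is used for the crude bounds on $\max_i\|\ail\|^2$.

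The cross terms all have zero mean — for instance the fully mixed term $\tfrac1p\sum_{i,l}(\ail^\top\hat{\bs x})(\bs c_i^\top\bs B\v^\perp)(\ail^\top\bs Z\u^\perp)\hat{\bs g}_i$ vanishes in expectation since $\hat{\bs z}^\top\u^\perp=0$ — and are bounded by direct applications of matrix Bernstein (Prop.~\ref{prop:mbi}) to the corresponding off-diagonal blocks on $\cl M$, in the spirit of Prop.~\ref{prop:m-w-covconc}; the sub-exponential bound $T$ of the pieces weighted by $\hat{\bs g}_i$ inherits the factor $(1-\rho)^{-1}$, which is what turns $\log(mn)$ into $\log\big(\tfrac{mn}{1-\rho}\big)$. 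A union bound over the failure events of all six terms and taking the worst of the individual sample-complexity requirements then gives~\eqref{eq:concent-AstA} with probability exceeding $1-C(mp)^{-t}$ under~\eqref{eq:scseparate}: the outer $\max$ reflects the two logarithmic regimes of Bernstein's inequality (variance- versus tail-dominated), and the inner $\max(k,\upmu_{\max}^2 h)$ the worse of the signal- and gain-tangent dimensions. I expect the main obstacle to be the bookkeeping for the cross terms — especially the mixed $\hat{\bs z}(\v^\perp)^\top$–$\u^\perp\hat{\bs b}^\top$ term, which is not a clean weighted covariance and demands a genuine bilinear concentration uniform in two free vectors — together with tracking how $\upmu_{\max}$, the factor $(1-\rho)^{-1}$, and the two separate log-factors propagate so as to assemble into the single bound~\eqref{eq:concent-AstA}.
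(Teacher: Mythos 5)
Your strategy is essentially the paper's own: the mean identity is proved by the same isotropy computation, and your three-way orthogonal split of a unit-norm $\bs V\in\cl M$ is a finer version of the paper's two-way split of the \emph{projected sensing matrices}, ${\cl P}_{\cl M}(\bs Z^\top\ail\bs c_i^\top\bs B)=\bs u_{i,l}\hat{\bs b}^\top+\hat{\bs z}\bs v_{i,l}^\top$ with $\bs u_{i,l}=\hat g_i\bs Z^\top\ail$ and $\bs v_{i,l}=(\I_{h}-\hat{\bs b}\hat{\bs b}^\top)\bs B^\top\bs c_i\,(\ail^\top\hat{\bs x})$, which yields only two squares and one cross term; your six terms regroup into those three. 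Most of your term-to-tool matching is correct (the $\v^\perp$ self-term via \eqref{eq:nice-prop-first-res}, the $\u^\perp$ self-term and the $a$--$\u^\perp$ cross term via the weighted covariance with the fixed weight $m\hat{\bs g}^2$, i.e.\ a one-dimensional $\cl S$ in Prop.~\ref{prop:wg-cov-subsp}), but two steps do not go through as written.

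First, the scalar ($a^2$) term: you invoke \eqref{eq:nice-prop-snd-res}, but that result holds under $p\gtrsim t\delta^{-2}\log(mp)\log(mn)$ --- a condition on $p$ \emph{alone} that \eqref{eq:scseparate} does not imply (the proposition allows, e.g., $p=1$ with $m$ large). The correct route with the tools at hand is $\hat g_i^2=\hat{\bs b}^\top(\bs B^\top\bs c_i\bs c_i^\top\bs B)\hat{\bs b}$, so that the deviation equals $\hat{\bs b}^\top\big[\tinv{p}\sum_{i,l}(\bs B^\top\bs c_i\bs c_i^\top\bs B)\,\hat{\bs x}^\top(\ail\ail^\top-\I_n)\hat{\bs x}\big]\hat{\bs b}$ and is bounded by \eqref{eq:nice-prop-first-res}; the same trick disposes of your $a$--$\v^\perp$ cross term, which equals $2a(\v^\perp)^\top\big[\cdots\big]\hat{\bs b}$ with the same bracket. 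Second, and more seriously, the mixed $\u^\perp$--$\v^\perp$ term that you defer as ``bookkeeping'' is the entire difficulty of the proposition, and your sketch misses its key mechanism. Uniformity in the two free vectors comes for free by writing it as $(\u^\perp)^\top\big[\tinv{p}\sum_{i,l}\hat g_i(\ail^\top\hat{\bs x})\,\bs Z^\top\ail\,\bs c_i^\top\bs B(\I_{h}-\hat{\bs b}\hat{\bs b}^\top)\big]\v^\perp$ and bounding the spectral norm of the bracket with Prop.~\ref{prop:mbi}; but the exponent there is $-c\delta^2/(v+T\log(\frac{T^2q}{v})\delta)$, so besides upper bounds on $T$ and $v$ one must \emph{lower}-bound the matrix variance $v$ of \eqref{eq:matrix-var-T}. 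This is exactly where the Bernoulli restriction hypothesis enters, through $\bb E(\bs a^\top\hat{\bs x})^4-1>c/n$ in \eqref{eq:low-bound-fourth-mom}, giving $v\gtrsim(1-\rho)^2 ph/(mn)$ and hence $\log(T^2mp/v)\lesssim\log(\tfrac{mn}{1-\rho})$. Your attribution of the factor $\log(\tfrac{mn}{1-\rho})$ to ``$T$ inheriting $(1-\rho)^{-1}$'' is wrong: $T$ carries only factors $(1+\rho)\le 2$, and likewise $m\max_i\hat g_i^2\le(1+\rho)^2\le4$, not $(1-\rho)^{-2}$. Without the variance lower bound, the $\log(T^2q/v)$ term blows up as $v\to0$ and no useful sample complexity can be extracted from this form of Bernstein's inequality --- which is precisely why the paper needs the fourth-moment hypothesis at all.
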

	

	\begin{proof}[Proof of Prop.~\ref{prop:standalone-ahmed}]
          Given some $t\geq 1$ and $\delta \in (0,1)$, the proof of this proposition is realised conditionally to an event $\cl E$ given by the joint verification of the following two properties for the $\{\ail\}$ in Def.~\ref{def:modelsub}, \ie
          \begin{subequations}
          \begin{eqnarray}
			\label{eq:determ-relations-nice-concent}&
                                                      \big\| \tinv{p} \ts\sum_{i,l}  ({\bs B}^\top  {\bs c}_i  {\bs c}^\top_i {\bs
                                                      B})\ \hat{\bs z}^\top {\bs Z}^\top\, (\ail \ail^\top - \I_n)\, {\bs Z}
                                                      \hat{\bs z}  \big\|\ \leq\ \delta,\\ 
            \label{eq:determ-relations-weighted-cov}&\ts \|\tinv{mp}\,\sum_{i,l} \theta_i \bs Z^\top
		(\ail \ail^\top - \I_n) \bs Z \| \leq 
		\delta\,\|\bs \theta\|_\infty,\ \forall \bs \theta
                                                    \in \cl B.
        \end{eqnarray}\label{eq:determ-relations}
        \end{subequations}
        \!\!In fact, we shall see that the probability of Prop.~\ref{prop:standalone-ahmed} is fixed by that of $\cl E$, which we bound as follows. From Cor.~\ref{coro:appl-wcovsub} and Prop.~\ref{prop:m-w-covconc}, and given some $c >0$, we can easily bound the probability of the complementary event
         \begin{equation}
           \label{eq:bound-clE-c}
         \bb P[\cl E^{\rm c}]\ \lesssim\ \exp(-c\delta^2 mp) +
         (mp)^{-t} \lesssim (mp)^{-t},           
         \end{equation}
         provided 
         \begin{equation}
           \label{eq:glob-requirements}
           \ts n \gtrsim t \log(mp),\ 
mp\ \gtrsim\ t \delta^{-2} \upmu_{\max}^2\,h \log(mp) \log(mn),\ mp
           \gtrsim \delta^{-2}(k+h) \log(\frac{n}{\delta}).            
         \end{equation}
         Conditionally on $\cl E$, the rest of the proof can thus assume that the relations listed in~\eqref{eq:determ-relations} hold deterministically. This conditioning is thus implicit in the rest of the developments. 
		\medskip

Before proceeding, let us first notice that, from~\eqref{eq:linear-gen-map-adjoint-def} and since $\scp{\bs A \bs B}{\bs C}_F = \scp{\bs B}{\bs A^\top \bs C}_F=\scp{\bs A}{\bs C\bs B^\top}_F$ for any matrices $\bs A, \bs B, \bs C$ with compatible dimensions, 
\begin{talign*}
\big\langle \bs W, {\cl P}_{\cl M} \cl A^* \cl A {\cl P}_{\cl M} \bs W\big \rangle_F&= \big\langle {\cl P}_{\cl M} \bs W, (\cl A^* \cl A)\, {\cl P}_{\cl M} \bs W\big \rangle_F = \tinv{p} \sum_{i,l}\,\big\langle \bs Z^\top \ail \bs c^\top_i
\bs B, {\cl P}_{\cl M} \bs W \big\rangle^2_F\\
&= \tinv{p}\sum_{i,l}\,\big\langle {\cl P}_{\cl M}\,\bs Z^\top \ail \bs c^\top_i
\bs B, {\cl P}_{\cl M}\,\bs W \big\rangle^2_F,  
\end{talign*}
where we used the fact that ${\cl P}_{\cl M}$ is an orthogonal projector on the subspace $\cl M \subset \bb R^{k \times h}$.
Moreover, for $\bs W' = {\cl P}_{\cl M} \bs W \in {\cl M}$ and since $\bs Z^\top\bs Z = \bs I_k$,
\begin{talign*}
\bb E \big\langle \bs W, {\cl P}_{\cl M} \cl A^* \cl A {\cl P}_{\cl M} \bs W\big \rangle_F&=\big\langle \bs W, {\cl P}_{\cl M} (\bb E \cl A^* \cl A ){\cl P}_{\cl M} \bs W\big \rangle_F\\
&= \tinv{p}\sum_{i,l}\, \bb E \big\langle {\cl P}_{\cl M}\,\bs Z^\top \ail \bs c^\top_i
\bs B, \bs W' \big\rangle^2_F\\
&= \tinv{p}\sum_{i,l}\, \bs c^\top_i\bs B\bs W'^\top\bs Z^\top \bb E(\ail \ail^\top)\bs Z \bs W' \bs B^\top  \bs c_i\\  
&= \sum_{i}\, \bs c^\top_i\bs B\bs W'^\top\bs W' \bs B^\top  \bs c_i\\  
&= \tr(\bs W'^\top\bs W') = \scp{\bs W}{({\cl P}_{\cl M}) \bs W}_F,
\end{talign*}
so that $\bb E(P_{\cl M}\cl A^* \cl A P_{\cl M}) = {\cl P}_{\cl M}$ since the equalities above are true for all $\bs W \in \bb S_F^{k\times h}$.

Therefore, to prove~\eqref{eq:concent-AstA} we must find an upper bound for $|M - \bb E M|$ with 
$$
M \coloneqq \tinv{p} \ts\sum_{i,l} \big\langle{\cl P}_{\cl
  M}(\bs Z^\top \ail \bs c^\top_i {\bs B}), {\bs W}
\big\rangle^2_F,
$$
for all $\bs W \in \cl M \cap \bb S^{k \times h}_{F}$. According to the definition of $\cl
                M$ and $\cl
                P_{\cl M}$ in Prop.~\ref{prop:subspaceM}, we proceed
                by first expanding 
		\[
			{\cl P}_{\cl M}  (\bs Z^\top \ail \bs c^\top_i {\bs B}) =  \hat{\bs z} \hat{\bs z}^\top \bs Z^\top \ail \bs c^\top_i {\bs B}(\I_{h} - \hat{{\bs b}}\hat{{\bs b}}^\top) + ({\bs B}_{i,\cdot} \hat{{\bs b}}) \bs Z^\top \ail \hat{{\bs b}}^\top,
		\]
		where we have collected in the last summand ${\bs
                  B}_{i,\cdot} \hat{{\bs b}} = \bs c_i^\top  {\bs B}
                \hat{{\bs b}} = \tfrac{g_i}{\|{\bs g}\|} = \hat{g}_i $. For a
                lighter notation, we define the vectors
		\[
			\bs u_{i,l}\coloneqq \hat{g}_i \bs Z^\top \ail, \quad \bs v_{i,l} \coloneqq (\I_{h} - \hat{{\bs b}}\hat{{\bs b}}^\top) {\bs B}^\top {\bs c}_i \ail^\top \bs Z \hat{\bs z},
		\]
		which let us rewrite ${\cl P}_{\cl M} (\bs Z^\top \ail
                \bs c^\top_i {\bs B})  = {\bs u}_{i,l} \hat{{\bs
                    b}}^\top + \hat{\bs z} {\bs v}^\top_{i,l}$; this
                form will be more convenient for some
                computations. We can then develop 
		\begin{align*}
			M & = 
			\tinv{p} \ts\sum_{i,l} \big\langle  {\bs u}_{i,l} \hat{{\bs b}}^\top + \hat{\bs z}{\bs v}^\top_{i,l}, {\bs W} \big\rangle^2_F \\
			& = {\tinv{p} \ts\sum_{i,l} \big\langle {\bs u}_{i,l} \hat{{\bs b}}^\top, \bs W\big\rangle^2_F} + {\tinv{p} \ts\sum_{i,l} \big\langle \hat{\bs z} {\bs v}^\top_{i,l},\bs W\big\rangle^2_F}
			+{\tfrac{2}{p} \ts\sum_{i,l}  \big\langle  {\bs u}_{i,l} \hat{{\bs b}}^\top,\bs W\big\rangle_F\big\langle \hat{\bs z} {\bs v}^\top_{i,l},\bs W\big\rangle_F} \\
			& = \underbrace{\tinv{p} \ts\sum_{i,l} \big\langle {\bs u}_{i,l}, \bs W \hat{{\bs b}}\big\rangle^2}_{M'} + \underbrace{\tinv{p} \ts\sum_{i,l} \big\langle {\bs v}_{i,l},\bs W^\top \hat{\bs z} \big\rangle^2}_{M''} 
			+\underbrace{\tfrac{2}{p} \ts\sum_{i,l}  \big\langle  {\bs u}_{i,l} ,\bs W \hat{{\bs b}} \big\rangle\big\langle {\bs v}_{i,l},\bs W^\top \hat{\bs z}\big\rangle}_{M'''},
		\end{align*}
		By linearity, we also have that $\bb E M = \bb E M' + \bb E M'' + \bb E M'''$. Hence,   
		$|M - \bb E M| \leq |M'-\bb E M'| + |M''-\bb E M''|
                +|M'''-\bb E M'''|$. Each of these three terms will
                now be carefully bounded. 

\medskip
\ \\
\noindent\emph{(i) Bound on $|M'-\bb E M'|$:} Let us recall that $g_i = 1 + \bs
                          c^\top_i {\bs B}^\perp {\bs b}^\perp = \bs
                          c^\top_i {\bs B} {\bs b} \leq 1+\rho$, with
                          $\rho \in [0,1)$. Then by~\eqref{eq:determ-relations-weighted-cov} we have
			\begin{align*}
			|M'-\bb E M'| & = \big\vert \tinv{p} \ts\sum_{i,l} \tfrac{g^2_i}{\|{\bs g}\|^2} \hat{{\bs b}}^\top \bs W^\top \bs Z^\top (\ail \ail^\top - \I_n) \bs Z \bs W \hat{{\bs b}} \big\vert \\
			& \leq \delta (1+\rho)^2 \|\bs W \hat{{\bs b}}\|^2 \leq \delta (1+\rho)^2 \|  \hat{{\bs b}} \hat{{\bs b}}^\top\|_F \|\bs W^\top \bs W\|_F \leq 4\delta.
			\end{align*}

\medskip
\ \\
\noindent\emph{(ii) Bound on $|M''-\bb E M''|$:} From
                         ~\eqref{eq:determ-relations-nice-concent}
                          the term $M''$ is so that 
			\begin{align*}
			|M''-\bb E M''| & = \big\vert \tinv{p} \ts\sum_{i,l}  (\hat{\bs z}^\top \bs W (\I_{h}-\hat{{\bs b}} \hat{{\bs b}}^\top) {\bs B}^\top  {\bs c}_i)^2\,\hat{\bs z}^\top {\bs Z}^\top (\ail \ail^\top - \I_n) {\bs Z} \hat{\bs z}  \big\vert  \\
			& = \big\vert \hat{\bs z}^\top \bs W
                          (\I_{h}-\hat{\bs b}\hat{\bs b}^\top)\ \big[\tinv{p}
                          \ts\sum_{i,l}  (\bs B^\top  {\bs c}_i {\bs
                          c}^\top_i \bs B)\,\hat{\bs z}^\top {\bs
                          Z}^\top (\ail \ail^\top - \I_n) {\bs Z}
                          \hat{\bs z}\big]\ (\I_{h}-\hat{\bs b}\hat{\bs b}^\top) \bs
                          W^\top \hat{\bs z} \big\vert  \\
&\leq \|(\I_{h}-\hat{\bs b}\hat{\bs b}^\top) \bs
                          W^\top \hat{\bs z}\|^2\,\big\|\tinv{p}
                          \ts\sum_{i,l}  (\bs B^\top  {\bs c}_i {\bs
                          c}^\top_i \bs B)\,\hat{\bs z}^\top {\bs
                          Z}^\top (\ail \ail^\top - \I_n) {\bs Z}
                          \hat{\bs z}\big\| \\
&\leq \|\bs
                          W^\top \hat{\bs z}\|^2\,\delta\ \leq\
  \delta.
		\end{align*}
			
\medskip
\ \\
\noindent\emph{(iii) Bound on $|M'''-\bb E M'''|$:} As for the mixed term $M'''$,~\ie
			\[
			M''' = \tfrac{2}{p} \ts\sum_{i,l} \hat{{\bs b}}^\top \bs W^\top {\bs u}_{i,l} {\bs v}_{i,l}^\top \bs W^\top \hat{\bs z},
			\]
			this requires a more involved bounding technique. We begin by noting that 
			\begin{equation}
				\label{eq:E-uil-vil}
				\bb E \bs u_{i,l} \bs v^\top_{i,l} = \hat{g}_i  \hat{\bs z} {\bs c}_i^\top {\bs B}(\I_{h} - \hat{{\bs b}}\hat{{\bs b}}^\top),
			\end{equation}
			and that
			\[
\ts			\big\Vert\bb E \bs u_{i,l} \bs v^\top_{i,l}\big\Vert = \big\Vert \hat{g}_i \hat{\bs z} {\bs c}_i^\top {\bs B}(\I_{h} - \hat{{\bs b}}\hat{{\bs b}}^\top)\big\Vert \leq \tfrac{\|\bs g\|_\infty}{\|\bs g\|} \big\Vert {\bs c}_i^\top {\bs B}\big\Vert \ \leq  \frac{1+\rho}{\sqrt m} \upmu_{\max} \sqrt{\frac{h}{m}} \lesssim \upmu_{\max} \frac{\sqrt h}{m}.
			\]
			
			Given these facts about $\bb E\bs u_{i,l} \bs v^\top_{i,l}$, we can compute the expectation of $M'''$, \ie
			\begin{align*}
			\bb E M''' & = 2 \ts\sum_i \hat{g}_i \hat{{\bs b}}^\top \bs W^\top \hat{\bs z}(\hat{\bs z}^\top \bs W(\I_{h}- \hat{{\bs b}}\hat{{\bs b}}^\top) {\bs B}^\top {\bs c}_i) \\
			& = 2 \ts\sum_i {\bs c^\top_i {\bs B} \hat{{\bs b}}} \hat{{\bs b}}^\top \bs W^\top \hat{\bs z} \hat{\bs z}^\top \bs W(\I_{h}- \hat{{\bs b}}\hat{{\bs b}}^\top) {\bs B}^\top {\bs c}_i \\
			& = 2 \tr{\big( \hat{{\bs b}}^\top {\bs W}^\top \hat{\bs z} \hat{\bs z}^\top {\bs W} \underbrace{ (\I_{h} - \hat{{\bs b}} \hat{{\bs b}}^\top) \hat{{\bs b}} }_{\vzer_{h}} \big)} = 0
			\end{align*}
			Thus, we have $|M'''-\bb E M'''| = |M'''|$. Using the previous definitions, we also have
                        \begin{multline}
                          \label{eq:tmp-Mtierce}
			|M'''| = \big|\tfrac{2}{p} \hat{{\bs b}}^\top
                        \bs W^\top \big(\ts\sum_{i,l}\bs u_{i,l} \bs
                        v_{i,l}^\top\big)\bs W^\top \hat{\bs z}\big| 
                        \leq 2 \| \bs W \hat{{\bs b}}\| \| \bs
                        W^\top \hat{\bs z}\|\,\tfrac{1}{p}
                        \|\ts\sum_{i,l} {\bs u}_{i,l} {\bs
                          v}_{i,l}^\top \|\  \leq\ \tfrac{2}{p}
                        \|\ts\sum_{i,l} {\bs u}_{i,l} {\bs
                          v}_{i,l}^\top \|
                      \end{multline}
                    	Note how the last inequality highlights the norm of a sum of rectangular random matrices ${\bs
                          J}_{i,l} \coloneqq \bs u_{i,l} \bs
                        v^\top_{i,l}$, \ie $\bs J \coloneqq
                        \ts\sum_{i,l} {\bs J}_{i,l}$, whose
                        expectation $\bb E \bs J = \vzer_{n \times m}$, and whose summands are non-centred with $\bb E \bs J_{i,l}$ given in \eqref{eq:E-uil-vil}.

                        We can thus apply matrix Bernstein's inequality in the form of Prop.~\ref{prop:mbi} (with $q=mp$) by first computing the upper bound $T>\max_{i,l} \|\|\bs J_{i,l} - \bb E \bs J_{i,l}\|\|_{\psi_1}$ as well as the matrix variance $v$ given in~\eqref{eq:matrix-var-T}. This task is eased by our conditioning
                        over $\cl E$ for which all the relations in
                       ~\eqref{eq:determ-relations} hold. 

                        Firstly, the upper bound $T$ is obtained by noting that
                        $2\|\bs J_{i,l}\| = 2\|\bs u_{i,l}\| \|\bs
                        v_{i,l}\| \leq \|\bs u_{i,l}\|^2 + \|\bs
                        v_{i,l}\|^2$, so that $
                        \|\|\bs J_{i,l}\|\|_{\psi_1} \leq \|\|\bs u_{i,l}\|\|^2_{\psi_2} + \|\|\bs
                        v_{i,l}\|\|^2_{\psi_2}$, since $\|X\|^2_{\psi_2} \leq
                          \|X^2\|_{\psi_1} \leq 2 \|X\|^2_{\psi_2}$ for any \rv $X$ \cite[Lemma 5.14]{Vershynin2012a}. However 
                        \begin{align*}
                        \ts \|\|\bs u_{i,l}\|\|^2_{\psi_2} \leq
                        \|\|\bs u_{i,l}\|^2\|_{\psi_1}&\ts = \hat{g}_i^2
                        \|\|\bs Z^\top \ail\|^2\|_{\psi_1} \leq
                        \hat{g}_i^2 \sum_{j=1}^k \|(\bs Z_j^\top
                        \ail)^2\|_{\psi_1}\\
&\ts \leq
                        2 \hat{g}_i^2 \sum_{j=1}^k \|\bs Z_j^\top
                        \ail\|^2_{\psi_2} \lesssim  k
  \tfrac{(1+\rho)^2}{m} \lesssim \tfrac{k}{m},  
                        \end{align*}
                        where we used the rotational invariance of
                        $\ail \sim_\iid \bs a$. Moreover, since $\|\bs v_{i,l}\| = |{\hat{\bs z}^\top \bs
                          Z^\top \ail}| \|(\I_{h} - \hat{{\bs
                            b}}\hat{{\bs b}}^\top) {\bs B}^\top \bs
                        c_i\|$, we find $\|\|\bs v_{i,l}\|\|^2_{\psi_2} \ts \leq \||\hat{\bs
                                                              x}^\top \ail|\|^2_{\psi_2}\|{\bs B}^\top \bs
                        c_i\|^2 \lesssim \upmu^2_{\max} \tfrac{h}{m}$. 
                        Consequently, since we showed above that 
$\|\bb E \bs J_{i,l}\|_{\psi_1} = \|\bb E \bs u_{i,l}\bs v_{i,l}^\top\| \lesssim  \upmu_{\max} \frac{\sqrt h}{m}$, there exists a $T > 0$ such that $\|\|\bs J_{i,l} - \bb E\bs J_{i,l}\|\|_{\psi_1} \leq T$ for all $i\in[m]$ and $l\in [p]$, with $T \lesssim \tfrac{k + \upmu_{\max}^2 h}{m}$.

			
                      Secondly, we have to bound the matrix variance $v$ defined in~\eqref{eq:matrix-var-T} associated to the matrix sequence $\{\bs J_{i,l}\}$. Let us first compute:
			\begin{align}
			\bb E(\bs J_{i,l} \bs J^\top_{i,l}) & = \bb E( \|\bs v_{i,l}\| \bs u_{i,l} \bs u^\top_{i,l}) \nonumber \\ 
			& = \hat{g}^2_i \bs c^\top_i {\bs B} (\I_{h} - \hat{{\bs b}}\hat{{\bs b}}^\top)  {\bs B}^\top \bs c_i \, {\bb E}[({\hat{\bs z}^\top \bs Z^\top \ail})^2 \bs Z^\top \ail \ail^\top \bs Z] \label{eq:mv1}\\
			\bb E(\bs J_{i,l}^\top \bs J_{i,l}) & =  \bb E(\|\bs u_{i,l}\| \bs v_{i,l} \bs v^\top_{i,l})\nonumber \\ 
                                                            & = \hat{g}^2_i (\I_{h} - \hat{{\bs b}}\hat{{\bs b}}^\top) {\bs B}^\top \bs c_i   \bs c^\top_i {\bs B} (\I_{h} - \hat{{\bs b}}\hat{{\bs b}}^\top) \, \bb E[ ({\hat{\bs z}^\top \bs Z^\top \ail})^2 \tr(\bs Z^\top \ail \ail^\top \bs Z)] \label{eq:mv2}\\
			(\bb E \bs J_{i,l}) (\bb E \bs J_{i,l})^\top & = \hat{g}^2_i \hat{\bs z} \bs c^\top_i {\bs B} \big(\I_{h}-\hat{{\bs b}}\hat{{\bs b}}^\top\big) {\bs B}^\top \bs c_i  \hat{\bs z}^\top \label{eq:mv3}\\
			(\bb E \bs J_{i,l})^\top (\bb E \bs J_{i,l}) & = \hat{g}^2_i (\I_{h} - \hat{{\bs b}}\hat{{\bs b}}^\top) {\bs B}^\top \bs c_i   \bs c^\top_i {\bs B} (\I_{h} - \hat{{\bs b}}\hat{{\bs b}}^\top), \label{eq:mv4} 
			\end{align}
                        which also shows that 
			\begin{align} \bb E(\bs J_{i,l} \bs
J^\top_{i,l}) & = \big(\hat{\bs z}^\top (\bb E \bs J_{i,l}) (\bb E \bs
J_{i,l})^\top \hat{\bs z}\big)\ {\bb E}[({\hat{\bs z}^\top \bs Z^\top
\ail})^2 \bs Z^\top \ail \ail^\top \bs Z] \label{eq:mv1-bis}\\
\bb E(\bs J_{i,l}^\top \bs J_{i,l}) & = \big((\bb E \bs J_{i,l})^\top (\bb
E \bs J_{i,l})\big)\ \bb E[ ({\hat{\bs z}^\top \bs Z^\top \ail})^2
\tr(\bs Z^\top \ail \ail^\top \bs Z)] \label{eq:mv2-bis}.
			\end{align}

                        The variance $v$ is the maximum between $\big\Vert\ts\sum_{i,l}{\bb E}(\bs J_{i,l} - \bb E \bs J_{i,l})(\bs J_{i,l} - \bb E \bs J_{i,l})^\top\big\Vert$ and $\big\Vert\ts\sum_{i,l}{\bb E}(\bs J_{i,l} - \bb E \bs J_{i,l})^\top(\bs J_{i,l} - \bb E \bs J_{i,l})\big\Vert$. Concerning the first term, we have
			\begin{align*}
			\big\Vert\ts\sum_{i,l}{\bb E}(\bs J_{i,l} - \bb E \bs J_{i,l})(\bs J_{i,l} - \bb E \bs J_{i,l})^\top\big\Vert &  \leq \big\Vert\ts\sum_{i,l}{\bb E}(\bs J_{i,l}  \bs J_{i,l}^\top)\big\Vert + \big\Vert\ts\sum_{i,l}(\bb E \bs J_{i,l})(\bb E \bs J_{i,l})^\top\big\Vert. 
			\end{align*}
			A bound is obtained by noting that, from~\eqref{eq:mv3}, 
			\begin{align*}
			\big\Vert\ts\sum_{i,l}(\bb E \bs J_{i,l})(\bb E \bs J_{i,l})^\top\big\Vert & \leq \tfrac{p}{m} (1+\rho)^2 \big\|\tr\big({\bs B} \big(\I_{h}-\hat{{\bs b}}\hat{{\bs b}}^\top\big) {\bs B}^\top\big) \hat{\bs z} \hat{\bs z}^\top \big\| \leq  p h \tfrac{(1+\rho)^2}{m} \lesssim \tfrac{ph}{m}                                                                                                            
			\end{align*}
			since $0 \leq \tr\big({\bs B} \big(\I_{h}-\hat{{\bs b}}\hat{{\bs b}}^\top\big) {\bs B}^\top\big) =  \tr\big(\I_{h}-\hat{\bs b}\hat{\bs b}^\top\big) = h - 1 \leq h$, and, from~\eqref{eq:mv1}, 
                          \begin{align*}
                           \ts \big\Vert\ts\sum_{i,l}{\bb E}(\bs J_{i,l}  \bs
                        J_{i,l}^\top)\big\Vert&\ts = \big\Vert\ts\sum_{i,l} \hat{g}^2_i \bs c^\top_i {\bs B} (\I_{h} - \hat{{\bs b}}\hat{{\bs b}}^\top)  {\bs B}^\top \bs c_i \, \bs Z^\top{\bb E}[({\hat{\bs x}^\top \ail})^2 \ail \ail^\top]\,\bs Z\big\Vert\\
&\ts = \|\bs Z^\top{\bb E}[(\hat{\bs x}^\top \bs a)^2 \bs a \bs a^\top]\,\bs Z\|\ \big(\sum_{i,l} \hat{g}^2_i \bs c^\top_i {\bs B} (\I_{h} - \hat{{\bs b}}\hat{{\bs b}}^\top)  {\bs B}^\top \bs c_i\big). 
                          \end{align*}
                          
                          We have already recalled in~\eqref{eq:expect_axsq_aa} that ${\bb E}[(\hat{\bs x}^\top \bs a)^2 \bs a \bs a^\top] = 2\hat{\bs x}\hat{\bs x}^\top + \bs I_{n} + (\bb E X^4 - 3)\diag(\hat{\bs x})^2$. Thus, we have $\|\bs {\bb E}[(\hat{\bs x}^\top \bs a)^2 \bs a \bs a^\top]\| \leq 3 + |\bb E X^4 - 3| \lesssim \alpha^4$ and 
\begin{align*}
                           \ts \big\Vert\ts\sum_{i,l}{\bb E}(\bs J_{i,l}  \bs
                        J_{i,l}^\top)\big\Vert&\ts \lesssim \sum_{i,l} \tfrac{g^2_i}{\|\bs g\|^2} \bs c^\top_i {\bs B} (\I_{h} - \hat{{\bs b}}\hat{{\bs b}}^\top) \bs B^\top \bs c_i \lesssim \tfrac{ph}{m}, 
                          \end{align*}
where we got the last result from the same developments bounding $\|\ts\sum_{i,l}(\bb E \bs J_{i,l})(\bb E \bs J_{i,l})^\top\|$. Therefore, 
$\|\sum_{i,l}{\bb E}(\bs J_{i,l} - \bb E \bs J_{i,l})(\bs J_{i,l} - \bb E \bs J_{i,l})^\top\|\lesssim \tfrac{ph}{m}$.

			As for the second term in~\eqref{eq:matrix-var-T}, we proceed similarly and see that 
			\begin{align*}
			\big\Vert\ts\sum_{i,l}{\bb E}(\bs J_{i,l} - \bb E \bs J_{i,l})^\top(\bs J_{i,l} - \bb E \bs J_{i,l})\big\Vert \leq \big\Vert\ts\sum_{i,l}{\bb E}(\bs J_{i,l}^\top  \bs J_{i,l})\big\Vert + \big\Vert\ts\sum_{i,l}(\bb E \bs J_{i,l})^\top(\bb E \bs J_{i,l})\big\Vert. 
			\end{align*}
			Note then that, from~\eqref{eq:mv4} and $\sum_{i}\bs c_i   \bs c^\top_i = \bs I_m$, 
                        \begin{equation}
                          \label{eq:temp-res}
                        \big\Vert\ts\sum_{i,l}(\bb E \bs J_{i,l})^\top(\bb E \bs J_{i,l})\big\Vert \leq p \tfrac{(1+\rho)^2}{m} \| (\I_{h} - \hat{{\bs b}}\hat{{\bs b}}^\top) {\bs B}^\top {\bs B} (\I_{h} - \hat{{\bs b}}\hat{{\bs b}}^\top)\| \leq p \tfrac{(1+\rho)^2}{m} \big\Vert \I_{h} - \hat{{\bs b}}\hat{{\bs b}}^\top \big\Vert  \lesssim \tfrac{p}{m}.   
                        \end{equation}
                        Moreover, from~\eqref{eq:mv2} and the fact that $\bs 0 \preccurlyeq \bb E[ ({\hat{\bs x}^\top \bs a})^2 \bs a \bs a^\top ] \preccurlyeq C \alpha^4 \bs I_n$ some~$C > 0$ as established above from the bound on $\|\bb E[ ({\hat{\bs x}^\top \bs a})^2 \bs a \bs a^\top ]\|$, 
                        \begin{align*}
                        \ts\|\sum_{i,l}{\bb E}(\bs J_{i,l}^\top  \bs J_{i,l})\big\|&\ts = \|\sum_{i,l} \hat{g}^2_i (\I_{h} - \hat{{\bs b}}\hat{{\bs b}}^\top) {\bs B}^\top \bs c_i   \bs c^\top_i {\bs B} (\I_{h} - \hat{{\bs b}}\hat{{\bs b}}^\top) \, \bb E[ ({\hat{\bs z}^\top \bs Z^\top \ail})^2 \tr(\bs Z^\top \ail \ail^\top \bs Z)]\|\\
						&\ts = \|\sum_{i,l} \hat{g}^2_i (\I_{h} - \hat{{\bs b}}\hat{{\bs b}}^\top) {\bs B}^\top \bs c_i   \bs c^\top_i {\bs B} (\I_{h} - \hat{{\bs b}}\hat{{\bs b}}^\top)\| \, \bb E[ ({\hat{\bs x}^\top \bs a})^2 \tr(\bs Z^\top \bs a \bs a^\top \bs Z)]\\
						&\ts = \|\sum_{i,l} \hat{g}^2_i (\I_{h} - \hat{{\bs b}}\hat{{\bs b}}^\top) {\bs B}^\top \bs c_i   \bs c^\top_i {\bs B} (\I_{h} - \hat{{\bs b}}\hat{{\bs b}}^\top)\| \, \tr(\bs Z^\top \bb E[ ({\hat{\bs x}^\top \bs a})^2 \bs a \bs a^\top ]\bs Z)\\
						&\ts \leq p \tfrac{(1+\rho^2)}{m}\,\|(\I_{h} - \hat{{\bs b}}\hat{{\bs b}}^\top)\| \, \tr(\bs Z^\top \bb E[ ({\hat{\bs x}^\top \bs a})^2 \bs a \bs a^\top ]\bs Z)\\
						&\ts \lesssim \alpha^4 \tfrac{p}{m} \tr(\bs Z^\top \bs Z) \lesssim \tfrac{pk}{m},
                        \end{align*}
where we used the fact that $0\leq \tr \bs M \leq \tr \bs M'$ if $\bs 0 \preccurlyeq \bs M \preccurlyeq \bs M'$, for two matrices $\bs M$, $\bs M'$.

We can thus conclude that $\big\Vert\ts\sum_{i,l}{\bb E}[(\bs J_{i,l} - \bb E \bs J_{i,l})^\top(\bs J_{i,l} - \bb E \bs J_{i,l})]\big\Vert \lesssim \frac{pk}{m}$. Combining this with the results above yields $v \lesssim \frac{p \max(k,h)}{m} \leq \frac{p (k + h)}{m}$.   

As matrix Bernstein's inequality contains a term in $\log\big(\tfrac{T^2q}{\sqrt v}\big)$
(with $q=mp$ here) in the probability~\eqref{eq:mbi-prob} that must be
upper bounded, we also need to lower bound $v$. This is done by
realising that, by definition of~$v$, using~\eqref{eq:mv1},
\eqref{eq:mv1-bis} and~\eqref{eq:mv3}, there exists a constant $c>0$
such that
\begin{align*}
v&\geq\ts \big\Vert \sum_{i,l}{\bb E}[(\bs J_{i,l} - \bb E \bs J_{i,l})(\bs J_{i,l} - \bb E \bs J_{i,l})^\top]\big\Vert\\
&=\ts \big\Vert \sum_{i,l}{\bb E}[\bs J_{i,l}\bs J_{i,l}^\top] -  (\bb E \bs J_{i,l})(\bb E \bs J_{i,l})^\top]\big\Vert\\
&=\ts \big\Vert \sum_{i,l}\big(\hat{\bs z}^\top (\bb E \bs J_{i,l}) (\bb E \bs
J_{i,l})^\top \hat{\bs z}\big)\ {\bb E}[({\hat{\bs z}^\top \bs Z^\top
\ail})^2 \bs Z^\top \ail \ail^\top \bs Z] -  (\bb E \bs J_{i,l})(\bb E \bs J_{i,l})^\top]\big\Vert\\
&=\ts \big\Vert \sum_{i,l}\hat{g}^2_i \big(\bs c^\top_i {\bs B} \big(\I_{h}-\hat{{\bs b}}\hat{{\bs b}}^\top\big) {\bs B}^\top \bs c_i \big)\ \bs Z^\top \big({\bb E}[({\hat{\bs x}^\top \bs a})^2 \bs a \bs a^\top] -  \hat{\bs x} \hat{\bs x}^\top \big) \bs Z \big\Vert\\
&=\ts p\,\big(\sum_{i}\hat{g}^2_i \,\bs c^\top_i {\bs B} \big(\I_{h}-\hat{{\bs b}}\hat{{\bs b}}^\top\big) {\bs B}^\top \bs c_i\big)\ \big\Vert\bs Z^\top \big({\bb E}[({\hat{\bs x}^\top \bs a})^2 \bs a \bs a^\top] -  \hat{\bs x} \hat{\bs x}^\top \big) \bs Z \big\Vert\\
&\geq \ts\frac{(1-\rho)^2 p}{m} \,{\rm tr}\big({\bs B} \big(\I_{h}-\hat{{\bs b}}\hat{{\bs b}}^\top\big) {\bs B}^\top\big)\ \big\Vert\bs Z^\top \big({\bb E}[({\hat{\bs x}^\top \bs a})^2 \bs a \bs a^\top] -  \hat{\bs x} \hat{\bs x}^\top \big) \bs Z \big\Vert\\
&\geq \ts \frac{(1-\rho)^2p (h-1)}{m}\,\big\Vert\bs Z^\top \big({\bb E}[({\hat{\bs x}^\top \bs a})^2 \bs a \bs a^\top] -  \hat{\bs x} \hat{\bs x}^\top \big) \bs Z \big\Vert\\
&= \ts \frac{(1-\rho)^2p (h-1)}{m}\,\big\Vert\bs Z^\top \big({\bb E}[({\hat{\bs x}^\top \bs a})^2 \bs a \bs a^\top] -  \hat{\bs x} \hat{\bs x}^\top \big) \bs Z \big\Vert \|\hat{\bs z}\|^2\\
&\geq \ts \frac{(1-\rho)^2p (h-1) }{m}\,(\hat{\bs x}^\top \big({\bb E}[({\hat{\bs x}^\top \bs a})^2 \bs a \bs a^\top] -  \hat{\bs x} \hat{\bs x}^\top \big) \hat{\bs x}\\
&= \ts \frac{(1-\rho)^2p (h-1)}{m}\,({\bb E}({\hat{\bs x}^\top \bs a})^4 -
  1) > \frac{c(1-\rho)^2p h}{mn},
\end{align*}
where we used~\eqref{eq:low-bound-fourth-mom}, which holds under the Bernoulli Restriction Hypothesis in Sec.~\ref{def:model}.

Consequently, using this last result and the bound on $T$, we find the crude bound
$$
\ts \frac{T^2mp}{v} \lesssim mp \frac{mn}{c(1-\rho)^2ph} \tfrac{k^2 + \upmu_{\max}^4 h^2}{m^2} \leq 
mp \frac{mn}{c(1-\rho)^2ph} \tfrac{n^2 + m^2}{m^2} \lesssim (\frac{mn}{1-\rho})^3.
$$

Using matrix Bernstein's inequality in Prop.~\ref{prop:mbi} with the bounds on $T$ and $v$ then yields
\begin{align*}
\ts \Pro \big[\tfrac{1}{p} \big\Vert\ts\sum_{i,l} {\bs J}_{i,l}\big\Vert > \delta]&\ts \leq (k+h) \exp\big(-\frac{c\,\delta^2p^2}{v\, +\, T\log(\frac{T^2mp}{v})\delta }\big)\\
&\ts \leq (k+h) \exp\big(-\frac{c\,\delta^2 mp}{(k +h)\, +\, (k + \upmu_{\max}^2 h)\log(\frac{mn}{c(1-\rho)})\delta }\big)\\
&\ts \leq (k+h) \exp\big(-\frac{c\,\delta^2 mp}{\max(k,\upmu_{\max}^2 h)\log(\frac{mn}{c(1-\rho)})}\big),
\end{align*}
for some $c,c' > 0$, where we used $\delta \in (0,1)$. Consequently, 
$$\Pro[|M''| > 2\delta ] \leq \Pro \big[\tfrac{1}{p} \big\Vert\ts\sum_{i,l} {\bs J}_{i,l}\big\Vert > \delta] \leq (mp)^{-t},
$$ provided 
\begin{equation}
  \label{eq:new-req-on-p}
\ts mp \gtrsim \delta^{-2} t \max(k, \upmu_{\max}^2 h) \log(mp)\log(\frac{mn}{1-\rho}),  
\end{equation}
for some $t \geq 1$.

        	
				\bigskip

                Finally, gathering all the previous bounds, we find by union bound\footnote{Since the probability that~\eqref{eq:determ-relations} do not hold jointly, \ie of the event $\cl E^{\rm c}$ is also bounded by $C' (mp)^{-t}$ for some $C' > 0$, the probability bound of the whole Prop.~\ref{prop:standalone-ahmed} without conditioning is unaltered. In detail, if its occurrence corresponds to an event ${\cl E}'$, then ${\bb P}[{\cl E}'] \geq 1-{\bb P}[({\cl E}')^{\rm c}] \geq 1 - {\bb P}[({\cl E}')^{\rm c}|{\cl E}] - {\bb P}[({\cl E})^{\rm c}] \geq 1 - C (mp)^{-t}$ for another $C > 0$, where ${\bb P}[({\cl E}')^{\rm c}|{\cl E}]$ follows from the requirement \eqref{eq:new-req-on-p}. } that, with probability exceeding $1-C(mp)^{-t}$ for some $C>0$,
                $$  
                |M - \bb E M| \leq |M'-\bb E M'| + |M''-\bb E M''| +|M'''-\bb E M'''| \leq 7\delta 
                $$
				provided all the requirements stated at the beginning of the proof in~\eqref{eq:glob-requirements} and the additional we found in~\eqref{eq:new-req-on-p} hold jointly, \ie if 
				\begin{align*}
					\ts n&\ts \gtrsim t \log(mp),\\
					mp&\ts \gtrsim\ t \delta^{-2} \upmu_{\max}^2\,h \log(mp) \log(mn),\\
					\ts mp&\ts \gtrsim \delta^{-2}(k+h) \log(\frac{n}{\delta}),\\
					\ts mp&\ts \gtrsim \delta^{-2} t \max(k, \upmu_{\max}^2 h) \log(mp)\log(\frac{mn}{1-\rho}),  
				\end{align*}
				or more simply if
				\begin{align*}
					\ts n&\ts \gtrsim t \log(mp),\\
					\ts mp&\ts \gtrsim \delta^{-2} \max(k,\upmu_{\max}^2 h)\, \max\big(t \log(mp) \log(\frac{mn}{1-\rho}), \log(\frac{n}{\delta})\big).            
				\end{align*}
%
	\end{proof}

    We conclude this section by reporting a basic fact of convex analysis that will be used later on.
	\begin{proposition}[Projections are contractions (adapted from {\cite[Thm.~1.2.1]{Schneider2013}})]
		\label{lemma-ccs}
		Let $\cl C$ be a non-empty, closed, and convex set in $\R^q$. Let
		$$
                {\cl P}_{\cl
                  C} \bs u \coloneqq \arg\min_{\bs u' \in \cl C} \|\bs u - \bs u'\|
                $$ be the orthogonal projection on $\cl C$. Then
		for all $\u,\x \in \R^q$,
		\[
		\|{\cl P}_{\cl C} \u -{\cl P}_{\cl C} \x\|\leq \|\u-\x\|,
		\]
		\ie it is a {\em contraction}.
	\end{proposition}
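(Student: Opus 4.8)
The plan is to prove this classical fact through the variational (or ``obtuse angle'') characterisation of the metric projection onto a closed convex set, which is the standard route and requires no smoothness assumption on $\cl C$. First I would recall why ${\cl P}_{\cl C}$ is well defined: since $\cl C$ is non-empty and closed, the map $\bs u' \mapsto \|\u - \bs u'\|^2$ is continuous, strictly convex and coercive on $\cl C$, so it attains its infimum at a unique point, justifying the notation ${\cl P}_{\cl C}\u$ for a single well-defined vector.

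The key lemma I would establish is the variational inequality: for any $\u \in \R^q$, writing $\bs p \coloneqq {\cl P}_{\cl C}\u$, one has $\scp{\u - \bs p}{\bs c - \bs p} \leq 0$ for every $\bs c \in \cl C$. To see this, fix $\bs c \in \cl C$; by convexity the segment $\bs p + t(\bs c - \bs p)$ lies in $\cl C$ for all $t \in [0,1]$, so the minimality of $\bs p$ gives $\|\u - \bs p\|^2 \leq \|\u - \bs p - t(\bs c - \bs p)\|^2$. Expanding the right-hand side yields $0 \leq -2t\,\scp{\u - \bs p}{\bs c - \bs p} + t^2\|\bs c - \bs p\|^2$; dividing by $t>0$ and letting $t \to 0^+$ produces the claimed inequality.

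With this in hand the contraction property follows by a short algebraic manipulation. Setting $\bs p \coloneqq {\cl P}_{\cl C}\u$ and $\bs q \coloneqq {\cl P}_{\cl C}\x$, I would apply the variational inequality for $\u$ with the admissible point $\bs c = \bs q$, and for $\x$ with $\bs c = \bs p$, obtaining $\scp{\u - \bs p}{\bs q - \bs p} \leq 0$ and $\scp{\x - \bs q}{\bs p - \bs q} \leq 0$. Writing $\bs d \coloneqq \bs p - \bs q$ and rewriting these as $\scp{\u - \bs p}{\bs d} \geq 0$ and $\scp{\bs q - \x}{\bs d} \geq 0$, their sum reads $\scp{\u - \x - \bs d}{\bs d} \geq 0$, i.e. $\scp{\u - \x}{\bs d} \geq \|\bs d\|^2$. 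A single application of the Cauchy--Schwarz inequality then gives $\|\bs d\|^2 \leq \scp{\u-\x}{\bs d} \leq \|\u - \x\|\,\|\bs d\|$, and dividing by $\|\bs d\|$ (the case $\bs d = \vzer_q$ being trivial) yields $\|{\cl P}_{\cl C}\u - {\cl P}_{\cl C}\x\| \leq \|\u - \x\|$, as required.

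I do not anticipate a genuine obstacle, since the argument is entirely elementary; the only point deserving care is the derivation of the variational inequality, where one must invoke convexity of $\cl C$ to guarantee that the perturbed point $\bs p + t(\bs c - \bs p)$ remains feasible before using minimality of $\bs p$ — it is precisely this step that exploits convexity and closedness, and everything downstream is routine. Since the statement is in any case quoted from \cite[Thm.~1.2.1]{Schneider2013}, an economical alternative would simply be to cite it and omit the self-contained derivation above.
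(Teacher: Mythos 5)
Your proof is correct. Note that the paper itself gives no proof of this proposition at all: it is stated as a "basic fact" and imported directly by the citation to \cite[Thm.~1.2.1]{Schneider2013}. The variational-inequality argument you develop — well-posedness of ${\cl P}_{\cl C}$ via closedness and strict convexity, the obtuse-angle characterisation $\langle \u - {\cl P}_{\cl C}\u,\, \bs c - {\cl P}_{\cl C}\u\rangle \leq 0$ for all $\bs c \in \cl C$, applying it twice with the roles of the two projected points swapped, summing, and finishing with Cauchy--Schwarz — is precisely the classical proof underlying that citation, so your self-contained derivation is valid and consistent with what the paper relies on; your closing remark that one could equally just cite the reference is exactly the route the paper takes.
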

	
	\section{Proofs on the Geometry of Non-Convex Blind Calibration}
	\label{sec:appB}
	This section provides the proofs for some simple facts appearing in Sec.~\ref{sec:geom-blind}. They are here presented in absence of subspace priors for the sake of simplicity. Similar arguments hold effortlessly with known subspace priors. 
	
	Let us first provide a proof of the bounds we have given on the relation between the naturally-induced pre-metric $\Delta_F$ and the one we actually used in the rest of the paper, $\Delta$.
	
	\begin{proof}[Proof of~\eqref{eq:boundsondelta}]	
		Firstly, note that $\|\bxi{\bs \gamma}^\top - \x{\bs g}^\top\|^2_F = \|(\bxi-\x){\bs \gamma}^\top + \x({\bs \gamma}-{\bs g})^\top\|^2_F = \|{\bs \gamma}\|^2 \|\bxi-\x\|^2 + \|\x\|^2\|{\bs \gamma}-{\bs g}\|^2 - 2(\bxi-\x)^\top \x ({\bs \gamma}-{\bs g})^\top {\bs \gamma}$. If we let $(\bxi,{\bs \gamma}), (\x,{\bs g}) \in {{\cl D}}_{\kappa, \rho}$ we have
		\begin{align*}
		\Delta_F(\bxi,{\bs \gamma}) = \tinv{m}\|\bxi{\bs \gamma}^\top - \x{\bs g}^\top\|^2_F & \leq (1+\rho^2) \|\bxi-\x\|^2 + \tfrac{\|\x\|^2}{m}\|{\bs \gamma}-{\bs g}\|^2 - \tfrac{2}{m}(\bxi-\x)^\top \x ({\bs \gamma}-{\bs g})^\top \eps      \\
		& \leq  (1+\rho^2) \|\bxi-\x\|^2 + \tfrac{\|\x\|^2}{m}\|{\bs \gamma}-{\bs g}\|^2 + \tfrac{2}{\sqrt{m}} \rho \|\bxi-\x\|\|\x\|\|{\bs \gamma}-{\bs g}\|\\
		& \leq (1+\rho+\rho^2) \|\bxi-\x\|^2 + (1+\rho) \tfrac{\|\x\|^2}{m}\|{\bs \gamma}-{\bs g}\|^2 \leq (1+2\rho) \Delta(\bxi,{\bs \gamma}),                
		\end{align*}
		where we used ${\bs \gamma}-{\bs g} = \eps-{\bs e}$, Cauchy-Schwarz and other simple norm inequalities, \ie $\rho \in [0, 1)$, $\|{\bs \gamma}\|\leq \sqrt{m(1+\rho^2)}$, $\|\eps\|\leq\sqrt{m}\rho$, together with the fact that 
		\[
			\tfrac{2}{\sqrt{m}} \rho \|\bxi-\x\|\|\x\|\|{\bs \gamma}-{\bs g}\|\leq \rho \|\bxi-\x\|^2 + \rho \tfrac{\|\x\|^2}{m}\|{\bs \gamma}-{\bs g}\|^2
		\]
		since for any $a,b \in \R, 2 ab \leq a^2 + b^2$.  
		Similarly, we have that
		\begin{align*}
		\Delta_F(\bxi,{\bs \gamma}) & \geq \tfrac{1}{m} m \|\bxi-\x\|^2 + \tfrac{\|\x\|^2}{m}\|{\bs \gamma}-{\bs g}\|^2 - \tfrac{2}{\sqrt{m}} \rho \|\bxi-\x\|\|\x\|\|{\bs \gamma}-{\bs g}\|\\
		& \geq (1-\rho) \|\bxi-\x\|^2 + (1-\rho) \tfrac{\|\x\|^2}{m}\|{\bs \gamma}-{\bs g}\|^2 = (1-\rho) \Delta(\bxi,{\bs \gamma}).                              
		\end{align*}%
	\end{proof}
	
	Then, we start from the definitions in Table~\ref{tab:quants} to show some simple results on the asymptotic geometry of~\eqref{eq:bcp} in absence of priors. In fact, the use of known subspace priors does not change these results, that are asymptotic and hence do not take advantage of the sample complexity reduction provided by assuming such structures. Let us also recall the definition of $\cl V \coloneqq \R^n \times \onep$.
	
	Hereafter, we use a few times the \emph{Hermitian dilation} of matrices
	that is defined by ${\bs H}(\bs A) \coloneqq \big(
	\begin{smallmatrix}
	\bs 0&\bs A\\\bs A^\top&\bs 0  
	\end{smallmatrix}\big)$ for any matrix $\bs A$, with the important fact
	that $\|{\bs H}(\bs A)\|=\|\bs A\|$~\cite[Sec. 2.1.17]{Tropp2015}.
	\begin{proof}[Proof of Prop.~\ref{prop:expstatpoint}]
		Assuming $\x \neq \vzer_n$ and $\bs g \neq \vzer_m$, by setting
		$\Ex\bs\nabla^\perp f(\bxi,{\bs \gamma}) = \vzer_{m+n}$, \ie from the
		expressions found Table~\ref{tab:quants},
		$$\|\bs
		\gamma\|^2 \bs \xi - (\bs \gamma^\top \bs g)\bs x\ =\ \bs 0,\quad
		\text{and}\quad \|\bs \xi\|^2 \bs \varepsilon - (\bs \xi^\top\bs x) \bs
		e\ =\ \bs 0,
		$$
		we easily see that the stationary points in expectation are  
		\begin{align}
		\big\{(\bxi,{\bs \gamma}) \in \R^n \times \Pi_+^m: {\bs \gamma} = \vone_m +\eps,\ \bxi = \tfrac{m + \|\eps\|^2}{m + \eps^\top {\bs e}} \x,\ \eps = \tfrac{\bxi^\top \x}{\|\bxi\|^2} {\bs e} \big\}. 
		\end{align}	
		By setting $\eps = \tau {\bs e}, \tau \coloneqq \frac{\bxi^\top
			\x}{\|\bxi\|^2}$ and replacing it in $\bxi$ we find that
		$\tau$ must respect the equation $\tau = \tfrac{m +
			\tau\|{\bs e}\|^2}{m + \tau^2\|{\bs e}\|^2}$.  However, $\tau = 1$
		is the only feasible solution, \ie $\ts(\bxi,{\bs \gamma}) \equiv (\x,{\bs g})$,
		since assuming $\tau \neq 1$ would lead to $m = -
		\tau (1+\tau)\|{\bs e}\|^2 < 0$. 		
		Evaluated at this point, the expected value of the Hessian reads 
		$$\ts 
		\Ex \H^\perp f(\x,{\bs g}) =  \tfrac{1}{m} \begin{bsmallmatrix} \|{\bs g}\|^2 \I_n & \x {\bs e}^\top \\ 
		{\bs e} \x^\top & \|\x\|^2 \Ponep \end{bsmallmatrix}.
		$$
		Thus, letting $\bLambda \coloneqq \begin{bsmallmatrix} \|{\bs g}\|^2 \I_n & \vzer_{n\times m} \\ \vzer_{m \times n} & \|\x\|^2 \I_m
		\end{bsmallmatrix}$, we observe that
		\begin{align}
		\ts  m\bLambda^{-\frac{1}{2}} \Ex\H^\perp f(\x,{\bs g})
		\bLambda^{-\frac{1}{2}} = \ts \begin{bsmallmatrix}\I_n                                                                                                      & \vzer_{n\times m} \\ \vzer_{m \times n} &  \Ponep
		\end{bsmallmatrix} + \tfrac{1}{\|\x\|\|{\bs g}\|} {\bs H}(\x {\bs e}^\top) \succeq \big(1-\ts\frac{\rho}{\sqrt{1+\rho^2}}\big) \begin{bsmallmatrix}\I_n & \vzer_{n\times m} \\ \vzer_{m \times n} &  \Ponep
		\end{bsmallmatrix},
		\end{align}
		since $\|{\bs H}(\bs x\bs e^\top)\| = \|\bs x\bs e^\top\| =
		\|\bs x\|\|\bs e\| = \tfrac{\|\bs e\|}{\|\bs g\|} \|\bs x\|
		\|\bs g\|$, which can bounded by $\tfrac{\rho}{\sqrt{1+\rho^2}} \|\bs x\|\|\bs
		d\|$ since $\|{\bs g}\|= \sqrt{m+\|{\bs e}\|^2}$, $\|{\bs e}\| <
		\sqrt{m}{\rho}$ and $\tfrac{\|{\bs e}\|}{\sqrt{m+\|{\bs e}\|^2}}$ is an increasing
		function of $\|{\bs e}\|$. Thus, we have shown that for all $\rho \in [0,1), \Ex
		\H^\perp f(\x,{\bs g})\succ 0$ on~$\cl V = \bb R^n \times 1^\perp$, \ie $\Ex
		\H^\perp f(\x,{\bs g})\succ_{\cl V} 0$.
	\end{proof}
	
	\begin{proof}[Proof of Prop.~\ref{prop:expconvexity}]
		We now define $\bLambda \coloneqq \begin{bsmallmatrix}\|{\bs \gamma}\|^2
		\I_n & \vzer_{n\times m} \\ \vzer_{m \times n} & \|\bxi\|^2
		\I_m \end{bsmallmatrix}$ and we observe that, by 
		the expression of the Hessian (see Table~\ref{tab:quants}),
		\begin{align}
		\label{eq:proof-expconvex-tmp1}
		\ts      m\bLambda^{-\frac{1}{2}} \Ex\big[\H^\perp f(\bxi,{\bs \gamma})\big]
		\bLambda^{-\frac{1}{2}} = 
		\begin{bsmallmatrix}\I_n
		& \vzer_{n \times m} \\ \vzer_{m \times n}&  \Ponep 
		\end{bsmallmatrix} + \tfrac{1}{\|\bxi\|\|{\bs \gamma}\|} {\bs H}(\bs \xi
		\bs \varepsilon^\top)
		+
		\tfrac{1}{\|\bxi\|\|{\bs \gamma}\|}
		{\bs H}(\bxi \eps^\top - \x {\bs e}^\top).
		\end{align}
		Similarly to the proof of Prop.~\ref{prop:expstatpoint}, it is easy to
		show that 
		$$\ts
		\begin{bsmallmatrix}\I_n
		& \vzer_{n \times m}\\ \vzer_{m \times n}&  \Ponep 
		\end{bsmallmatrix} + \tfrac{1}{\|\bxi\|\|{\bs \gamma}\|} {\bs H}(\bs \xi
		\bs \varepsilon^\top) \succeq \big(1-\ts\tfrac{\rho}{\sqrt{1+\rho^2}}\big) \begin{bsmallmatrix}\I_n & \vzer_{n\times m} \\ \vzer_{m \times n} &  \Ponep
		\end{bsmallmatrix},
		$$
		since $\|\bs \varepsilon\| \leq \sqrt m \rho$, $\|{\bs \gamma}\|=\sqrt{m + \|\bs
			\varepsilon\|^2}$ and $\|{\bs H}(\bs \xi \bs \varepsilon^\top)\| = \|\bs \xi \bs \varepsilon^\top\| \leq \|\bs \xi\| \|\bs \varepsilon\|
		\leq \tfrac{\rho}{\sqrt{1+ \rho^2}} \|\bs \xi\| \|{\bs \gamma}\| $.
		
		Moreover, by restricting the application of ${\bs H}(\bxi \eps^\top -
		\x {\bs e}^\top)$ in the \rhs of
		\eqref{eq:proof-expconvex-tmp1} to the space $\cl V = \bb R^n \times
		\bs 1^\perp_m$, we have for all $\bs u \in \cl V$
		$$\ts
		\bs u^\top {\bs H}(\bxi \eps^\top - \x {\bs e}^\top) \bs u = 
		\bs u^\top {\bs H}(\bxi \bs \gamma^\top - \x \bs g^\top) \bs u
		\leq \|\bs u\|^2 \|\bxi \bs \gamma^\top - \x \bs g^\top\| \leq \|\bs u\|^2 \|\bxi \bs \gamma^\top - \x \bs g^\top\|_F.
		$$ 
		Since $(\bs \xi,\bs \gamma) \in {\cl D}_{\kappa,\rho}$,
		\eqref{eq:boundsondelta} provides $\|\bxi \bs \gamma^\top - \x \bs
		g^\top\|_F = \Delta_F(\bs \xi,\bs \gamma)^{\ts\frac{1}{2}} \leq \sqrt{1+ 2\rho}\,
		\kappa \|\bs x\|$, so that 
		$$\ts
		\tfrac{1}{\|\bxi\|\|{\bs \gamma}\|} \bs u^\top {\bs H}(\bxi \eps^\top - \x {\bs e}^\top) \bs u
		\leq \tfrac{1}{\|\bxi\|\|{\bs \gamma}\|} \|\bs u\|^2 \sqrt{1+ 2\rho}\,
		\kappa \|\bs x\|
		\leq \tfrac{\kappa \sqrt{1+ 2\rho}}{(1-\kappa) \sqrt m} \|\bs u\|^2, 
		$$
		since by the norm bounds on $(\bxi,{\bs \gamma}) \in {\cl D}_{\kappa,\rho}$ we
		have $\vert\|\bxi\|-\|\x\|\vert\leq \|\bxi-\x\| \leq \kappa \|\x\|$,
		\ie $(1-\kappa) \|\x\| < \|\bxi\|< (1+\kappa)\|\x\|$, and
		$\|{\bs \gamma}\| \geq \sqrt m$. 
		Gathering this bound with the previous developments, we get  
		\[
		m\bLambda^{-\frac{1}{2}} \Ex\big[\H^\perp f(\bxi,{\bs \gamma})\big]
		\bLambda^{-\frac{1}{2}} \succeq_{\cl V} 
		\big(1-\ts\tfrac{\rho}{\sqrt{1+\rho^2}} - \tfrac{\kappa \sqrt{1+ 2\rho}}{(1-\kappa) \sqrt m}\big) \begin{bsmallmatrix}\I_n & \vzer_{n\times m} \\ \vzer_{m \times n} &  \Ponep \end{bsmallmatrix}. 
		\]
		This proves that the expected value of the Hessian is positive
		definite provided $\rho < 1 - \tfrac{\sqrt 3\,\kappa}{(1-\kappa)\sqrt m}$.
	\end{proof}

	\section{Proofs on the Convergence Guarantees of the Descent Algorithms}
	\label{sec:appC}
	
	As Alg.~\ref{alg1} is a special case of  Alg.~\ref{alg2}, this section focuses only on proving in order 
        Prop.~\ref{prop:initsub}, Prop.~\ref{prop:regusub}, and Thm.~\ref{theorem:convergence-subs}. Thus, we begin by proving how close the initialisation point $(\bs\zeta_0, \bs	\beta_0)$ used in Alg.~\ref{alg2} is with respect to the global minimiser $(\bs z,\bs b)$ that also corresponds to $(\bs x = \bs Z \bs z, \bs g =  \bs B \bs b)$. 
	\begin{proof}[Proof of Prop.~\ref{prop:initsub}]
		Recall that $\bs x = \bs Z \bs z \in \cl Z$ and ${\bs\zeta}_0 \coloneqq \tfrac{1}{m p} \ts\sum_{l}  g_i \bs Z^\top \ail \ail^\top \bs Z \bs z$, where $g_i = \bs c_i^\top \bs B \bs b = 1 + \bs c_i^\top \bs B^\perp \bs b^\perp$. Since $\bs 1_m^\top \, \bs g = m$ by construction,
		\begin{align*}
			\ts \|\bs\zeta_0 - \bs z\| = \big\|\tfrac{1}{mp}\ts\sum_{i,l} g_i 
			\bs Z^\top (\ail \ail^\top - \I_n) \bs Z \bs z \big \| \leq           
			\big\|\tfrac{1}{mp}\ts\sum_{i,l} g_i \bs Z^\top(\ail \ail^\top-         
			\I_n)\bs Z \big\| \| \bs z \|.                                   
		\end{align*}
		We then notice that
		\[
			\big\|\tfrac{1}{mp}\ts\sum_{i,l} g_i \bs Z^\top (\ail \ail^\top -
			\I_n)\bs Z \big\|
			\leq
			\big\|\tfrac{1}{mp}\ts\sum_{i,l} (g_i-1) \bs Z^\top (\ail \ail^\top -
			\I_n)\bs Z \big\|
			+ 
			\big\|\tfrac{1}{mp}\ts\sum_{i,l} \bs Z^\top (\ail \ail^\top -
			\I_n)\bs Z \big\|,
		\]
		where $\bs g \in {\cl B}_\rho$ by Def.~\ref{def:blind-calibr-with-sub}, and consequently $|g_i - 1|\leq \rho$. Thus,
		fixing $\delta' >0$ and $t \geq 1$, 
		provided $n \gtrsim t \log(mp)$ and 
		\[
			\ts mp \gtrsim
			(\delta')^{-2}(k+h)\log(\frac{n}{\delta'}),
		\]
		Cor.~\ref{coro:appl-wcovsub} gives, with probability exceeding $1 - C \exp(- c (\delta')^2 m p) - (mp)^{-t}$
		and some values $C,c >0$ depending only on  $\alpha$,  
		\[
			\big\|\tfrac{1}{mp}\ts\sum_{i,l} (g_i-1) \bs Z^\top (\ail \ail^\top -
			\I_n)\bs Z \big\|
			+ 
			\big\|\tfrac{1}{mp}\ts\sum_{i,l} \bs Z^\top (\ail \ail^\top -
			\I_n)\bs Z \big\|
			\leq (\rho + 1)\delta'.
		\]
		Setting $\delta'\coloneqq \ts\tfrac{\delta}{2}$ so that for $\rho <1$, $(\rho + 1)\delta' \leq \delta$ yields
		the statement of this proposition. 
		As for our choice of $\bs\beta_0\coloneqq\begin{bsmallmatrix}
		\sqrt m \\\vzer_{h-1}
		\end{bsmallmatrix}$ (that is identical to ${\bs \gamma}_0 \coloneqq \vone_m$), by definition $\|{\bs\beta}_0 - {\bs b}\| = \|\bs g-\vone_m\|=\|\bs e\| \leq \sqrt{m} \rho$. Thus, we have that $\Delta(\bs\zeta_0,{\bs\beta}_0) \leq (\delta^2+\rho^2)\|\bs z\|^2$, so $(\bs\zeta_0,{\bs\beta}_0) \in {\cl D}^{\rm s}_{\kappa,\rho}$ for $\kappa = \sqrt{\delta^2+\rho^2}$.
	\end{proof}
	Secondly, we introduce a definition that studies the projected gradient $\bs\nabla^\perp f^{\rm s} (\bs\zeta,{\bs\beta})$ on ${\cl D}^{\rm s}_{\kappa,\rho}$.
	\begin{definition}[Regularity condition in a $(\kappa,\rho)$-neighbourhood, adapted from {\cite[Condition 7.9]{CandesLiSoltanolkotabi2015}}]
		\label{def:regularity}
		We say that \new{$\bs\nabla^\perp f^{\rm s}(\bs\zeta, \bs\beta) = (\bs\nabla_{\bs \zeta} f^{\rm s}(\bs\zeta, \bs\beta)^\top, \bs\nabla^\perp_{\bs \beta} f^{\rm s}(\bs\zeta, \bs\beta)^\top)^\top\!$} verifies
		the {\em regularity condition} on ${\cl D}^{\rm s}_{\kappa,\rho}$ with
		constants $\beta_C,\beta_{L,\bs \zeta},\beta_{L,\bs \beta} > 0$ if, for all $(\bs\zeta, \bs\beta) \in {\cl D}^{\rm s}_{\kappa,\rho}$,
		\begin{align}
		\label{eq:req1}	\big\langle \bs\nabla^\perp f^{\rm s} (\bs\zeta, \bs\beta), \big[\begin{smallmatrix} \bs\zeta-\bs z \\ 
		{\bs\beta}-{\bs b} \end{smallmatrix}\big] \big\rangle & \geq \beta_C {\Delta}(\bs\zeta, \bs\beta),
                \end{align}\new{\vspace{-7mm}
		\begin{align}
		\label{eq:req2-zta} 
                  \|\bs\nabla_{\bs \zeta} f^{\rm s}(\bs\zeta, \bs\beta)\|^2 & \leq \beta_{L,\bs \zeta} {\Delta}(\bs\zeta, \bs\beta),\\
                  \|\bs\nabla^\perp_{\bs \beta} f^{\rm s}(\bs\zeta, \bs\beta)\|^2 & \leq \beta_{L,\bs \beta} {\Delta}(\bs\zeta, \bs\beta). 
		\end{align}\vspace{-7mm}}
	\end{definition}
	The role of Def.~\ref{def:regularity} is immediately understood by taking the gradient descent update from any point $(\bs\zeta,{\bs\beta}) \in {\cl D}^{\rm s}_{\kappa,\rho}$ to some $(\bs\zeta_+,\check{{\bs\beta}}_+)$ through the formula $\bs\zeta_+ \coloneqq \bs\zeta - \mu_{\bs\zeta} \bs\nabla_{\bs\zeta} f^{\rm s}(\bs\zeta,{\bs\beta})$ and $\check{{\bs\beta}}_+ \coloneqq {\bs\beta} - \mu_{\bs\beta} \bs\nabla^\perp_{\bs\beta} f^{\rm s}(\bs\zeta,{\bs\beta})$. It is then simply seen that the distance with respect to the global minimiser,
	\begin{align}
	{\Delta} (\bs\zeta_+,\check{{\bs\beta}}_+) & = \|\bs\zeta-\bs z\|^2 - 2 \mu_{\bs\zeta}\langle \bs\nabla_{\bs\zeta} f^{\rm s}(\bs\zeta,{\bs\beta}), \bs\zeta-\bs z \rangle + \mu^2_{\bs\zeta}\|\bs\nabla_{\bs\zeta} f^{\rm s}(\bs\zeta,{\bs\beta})\|^2 \nonumber                                                                                                               \\ 
	& \quad +\tfrac{\|\bs z\|^2}{m}\|\bs\beta-{\bs b}\|^2 - 2 \tfrac{\|\bs z\|^2}{m} \mu_{\bs\beta} \langle \bs\nabla^\perp_{\bs\beta} f^{\rm s}(\bs\zeta,{\bs\beta}), \bs\beta-{\bs b} \rangle + \tfrac{\|\bs z\|^2}{m} \mu^2_{\bs\beta} \| \bs\nabla^\perp_{\bs\beta} f^{\rm s}(\bs\zeta,{\bs\beta})\|^2 \nonumber \\
	& = \Delta(\bs\zeta,{\bs\beta}) -2 \mu \big(\langle \bs\nabla_{\bs\zeta}f^{\rm s}(\bs\zeta,{\bs\beta}), \bs\zeta-\bs z\rangle + \langle \bs\nabla^\perp_{\bs\beta} f^{\rm s}(\bs\zeta,\bs\beta), {\bs\beta}-{\bs g}\rangle\big) \nonumber                                                                   \\
	& \quad + {\mu^2} \big(\|\bs\nabla_{\bs\zeta} f^{\rm s}(\bs\zeta,\bs\beta)\|^2 + \tfrac{m}{\|\bs z\|^2} \|\bs\nabla^\perp_{\bs\beta} f^{\rm s}(\bs\zeta,\bs\beta)\|^2\big) \nonumber  \\
\label{eq:updatedist}&\new{\leq \ts \big(1-2 \mu \beta_C +  \mu^2(\beta_{L,\bs \zeta} + \frac{m \beta_{L,\bs \beta }}{\|\bs z\|^2})\big)\Delta(\bs\zeta,{\bs\beta})}
	\end{align}		
	where we have let $\mu_{\bs\zeta} \coloneqq \mu$, $\mu_{\bs\beta} \coloneqq \mu \tfrac{m}{\|\bs z\|^2}$ for some $\mu > 0$, and where the last line holds only if \new{the constants $\beta_C$, $\beta_{L,\bs \zeta}$, $\beta_{L,\bs \beta} >0$} are found for which Def.~\ref{def:regularity} applies. To highlight the values of these constants in~\eqref{eq:updatedist} we introduced Prop.~\ref{prop:regusub} which simply turns Def.~\ref{def:regularity} in a probabilistic statement, so the event that Def.~\ref{def:regularity} holds is a property of the neighbourhood ${\cl D}^{\rm s}_{\kappa,\rho}$, \ie verified uniformly on it with very high probability. Below, we provide a proof of this regularity condition.
	
	\begin{proof}[Proof of Prop.~\ref{prop:regusub}]
		 This proof develops the regularity condition in Def.~\ref{def:regularity} to highlight the values of \new{$\beta_C$, $\beta_{L,\bs \zeta}, \beta_{L,\bs \beta} > 0$ (\ie $\eta$, $L^2_{\bs \zeta}$ and $L^2_{\bs \beta}$ in Prop.~\ref{prop:regusub}, respectively)}; finding the bounds will require the application of the tools developed in Sec.~\ref{sec:appA}, as well as some simple considerations on ${\cl D}^{\rm s}_{\kappa,\rho}$ with $\kappa > 0$ and $\rho \in [0, 1)$. 

The following developments are built on the assumption that the events given by Corollary~\ref{coro:appl-wcovsub},
                Prop.~\ref{prop:m-w-covconc}
                and Prop.~\ref{prop:standalone-ahmed} are all occurring jointly. 
By union bound and combining all the related requirements, this happens with probability larger than $1 - C 
                (mp)^{-t}$ for some $C>0$ and $t
                \geq 1$, provided 
                \begin{align*}
                  \ts n&\ts \gtrsim t \log(mp),\\
                  \ts mp&\ts \gtrsim \delta^{-2} \max(k,\upmu_{\max}^2 h)\, \max\big(t
                          \log(mp) \log(\frac{mn}{1-\rho}), \log(\frac{n}{\delta})\big),            
                \end{align*}
                or simply, since we additionally and implicitly assume $\rho < \tfrac{1}{31}$ in the statement of Prop.~\ref{prop:regusub},
                \begin{align*}
                  \ts n&\ts \gtrsim t \log(mp),\\
                  \ts mp&\ts \gtrsim t\delta^{-2} (k + \upmu_{\max}^2 h)\, 
                          \log(m(p+n))^2 \log(\frac{1}{\delta}).
                \end{align*}

		\noindent {\em Bounded Curvature:} \ 
		Let us recall the definitions $\bs g = {\bs B} {\bs b}$ and $\bs \gamma = {\bs B} {\bs\beta}$, ${\bs\beta} \coloneqq \begin{bsmallmatrix} \sqrt{m} \\ \bs\beta^\perp \end{bsmallmatrix} \in \R^{h}$, which shall be convenient to keep a more compact notation. Note also that $\|{\bs\beta}\| = \|{\bs\gamma}\|$ and $\|{\bs b}\| = \|{\bs g}\|$. Then, we start this proof by developing the \lhs of~\eqref{eq:req1} using the gradients obtained\footnote{Note that the components $g_i = \bs B_{i,\cdot} \bs b$, $ \gamma_i = \bs B_{i,\cdot} \bs\beta$ are simply the inner products between the $i$-th row of $\bs B$ and the respective $h$-dimensional vector in the subspace model.} in Sec.~\ref{sec:blind-calibration-with-subspace-priors} as follows. Since $\bs c^\top_i \bs B = \bs B_{i,\cdot}$ (the $i$-th row vector of the tight frame~$\bs B$), 
		\begin{align}
			P & \coloneqq \langle \bs\nabla_{\bs\zeta} f^{\rm s}(\bs\zeta, {\bs\beta}), \bs\zeta - \bs z\rangle + \langle \bs\nabla^\perp_{{\bs\beta}} f^{\rm s}(\bs\zeta, \bs\beta), \bs\beta - {\bs b}\rangle \nonumber \\                      
				& = \tfrac{1}{mp} \ts\sum_{i,l} (\bs B_{i,\cdot} \bs\beta)^2 (\bs Z \bs\zeta)^\top \ail \ail^\top (\bs Z (\bs\zeta - \bs z)) - (\bs B_{i,\cdot} \bs\beta) (\bs B_{i,\cdot} \bs b)  (\bs Z \bs z)^\top \ail \ail^\top (\bs Z (\bs\zeta - \bs z)) \nonumber \\
				& \quad + \big[(\bs B_{i,\cdot} \bs\beta)  (\bs Z \bs\zeta)^\top \ail \ail^\top (\bs Z \bs\zeta) - (\bs B_{i,\cdot} \bs b) (\bs Z \bs\zeta)^\top \ail \ail^\top (\bs Z \bs z)\big] \bs B_{i,\cdot} (\bs\beta- \bs b) \nonumber \\
				& = \tfrac{1}{mp} \ts\sum_{i,l} (\bs B_{i,\cdot} \bs\beta)^2  (\bs\zeta-\bs z)^\top \bs Z^\top \ail \ail^\top \bs Z (\bs\zeta - \bs z) + (\bs B_{i,\cdot} \bs\beta) \bs B_{i,\cdot} (\bs\beta- \bs b) \bs z^\top \bs Z^\top \ail \ail^\top \bs Z (\bs\zeta-\bs z) \nonumber \\
				& \quad + \tfrac{1}{mp} \ts\sum_{i,l} \big[ (\bs B_{i,\cdot} \bs\beta) (\bs\zeta-\bs z)^\top \bs Z^\top \ail \ail^\top \bs Z (\bs\zeta - \bs z) + \bs B_{i,\cdot} (\bs\beta - \bs b) \bs z^\top \bs Z^\top \ail \ail^\top \bs Z \bs z\big] \bs B_{i,\cdot} (\bs\beta- \bs b) \nonumber \\
				& \quad + \tfrac{1}{mp} \ts\sum_{i,l}  \big[(2  \bs B_{i,\cdot} \bs\beta  - \bs B_{i,\cdot} \bs b) (\bs\zeta-\bs z)^\top \bs Z^\top \ail \ail^\top \bs Z  \bs z\big] \bs B_{i,\cdot} (\bs\beta- \bs b) \nonumber \\
				& = \tfrac{1}{mp} \ts\sum_{i,l} \big[(\bs B_{i,\cdot} \bs\beta)^2 + (\bs B_{i,\cdot} \bs\beta) \bs B_{i,\cdot} (\bs\beta - \bs b)\big]  (\bs\zeta-\bs z)^\top \bs Z^\top \ail \ail^\top \bs Z (\bs\zeta - \bs z) \nonumber \\
				& \quad + \tfrac{1}{mp} \ts\sum_{i,l}  (2  \bs B_{i,\cdot} \bs\beta + \bs B_{i,\cdot} (\bs\beta - \bs b)) \bs B_{i,\cdot} (\bs\beta - \bs b)  (\bs\zeta-\bs z)^\top \bs Z^\top \ail \ail^\top \bs Z  \bs z \nonumber \\
				& \quad + \tfrac{1}{mp} \ts\sum_{i,l}  (\bs B_{i,\cdot}(\bs\beta -\bs b))^2 \bs z^\top \bs Z^\top \ail \ail^\top \bs Z \bs z. 
		\end{align}
		Let us then add and subtract the term 
		\[
			P' \coloneqq \tfrac{1}{mp} \ts\sum_{i,l} (\bs B_{i,\cdot} \bs b (\bs\zeta-\bs z) + \bs B_{i,\cdot}(\bs\beta - \bs b) \bs z)^\top \bs Z^\top \ail \ail^\top \bs Z (\bs B_{i,\cdot} \bs b (\bs\zeta-\bs z) + \bs B_{i,\cdot}(\bs\beta - \bs b) \bs z),
		\]
		so that $P = P' + P''$, where we have collected
		\begin{align*}
		P'' \coloneqq P-P' & = \underbrace{\tfrac{1}{mp}  \ts\sum_{i,l}  \bs B_{i,\cdot}(\bs\beta - \bs b) [\bs B_{i,\cdot} \bs\beta  +\bs B_{i,\cdot} (\bs\beta + \bs b)]  (\bs\zeta-\bs z)^\top \bs Z^\top \ail \ail^\top \bs Z (\bs\zeta-\bs z)}_{Q'} \\ 
		& \quad + \underbrace{\tfrac{1}{mp}  \ts\sum_{i,l}  3(\bs B_{i,\cdot}(\bs\beta - \bs b))^2   (\bs\zeta-\bs z)^\top \bs Z^\top \ail \ail^\top \bs Z \bs z}_{Q''}.
		\end{align*}
		
		Let us first focus on finding a bound for the simpler $P''$; since this term may take negative values, we need an upper bound for $|P''|$ so that $P \geq P' - |P''|$. Having highlighted $Q',Q''$ so that $P'' = Q' + Q''$, clearly  $|P''| \leq |Q'| + |Q''|$. An upper bound for $|Q'|$ is easily found as follows; noting that the first component of $\bs\beta - \bs b$ is always $0$, it is observed that $\max_{i\in [m]} | \bs B_{i,\cdot}(\bs\beta - \bs b)| < 2\rho$ and that $\max_{i\in [m]} |  2 \bs B_{i,\cdot} \bs\beta +\bs B_{i,\cdot} \bs b| < 3(1+\rho)$. Thus, since $\rho \in [0,1)$,
		\[
		\max_{i\in [m]} | \bs B_{i,\cdot}(\bs\beta - \bs b) [\bs B_{i,\cdot} \bs\beta +\bs B_{i,\cdot} (\bs\beta + \bs b)]|\leq 2\rho\cdot 3(1+\rho) \leq 12 \rho,
		\]
		so we obtain by~Cor.~\ref{coro:appl-wcovsub} (\ie by~\eqref{eq:coro-ineq-wcov})
		\[
			|Q'| \leq  \big\| \tfrac{1}{mp} \ts\sum_{i,l}  \bs B_{i,\cdot}(\bs\beta - \bs b) [\bs B_{i,\cdot} \bs\beta+\bs B_{i,\cdot} (\bs\beta + \bs b)] \bs Z^\top \ail \ail^\top \bs Z\big\| \|\bs\zeta-\bs z\|^2 \leq 12 \rho (1+\delta) \|\bs\zeta-\bs z\|^2,
		\]
		which is simply $|Q'| \leq 24 \rho \|\bs\zeta-\bs
                z\|^2$ when $\delta \in (0,1)$. As for $Q''$, by a
                straightforward application of the Cauchy-Schwarz
                inequality we find that 
  \begin{align*}
    |Q''|&\leq 3 \big\vert \tinv{mp} \ts\sum_{i,l} (\bs B_{i,\cdot}
           (\bs\beta - \bs b))^4 (\bs z^\top \bs Z^\top \ail)^2
           \big\vert^{\frac{1}{2}}\big\vert \tinv{mp} \ts\sum_{i,l}
           ((\bs\zeta-\bs z)^\top \bs Z^\top \ail)^2
           \big\vert^{\frac{1}{2}}\\
&\leq 6\rho\,\big\vert \tinv{mp} \ts\sum_{i,l} (\bs B_{i,\cdot}
           (\bs\beta - \bs b))^2 (\bs z^\top \bs Z^\top \ail \ail^\top
  \bs Z \bs z)
           \big\vert^{\frac{1}{2}}\big\vert \tinv{mp} \ts\sum_{i,l}
           ((\bs\zeta-\bs z)^\top \bs Z^\top \ail)^2
           \big\vert^{\frac{1}{2}}.
  \end{align*}
  where we used the fact that $\max_i (\bs B_{i,\cdot} (\bs\beta-\bs b))^2 < 4\rho^2$. Moreover, from Prop.~\ref{prop:m-w-covconc},
\begin{align*}
&\big\vert \tinv{p} \ts\sum_{i,l} (\bs B_{i,\cdot}
           (\bs\beta - \bs b))^2 (\bs z^\top \bs Z^\top \ail \ail^\top
  \bs Z \bs z)
           \big\vert\\
&\leq  \|\bs\beta - \bs b\|^2  \|\bs z\|^2 \big\|\tinv{p} \ts\sum_{i,l} (\bs B^\top
  \bs c_i \bs c_i^\top \bs B) (\hat{\bs z}^\top \bs Z^\top \ail \ail^\top
  \bs Z \hat{\bs z})\big\|\\
&\leq (1+\delta) \|\bs\beta - \bs b\|^2 \|\bs z\|^2,
\end{align*}
and Cor.~\ref{coro:appl-wcovsub} provides
$$
\big\vert \tinv{mp} \ts\sum_{i,l}
           ((\bs\zeta-\bs z)^\top \bs Z^\top \ail)^2
           \big\vert \leq \|\bs\zeta-\bs z\|^2 \big\|\tinv{mp}\sum_{i,l}
           \bs Z^\top \ail \ail^\top \bs Z\big\| \leq (1+\delta) \|\bs\zeta-\bs z\|^2.
$$
Therefore, 
$$
|Q''|\ \leq\ 12\rho\,\tfrac{\|\bs\beta - \bs b\|}{\sqrt m} \|\bs z\|\|\bs\zeta-\bs z\|.
$$

		Hence, 
		\begin{align*}
			|P''| & \leq 24 \rho \|\bs\zeta-\bs z \|^2 +  6 \rho \tfrac{ \|\bs z \|}{m}\|\bs\beta-\bs b\|^2 + 6\rho \|\bs\zeta-\bs z\|^2 \\
			& = 30 \rho\|\bs\zeta-\bs z\|^2 +  6 \rho \tfrac{ \|\bs z\|}{m}\|\bs\beta-\bs b\|^2 \\
			& \leq 30 \rho \Delta(\bs\zeta,\bs\beta).                                                             
		\end{align*}
		
		We now move our focus to the slightly more cumbersome
                task of finding a lower bound for $P'$. Note that we
                can rewrite 
                \[ 
                  P' =  \tfrac{1}{mp} \ts\sum_{i,l} \tr\big(\bs c^\top_i {\bs B}\big[{\bs b} (\bs\zeta-\bs z)^\top + ({\bs\beta} - {\bs b}) \bs z^\top\big] \bs Z^\top \ail\big)^2
                  = \tfrac{1}{mp} \ts\sum_{i,l} \langle \bs Z^\top \ail \bs c^\top_i {\bs B}, \bs M\rangle^2_F,
		\]
                with the matrix $\bs M \coloneqq (\bs\zeta-\bs
                z)\|{\bs g}\|\hat{{\bs b}}^\top + \hat{\bs z} \|{\bs
                  z}\| ({\bs\beta}-{\bs b})^\top$. 

                Interestingly, $\bs M$ has a very particular structure as it
                belongs to the matrix subspace $\cl M$ of $\bb
                R^{k\times h}$ introduced in
                Prop.~\ref{prop:subspaceM}. Considering the orthogonal
                projector ${\cl P}_{\cl M}$ defined
                in~\eqref{eq:projM}, and since ${\cl P}_{\cl M} (\bs M)
                = \bs M \in \cl M$, we have
		\begin{equation}
			P' = \tinv{mp} \ts\sum_{i,l} \langle \bs Z^\top \ail \bs c^\top_i {\bs B}, {\cl P}_{\cl M} (\bs M) \rangle^2_F =  \tinv{mp} \ts\sum_{i,l} \langle{\cl P}_{\cl M}(\bs Z^\top \ail \bs c^\top_i {\bs B}), \bs M \rangle^2_F. 
			\label{eq:pprime}
		\end{equation}

		
		We are now in the position of bounding $P' \geq {\bb E} P' - |P'-{\bb E} P'|$; hence, we proceed to find a lower bound for ${\bb E} P'$ and an upper bound for $|P'- {\bb E} P'|$. Firstly, note how the expectation 
		\begin{equation}
			\bb E P'  = \tinv{mp} \ts\sum_{i,l} {\bb E} (\bs c^\top_i {\bs B} \bs M^\top) \bs Z^\top \ail \ail^\top\bs Z (\bs M {\bs B}^\top \bs c_i )
						 = \tinv{m}{\rm tr}({\bs B} \bs M^\top \bs M{\bs B}^\top) = \tinv{m}\|\bs M\|^2_F
			\label{eq:expB}
		\end{equation}
		since ${\bs B}^\top {\bs B} = \I_{h}$. Moreover, recalling $\|{\bs g}\|^2 = m + \|\bs b^\perp\|^2 \geq m$ we can expand
		\begin{align}
			\|\bs M\|^2_F & = \|\bs\zeta-\bs z\|^2 \|{\bs g}\|^2 + \|\bs z\|^2 \|{\bs\beta} - {\bs b}\|^2 + 2 (\bs\zeta - \bs z)^\top \bs z {\bs b}^\top ({\bs\beta} - {\bs b}) \nonumber \\
			& \geq m(1-\rho) \|\bs\zeta-\bs z\|^2  + (1-\rho)\|\bs z\|^2  \|{\bs\beta} - {\bs b}\|^2                                                                          
			\label{eq:lbM}
		\end{align}
		where we used the fact that $\|\bs b^\perp \| = \|\bs B^\perp \bs b^\perp \| \leq \sqrt m \rho$, that the first component of ${\bs\beta} - {\bs b}$ is $0$ by construction (and therefore ${\bs b}^\top ({\bs\beta} - {\bs b})={(\bs b^\perp)}^\top ({\bs\beta}^\perp - {\bs b}^\perp)$), and that
		\[
			2 (\bs\zeta - \bs z)^\top \bs z {\bs b}^\top ({\bs\beta} - {\bs b}) \leq 2 \|\bs z\| \|{\bs b}^\perp\|\|\bs\zeta - \bs z\| \|{\bs\beta}-{\bs b}\| \leq m \rho \|\bs\zeta- \bs z\|^2 + \rho \|\bs z\|^2 \|{\bs\beta}- {\bs b}\|^2. 
		\]
		Replacing the lower bound~\eqref{eq:lbM} in~\eqref{eq:expB} yields $\bb E P' \geq (1-\rho) \Delta(\bs\zeta,\bs\beta)$. 
		\medskip

		We now have to find an upper bound for $|P'-{\bb E} P'|$. Notice that, using the linear random operator $\cl A$
                and its adjoint $\cl A^*$ introduced in~\eqref{eq:linear-gen-map} and~\eqref{eq:linear-gen-map-adjoint-def}, respectively,
                we have  
		\begin{talign*}
			&|P'-\bb E P'|\\
&\ts = \|\bs M\|^2_F \big\vert
                  \tinv{mp} \ts\sum_{i,l} \big\langle {\cl P}_{\cl M}
                  (\bs Z^\top \ail \bs c^\top_i {\bs B}), \widehat{\bs
                  M} \big\rangle^2_F - \bb E \big\langle {\cl P}_{\cl M}
                  (\bs Z^\top \ail \bs c^\top_i {\bs B}), \widehat{\bs
                  M} \big\rangle^2_F  \big\vert\\
&\leq \tinv{m} \|\bs M\|^2_F \,\sup_{\bs W \in \bb S^{n \times m}_F} \big\vert
                  \tinv{p} \ts\sum_{i,l} \big\langle {\cl P}_{\cl M}
                  (\bs Z^\top \ail \bs c^\top_i {\bs B}), \bs W \big\rangle^2_F - \bb E\big\langle {\cl P}_{\cl M}
                  (\bs Z^\top \ail \bs c^\top_i {\bs B}), \bs W \big\rangle^2_F \big\vert\\
                                     &= \tinv{m} \|\bs M\|^2_F \,\| {\cl P}_{\cl M}\cl A^*\cl A {\cl P}_{\cl M} - {\cl P}_{\cl M}\|,
                \end{talign*}
                with $\|{\cl P}_{\cl M}\cl A^*\cl A {\cl P}_{\cl
                  M} - {\cl P}_{\cl M}\|$ defined in
               ~\eqref{eq:concent-AstA} and 
                 where we used~\eqref{eq:linear-gen-map-adjoint-def} and the
                linearity of the orthogonal projector ${\cl P}_{\cl
                  M}$ on $\cl M$ in order to realise that
                \begin{talign*}
                \tinv{p}\sum_{i,l} \langle {\cl P}_{\cl M}
                (\bs Z^\top \ail \bs c^\top_i {\bs B}),\, \bs W
                \rangle^2_F&= \big\langle \big[\tinv{p}\sum_{i,l} \langle {\cl P}_{\cl M}
                  (\bs Z^\top \ail \bs c^\top_i {\bs B}),\, \bs W
                  \rangle_F {\cl P}_{\cl M}
                  (\bs Z^\top \ail \bs c^\top_i {\bs B})\big],\, \bs W
                  \big\rangle_F\\
                  &= \big\langle \big[ \tinv{p}\sum_{i,l} \langle 
                  \bs Z^\top \ail \bs c^\top_i {\bs B},\, {\cl P}_{\cl
                    M} \bs W
                  \rangle_F 
                  (\bs Z^\top \ail \bs c^\top_i {\bs B}) \big],\, {\cl P}_{\cl
                    M} \bs W
                  \big\rangle_F\\
                  &= \big\langle \cl A^* \cl A({\cl P}_{\cl
                    M} \bs W),\, {\cl P}_{\cl
                    M} \bs W \big\rangle_F = \big\langle {\cl P}_{\cl
                    M} \cl A^* \cl A {\cl P}_{\cl
                    M} \bs W,\, \bs W
                  \big\rangle_F,
                \end{talign*}
                and also $p^{-1}{\bb E} \sum_{i,l} \langle {\cl P}_{\cl M}
                (\bs Z^\top \ail \bs c^\top_i {\bs B}),\, \bs W
                \rangle^2_F = \bb E \big\langle {\cl P}_{\cl
                    M} \cl A^* \cl A {\cl P}_{\cl
                    M} \bs W,\, \bs W
                  \big\rangle_F  = \big\langle {\cl P}_{\cl
                    M} \bs W,\, \bs W
                  \big\rangle_F$ from~\eqref{eq:expect-AstA}.
                
                Therefore, since the event given by Prop.~\ref{prop:standalone-ahmed} holds by assumption, $\| {\cl P}_{\cl M}\cl A^*\cl A {\cl P}_{\cl M} - {\cl P}_{\cl M}\| \leq \delta$ and $|P'-\bb E P'| \leq \|\bs M\|^2_F \tfrac{\delta}{m}$. Thus, noting that 
		\[
			\tfrac{\|\bs M\|^2_F }{m} \leq (1+\rho)^2 \|\bs\zeta-\bs z\|^2 + \tfrac{\|\bs z\|^2}{m}\|\bs\beta -\bs b\|^2 + \tfrac{2}{\sqrt m}  \rho \|\bs z\| \|\bs\zeta-\bs z\| \|\bs\beta- \bs b\|^2\leq (1+\rho)^2\Delta(\bs\zeta, \bs\beta), 
		\]
		we conclude that $P' \geq  [(1-\rho)- \delta (1+\rho)^2] \Delta(\bs\zeta,\bs\beta)$. Plugging this result in $P$ yields
		\[
			P \geq  [(1-\rho)- \delta (1+\rho)^2 - 30 \rho] \Delta(\bs\zeta,\bs\beta) \geq (1-31\rho-4\delta)\Delta(\bs\zeta,\bs\beta),
		\]
		\ie $\langle \bs\nabla_{\bs\zeta} f^{\rm s}(\bs\zeta, {\bs\beta}), \bs\zeta - \bs z\rangle + \langle \bs\nabla^\perp_{{\bs\beta}} f^{\rm s}(\bs\zeta, \bs\beta), \bs\beta - {\bs b}\rangle \geq \eta\,\Delta(\bxi,\bs \gamma)$ with $\eta \coloneqq 1-31\rho-4\delta > 0$ if $\rho < \tfrac{1-4\delta}{31}$. 
		\medskip
		
		\noindent {\em  Lipschitz Gradient:} 
\new{We proceed by bounding the following two quantities:
		\begin{align}
\label{eq:norm-nabla-zeta-sup}
\ts \| \bs\nabla_{\bs\zeta} f^{\rm s}(\bs\zeta, {\bs\beta})\|&\ts = \sup_{\u \in \bb S_2^{k-1}} \langle \bs\nabla_{\bs\zeta} f^{\rm s}(\bs\zeta, {\bs\beta}), \u \rangle,\\
\label{eq:norm-nabla-beta-sup}
\ts \|\bs\nabla^\perp_{{\bs\beta}} f^{\rm s}(\bs\zeta, \bs\beta)\| & \leq \ts \|\bs\nabla_{{\bs\beta}} f^{\rm s}(\bs\zeta, \bs\beta)\| = \sup_{\bs v \in \bb S_2^{h-1}} \langle \bs\nabla_{{\bs\beta}} f^{\rm s}(\bs\zeta, \bs\beta), \bs v \rangle.
		\end{align}
For the \rhs of \eqref{eq:norm-nabla-zeta-sup}, fixing $\bs u \in \bb
S_2^{k-1}$, using the developments of gradients obtained in
Sec.~\ref{sec:blind-calibration-with-subspace-priors} and \eqref{eq:gradw-bis}, and recalling (to maintain a lighter notation) that $\gamma_i = \bs c^\top_i\bs B \bs\beta$ and $g_i = \bs c^\top_i \bs B\bs b$, we observe that 
		\begin{align*}
			\langle\bs\nabla_{\bs\zeta} f^{\rm
                          s}(\bs\zeta, {\bs\beta}), \u \rangle&\ts = \tfrac{1}{mp} \ts\sum_{i,l}  \left[\gamma_i  \bs\zeta^\top\bs Z^\top \ail  - g_i \bs z^\top \bs Z^\top \ail\right]\gamma_i \ail^\top {\bs Z} \bs u  \\
&= \tinv{mp}\ts\sum_{i,l} \gamma^2_{i} (\bs\zeta - \bs z)^\top \bs Z^\top \ail \ail^\top \bs Z \bs u + \tinv{mp}\ts\sum_{i,l}\gamma_i  (\gamma_i - g_i) \bs z^\top \bs Z^\top \ail \ail^\top \bs Z \u  \nonumber                                                                             
                \end{align*}

                By rearranging its terms, the last expression can be further developed on ${\cl D}^{\rm s}_{\kappa,\rho}$ as  
		\begin{align*}
			& \ts \tinv{mp}\sum_{i,l} \gamma^2_{i} (\bs\zeta-\bs z)^\top \bs Z^\top \ail \ail^\top {\bs Z} {\bs u} + \tinv{mp}\ts\sum_{i,l}\gamma_i (\gamma_i - g_i)\,\bs z^\top  \bs Z^\top \ail \ail^\top \bs Z \u\\
&\ts \leq\ (1+\rho)^2 \|\bs\zeta-\bs z\| \,\big\|\frac{1}{mp} \sum_{i,l} \bs Z^\top \ail\ail^\top \bs Z \big\|\\
&\ts\quad + \big[\tinv{mp}\ts\sum_{i,l} (\gamma_i - g_i)^2\,\bs z^\top  \bs Z^\top \ail \ail^\top \bs Z \bs z\big]^{\frac{1}{2}}\big[\tinv{mp}\ts\sum_{i,l}\gamma^2_i \bs u^\top  \bs Z^\top \ail \ail^\top \bs Z \u\big]^{\frac{1}{2}},
		\end{align*}
where the second term has been bounded by the Cauchy-Schwarz inequality.

As we assumed that the events given by Cor.~\ref{coro:appl-wcovsub} and Prop.~\ref{prop:m-w-covconc} jointly hold, we have
\begin{equation}
  \label{eq:cor-prop-hold}
\ts \|\frac{1}{mp} \sum_{i,l} \bs Z^\top \ail\ail^\top \bs Z\| \leq 1+\delta \quad\text{and}\quad\|\frac{1}{p} \sum_{i,l} (\bs B^\top \bs c_i \bs c_i^\top\bs B)\,\bs z^\top \bs Z^\top \ail\ail^\top \bs Z \bs z\| \leq (1+\delta) \|\bs z\|^2,  
\end{equation}
so that  
\begin{eqnarray*}
&\ts  \tinv{mp}\ts\sum_{i,l}\gamma^2_i \bs u^\top  \bs Z^\top \ail \ail^\top \bs Z \u \ts \leq\ (1+\rho)^2 \|\frac{1}{mp} \sum_{i,l} \bs Z^\top \ail\ail^\top \bs Z\| \leq (1+\delta)(1+\rho)^2,\\
&\ts \tinv{mp}\ts\sum_{i,l} (\gamma_i - g_i)^2\,\bs z^\top  \bs Z^\top \ail \ail^\top \bs Z \bs z \ts \leq \tfrac{\|\bs \beta - \bs b\|^2}{m} \|\frac{1}{p} \sum_{i,l} (\bs B^\top \bs c_i \bs c_i^\top\bs B)\,\bs z^\top \bs Z^\top \ail\ail^\top \bs Z \bs z\| \leq \tfrac{(1+\delta) \|\bs \beta - \bs b\|^2 \|\bs z\|^2}{m}.
\end{eqnarray*}
Therefore, using $\max(\delta,\rho)<1$,
		\begin{align*}
                  \ts \langle\bs\nabla_{\bs\zeta} f^{\rm s}(\bs\zeta, {\bs\beta}), \u \rangle&\ts \leq (1+\rho)^2 (1+ \delta)\|\bs\zeta-\bs z\| + (1+\delta) (1+\rho)  \|\bs z\| \frac{\|\bs \beta - \bs b\|}{\sqrt m}\\
&\leq\ts 8 (\|\bs\zeta-\bs z\| + \|\bs z\| \frac{\|\bs \beta - \bs b\|}{\sqrt m}),
		\end{align*}
so that, from \eqref{eq:norm-nabla-zeta-sup}, 
\begin{equation}
  \label{eq:norm-nabla-zeta}
  \ts \|\bs\nabla_{\bs\zeta} f^{\rm s}(\bs\zeta, {\bs\beta})\|^2\ \leq\ 128\, (\|\bs\zeta-\bs z\|^2 + \|\bs z\|^2 \frac{\|\bs \beta - \bs b\|^2}{m})\ =\ 128\,\Delta(\bs \xi, \bs \gamma),
\end{equation}
\ie $L_{\bs \zeta} = 8\sqrt{2}$ in Prop.~\ref{prop:regusub}.

Concerning the \rhs of \eqref{eq:norm-nabla-beta-sup} 
we first note that
\begin{align}
\label{eq:nabla-beta-perp-tmp} \ts \|\bs\nabla_{{\bs\beta}} f^{\rm s}(\bs\zeta, \bs\beta)\| = \sup_{\bs v \in \bb S_2^{h-1}} \langle \bs\nabla_{{\bs \gamma}} f(\bxi,{\bs \gamma})\big\vert_{\bxi = {\bs Z} {\bs \zeta}, {\bs \gamma} = \bs B\bs \beta}, \bs B \bs v \rangle,
\end{align}
since $\bs\nabla_{\bs\beta}f^{\rm s}(\bs\zeta,{\bs\beta}) = \bs B^\top \bs\nabla_{{\bs \gamma}} f(\bxi,{\bs \gamma})\big\vert_{\bxi = {\bs Z} {\bs \zeta}, {\bs \gamma} = \bs B\bs \beta}$  from \eqref{eq:gradz}.

Therefore, fixing $\bs v \in \bb S_2^{h-1}$ and restricting our domains to ${\cl D}^{\rm s}_{\kappa,\rho}$, from the formulation of $\bs\nabla_{{\bs \gamma}} f(\bxi,{\bs \gamma})$ provided in Table~\ref{tab:quants}, we have
\begin{align*}
  \langle \bs\nabla^\perp_{{\bs\beta}} f^{\rm s}(\bs\zeta, \bs\beta), \bs v \rangle &\ts = \tfrac{1}{mp}\ts \sum_{i,l} (\bs \zeta^\top \bs Z^\top \ail) \big( \gamma_i \ail^\top \bs Z \bs \zeta - g_i \ail^\top \bs Z \bs z \big) {\bs c}_i^\top \bs B\bs v\\
&\ts = \tfrac{1}{mp}\ts \sum_{i,l} \gamma_i ({\bs c}_i^\top \bs B\bs v)\, \bs \zeta^\top \bs Z^\top \ail  \ail^\top \bs Z \bs \zeta\ -\ \tfrac{1}{mp}\ts \sum_{i,l} g_i ({\bs c}_i^\top \bs B\bs v)\, \bs \zeta^\top \bs Z^\top \ail \ail^\top \bs Z \bs z \\
&\ts = \tfrac{1}{mp}\ts \sum_{i,l} \gamma_i ({\bs c}_i^\top \bs B\bs v)\, (\bs \zeta - \bs z)^\top \bs Z^\top \ail  \ail^\top \bs Z \bs \zeta + \tfrac{1}{mp}\ts \sum_{i,l} (\gamma_i - g_i) ({\bs c}_i^\top \bs B\bs v)\, \bs \zeta^\top \bs Z^\top \ail \ail^\top \bs Z \bs z \\
&\ts \leq \big[\tfrac{1}{mp}\ts \sum_{i,l} \gamma^2_i (\bs \zeta - \bs z)^\top \bs Z^\top \ail  \ail^\top \bs Z (\bs \zeta - \bs z) \big]^{\frac{1}{2}}\big[\tfrac{1}{mp}\ts \sum_{i,l} ({\bs c}_i^\top \bs B\bs v)^2\, \bs \zeta^\top \bs Z^\top \ail  \ail^\top \bs Z \bs \zeta\big]^{\frac{1}{2}}\\
& + \big[\tfrac{1}{mp}\ts \sum_{i,l} (\gamma_i - g_i)^2 \, \bs z^\top \bs Z^\top \ail \ail^\top \bs Z \bs z\big]^{\frac{1}{2}}\big[\tfrac{1}{mp}\ts \sum_{i,l} ({\bs c}_i^\top \bs B\bs v)^2\, \bs \zeta^\top \bs Z^\top \ail \ail^\top \bs Z \bs \zeta\big]^{\frac{1}{2}} \\
&\ts \leq (1+\rho) \|\bs \zeta - \bs z\| \|\bs \zeta\| \big\|\tfrac{1}{mp}\ts \sum_{i,l} \bs Z^\top \ail  \ail^\top \bs Z \big\|\\
& + \big[\tfrac{1}{mp}\ts \sum_{i,l} (\bs \beta -\bs b)^\top \bs B^\top \bs c_i \bs c_i^\top \bs B (\bs \beta - \bs b) \, \bs z^\top \bs Z^\top \ail \ail^\top \bs Z \bs z\big]^{\frac{1}{2}}\big[\tfrac{1}{mp}\ts \sum_{i,l} \bs \zeta^\top \bs Z^\top \ail \ail^\top \bs Z \bs \zeta\big]^{\frac{1}{2}} \\
&\ts \leq (1+\rho) \|\bs \zeta - \bs z\| \|\bs \zeta\| \big\|\tfrac{1}{mp}\ts \sum_{i,l} \bs Z^\top \ail  \ail^\top \bs Z \big\|\\
& + \|\bs \beta -\bs b\| \|\bs \zeta\| \big\|\tfrac{1}{mp}\ts \sum_{i,l} (\bs B^\top \bs c_i \bs c_i^\top \bs B) \, \bs z^\top \bs Z^\top \ail \ail^\top \bs Z \bs z\big\|^{\frac{1}{2}}\big\|\tfrac{1}{mp}\ts \sum_{i,l} \bs Z^\top \ail \ail^\top \bs Z \big\|^{\frac{1}{2}},
\end{align*}        
where we have used two times the Cauchy-Schwarz inequality in the
first inequality, and later the fact that $|\bs c_i^\top \bs B \bs v| \leq \|\bs B \bs v\|_\infty \leq \|\bs B \bs v\| = \|\bs v\| = 1$.

Again, when the events given by Cor.~\ref{coro:appl-wcovsub} and Prop.~\ref{prop:m-w-covconc} jointly hold, from \eqref{eq:cor-prop-hold}, $\|\bs \zeta\| \leq \|\bs\zeta-\bs z\| + \|\bs z\| \leq (1 + \kappa) \|\bs z\|$ for $(\bs \zeta, \bs \beta) \in \cl D^{\rm s}_{\kappa, \rho}$ and $\max(\delta,\rho)<1$, we obtain the crude bounds
		\begin{align*}
                  \ts \langle \bs\nabla^\perp_{{\bs\beta}} f^{\rm s}(\bs\zeta, \bs\beta), \bs v \rangle &\ts \leq (1+\rho)(1+\delta)(1+\kappa) \|\bs \zeta - \bs z\| \|\bs z\| + (1+\kappa) (1+\delta) \|\bs \beta -\bs b\| \frac{\|\bs z\|^2}{\sqrt m} \\
&\ts \leq 4(1+\kappa)\|\bs z\| ( \|\bs\zeta-\bs z\| + \frac{\|\bs z\|}{\sqrt m}\,\|\bs \beta -\bs b\|).
		\end{align*}
Consequently, using \eqref{eq:nabla-beta-perp-tmp}, we find
\begin{equation}
  \label{eq:norm-nabla-beta}
  \ts \|\bs\nabla^\perp_{{\bs\beta}} f^{\rm s}(\bs\zeta, \bs\beta)\|^2\ \leq\ 32\,(1+\kappa)^2 \|\bs z\|^2 (\|\bs\zeta-\bs z\|^2 + \|\bs z\|^2 \frac{\|\bs \beta - \bs b\|^2}{m})\ =\ 32\,(1+\kappa)^2 \|\bs z\|^2\,\Delta(\bs \xi, \bs \gamma),
\end{equation}
\ie $L_{\bs \beta} = 4\sqrt 2(1+\kappa) \|\bs z\|$ in Prop.~\ref{prop:regusub}.
}
\end{proof}

	\begin{proof}[Proof of Thm.~\ref{theorem:convergence-subs}]		
          This theorem needs that Prop.~\ref{prop:initsub} and
          Prop.~\ref{prop:regusub} hold jointly, where in this last proposition we set $\kappa = \kappa_0 =\sqrt{\delta^2 + \rho^2} < \sqrt 2$ so that
                $\eta = 1-31\rho-4\delta$, 
\new{
$$
L_{\bs \zeta} = 8\sqrt{2},\quad\text{and}\quad L_{\bs \beta} = 4\sqrt{2}\, (1+ \kappa_0) \|\bs z\| \leq 10 \sqrt{2} \|\bs z\|.
$$ 
}
By union bound and considering the requirements of both propositions, this happens with probability exceeding $1-C (mp)^{-t}$, for some $C>0$ and $t\geq1$, provided the strongest of the requirements of Prop.~\ref{prop:regusub} in $mp, n$ are met.  
Conditionally to this event, to prove the convergence of Alg.~\ref{alg2} with fixed step sizes 
$\mu_\bxi = \mu$ and $\mu_{\bs \gamma} = \mu \tfrac{m}{\|\x\|^2}$ for some $\mu > 0$ we need to show that, starting from $(\bs\zeta_0,{\bs\beta}_0)$ and at each iteration, the distance with respect to the global minimiser decreases. 

For $j = 0$ and by Prop.~\ref{prop:initsub} we start from $(\bs\zeta_0,{\bs\beta}_0) \in {\cl D}^{\rm s}_{\kappa_0,\rho}$. Moreover, for $j \in \bb N$, by recalling~\eqref{eq:updatedist} with the values $\beta_C = \eta$, \new{$\beta_{L,\bs \zeta} = L_{\bs \zeta}^2$ and $\beta_{L,\bs \beta} = L_{\bs \beta}^2$} set above 
\new{
		\begin{align*}
&{\Delta}\big(\bs\zeta_{j+1},{{\bs\beta}}_{j+1}\big)\\
&\ts \leq {\Delta}\big(\bs\zeta_{j+1},\check{{\bs\beta}}_{j+1}\big)\\
&\ts = {\Delta}(\bs\zeta_{j},{\bs\beta}_{j})\ -\ 2 \mu \big\langle \bs\nabla^\perp f^{\rm s}(\bs\zeta_{j},{\bs\beta}_{j}), \big[\begin{smallmatrix} \bs\zeta_{j}-\bs z \\ {\bs\beta}_{j}-{\bs b} \end{smallmatrix}\big]\big\rangle\ +\ \mu^2 (\|\bs\nabla_{\bs\zeta} f^{\rm s}(\bs\zeta_{j},{\bs\beta}_{j})\|^2 + \frac{m}{\|\bs z\|^2}\|\bs\nabla_{\bs\beta}^\perp f^{\rm s}(\bs\zeta_{j},{\bs\beta}_{j})\|^2) \nonumber \\
		& \leq \big(1- 2\mu\eta + \mu^2 (128 + 200 m)\big) {\Delta}(\bs\zeta_{j},{\bs\beta}_{j})\ \leq\ \big(1- 2\mu\eta + 400 \mu^2 m\big) {\Delta}(\bs\zeta_{j},{\bs\beta}_{j}),                                                                                                                            
		\end{align*}}
where in the first line we used the projection step ${\bs\beta}_{j+1} \coloneqq {\cl P}_{{\cl B}_\rho} \check{{\bs\beta}}_{j+1}$ as defined in Alg.~\ref{alg2}, and the fact that, ${\cl B}_\rho$ being a non-empty closed convex set with ${\bs b} = {\cl P}_{\cl B_\rho} {\bs b}$ by assumption, Prop.~\ref{lemma-ccs} ensures that $\|{\bs\beta}_{j+1}-{\bs b}\|\leq \|\check{{\bs\beta}}_{j+1}-{\bs b}\|$ and thus ${\Delta}\big(\bs\zeta_{j+1},{{\bs\beta}}_{j+1}\big) \leq {\Delta}\big(\bs\zeta_{j+1},\check{{\bs\beta}}_{j+1}\big)$.

From this we observe that if \new{$D \coloneqq 1- 2\mu\eta + 400 \mu^2 m \in (0,1)$} then $(\bs\zeta_{j}, \bs\beta_{j}) \in {\cl D}^{\rm
  s}_{\kappa_0,\rho}$ for all $j \in \bb N$. This occurs if $0 < \mu < \mu_0$ with
\new{
$$
\ts \mu_0 = \frac{\eta}{400\,m} \lesssim \frac{\eta}{m},
$$
}
for which
$$
D \coloneqq 1- 2\mu\eta + 400 \mu^2 m \leq 1 - \mu\eta.
$$
Thus, we find by recursion 
\begin{align*}
  {\Delta}\big(\bs\zeta_{j+1}, \bs\beta_{j+1}\big) & \leq (1 - \mu\eta)^{j+1} {\Delta}(\bs\zeta_{j},{\bs\beta}_{j}),                                                                                                                                              
\end{align*}
for all $j \in \bb N$, which yields~\eqref{eq:convres}. 
\end{proof}

	\begin{remark}
		\label{remark:subequalsnosub}
		Since it is legitimate to consider $\cl Z \coloneqq
                \bb R^n$ and $\cl B \coloneqq \bb R^m$, \ie according to the setup in Def.~\ref{def:model}, with
                all due substitutions the proofs of
                Prop.~\ref{prop:initsub} and~\ref{prop:regusub} are
                identical to the proofs of Prop.~\ref{prop:init}
                and~\ref{prop:regu}, respectively. Hence, the proof of
                Thm.~\ref{theorem:convergence} is identical to that
                of Thm.~\ref{theorem:convergence-subs}. 
	\end{remark}
	
	\begin{remark}
		\label{remark:onperp}
		The use of $\bs\nabla^\perp f(\bs\xi,\bs\gamma)$ instead of $\bs\nabla f(\bs\xi,\bs\gamma)$ in absence of subspace priors amounts to fixing the first basis vector $\bs B_{\cdot,1} = \tfrac{\vone_m}{\sqrt m}$ and the corresponding coefficient $b_1 = \beta_1 = \sqrt m$. Our proofs hold under this hypothesis, since this allows us to consider bounded variations around $\vone_m$ throughout the paper.
	\end{remark}

	\section{Proofs on the Stability of the Descent Algorithms}
	\label{sec:appD-sub}

	We now provide the proofs required to show that Alg.~\ref{alg2} is stable with respect to additive noise in the sensing model. These proofs are also valid in absence of subspace priors, \ie for Alg.~\ref{alg1}.
	
	\begin{proof}[Proof of Prop.~\ref{prop:initnoisy}]
	
		The proof is a modification of the one of Prop.~\ref{prop:initsub}. As for this one, we assume that Cor.~\ref{coro:appl-wcovsub}
		holds, which occurs with probability exceeding $1 - Ce^{-c\delta^2 mp} - (mp)^{-t}$ given $\delta \in(0,1)$ and for
		some values $C,c > 0$ depending only on $\alpha$, provided 
		$n \gtrsim t \log mp$ and $mp \gtrsim \delta^{-2}
		(k + h) \log\big(\tfrac{n}{\delta}\big)$. We start by recalling the value of $\tilde{\bs \zeta}_0 \coloneqq \tinv{mp}
		\ts\sum_l \bs Z^\top (\bs A_l)^\top \bs y_l$ in Table~\ref{tab:quantsnoise-sub}, where we see by the triangle inequality that 
		\begin{align} 
		\|\tilde{\bs \zeta}_0 - \bs z \| & \leq \big\| \tfrac{1}{mp}\ts\sum_{i,l} g_i \bs Z^\top (\ail \ail^\top-\I_n)\bs Z \bs z \big\| + \sup_{\u\in{\bb S}^{k-1}_2} \big\vert \tfrac{1}{mp}\ts\sum_{i,l} \nu_{i,l} \ail^\top \bs Z \u \big\vert.\nonumber 
		\end{align}
		
		In this case, with $\rho \in [0, 1)$, the first term above is bounded as 
		\[
		\big\| \tfrac{1}{mp}\ts\sum_{i,l} g_i \bs Z^\top (\ail \ail^\top-\I_n) \bs Z \bs z \big\| \leq
		\big\| \tfrac{1}{mp}\ts\sum_{i,l} g_i \bs Z^\top (\ail \ail^\top-\I_n) \bs Z \big\| \| \bs z \|
		\leq (1+\rho)\delta \|\bs z\| <2\delta\|\bs z\|.
		\]
		Moreover, by the Cauchy-Schwarz inequality the second term can
		be bounded as
		\[
		\ts \sup_{\u\in{\bb S}^{k-1}_2} \big\vert \tfrac{1}{mp}\ts\sum_{i,l} \nu_{i,l} \ail^\top \bs Z \u \big\vert \leq 
		\big(\tfrac{1}{mp}\ts\sum_{i,l} \nu^2_{i,l}
		\big)^{\frac{1}{2}} \sup_{\u\in{\bb
				S}^{k-1}_2}\big(\tfrac{1}{mp}\ts\sum_{i,l} (\ail^\top \bs Z \u)^2
		\big)^{\frac{1}{2}} \leq\sigma \sqrt{1+\delta} \leq \sqrt{2}\sigma.
		\] 
		where in the last inequality we have applied again Cor.~\ref{coro:appl-wcovsub} and the fact that $\bs u \in {\bb S}^{n-1}_2$. Consequently, 
		\begin{equation}
		\|\tilde{\bs \zeta}_0 - \bs z \| \leq 2\delta \|\bs z\| + \sqrt{2}\,\sigma, \label{eq:interm-noisy}
		\end{equation}
		and
                $$
                \Delta(\tilde{\bs \zeta}_0, \bs \beta_0) \leq (2\delta \|\bs z\| + \sqrt{2}\sigma)^2 + \|\bs z\|^2 \rho^2 
                \leq ( 8\delta^2 +\rho^2 + \tfrac{4\sigma^2}{\|\bs z\|^2}) \|\bs z\|^2
                $$
                The statement of this proposition is therefore proved since, from~\eqref{eq:dist_subs} and by a rescaling of $\delta$ to $\tfrac{\delta}{8}$ (which changes only the constants in the sample complexity requirements above), $(\tilde{\bs \zeta}_0, \tilde{\bs \beta}_0) \in \cl D^{\rm s}_{\tilde \kappa_0, \rho}$ with 
                $\tilde \kappa_0 \coloneqq \delta^2 + \rho^2 + \frac{4\sigma^2}{\|\bs z\|^2}$.
	\end{proof}
	
        \begin{proof}[Proof of Prop.~\ref{prop:regunoisy}]
          This proof assumes that the events given by Cor.~\ref{coro:appl-wcovsub}, Prop.~\ref{prop:m-w-covconc} and Prop.~\ref{prop:regusub} jointly hold. We can easily check that, by union bound, this happens with probability exceeding $1-C(mp)^{-t}$ for some $C>0$ and $t>1$ provided the conditions of Prop.~\ref{prop:regunoisy} on $mp$, $n$ and $\rho$ are met. To prove this proposition we start by defining the noise-dependent terms in the expressions of $\bs\nabla_{\bs \zeta} \tilde{f}^{\rm s}(\bs \zeta,\bs \beta)$ and $\bs\nabla_{\bs \beta} \tilde{f}^{\rm s}(\bs \zeta,\bs \beta)$ (see Table~\ref{tab:quantsnoise-sub}) as
		\begin{align*}
\ts		{\bs{\sf v}}_{\bs \zeta} \coloneqq \tfrac{1}{mp}\ts\sum_{i,l} \nu_{i,l} \gamma_i\, \bs Z^\top \ail,\quad {\bs{\sf v}}_{\bs \beta} \coloneqq  \tfrac{1}{mp}\ts\sum_{i,l} \nu_{i,l} (\ail^\top \bs Z \bs \zeta)\,\bs B^\top {\bs c}_i,\quad \text{and}\quad \bs {\sf v} \coloneqq \begin{pmatrix}
                     \ts {\bs{\sf v}}_{\bs \zeta}\\
                     \ts {\bs{\sf v}}_{\bs \beta}
                  \end{pmatrix}, 
		\end{align*}
		so that, with this notation,
		\begin{align}
                  \label{eq:rel-noisy-grad-grad}
                  \bs\nabla^\perp \tilde{f}^{\rm s}(\bs \zeta, \bs \beta) \coloneqq 
                  \ts \begin{pmatrix}
                    \ts \bs\nabla_{\bs \zeta} \tilde{f}^{\rm s}(\bs \zeta, \bs \beta)\\
                    \ts \bs\nabla_{\bs \beta}^\perp \tilde{f}^{\rm s}(\bs \zeta, \bs \beta)
                  \end{pmatrix}
		 = 
                  \begin{pmatrix}
                    \ts \bs\nabla_{\bs \zeta} f^{\rm s}(\bs \zeta, \bs \beta),\\
                    \ts \bs\nabla_{\bs \beta}^\perp f^{\rm s}(\bs \zeta, \bs \beta)
                  \end{pmatrix}
                  - 
                  \begin{pmatrix}
                     \ts {\bs{\sf v}}_{\bs \zeta}\\
                     \ts {\bs{\sf v}}_{\bs \beta}
                  \end{pmatrix} 
                  = \bs\nabla^\perp f^{\rm s}(\bs \zeta, \bs \beta) - \bs {\sf v},
                  \end{align}
                with $\bs\nabla_{\bs \beta}^\perp f^{\rm s}(\bs \zeta, \bs \beta) = \bs B^\top (\bs I_m - \frac{\bs 1 \bs 1^\top}{m})\bs B \bs\nabla_{\bs \beta} f^{\rm s}(\bs \zeta, \bs \beta)$.

Consequently, from~\eqref{eq:rel-noisy-grad-grad} and since the events in~\eqref{eq:reg-event-prop-reg-subspace} given by Prop.~\ref{prop:regusub} hold, we have 
\begin{subequations}
  \label{eq:tilde-grad-f-bounds}
  \begin{align}
  \big\langle \bs\nabla^\perp \tilde{f}^{\rm s}({\bs \zeta}, {\bs \beta}), \big[\begin{smallmatrix} {\bs \zeta}-{\bs z} \\ {\bs \beta}- {\bs b} \end{smallmatrix}\big] \big\rangle&\geq \eta\, {\Delta}({\bs \zeta},{\bs \beta}) - \big|\big\langle \bs{\sf v}, \big[\begin{smallmatrix} {\bs \zeta}-{\bs z} \\ {\bs \beta}- {\bs b} \end{smallmatrix}\big] \big\rangle\big|
  \end{align}\new{\vspace{-7mm}
  \begin{align}
    \|\bs\nabla_{\bs \zeta} \tilde{f}^{\rm s}({\bs \zeta}, {\bs \beta}) \|^2&\leq
  2L_{\bs \zeta}^2\ {\Delta}({\bs \zeta},{\bs \beta}) + 2\|\bs {\sf v}_{\bs \zeta}\|^2,\\
 \|\bs\nabla^\perp_{\bs \beta} \tilde{f}^{\rm s}({\bs \zeta}, {\bs
    \beta}) \|^2& \leq  \|\bs\nabla_{\bs \beta} \tilde{f}^{\rm s}({\bs
                  \zeta}, {\bs \beta}) \|^2 \leq
  2L_{\bs \beta}^2\ {\Delta}({\bs \zeta},{\bs \beta}) + 2\|\bs {\sf v}_{\bs \beta}\|^2,
\end{align}
}
\end{subequations}
\new{with $\eta = 1-31\rho-4\delta$, $L_{\bs \zeta} = 8\sqrt 2$, and $L_{\bs \beta} = 4\sqrt 2 (1+\kappa)\,\|\bs z\|$.}

Note that for any vector $\bs u \in \bb R^k$ and $\bs v \in \bb R^h$ we have, from Cor.~\ref{coro:appl-wcovsub}, Prop.~\ref{prop:m-w-covconc} and using the Cauchy-Schwarz inequality several times, 
		\begin{align}
                  \big\vert\langle {\bs{\sf v}}_{\bs \zeta},\bs u \rangle\big\vert = \big\vert\tfrac{1}{mp}\ts\sum_{i,l} \nu_{i,l} \gamma_i \ail^\top \bs Z \bs u\big\vert&\ts\leq \big(\tfrac{1}{mp}\ts\sum_{i,l} \gamma^2_i  \bs u^\top \bs Z^\top \ail \ail^\top \bs Z \bs u\big)^{\frac{1}{2}} 
		\big(\tfrac{1}{mp}\ts\sum_{i,l} \nu^2_{i,l}\big)^{\frac{1}{2}} \nonumber \\
		& \leq(1+\rho) \sigma \sqrt{1+\delta} \|\bs u\|  
		\label{eq:eta-zeta-up}
		\end{align}
                and 
		\begin{align}
		\big\vert\langle {\bs{\sf v}}_{\bs \beta}, \bs v \rangle \big\vert =  \big|\tfrac{1}{mp}\ts\sum_{i,l} \nu_{i,l} (\ail^\top \bs Z \bs \zeta)\,\bs v^\top\bs B^\top {\bs c}_i\big|&\ts = \big\vert\tfrac{1}{mp}\ts\sum_{i,l} \nu_{i,l} (\bs v^\top\bs B^\top {\bs c}_i) \ail^\top \bs Z \bs \zeta \big\vert\nonumber\\
&\ts \leq \tfrac{1}{\sqrt{mp}} \|\bs \nu\| \big[\tfrac{1}{mp}\sum_{i,l} (\bs v^\top\bs B^\top {\bs c}_i)^2 \bs \zeta^\top \bs Z^\top \ail \ail^\top \bs Z \bs \zeta\big]^{\frac{1}{2}} \nonumber\\
&\ts \leq \sigma \sqrt{1+\delta} \|\bs B \bs v\|_{\infty }\|\bs \zeta\|. \label{eq:eta-beta-up}
		\end{align}
Therefore, assigning $\bs u = {\bs \zeta}-{\bs z}$ and $\bs v = \bs \beta - \bs b$ with $(\bs \zeta,\bs \beta) \in \cl D^{\rm s}_{\kappa,\rho}$, we find that
\begin{align}
\ts \big|\big\langle \bs{\sf v}, \big[\begin{smallmatrix} {\bs \zeta}-{\bs z} \\ {\bs \beta}- {\bs b} \end{smallmatrix}\big] \big\rangle\big|
&\ts \leq (1+\rho) \sigma \sqrt{1+\delta} \|{\bs \zeta}-{\bs z}\| + \sigma \sqrt{1+\delta} \|\bs B ({\bs \beta}- {\bs b})\|_{\infty }\|\bs \zeta\|\nonumber\\
&\ts \leq \sqrt{2}\sigma(2 \kappa \|{\bs z}\| + \|\bs \zeta\|)\nonumber\\
&\ts \leq \sqrt{2}\sigma(1+ 3\kappa) \|{\bs z}\|\label{eq:bound-scp-nu},
\end{align}
where we used $\|\bs \zeta\| \leq \|\bs \zeta -\bs z\| + \|\bs z\|\leq (1+\kappa)\|\bs z\|$ and $\max(\delta,\rho)<1$.

Concerning the \new{norms of $\bs{\sf v}_{\bs \zeta}$ and $\bs{\sf
    v}_{\bs \beta}$}, they can be found by
bounding~\eqref{eq:eta-zeta-up} and~\eqref{eq:eta-beta-up} for any
$\bs u \in \bb S^{k-1}$ and $\bs v \in \bb S^{h-1}$, that is
\new{$\|\bs {\sf v}_{\bs \zeta}\| \leq (1+\rho) \sigma \sqrt{1+\delta}
  \leq 2\sqrt 2\sigma$ and $\|\bs {\sf v}_{\bs\beta}\| \leq \sigma \sqrt{1+\delta} \|\bs B \bs v\|_{\infty }\|\bs \zeta\| \leq \sigma \sqrt{1+\delta} \|\bs \zeta\| \leq \sqrt 2 \sigma (1+ \kappa) \|\bs z\|$.} Therefore, from~\eqref{eq:tilde-grad-f-bounds} and \eqref{eq:bound-scp-nu}, we find 
  \begin{align*}
  \big\langle \bs\nabla^\perp \tilde{f}^{\rm s}({\bs \zeta}, {\bs \beta}), \big[\begin{smallmatrix} {\bs \zeta}-{\bs z} \\ {\bs \beta}- {\bs b} \end{smallmatrix}\big] \big\rangle&\geq \eta\, {\Delta}({\bs \zeta},{\bs \beta}) - o_C\,\sigma
  \end{align*}\new{\vspace{-7mm}
\begin{align*}
  \|\bs\nabla_{\bs \zeta} \tilde{f}^{\rm s}({\bs \zeta}, {\bs \beta}) \|^2&\leq
  2 L_{\bs \zeta}^2\ {\Delta}({\bs \zeta},{\bs \beta}) + o_{L,\bs \zeta}\,\sigma^2,\\
  \|\bs\nabla^\perp_{\bs \beta} \tilde{f}^{\rm s}({\bs \zeta}, {\bs \beta}) \|^2&\leq
  2 L_{\bs \beta}^2\ {\Delta}({\bs \zeta},{\bs \beta}) + o_{L,\bs \beta}\,\sigma^2,
\end{align*}}
with $o_C \coloneqq \sqrt{2}(1+ 3\kappa) \|{\bs z}\|$, \new{$o_{L,\bs \zeta} := 16$, and $o_{L,\bs \beta} \coloneqq 4\,(1+\kappa)^2\,\|\bs z\|^2$.} 
\end{proof}

\begin{proof}[Proof of Thm.~\ref{theorem:stability}]
Given $\delta >0$, $\rho \in (0,1)$, $\kappa > 0$ and $t \geq 1$, we assume in this
proof that the events given by Prop.~\ref{prop:initnoisy} and
Prop.~\ref{prop:regunoisy} jointly hold for the values $\tilde{\kappa}_0^2 \coloneqq \delta^2 + \rho^2 + \frac{4\sigma^2}{\|\bs z\|^2}$, $\eta := 1-31\rho-4\delta$, \new{$L_{\bs \zeta} := 8\sqrt 2$, $L_{\bs \beta} := 4\sqrt 2 (1+\kappa)\,\|\bs z\|$, $o_C \coloneqq \sqrt{2}(1+ 3\kappa) \|{\bs z}\|$, $o_{L,\bs \zeta} := 16$, and $o_{L,\bs \beta} \coloneqq 4\,(1+\kappa)^2\,\|\bs z\|^2$} that were set in these propositions. Note that the precise value of $\kappa$ will be set later. 

Let the step sizes in Alg.~\ref{alg2} be fixed to 
$\mu_{\bs \zeta} = \mu$ and $\mu_{\bs \beta} = \mu \tfrac{m}{\|\bs
  z\|^2}$, for some $\mu > 0$.
Starting from $(\tilde{\bs \zeta}_0,\tilde{\bs \beta}_0)$ in Table~\ref{tab:quantsnoise-sub} for which $\Delta(\tilde{\bs \zeta}_0,\tilde{\bs \beta}_0) \leq \tilde \kappa_0^2 \|\bs x\|^2$ by assumption, recalling~\eqref{eq:updatedist} with $\beta_C = \eta$, \new{$\beta_{L,\bs \zeta} = 2L_{\bs \zeta}^2$ and $\beta_{L,\bs \beta} = 2L_{\bs \zeta}^2$}, using 
\eqref{eq:bounded-curvature-noise}, \new{\eqref{eq:bounded-lipschitz-noise-zeta} and \eqref{eq:bounded-lipschitz-noise-beta}}, we have that at iteration $J \in \bb N$ and assuming for now that $\kappa$ is large enough to have $(\bs\zeta_{j}, \bs\beta_{j}) \in \cl D^{\rm s}_{\kappa,\rho}$ for~$j\in \{0,\cdots,J-1\}$,
\new{		
\begin{talign*}
{\Delta} (\bs\zeta_{J}, \bs\beta_{J}) &\leq {\Delta} (\bs\zeta_{J},\check{{\bs\beta}}_{J})\\
	& \leq \Delta(\bs\zeta_{J-1},{\bs\beta}_{J-1})\\
&\quad - 2 \mu \big(\langle \bs\nabla_{\bs\zeta} \tilde{f}^{\rm s}(\bs\zeta_{J-1},{\bs\beta}_{J-1}), \bs\zeta_{J-1}-\bs z\rangle + \langle \bs\nabla^\perp_{\bs\beta} \tilde{f}^{\rm s}(\bs\zeta_{J-1},\bs\beta_{J-1}), {\bs\beta_{J-1}}-{\bs g}\rangle\big) \nonumber                                                                   \\
	& \quad + \mu^2 \big(\|\bs\nabla_{\bs\zeta} \tilde{f}^{\rm s}(\bs\zeta_{J-1},\bs\beta_{J-1})\|^2 + \frac{m}{\|\bs z\|^2} \|\bs\nabla^\perp_{\bs\beta} \tilde{f}^{\rm s}(\bs\zeta_{J-1},\bs\beta_{J-1})\|^2\big) \nonumber\\
	&\ts \leq \Delta(\bs\zeta_{J-1},{\bs\beta}_{J-1}) -\ 2\mu \eta\Delta(\bs \zeta_{J-1},{\bs \beta}_{J-1})\ +\
           2\mu o_C\sigma\\ 
&\ts \qquad + 2 \mu^2 (L_{\bs \zeta}^2 + L_{\bs \beta}^2 \frac{m}{\|\bs z\|^2}) \Delta(\bs \zeta_{J-1},{\bs \beta}_{J-1}) + \mu^2 (o_{L,\bs \zeta} + \frac{m}{\|\bs z\|^2} o_{L,\bs \beta}) \sigma^2\\
	&\leq D \Delta(\bs\zeta_{J-1},{\bs\beta}_{J-1}) + R\\ 
	&\leq D^{J} \tilde \kappa_0^2 \|\bs z\|^2 + (\sum_{j=0}^{J-1} D^j) R\\
	&\leq D^{J} \tilde \kappa_0^2 \|\bs z\|^2 + \tfrac{1}{1-D} R,
	\end{talign*}where $R:=2\sqrt{2}\mu (1+3\kappa)\|\bs z\|\,\sigma + \mu^2 (16 + 4 m (1+ \kappa)^2) \sigma^2$, $D \coloneqq 1-2\eta \mu + 800 \mu^2m$} and where the first line uses the fact that the projection on $\cl B_{\rho}$ defining $\bs \beta_{j+1} = {\cl P}_{\cl B_\rho} \check{\bs \beta}_{j+1}$ in Alg.~\ref{alg2} is a contraction (see Prop.~\ref{lemma-ccs}).

Thus, we see that a sufficient condition for ${\Delta} (\bs\zeta_{J}, \bs\beta_{J})$ to be bounded for any value of $J$ is to impose $D \in (0,1)$. \new{This occurs if
$0 < \mu < \mu_0$ with
$$
\ts \mu_0\ :=\ \frac{1}{800\, m\, \max(1,\frac{\sigma^2}{\|\bs z\|^2})}\eta\ \lesssim\ \frac{\eta}{m}\, \min(1,\frac{\|\bs z\|^2}{\sigma^2}),
$$}%
in which case $D < 1 -\eta\mu$ and, from the bound on $\mu$, 
\new{
\begin{talign*}
{\Delta} (\bs\zeta_{J}, \bs\beta_{J})&
\ts \leq (1 -\eta\mu)^{J} \tilde \kappa_0^2 \|\bs z\|^2 + \tfrac{1}{\eta\mu} (2\sqrt{2}\mu (1+3\kappa)\|\bs z\|\,\sigma + \mu^2 (32 + 4 m (1+ \kappa)^2) \sigma^2)\\
&\ts \leq (1 -\eta\mu)^{J} \tilde \kappa_0^2 \|\bs z\|^2 +\frac{2\sqrt{2}}{\eta} (1+3\kappa)\|\bs z\|\,\sigma + \frac{1}{800 m}\, (32 + 4 m (1+ \kappa)^2) \sigma^2 \min(1,\frac{\|\bs z\|^2}{\sigma^2})\\
&\ts \leq (1 -\eta\mu)^{J} \tilde \kappa_0^2 \|\bs z\|^2 + \frac{2\sqrt{2}}{\eta} (1+3\kappa)\|\bs z\|\,\sigma + (\frac{1}{25} + \frac{1}{200} (1+ \kappa)^2) \sigma^2 \min(1,\frac{\|\bs z\|^2}{\sigma^2})\\
&\ts \leq (1 -\eta\mu)^{J} \tilde \kappa_0^2 \|\bs z\|^2 + \frac{2\sqrt{2}}{\eta} (1+3\kappa)\|\bs z\|\,\sigma + \frac{1}{5} (1+ \kappa^2) \sigma^2 \min(1,\frac{\|\bs z\|^2}{\sigma^2})\\
&\ts \leq (1 -\eta\mu)^{J} \tilde \kappa_0^2 \|\bs z\|^2 +
\frac{2\sqrt{2}}{\eta} \|\bs z\|\,\sigma + \frac{1}{5} \sigma^2 + 
\frac{6\sqrt{2}}{\eta} \|\bs z\|\,\sigma\, \kappa + \frac{1}{5} \sigma^2\, \kappa^2 \min(1,\frac{\|\bs z\|^2}{\sigma^2}).
\end{talign*}}		
Coming back to the initial assumption that $\kappa$ is so that $\forall j \in [J]$, $(\bs\zeta_{J}, \bs\beta_{J}) \in \cl D^{\rm s}_{\kappa,\rho}$, if this must hold for all $J$, then the last bound must be smaller than $\kappa^2 \|\bs z\|^2$, \ie we must solve with respect to $\kappa$ the quadratic inequality
\new{$$
\ts 2  +
\frac{2\sqrt{2}}{\eta} \frac{\sigma}{\|\bs z\|} + \frac{21\sigma^2}{5\|\bs
  z\|^2} + 
\frac{6\sqrt{2}}{\eta} \frac{\sigma}{\|\bs z\|} \kappa + \frac{1}{5} \kappa^2 \leq \kappa^2.
$$
where we have used $\tilde{\kappa}_0^2 \leq 2 + \frac{4\sigma^2}{\|\bs
  z\|^2}$. 
} Since for $\alpha,\beta > 0$ we have $x^2 \geq \alpha + \beta x$ if $x \geq \alpha + \beta$, a few computations show that this is satisfied by taking
\new{$$
\ts \kappa \coloneqq  3  +
\frac{10\sqrt{2}}{\eta} \frac{\sigma}{\|\bs z\|} + \frac{6\sigma^2}{\|\bs
  z\|^2}.
$$
This shows that if $\sigma \lesssim \|\bs z\|$, then $\kappa \lesssim 1$.

}
Consequently, \new{under this condition}, coming back to the equation above and replacing $\tilde
\kappa_0$ by its value we find 
\new{
\begin{align*}
{\Delta} (\bs\zeta_{J}, \bs\beta_{J}) &\ts \leq (1 -\eta\mu)^{J} \big(\delta^2 +\rho^2 + \tfrac{4\sigma^2}{\|\bs z\|^2}\big) \|\bs z\|^2 +
\frac{2\sqrt{2}}{\eta} \|\bs z\|\,\sigma + \frac{1}{5} \sigma^2 + 
\frac{6\sqrt{2}}{\eta} \|\bs z\|\,\sigma\, \kappa + \frac{1}{5} \sigma^2\, \kappa^2 \min(1,\frac{\|\bs z\|^2}{\sigma^2})\\
&\ts \lesssim (1 -\eta\mu)^{J} \|\bs z\|^2 + \frac{1}{\eta}\,\sigma \|\bs z\|,
\end{align*}
which finally proves that, under the previous conditions on $\sigma$ and $\mu$, 
$$
\lim_{J \to + \infty} {\Delta} (\bs\zeta_{J}, \bs\beta_{J})\ \lesssim\ \tinv{\eta} \sigma \|\bs z\|.
$$
}
\end{proof}


\footnotesize
\bibliographystyle{IEEEtran}
\bibliography{iaibiblio}

\end{document}
